\theoremstyle{plain}
\newcommand{\OR}{\Gamma} %
\newcommand{\PosRAaccbold}{\textrm{PosRA}$^{\textrm{acc}}$\xspace}
\newcommand{\Plexbold}{\textrm{PosRA}$_{\lexbold}$\xspace}
\newcommand{\Plexaccbold}{\textrm{PosRA}$_{\lexbold}^{\textrm{acc}}$\xspace}
\newcommand{\PosRA}{{\textrm{PosRA}}\xspace}
\newcommand{\Plex}{{\textrm{PosRA}}$_{\lex}$\xspace}
\newcommand{\Pnoprod}{{\textrm{PosRA}}$_{\mathrm{no}\times}$\xspace}
\newcommand{\Pgen}{{\textrm{PosRA}}$_{\gen}$\xspace}
\newcommand{\PosRAagg}{{\textrm{PosRA}}$^{\mathrm{acc}}$\xspace}
\newcommand{\PosRAagggby}{{\textrm{PosRA}}$^{\mathrm{accGBy}}$\xspace}
\newcommand{\Plexacc}{{\textrm{PosRA}}$_{\lex}^{\mathrm{acc}}$\xspace}
\newcommand{\Pnoprodacc}{{\textrm{PosRA}}$^{\mathrm{acc}}_{\mathrm{no}\times}$\xspace}
\newcommand{\posRAagg}{\PosRAagg}
\newcommand{\PosRAacc}{\PosRAagg}
\newcommand{\PosRAaccgby}{\PosRAagggby}
\newcommand{\Nat}{\mbox{$\mathbb{N}$}}
\newcommand{\ptw}{\dir}
\newcommand{\dir}{{\texttt{DIR}}\xspace}
\newcommand{\gen}{\ptw}
\newcommand{\lexbold}{\texttt{LEX}\xspace}
\newcommand{\lex}{{\texttt{LEX}}\xspace}
\newcommand{\cat}{{\texttt{CAT}}\xspace}
\newcommand{\poss}{{\texttt{POSS}}\xspace}
\newcommand{\cert}{{\texttt{CERT}}\xspace}
\newcommand{\cupgen}{\cup}
\newcommand{\cupcat}{\cup_\cat}
\newcommand{\deft}[1]{\emph{#1}}
\newcommand*{\defeq}{\mathrel{\rlap{%
  \raisebox{0.3ex}{$\m@th\cdot$}}%
  \raisebox{-0.3ex}{$\m@th\cdot$}}%
  =}
\renewcommand{\leq}{\leqslant}
\renewcommand{\geq}{\geqslant}
\renewcommand{\phi}{\varphi}
\renewcommand{\epsilon}{\varepsilon}
\newcommand\restr[2]{{%
  \kern-\nulldelimiterspace %
  #1 %
  _{|#2} %
  }}
\newcommand{\singleton}[1]{[#1]}
\newcommand{\ordern}[1]{[{\leq}#1]}
\newcommand{\card}[1]{\left|#1\right|}
\newcommand{\ID}{\mathit{ID}}
\newcommand{\id}{\mathit{id}}
\newcommand{\calD}{\mathcal{D}}
\newcommand{\calF}{\mathcal{F}}
\newcommand{\calI}{\mathcal{I}}
\newcommand{\calM}{\mathcal{M}}
\newcommand{\calS}{\mathcal{S}}
\newcommand{\arity}[1]{\mathop{\mathrm{a}}(#1)}
\newcommand{\axiom}[1]{\textsf{#1}}
\newcommand{\dupelim}{\mathrm{dupElim}}
\newcommand{\pw}{\mathit{pw}}
\newcommand{\langlem}{}
\newcommand{\ranglem}{}
\newcommand{\accum}{\mathrm{accum}}
\newcommand{\accumgby}{\mathrm{accumGroupBy}}
\newcommand{\concat}{\mathrm{concat}}
\newcommand{\pr}{\mathrm{pr}}
\renewcommand{\a}{\mathsf{a}}
\renewcommand{\b}{\mathsf{b}}
\newcommand{\cn}{\mathsf{c}}
\newcommand{\n}{\mathsf{n}}
\newcommand{\e}{\mathsf{e}}
\newcommand{\f}{\mathsf{f}}
\renewcommand{\i}{\mathsf{i}}
\renewcommand{\l}{\mathsf{l}}
\renewcommand{\r}{\mathsf{r}}
\newcommand{\s}{\mathsf{s}}
\newcommand{\NN}{\mathbb{N}}
\newcommand{\ZZ}{\mathbb{Z}}
\newcommand{\tikzm}[1]{\tikz[overlay,remember picture]{\node (#1) {};}}
\newcommand{\tikzmd}[2]{\tikz[overlay,remember picture,thick]{\draw[->] (#1) -- (#2);}}
\newcommand{\bag}{\mathrm{bag}}
\newcommand{\BALG}{\text{BALG}}
\newcommand{\BQL}{\mathcal{BQL}}
\newcommand{\NRLaggr}{\mathcal{NRL}^{\mathrm{aggr}}}
\newcommand{\PomAlg}{\mathcal{P}\text{om-}\mathcal{A}\text{lg}}
\newcommand{\PomAlgEps}{\PomAlg_{\epsilon_n}}
\title{Possible and Certain Answers for Queries\protect\\ over Order-Incomplete Data}
\titlerunning{Possible and Certain Answers for Queries over Order-Incomplete Data}
\keywords{certain answer; possible answer; partial order; uncertain data}
\subjclass{H.2.1 Database Management -- Logical Design}
\author[1]{Antoine Amarilli}
\author[2]{Mouhamadou Lamine Ba}
\author[3]{Daniel Deutch}
\author[4,5]{Pierre~Senellart}
\affil[1]{LTCI, T\'el\'ecom ParisTech, Universit\'e Paris-Saclay;
Paris, France}
\affil[2]{University Alioune Diop of Bambey; Bambey, Senegal}
\affil[3]{Blavatnik School of Computer Science, Tel Aviv University;
Tel Aviv, Israel}
\affil[4]{DI ENS, ENS, CNRS, PSL Research University; Paris, France}
\affil[5]{Inria Paris; Paris, France}
\begin{document}

\maketitle

\begin{abstract}
To combine and query ordered data from multiple sources, one needs to
handle uncertainty about the possible orderings. Examples of such
  ``order-incomplete'' data include integrated event sequences such as
  log entries; lists of properties (e.g., hotels and
  restaurants) ranked by an unknown function reflecting relevance or
  customer ratings; and documents edited concurrently with an uncertain order
  on edits. This
  paper introduces a query language for order-incomplete data, based on
  the positive relational algebra with order-aware accumulation. We use
  partial orders to represent order-incomplete data, and study possible
  and certain answers for queries in this context. We show that these
  problems are respectively NP-complete and coNP-complete, but identify
  tractable cases depending on the query operators or input partial
  orders.
\end{abstract}

\section{Introduction}
\label{sec:introduction}
Many applications need to combine and transform ordered data (e.g., temporal
data, rankings, preferences) from
multiple sources. Examples include sequences of readings from
multiple sensors, or log entries from different applications or
machines, that must be combined to form a complete picture of
events; rankings of restaurants and hotels published by
different websites, their ranking function being often
proprietary and unknown; and concurrent edits of shared documents,
where the order of contributions made by different users needs to be
merged. Even when the order of items from each individual source is
known, the order across sources is often {\em uncertain}.
For instance, even when sensor readings or log entries have
time\-stamps, these may be ill-synchronized across sensors or
machines; different websites may follow different rules
and rank different hotels, so there are multiple ways to create a
unified ranked list; concurrent document editions may be
ordered in multiple ways. We say that the resulting
information is \emph{order-incomplete}.

This paper studies query evaluation over order-incomplete data in
a relational setting \cite{AHV-1995}. Our running example is that of restaurants
and hotels from travel websites, ranked according to proprietary
functions. An example query could compute the union of ranked lists of restaurants from distinct
websites, or ask for a ranked list of pairs of a restaurant and a
hotel 
in the same district. As we do not know how the proprietary order is
defined, the query result may become \emph{uncertain}:
there may be multiple reasonable orderings of
restaurants in the union result, or multiple orderings of
restaurant--hotel pairs. We also study the application of
order-aware \emph{accumulation} to
the query result, where each possible order may yield a different value:
e.g., extracting only the highest ranked pairs,
concatenating 
their names, or assessing the
attractiveness of a district based on its
best restaurants and hotels.

Our approach is to handle this uncertainty through the classical notions
of {\em possible and certain answers}. First, whenever there is a
\emph{certain answer} to the query -- i.e., there is only one possible order on query results or one accumulation result -- 
which is obtained no matter the order on the input and in intermediate
results, we should present it to the user, who can then browse through
the ordered query results (as is typically done in absence of
uncertainty, using constructs such as SQL's \texttt{ORDER BY}). 
Certain answers can arise even in non-trivial cases 
where the combination of input data admits 
many possible
orders: consider user queries that select only a small interesting subset of the
data (for which the ordering happens to be certain), or a short summary
obtained through accumulation over large data. In many other cases, the
different orders on input data or the uncertainty caused by the query may
lead to several \emph{possible answers}. In this case, it is still of
interest (and non-trivial) to verify whether an answer is
possible, e.g., to check whether a given ranking of hotel--restaurant pairs is consistent with a combination of other rankings (the latter done through a query). Thus, we study the problems of deciding whether a given answer is
\emph{certain}, and whether it is \emph{possible}.

As users may wish to focus on the position of some tuples of
interest (e.g., ``is it possible/certain that a particular restaurant--hotel pair 
is ranked first?'', or ``is it possible/certain that restaurant
$A$ is ranked above restaurant $B$?), we show that these questions may be expressed in our framework through proper choices of accumulation functions.

\subparagraph*{Main contributions.}
We introduce a query
language with accumulation
for order-incomplete data, 
which 
 generalizes the positive relational algebra \cite{AHV-1995} with aggregation as
 the outermost
 operation.
We define a bag semantics for this language, without assuming that a single choice of order can be
made (unlike, e.g., rank aggregation~\cite{Fagin}): we use \emph{partial orders}
to represent
all orders that are consistent with the  input data.
 We then
undertake 
the first general study of the
\emph{complexity of possible and certain answers for queries over
such data}.
We show that these problems are respectively NP-complete and coNP-complete,
the main difficulties being the existence of
duplicate tuple values in the data and the use of order-aware accumulation.
Fortunately, we can show realistic tractable cases: certainty is in PTIME
without accumulation, and both problems are tractable under reasonable
restrictions on the input and on the query.

The rest of this paper is organized as follows.
In Section~\ref{sec:model}, we introduce our data model and our query language.
We define and exemplify the problems of possible and certain answers in Section~\ref{sec:posscertdef}.
We then study their complexity, first in the general case (Section~\ref{sec:posscert}),
then in restricted settings that ensure tractability 
(Sections~\ref{sec:fpt2} and~\ref{sec:fpt}).
We study extensions to the language, namely duplicate elimination and group-by, in Section~\ref{sec:extensions}.
We compare our model and results with related work in Section~\ref{sec:compare}, and conclude in Section~\ref{sec:conclusion}. 

Full proofs of all results are given in an extensive appendix, for lack
of space. Please note that this version of the paper removes some erroneous
results relative to an earlier arXiv version and the conference proceedings
version: see Appendix~\ref{apx:errata} for details.

\section{Data Model and Query Language}\label{sec:model}
We fix a countable set of values $\calD$
that includes $\Nat$ and infinitely many values not in $\Nat$.
A \emph{tuple~$t$ over~$\calD$} of \emph{arity}~$\arity{t}$ is an
element of~$\calD^{\arity{t}}$, denoted $\langle v_1, \dots,
v_{\smash{\arity{t}}}\rangle$. The simplest notion of ordered relations are then
\emph{list relations}~\cite{colby1994query,colby1994concepts}: a list relation
of arity~$n \in \Nat$
is an ordered list of tuples over~$\calD$ of arity~$n$ (where
the same tuple value may appear multiple times).
List relations impose a single order over tuples, but when one
combines (e.g., unions) them, there may be multiple plausible ways
to order the results. 

We thus introduce \emph{partially
ordered relations} (\emph{po-relations}). A po-relation  $\OR =
(\ID, T, <)$ of arity~$n \in \Nat$ consists of a finite set of
\emph{identifiers} $\ID$ (chosen from some infinite set
closed under product), a \emph{strict partial order} $<$ on~$\ID$, and a (generally non
injective) mapping $T$ from $\ID$ to~$\calD^n$.
The actual identifiers do not matter, but
we need them to
refer to occurrences of the
same tuple value. Hence, we always
consider po-relations \emph{up to isomorphism}, where $(\ID, T, <)$ and $(\ID',
T', {<'})$ are \emph{isomorphic} iff there is a bijection $\phi: \ID \to \ID'$
such that $T'(\phi(\id)) = T(\id)$ for all $\id \in \ID$, and $\phi(\id_1) {<'}
\phi(\id_2)$ iff $\id_1 < \id_2$ for all $\id_1, \id_2 \in \ID$.

A special case of po-relations
are \emph{unordered po-relations} (or \emph{bag relations}), where $<$ is empty:
we write them $(\ID, T)$. The \emph{underlying bag relation} of 
$\OR = (\ID, T, <)$ is $(\ID, T)$.

The point of po-relations is to represent \emph{sets} of list relations.
Formally, a \emph{linear extension} $<'$ of~$<$ is a total order on~$\ID$ such
that for each $x<y$ we have $x <' y$. The 
{\em possible worlds} $\pw(\OR)$ of $\OR$ are then defined as follows:
for each linear extension ${<'}$ of~$<$, writing $\ID$ as $\id_1 <' \cdots <'
\id_{\card{\ID}}$,
the list relation $(T(\id_1), \ldots, T(\id_{\card{ID}}))$ is
in~$\pw(\OR)$. As $T$ is generally not injective, two different linear extensions may yield the
same list relation.
Po-relations can thus model uncertainty over the \emph{order} of
tuples (but not on their \emph{value}: the underlying bag relation is always
certain).

\begin{figure}
\noindent\begin{minipage}[b]{.62\linewidth}
{\renewcommand{\tabcolsep}{4pt}
{\small\noindent\begin{tabular}{l@{\quad}l@{\quad}l}
$\!\!$\begin{tabular}[b]{l@{\,\,}c}
\toprule
\textit{restname} & \textit{distr}   \\
\midrule
Gagnaire & 8\,\,\,\,\,\raisebox{0.9em}{\tikzm{froma}} \\
TourArgent & 5\,\,\,\,\,\raisebox{-0.3em}{\tikzm{toa}}\tikzmd{froma}{toa} \\
\bottomrule\\[-.8em]
\multicolumn{2}{c}{{(a) $\mathit{Rest}$ table}}\\
\end{tabular}
&
\begin{tabular}[b]{l@{~}c}
\toprule
\textit{hotelname} & \textit{distr}   \\
\midrule
Mercure & 5\phantom{2}\quad\raisebox{0.9em}{\tikzm{fromb}} \\
Balzac & 8\phantom{2}\quad\raisebox{0.9em}{\tikzm{dummy}} \\
Mercure & 12\quad\raisebox{-0.3em}{\tikzm{tob}}\tikzmd{fromb}{tob} \\
\bottomrule\\[-.8em]
\multicolumn{2}{c}{{(b) $\mathit{Hotel}$ table}}
\end{tabular}
&
\begin{tabular}[b]{l@{~}c}
\toprule
\textit{hotelname} & \textit{distr}   \\
\midrule
Balzac & 8\phantom{2}\quad\raisebox{0.9em}{\tikzm{fromc}} \\
Mercure & 5\phantom{2}\quad\raisebox{0.9em}{\tikzm{dummyd}} \\
Mercure & 12\quad\raisebox{-0.3em}{\tikzm{toc}}\tikzmd{fromc}{toc} \\
\bottomrule\\[-.8em]
\multicolumn{2}{c}{{(c) $\mathit{Hotel}_2$ table}}
\end{tabular}
\end{tabular}}
} \vspace{-.6em}\caption{Running example: Paris restaurants and hotels} \label{fig:examplerels}
\end{minipage}\hfill\begin{minipage}[b]{.25\linewidth}
\noindent$\!\!\!\!\!$\begin{tikzpicture}[scale=0.98,yscale=1.05]
  \node (GM) at (0,-2) {$\langle\textup{G},8,\textup{M},5\rangle$};
  \node (TAM) at (1,-1) {$\langle\textup{TA},5,\textup{M},5\rangle$};
  \node (GB) at (-1,-1) {$\langle\textup{G},8,\textup{B},8\rangle$};
  \node (TAB) at (0,0) {$\langle\textup{TA},5,\textup{B},8\rangle$};
  \draw[->] (GM) -- (GB);
  \draw[->] (GB) -- (TAB);
  \draw[->] (GM) -- (TAM);
  \draw[->] (TAM) -- (TAB);
\end{tikzpicture}\null
\vspace{-.4em}
\caption{Example~\ref{exa:simplegen}}
  \label{fig:example}
\end{minipage}
\hfill\begin{minipage}[b]{.12\linewidth}
\noindent\begin{tikzpicture}[yscale=.8,every node/.style={align=center,outer sep=0,inner
  sep=2pt}]
  \node (n13) at (0, 0) {fr\\[-.6em]{\scriptsize a}};
  \node (n20) at (1, 0) {it\\[-.6em]{\scriptsize b}};
  \node (n37) at (0, 1) {fr\\[-.6em]{\scriptsize c}};
  \node (n42) at (1, 1) {it\\[-.6em]{\scriptsize d}};
  \node (n100) at (0, 2) {jp\\[-.45em]{\scriptsize e}};
  \node (n102) at (1, 2) {jp\\[-.45em]{\scriptsize f}};
  \draw[->] (n13) -- (n37);
  \draw[->] (n20) -- (n37);
  \draw[->] (n37) -- (n100);
  \draw[->] (n42) -- (n100);
  \draw[->] (n42) -- (n102);
\end{tikzpicture}\null
\vspace{-.4em}
\caption{Example~\ref{exa:notposet}}
  \label{fig:notposet}
\end{minipage}
\vspace{-.9em}
\end{figure}

\subparagraph*{Query language.}

We now define a bag semantics for
{\em positive relational algebra} operators, to manipulate po-relations with queries. The positive relational
algebra,
written \PosRA, is a standard query language for relational
data~\cite{AHV-1995}.
We will extend \PosRA later in this section with
{\em accumulation},
and add further extensions in Section~\ref{sec:extensions}.
Each \PosRA operator applies to po-relations and computes a new po-relation; we
present them in turn.

\smallskip

The \textsf{selection} operator restricts
the relation to a subset of its tuples, and the order is
the restriction of the input order. The \emph{tuple predicates}
allowed in selections are Boolean combinations of
equalities and inequalities, which can use tuple attributes and values in
$\calD$.
\begin{description}
  \item[\axiom{selection}:] For any po-relation
  $\OR = (\ID, T,<)$
  and tuple predicate $\psi$,
  we define the selection $\sigma_{\psi}(\OR) \defeq (\ID', T_{|\ID'}, <_{|\ID'})$
  where $\ID' \defeq \{\id \in \ID \mid \psi(T(\id))\text{~holds}\}$.
\end{description}
\smallskip
The \textsf{projection} operator changes tuple values in the usual way, but
keeps the original tuple ordering in the result, and retains all copies of duplicate
tuples (following our \emph{bag semantics}):
\begin{description}
  \item[\axiom{projection}:] For a po-relation
    $\OR = (\ID,T,<)$ and attributes $A_1, \ldots, A_n$,
    we define the projection $\Pi_{A_1, \ldots, A_n}(\OR) \defeq
    (\ID,T', {<})$ where $T'$ maps each $\id \in \ID$ to
    $\Pi_{A_1, \ldots, A_n}(T(\id))$.
\end{description}
\smallskip
As for \textsf{union}, we impose the minimal order constraints that are
compatible with those of the inputs.
We use the
{\em parallel composition}~\cite{posets} of two partial orders $<$
and $<'$ on disjoint sets $\ID$ and $\ID'$, i.e., the partial order $<''
\defeq ({<}
\mathbin{\parallel} {<'})$
on $\ID \cup \ID'$ defined by:
  every $\id \in \ID$ is incomparable for $<''$ with every
  $\id' \in \ID'$;
  for each $\id_{1}, \id_{2}\in ID$, we have $\id_{1} <'' \id_{2}$ iff $\id_{1} < \id_{2}$;
  for each $\id_{1}', \id_{2}' \in ID'$, we have $\id_{1}'<'' \id_{2}'$ iff
  $\id_{1}'<' \id_{2}'$.
\begin{description}
  \item[\axiom{union}:] Let $\OR = (\ID,T,<)$ and $\OR' = (\ID',T',<')$ be two po-relations of
    the same arity. We assume that the identifiers of~$\OR'$ have been renamed
    if necessary to ensure that $\ID$ and~$\ID'$ are disjoint.
    We then define $\OR \cup \OR'
    \defeq (\ID \cup \ID', T'', {<} \mathbin{\parallel} {<'})$, where $T''$
    maps
    $\id \in \ID$ to $T(\id)$ and
    $\id' \in \ID'$ to $T'(\id')$.
\end{description}
\smallskip
The union result $\OR \cup \OR'$ does not depend on how we renamed~$\OR'$,
i.e., it is unique up to isomorphism.
Our definition also
implies that $\OR \cup \OR$ is different from~$\OR$, as per bag semantics.
In particular, when $\OR$ and $\OR'$ have only one possible world,
$\OR \cup \OR'$ usually does not.

We next introduce two possible \textsf{product} operators. First, the
\emph{direct product}~\cite{stanley1986enumerative} ${<_{\ptw}} \defeq
({<} \times_\ptw {<'})$ of two partial orders $<$ and $<'$ on
sets $\ID$ and $\ID'$ is defined by $(\id_{1}, \id_{1}') <_{\ptw} (\id_{2},
\id_{2}')$ for
each $(\id_{1},  \id_{1}'), (\id_{2}, \id_{2}') \in \ID \times \ID'$ iff
$\id_{1} < \id_{2}$ and $\id_{1}' <'
\id_{2}'$.
We define the \textsf{direct product} operator over po-relations
accordingly: two identifiers in the product are comparable only
if \emph{both components} of both identifiers compare in the same
way.
\begin{description}
  \item[\axiom{direct product}:] For any po-relations $\OR = (\ID, T,<)$ and
    $\OR' = (\ID', T',<')$,
    remembering that the sets of possible identifiers is closed under product,
    we let $\OR \times_\gen \OR' \defeq
    (\ID \times \ID', T'', \allowbreak {< \times_{\ptw} <'})$,
    where $T''$ maps each $(\id, \id') \in \ID \times \ID'$ to the
    \emph{concatenation} $\langle T(\id), T'(\id') \rangle$.
\end{description}
\smallskip
Again, the direct product result often has multiple possible worlds even when
inputs do not.

The second product operator uses the \textsf{lexicographic product} (or \emph{ordinal
product}~\cite{stanley1986enumerative})
${<_\lex} \defeq ({<} \times_\lex {<'})$
of two partial orders $<$ and $<'$,
defined by $(\id_{1}, \id_{1}') <_{\lex} (\id_{2}, \id_{2}')$ for all $(\id_{1},
\id_{1}'), (\id_{2}, \id_{2}') \in \ID \times \ID'$ iff
    either $\id_{1} < \id_{2}$, or $\id_{1} = \id_{2}$ and $\id_{1}' <' \id_{2}'$.
\begin{description}
  \item[\axiom{lexicographic product}:]
    For any po-relations $\OR = (\ID, T,<)$ and $\OR' = (\ID', T',<')$,
    we define
    $\OR \times_\lex \OR'$ as $(\ID
    \times \ID', T'', < \times_\lex <')$ with $T''$ defined like for direct
    product.
\end{description}
\smallskip
Last, we define the \emph{constant expressions} that we allow:
\begin{description}
  \item[\axiom{const}:] \textbullet
    for any tuple~$t$, the singleton po-relation $\singleton{t}$
    has only one tuple with value~$t$;
    \newline \null\hspace{.61cm}\textbullet for any $n \in \Nat$, the po-relation $\ordern{n}$ has arity~$1$
    and has
    $\pw(\ordern{n}) = \{(1, \ldots, n)\}$.
\end{description}
\begin{toappendix}
\subsection{Proof of Theorem~\ref{thm:incomparable}}
  \label{apx:incomparableproof}
\end{toappendix}
A natural question is then to determine whether any of our operators is subsumed
by the others, but we show that this is not the case:

\begin{theoremrep}\label{thm:incomparable}
  No \PosRA\ operator can be expressed through a combination of the
others.
\end{theoremrep}

\begin{toappendix}
  We actually prove a stronger result, namely
  Theorem~\ref{thm:incomparable2}, where we add the $\dupelim$ operator
  to \PosRA operators. We 
consider each operator in
turn, showing it cannot be expressed through a combination of the others.

We first consider constant expressions. We will show differences in
expressiveness even when setting the input po-database to be empty.

\begin{itemize}
  \item For $\singleton{t}$, consider the query $\singleton{\langle0\rangle}$.
    The value $0$ is not in the database,
    and cannot be produced by the $\ordern{n}$ constant
    expression, and so
    this query has no equivalent that does not use the $[t]$ constant
    expression.
  \item For $\ordern{n}$, observe that $\ordern{2}$ is a po-relation with
    a non-empty order, while any query involving the other operators
    will have empty order (none of
    our unary and binary operators turns unordered po-relations into an ordered
    one, and the $\singleton{t}$ constant expression produces an
    unordered po-relation).
\end{itemize}

Moving on to unary and binary operators, all operators but products are easily shown to be
non-expressible:

\begin{description}
  \item[selection.]
    For any constant $a$ not
    in~$\mathbb{N}$, consider the po-database $D_a$ consisting of a
    single unordered po-relation with name~$R$ formed of
    two unary tuples $\langle 0\rangle$ and
    $\langle a\rangle$. Let
    $Q=\sigma_{.1\neq\text{``0''}}(R)$. Then, $Q(D_a)$ is the po-relation
    consisting only of the tuple $\langle a\rangle$. No PosRA query
    without selection has the same semantics, as no other operator than
    selection can create a po-relation containing the constant~$a$ for
    any input~$D_a$, unless it also contains the constant~$0$.
    
  \item[projection.] $\Pi$ is the only operator that can decrease the arity of an
    input po-relation.

  \item[union.] $[\langle 0\rangle]\cup[\langle 1\rangle]$ (over the empty
    po-database) cannot be simulated by any combination of operators, as
    can be simply shown by induction: no other operator will produce a
    po-relation which has in the same attribute the two elements $0$ and
    $1$.

  \item[duplicate elimination.]
    For any constant $a$ not
    in~$\mathbb{N}$, consider the po-database $D_a$ consisting of a
    single unordered po-relation with name~$R$ formed of
    two identical unary tuples $\langle a\rangle$ and
    $\langle a\rangle$. Let
    $Q=\dupelim(R)$. Then, $Q(D_a)$ is the po-relation
    consisting of the single tuple $\langle a\rangle$. No PosRA query
    without duplicate elimination has the same semantics, as no other operator than
    duplicate elimination can create a po-relation containing only once the constant~$a$ for
    any input~$\Gamma_a$.
\end{description}

Observe that product operators are the only ones that can
increase arity, so taken together they are non-redundant with the other
operators. 
There remains to prove that each of $\times_\gen$ and $\times_\lex$ is not
redundant.
As in Section~\ref{sec:fpt2}, we use the name \Pgen for the fragment of
\PosRA\ where $\times_\lex$ is not used; and \Plex for the fragment of
\PosRA\ where $\times_\gen$.

\subsubsection{Transformations Not Expressible in \Plex + dupElim}
We rely on Propositions~\ref{prp:lexwidth} and~\ref{prp:wdupe}: the
result of any \Plex
query (possibly with $\dupelim$), when it does not completely fail, has a
\emph{width} (see Definition~\ref{def:width} in Section~\ref{sec:fpt2}) bounded by a function of
the width of the original po-database. On
the other hand, consider the query $Q=R\times_\gen R$ and an input
po-database $D_n$ where $R$ is mapped to $\ordern{n}$ (an input relation
of width $1$) for an arbitrary $R_n$.
Then $Q(D_n)$ is a po-relation of width $n$, which shows $Q$ is not
expressible with the operators of \Plex and $\dupelim$.

\subsubsection{Transformations Not Expressible in \Pgen + dupElim}
We now show the converse, that \Plex expresses some transformations that
cannot be expressed in \Pgen. To do this, we introduce the
\emph{concatenation} of po-relations:

\begin{definition}
  \label{def:concat}
  The \emph{concatenation}
  $\OR \cupcat \OR'$ of two po-relations $\OR$ and $\OR'$
is the series composition of their two partial
orders. Note that $\pw(\OR \cupcat \OR') = \{L \cupcat L' \mid L \in \pw(\OR),
L' \in \pw(\OR')\}$, where $L \cupcat L'$ is the concatenation of two list
  relations in
the standard sense.
\end{definition}

  We show that concatenation can be captured with \Plex.

  \begin{lemma}
    \label{lem:lexconcat}
    For any arity $n \in \mathbb{N}$ and distinguished relation names $R$ and $R'$,
    there is a \Plex query $Q_n$ such that, for any two
    po-relations $\OR$ and $\OR'$ of arity~$n$, letting $D$ be the
    database mapping $R$ to $\OR$ and $R'$ to $\OR'$, $Q_n(D)$ is
    $\OR \cupcat \OR'$.
  \end{lemma}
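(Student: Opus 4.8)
The plan is to exhibit $Q_n$ explicitly, using only selection, projection, union, $\times_\lex$, and the two kinds of constant expressions, and then verify its semantics on $D$ by chasing identifiers through the operators. The construction rests on two elementary observations about lexicographic product: (i) taking the lexicographic product with a singleton merely prepends a constant column and leaves the order unchanged, so $A \defeq \singleton{\langle 1\rangle}\times_\lex R$ is $\OR$ with a leading attribute of value $1$ and $B \defeq \singleton{\langle 2\rangle}\times_\lex R'$ is $\OR'$ with a leading attribute of value $2$; and (ii) taking the lexicographic product with the two-element chain $\ordern{2}$ \emph{on the left} produces two consecutive copies of its right argument — the ``level-$1$'' copy entirely below the ``level-$2$'' copy — tagged by a fresh leading attribute recording the level.

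Concretely, I would set
\[
Q_n \defeq \Pi_{A_3,\ldots,A_{n+2}}\Bigl(\sigma_{.1 = .2}\bigl(\ordern{2} \times_\lex \bigl((\singleton{\langle 1\rangle}\times_\lex R)\cup(\singleton{\langle 2\rangle}\times_\lex R')\bigr)\bigr)\Bigr),
\]
where the predicate $.1 = .2$ may, if one prefers predicates that only mention constants, be replaced by $(.1 = 1 \wedge .2 = 1)\vee(.1 = 2 \wedge .2 = 2)$. The intuition: the inner union $S \defeq A \cup B$ has arity $n+1$ and its order is the parallel composition of a copy of $\OR$'s order and a copy of $\OR'$'s order, with each tuple remembering its source in attribute~$1$ of $S$; forming $\ordern{2} \times_\lex S$ stacks two copies of $S$, the level-$1$ copy below the level-$2$ copy; the selection keeps, from the level-$1$ copy, exactly the $A$-tuples (which retain $\OR$'s order) and, from the level-$2$ copy, exactly the $B$-tuples (which retain $\OR'$'s order), while every surviving level-$1$ tuple remains below every surviving level-$2$ tuple; this surviving order is precisely the series composition, i.e.\ the order of $\OR \cupcat \OR'$, and projecting away the two bookkeeping columns yields $\OR \cupcat \OR'$ itself.

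What then remains is the formal verification that $Q_n(D)$ is isomorphic to $\OR \cupcat \OR'$: I would track the identifier set, the value map, and the order relation through each of the four operators, and in particular check that after the selection the order restricted to the surviving identifiers is exactly the series composition (one could equivalently cross-check the possible-worlds characterization of $\cupcat$ recalled in Definition~\ref{def:concat}). This is routine but needs a little care on two points — the degenerate cases where $\OR$ or $\OR'$ is empty (then $A$ or $B$ is empty, one level contributes nothing after selection, and $\emptyset \cupcat \OR' = \OR'$ still holds), and the convention that union renames identifiers to be disjoint (absorbed by working up to isomorphism). The one genuinely load-bearing step, and hence the main obstacle to get right, is the sequentialization via $\ordern{2} \times_\lex (-)$: one must confirm that comparing the level attribute first and breaking ties by the order of $S$ forces \emph{all} surviving level-$1$ tuples below \emph{all} surviving level-$2$ tuples without introducing any spurious comparability within a level — which is exactly what the definition of $\times_\lex$ guarantees.
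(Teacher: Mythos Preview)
Your construction is exactly the one the paper gives, down to the choice of tagging with $\singleton{\langle 1\rangle}$, $\singleton{\langle 2\rangle}$, the outer $\ordern{2}\times_\lex(-)$, the selection $.1=.2$, and the projection onto attributes $3,\ldots,n+2$. The paper simply states the query and writes ``It is easily verified that $Q_n$ satisfies the claimed property,'' so your explanation of why the selection on the two stacked copies yields precisely the series composition is in fact more detailed than what the paper provides.
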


  \begin{proof}
    For any $n \in \mathbb{N}$ and names $R$ and $R'$,
    consider the following query (using again numerical attribute names for
    simplicity):
    \[Q_n(R, R') \defeq
\Pi_{3\dots n+2} \left(\sigma_{.1= .2} \left(\ordern{2}
\times_\lex ((\singleton{1}\times_\lex R) \cupgen
   (\singleton{2}\times_\lex R'))\right)\right)\]
   It is easily verified that $Q_n$ satisfies the claimed property.
  \end{proof}

  By contrast, we show that concatenation cannot be captured with \Pgen
  and $\dupelim$.
\begin{lemma}\label{lem:noconcat}
  For any arity $n \in \mathbb{N}_+$ and distinguished relation names $R$ and~$R'$,
  there is no \Pgen query $Q_n$ (possibly with $\dupelim$) such that, for any
  po-relations $\OR$ and $\OR'$ of arity~$n$, letting $D$
  be the
  po-database that maps $R$ to $\OR$ and $R'$ to $\OR'$, the query result $Q_n(D)$
  is $\OR \cupcat \OR'$.
\end{lemma}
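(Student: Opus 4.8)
The plan is to isolate an order-theoretic invariant that is bounded for \Pgen queries (even with $\dupelim$) but that concatenation blows up: the \emph{height} of a po-relation, that is, the length of a longest chain in its strict partial order; write $\mathrm{ht}(\OR)$ for this quantity. Series composition adds heights — a longest chain of $\OR \cupcat \OR'$ is a longest chain of $\OR$ followed by a longest chain of $\OR'$, so $\mathrm{ht}(\OR \cupcat \OR') = \mathrm{ht}(\OR) + \mathrm{ht}(\OR')$ — whereas I claim that for a \emph{fixed} \Pgen query $Q$ (possibly using $\dupelim$) there is a constant $c_Q$, namely the largest $k$ such that $\ordern{k}$ syntactically occurs in $Q$ (and $c_Q \defeq 1$ if none does), with
\[
\mathrm{ht}(Q(D)) \;\le\; \max\!\Big(c_Q,\ \max_{R}\mathrm{ht}(R^D)\Big)
\]
for every input database $D$, the maximum ranging over the relation names assigned by $D$. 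Intuitively, $\times_\lex$ — the operator that would add heights, exactly like $\cupcat$ — is precisely the one missing from \Pgen.

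The second step is to prove this bound by structural induction on $Q$. The constant expressions $\singleton{t}$ and $\ordern{k}$ have heights $1$ and $k$, both at most $c_Q$; $\sigma_\psi$ restricts the order to a subset of identifiers, which cannot increase height; $\Pi$ leaves the order untouched; $\cup$ is parallel composition, so any chain of the result lies entirely in one operand and the result's height is the maximum of the two operands' heights, preserving the bound by induction; and $\times_\gen$ is the direct product, where a chain must increase \emph{strictly in both} coordinates, so it projects to chains of equal length in both operands, giving $\mathrm{ht}(\OR \times_\gen \OR') = \min(\mathrm{ht}(\OR), \mathrm{ht}(\OR'))$, again within the bound.

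The third step concludes the lemma. Suppose for contradiction that some \Pgen query $Q_n$ (possibly with $\dupelim$) computes concatenation at arity $n$. Put $m \defeq c_{Q_n}$ and let $D$ map both $R$ and $R'$ to a chain of $m$ tuples of arity $n$, so $\mathrm{ht}(R^D) = \mathrm{ht}(R'^D) = m$. The bound yields $\mathrm{ht}(Q_n(D)) \le m$, but $\OR \cupcat \OR'$ is a chain of $2m$ tuples, of height $2m > m$ — a contradiction. (When $n$ forces arity bookkeeping one simply takes all tuples equal to $\langle 1,\dots,1\rangle$; only the height matters.)

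The main obstacle is the $\dupelim$ case of the induction: we need that duplicate elimination cannot turn short chains into long ones, i.e.\ that every chain in $\dupelim(\OR)$ lifts to a chain of at least the same length in $\OR$. This is not automatic under an arbitrary notion of duplicate elimination — a rule that derives $v < w$ from the existence of \emph{some} pair of comparable copies could, through transitivity, manufacture a chain longer than any present in $\OR$ — so here the proof must invoke the precise definition of $\dupelim$ from Section~\ref{sec:extensions}, under which the comparabilities retained lift back to comparabilities between (arbitrarily chosen) copies of the corresponding values in the input; once that is verified, $\mathrm{ht}(\dupelim(\OR)) \le \mathrm{ht}(\OR)$ and the induction goes through. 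Everything else is routine.
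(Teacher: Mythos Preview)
Your height invariant is a clean alternative to the paper's $v$-impartiality argument and works perfectly for every \Pgen operator; the inductive cases you give for $\sigma$, $\Pi$, $\cup$, $\times_\gen$, and the constants are all correct. The paper instead tracks which comparabilities can ever arise between tuples differing in a fixed value~$v$, a more delicate invariant, so without $\dupelim$ your route is strictly simpler.

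The gap is the $\dupelim$ case. You correctly identify that the danger is a semantics where $[v] < [w]$ is derived from \emph{some} comparable pair of copies and then closed transitively --- but that is exactly what the paper's $\dupelim$ does. The output of $\dupelim(\OR)$ is defined (Theorem~\ref{thm:duelim-por}) as the transitive closure of the value-equality quotient $G_\OR$, whose edges are $([t_1],[t_2])$ whenever \emph{some} $\id_1$ with value $t_1$ and \emph{some} $\id_2$ with value $t_2$ satisfy $\id_1 < \id_2$. So your claimed lift ``the comparabilities retained lift back to comparabilities between arbitrarily chosen copies'' is false, and height can blow up. Concretely: take identifiers $a_1^1,a_1^2,\ldots,a_n^1,a_n^2$ with $T(a_i^j)=A_i$ (pairwise distinct values) and only the comparabilities $a_i^2 < a_{i+1}^1$; then $\mathrm{ht}(\OR)=2$, yet $G_\OR$ is the path $A_1\to\cdots\to A_n$ and $\dupelim(\OR)$ is the chain $A_1<\cdots<A_n$ of height~$n$. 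Hence $\mathrm{ht}(\dupelim(\OR))\le\mathrm{ht}(\OR)$ fails and your induction breaks at that step. The paper's $v$-impartiality invariant survives $\dupelim$ because the obstruction propagates along every edge of $G_\OR$ (if the $i$-th attribute equals $v$ on one endpoint and not the other, the edge cannot exist), so no transitive chain can cross the $v$/non-$v$ boundary; height has no such locality and cannot be repaired here.
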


  To prove Lemma~\ref{lem:noconcat}, we first introduce the following concept:

  \begin{definition}
    Let $v \in \calD$. We call a
    po-relation $\OR = (\ID, T, <)$ \emph{$v$-impartial} if, for any
    two identifiers $\id_1$ and $\id_2$ and $1 \leq i \leq \arity{\OR}$ such that exactly
    one of $T(\id_1).i$, $T(\id_2).i$ is $v$, the following holds: $\id_1$ and
    $\id_2$ are \emph{incomparable}, namely, neither $\id_1 < \id_2$ nor $\id_2
    < \id_1$ hold.
  \end{definition}

\begin{lemma}
  \label{lem:incomp}
  Let $v \in \calD \backslash \mathbb{N}$ be a value. For any \Pgen query
  $Q$, possibly with $\dupelim$,
  for any
  po-database $D$ of $v$-impartial
  po-relations, the po-relation $Q(D)$ (when duplicate elimination does
  not completely fail) is
  $v$-impartial.
\end{lemma}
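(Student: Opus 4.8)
The plan is to prove the statement by structural induction on the query~$Q$, showing that each operator of \Pgen, and also $\dupelim$, maps $v$-impartial po-relations to $v$-impartial po-relations; if some occurrence of $\dupelim$ completely fails then $Q(D)$ is itself a complete failure, so there is nothing to prove. The base cases would be handled first: a relation name evaluates to a $v$-impartial po-relation by the hypothesis on~$D$; the singleton $\singleton{t}$ has a single identifier, so the condition defining $v$-impartiality is vacuous; and $\ordern{n}$ carries no identifier with value~$v$ because $v \notin \mathbb{N}$, so the condition is again vacuous --- this is the sole place where the hypothesis $v \in \calD \backslash \mathbb{N}$ is used. The non-product operators are also routine: for $\sigma_\psi$ the identifier set shrinks and the order is restricted while values are untouched, so any surviving pair that was incomparable in the input stays incomparable, whatever~$\psi$; for $\Pi_{A_1,\dots,A_n}$ the order and identifiers are unchanged and output position~$i$ reads off input attribute~$A_i$, so ``value~$v$ at an output position'' is equivalent to ``value~$v$ at the corresponding input attribute'' and incomparability is inherited; for $\cup$, a pair of identifiers from the same side inherits its (in)comparability from that (inductively $v$-impartial) input, and a pair from opposite sides is incomparable by parallel composition. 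The two genuinely substantive cases are the direct product and $\dupelim$.

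The direct product is exactly where the restriction to \Pgen is used. I would argue: in $\OR \times_\gen \OR'$, take product identifiers $(\id_1,\id_1'),(\id_2,\id_2')$ and an output position~$i$ on which exactly one of the two tuples takes the value~$v$. If~$i$ falls in the first component, then $\id_1$ and~$\id_2$ differ in the $v$-status of the corresponding position of~$\OR$, hence are incomparable in~$<$ by $v$-impartiality of~$\OR$; since $\times_\gen$ can make $(\id_1,\id_1')$ comparable to $(\id_2,\id_2')$ (in either direction) only when $\id_1$ and~$\id_2$ are comparable in~$<$, the two product identifiers are incomparable. The case where~$i$ falls in the second component is symmetric through~$\OR'$. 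I would note in the write-up that this argument collapses for $\times_\lex$, where $(\id_1,\id_1') <_\lex (\id_2,\id_2')$ may hold merely because $\id_1 < \id_2$, regardless of the second components --- precisely the slack exploited to prove Lemma~\ref{lem:noconcat}.

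The step I expect to be the main obstacle is $\dupelim$, since its output order is obtained as a \emph{transitive closure} over merged identifiers, so I must rule out that this closure creates a new comparability straddling the $v$/non-$v$ boundary. Recalling that $\dupelim(\OR) = (\ID'', T'', {<''})$ with $\ID''$ the quotient of~$\ID$ by equality of tuple value, $<''$ the transitive closure of the relation holding between classes~$C$ and~$C'$ whenever some $\id \in C$ and $\id' \in C'$ satisfy $\id < \id'$, and complete failure the case where $<''$ is not irreflexive, I would argue: assuming no complete failure, suppose toward a contradiction that $C_1 \neq C_2$ with $C_1 <'' C_2$ while exactly one of $T''(C_1).i$, $T''(C_2).i$ is~$v$ for some position~$i$. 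Unfolding the transitive closure yields a chain of original identifiers whose consecutive members are either equal-valued or related by~$<$; along this chain the $v$-status of position~$i$ cannot change across an equal-valued step (same tuple value) nor across a $<$-step (by $v$-impartiality of~$\OR$, a comparability cannot link two identifiers differing in the $v$-status of position~$i$), so it is constant --- contradicting that it is~$v$ at~$C_1$ and not~$v$ at~$C_2$. The reverse direction $C_2 <'' C_1$ is symmetric, which would close the induction.
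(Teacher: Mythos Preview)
Your proposal is correct and follows the same structural induction as the paper's proof, with matching arguments for the base cases, selection, projection, union, and $\times_\gen$. Your treatment of $\dupelim$ is in fact more careful than the paper's: the paper dispatches this case in one line (``duplicate elimination \ldots\ results in a po-relation where the order between tuples with different values is preserved''), whereas you explicitly unfold the transitive closure defining the quotient order and argue that the $v$-status of position~$i$ is invariant along any witnessing chain---which is what is actually needed, since a comparability $C_1 <'' C_2$ in the quotient may pass through intermediate classes.
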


\begin{proof}
  Let $v \in \calD \backslash \mathbb{N}$ be such a value.
  We show the claim by induction on the query $Q$.

  The base cases are the following:

  \begin{itemize}
    \item For the base relations, the claim is vacuous by our hypothesis on~$D$.
    \item For the singleton constant expressions, the claim is trivial as
      they contain less than two tuples.
    \item For the $\ordern{i}$ constant expressions, the claim is
      immediate as $v \notin \mathbb{N}$.
  \end{itemize}

  We now prove the induction step:

  \begin{itemize}
    \item For selection, the claim is shown by noticing that, for any
      $v$-impartial
      po-relation $\OR$,
      letting $\OR'$ be the image of $\OR$ by any selection, $\OR'$
      is itself $v$-impartial. Indeed, considering two identifiers $\id_1$ and
      $\id_2$ in
      $\OR'$ and $1 \leq i \leq \arity{\OR}$ satisfying the condition, as $\OR$ is
      $v$-impartial, $\id_1$ and $\id_2$ are incomparable in $\OR$, so they are also
      incomparable in $\OR'$.
    \item For projection, the claim is also immediate as the property to prove
      is maintained when reordering, copying or deleting attributes. Indeed,
      considering again two identifiers $\id_1'$ and $\id_2'$ of $\OR'$ and $1
      \leq i' \leq \arity{\OR'}$, the respective preimages $\id_1$ and $\id_2$ in~$\OR$ of $\id_1'$ and
      $\id_2'$ before the
      projection satisfy the same condition for some different $1 \leq i \leq
      \arity{\OR}$ which is the
      preimage of $i'$, so we again use the impartiality of the original
      po-relation to conclude.
    \item For union, the property is preserved. Indeed, for $\OR'' \defeq \OR \cupgen
      \OR'$, writing $\OR'' = (\ID'', T'', {<''})$, assume by contradiction the existence of two identifiers $\id_1,
      \id_2 \in
      \OR''$ and $1 \leq i \leq \arity{\OR''}$ such that exactly one of
      $T''(\id_1).i$ and
      $T''(\id_2).i$ is $v$ but (without loss of generality) $\id_1 < \id_2$ in
      $\OR''$.
      It is easily seen that, as $\id_1$ and
      $\id_2$
      are not incomparable, they must come from the same relation; but then, as
      that relation was $v$-impartial, we have a contradiction.
    \item For duplicate elimination, the property is preserved as
      duplicate elimination (when it does not fail) results in a
      po-relation where the order between tuples with different values is
      preserved.
    \item We now show that the property is preserved for $\times_\gen$. Consider
      $\OR'' \defeq \OR\times_\gen \OR'$ where $\OR$ and $\OR'$ are
      $v$-impartial, and write $\OR'' = (\ID'', T'', <'')$ as above.
      Assume that there are two identifiers $\id_1''$ and $\id_2''$ of~$\ID''$
      and $1 \leq i \leq \arity{\OR''}$ that violate the $v$-impartiality of~$\OR''$.
      Let $(\id_1, \id_1'), (\id_2, \id_2') \in \ID \times \ID'$ be the
      pairs of identifiers used to create $\id_1''$ and $\id_2''$.
  We distinguish on whether $1 \leq i \leq \arity{\OR}$ or $\arity{\OR} < i \leq
  \arity{\OR} + \arity{\OR'}$. In the first case, we deduce that exactly one of
      $T(\id_1).i$ and $T(\id_2).i$ is $v$, so that in particular $\id_1 \neq
      \id_2$.
      Thus, by definition of the order in $\times_\gen$, it is easily seen that,
  because $\id''_1$ and $\id''_2$ are comparable in~$\OR''$,
  $\id_1$ and $\id_2$ must compare in the same way in~$\OR$, contradicting the
  $v$-impartiality of~$\OR$.
  The second case is symmetric.\qedhere
  \end{itemize}
\end{proof}

We now conclude with the proof of Lemma~\ref{lem:noconcat}:

\begin{proof}
  Let us assume by way of contradiction that there is $n \in \mathbb{N}_+$ and a
  \Pgen query~$Q_n$, possibly with $\dupelim$ that captures $\cupcat$. 
  Let $v \neq v'$ be two distinct
  values in $\calD \backslash
  \mathbb{N}$, and consider the singleton
  po-relation $\OR$ containing one identifier of value~$t$ and $\OR'$ containing
  one identifier of value $t'$, where
  $t$ (resp.\ $t'$) are tuples of arity $n$ containing
  $n$ times the value $v$ (resp.\ $v'$). Consider the
  po-database $D$ mapping
  $R$ to $\OR$ and $R'$ to $\OR'$.
  Write $\OR'' \defeq Q_n(D)$.
  By our assumption, as $\OR'' = (\ID'', T'', {<''})$ must be $\OR \cupcat \OR'$,
  it must contain an
  identifier $\id \in \ID''$ such that $T''(\id) = t$ and an identifier $\id'
  \in \ID''$ such that $T''(\id') = t'$.
  Now, as $\OR$ and $\OR'$ are (vacuously) $v$-impartial,
  we know by Lemma~\ref{lem:incomp} that $\OR''$ is $v$-impartial.
  Hence, as $n > 0$, taking $i = 1$, as $t \neq t'$ and exactly one of $t.1$ and
  $t'.1$ is $v$, we know that $\id$ and $\id'$ must be incomparable in~$<''$,
  so there is a possible world of $\OR''$ where $\id'$ precedes $\id$.
  This contradicts the fact that, as we should have $\OR'' = \OR \cupcat \OR'$,
  the po-relation $\OR''$ should have exactly one possible world, namely, 
  $(t, t')$.
\end{proof}

Lemma~\ref{lem:lexconcat} and Lemma~\ref{lem:noconcat} conclude the
proof of Theorem~\ref{thm:incomparable}.
\end{toappendix}

We have now defined a semantics on po-relations for each \PosRA
operator. We define a \emph{\PosRA query} in the expected way, as a
query built from these operators and from relation names. Calling
\emph{schema} a set $\calS$ of relation names and arities, with an
attribute name for each position of each relation, we define a
\emph{po-database} $D$ as having a po-relation $D[R]$ of the correct arity
for each relation name $R$ in~$\calS$. For a po-database $D$ and a \PosRA
query $Q$ we denote by $Q(D)$ the po-relation obtained by evaluating
$Q$ over $D$.

\begin{toappendix}
  \subsection{Proof of Proposition~\ref{prp:repsys}}
\end{toappendix}

\begin{example}
\label{exa:simplegen}

  The po-database $D$ in
Figure~\ref{fig:examplerels} contains information
  about restaurants and hotels in Paris: each po-relation has a total order
  (from top to bottom) according to customer
ratings from a given travel website, and for brevity we do not represent
  identifiers.

  Let $Q \defeq \mathit{Rest} \times_\gen
(\sigma_{\mathit{distr}\neq
\text{``12''}}(\textit{Hotel}))$.
  Its result $Q(D)$ has two possible worlds:\\{%
  \footnotesize\(
(\langle \textup{G}, 8, \textup{M}, 5\rangle , \langle \textup{G}, 8,
\textup{B}, 8\rangle , \langle \textup{TA}, 5, \textup{M}, 5\rangle
, \langle
\textup{TA}, 5, \textup{B},8\rangle ),\quad
(\langle \textup{G}, 8, \textup{M}, 5\rangle , \langle \textup{TA},
5, \textup{M}, 5\rangle ,\langle \textup{G}, 8, \textup{B}, 8\rangle
, \langle \textup{TA},5, \textup{B},8\rangle ).\)}
  In a sense, these \emph{list relations} of hotel--restaurant pairs are
  \emph{consistent} with the order in~$D$: we
do not know how to order two pairs, except when both the hotel
and restaurant compare in the same way.
  The \emph{po-relation} $Q(D)$ is represented in
  Figure~\ref{fig:example} as a Hasse diagram 
  (ordered from bottom to top), again writing tuple values instead of tuple
  identifiers for brevity.

Consider now $Q'\defeq\Pi (\sigma_{\mathit{Rest}.\mathit{distr} =
\mathit{Hotel}.\mathit{distr}} (Q))$, where $\Pi$
  projects out $\mathit{Hotel}.\mathit{distr}$. The possible worlds of~$Q'(D)$
are {\footnotesize \((\langle \textup{G},\textup{B},8\rangle,$ $\langle
\textup{TA},\textup{M},5\rangle)\)} and {\footnotesize \((\langle
\textup{TA},\textup{M},5\rangle,$ $\langle
\textup{G},\textup{B},8\rangle)\)}, intuitively reflecting two
different opinions on the order of restaurant--hotel pairs
in the same district.
  Defining $Q''$ similarly to $Q'$ but replacing $\times_\gen$ by
  $\times_\lex$ in~$Q$, we have $\pw(Q''(D)) =
  {\footnotesize (\langle\textup{G},\textup{B},8\rangle,
  \langle\textup{TA},\textup{M},5\rangle)}$.
\end{example}
We conclude by observing
that we can efficiently evaluate \PosRA queries on po-relations:

\begin{propositionrep}\label{prp:repsys}
  For any fixed \PosRA\ query $Q$, given a po-database $D$, we
  can construct the
  \mbox{po-relation}
  $Q(D)$ in polynomial time in the size of~$D$ (the polynomial degree depends on~$Q$).
\end{propositionrep}

\begin{proof}
We show the claim by a simple induction on the query~$Q$.

\begin{itemize}
  \item If $Q$ is a relation name~$R$, $Q(D)$ is obtained 
in linear time.

\item If $Q$ is a constant expression, $Q(D)$ is obtained in constant
  time.

\item If $Q=\sigma_\psi(Q')$ or $Q=\Pi_{k_1\dots k_p}(Q')$, $Q(D)$ is
  obtained in time linear in $|Q'(D)|$, and we conclude by the induction
    hypothesis.

  \item If $Q=Q_1\cup Q_2$ or $Q=Q_1 \times_\lex Q_2$ or
    $Q=Q_1\times_\gen Q_2$, $Q(D)$ is obtained in time linear in
    $|Q_1(D)|\times|Q_2(D)|$ and we conclude by the induction hypothesis.
    \qedhere
\end{itemize}
\end{proof}

\subparagraph*{Accumulation.}
We now enrich \PosRA with order-aware {\em
accumulation} as the outermost operation, inspired by
\emph{right accumulation} and \emph{iteration} in list programming,
and \emph{aggregation} in
relational databases. We fix a \emph{monoid} $(\calM, \oplus, \epsilon)$ for
accumulation and define:

\begin{definition}
  \label{def:aggregationb}
  For $n \in \mathbb{N}$,
  let
  $h: \calD^n \times
  \mathbb{N}^* \to \calM$ be a function
  called an arity-$n$ \deft{accumulation map}.
  We call $\accum_{h, \oplus}$ an \deft{arity-$n$ accumulation operator};
  its result $\accum_{h, \oplus}(L)$
  on an arity-$n$
  list relation $L = (t_1, \ldots, t_n)$ is 
  $h(t_1, 1) \oplus \cdots \oplus h(t_n,
  n)$, and it is $\epsilon$
  on an
  empty~$L$.
  For complexity purposes, we \emph{always} require accumulation operators to be
  \emph{PTIME-evaluable}, i.e.,
  given any list relation $L$, we can compute $\accum_{h,
  \oplus}(L)$ in PTIME.
\end{definition}

The accumulation operator
maps the tuples with $h$ to~$\calM$, where accumulation is performed
with~$\oplus$.
The map~$h$ may use its second argument to take into
account the absolute position of tuples in~$L$. In what
follows, we
omit the arity of accumulation
when clear from context.

\subparagraph*{The \PosRAaccbold language.} We define the
language \posRAagg that contains all queries of the form
$Q=\accum_{h,\oplus}(Q')$, where $\accum_{h,\oplus}$ is an accumulation
operator and $Q'$ is a \PosRA\ query. 
The \emph{possible results} of $Q$ on a
po-database $D$, denoted $Q(D)$, is the set of results obtained by applying
accumulation to each possible world of $Q'(D)$, namely:

\begin{definition}
  \label{def:aggreg-po}
  For a po-relation $\OR$,
  we define:
  \(\accum_{h, \oplus}(\OR)
  \defeq\{\accum_{h, \oplus}(L) \mid L \in
  \pw(\OR)\}\).
\end{definition}

Of course, accumulation has exactly one result whenever the operator $\accum_{h,
\oplus}$ does not depend on the order of input tuples:
this covers, e.g., the standard sum, min, max, etc.
Hence,
we focus
on accumulation operators which \emph{depend on the order of tuples} (e.g.,
defining
$\oplus$ as concatenation), so there may be more than one accumulation result:

\begin{example}
  \label{exa:aggreg}

  As a first example, let
  $\mathit{Ratings}(\mathit{user},\mathit{restaurant},\mathit{rating})$ be an
  \emph{unordered} po-relation describing
  the numerical ratings given by users to restaurants, where each user rated each restaurant at most
  once. Let $\mathit{Relevance}(\mathit{user})$ be a po-relation
  giving a partially-known ordering of users to indicate the relevance of
  their reviews. We wish to compute a \emph{total rating} for each restaurant
  which is given by the
  sum of its reviews weighted by a
  PTIME-computable weight
  function~$w$. Specifically, $w(i)$ gives a nonnegative weight to the rating of the $i$-th most relevant user.
  Consider 
  \(
  Q_1 \defeq
\accum_{h_1,+}(\sigma_\psi(\mathit{Relevance}\times_{\lex}\mathit{Ratings}))\)
  where we set \(h_1(t,n) \defeq t.\mathit{rating} \times w(n)\), and where
  $\psi$ is the tuple predicate:
  \(
    \mathit{restaurant}=\text{``Gagnaire''}\land\mathit{Ratings}.\mathit{user}=
    \mathit{Relevance}.\mathit{user}
\).
  The query $Q_1$ gives the total rating of \textquote{Gagnaire}, and each
possible world of $\mathit{Relevance}$ may lead to a different accumulation result.

As a second example, consider an unordered po-relation
$\mathit{HotelCity}(\mathit{hotel}, \mathit{city})$ indicating in which
city each hotel is located, and consider a po-relation
$\mathit{City}(\mathit{city})$ which is (partially) ranked by a
criterion such as interest level, proximity, etc. Now consider the
query
  \(
Q_2 \defeq \accum_{h_2,\mathrm{concat}}(
  \Pi_{\mathit{hotel}}(
Q_2'))\), where \(Q_2' \defeq \sigma_{\mathit{City}.\mathit{city} =
\mathit{HotelCity}.\mathit{city}}(
\mathit{City}\times_{\lex}\mathit{HotelCity})\), where
$h_2(t,n)\defeq t$, and where ``$\mathrm{concat}$'' denotes standard string concatenation.
$Q_2$ concatenates the hotel names according to the preference order on the city
where they are located, allowing any possible order between hotels of the same city and
between hotels in incomparable cities.
\end{example}

\section{Possibility and Certainty}\label{sec:posscertdef}
Evaluating a \PosRA or \PosRAacc query $Q$ on a po-database $D$ yields a {\em set of possible
results}: for \PosRAacc, it yields an explicit set of accumulation results, and
for \PosRA, it yields a po-relation that represents a set of possible worlds (list
relations).
The uncertainty among the results may be due to the order of the input relations being partial, due to
uncertainty yielded by the query, or both. In some cases, there is only one
possible result, i.e., a \emph{certain} answer. In other cases, we may wish
to examine multiple \emph{possible} answers. We thus define:

\begin{definition}[(Possibility and Certainty)]
  \label{def:posscert}
  Let $Q$ be a \PosRA query, $D$ be a po-database, and $L$ a list
  relation. The \emph{possibility problem} (\poss) asks 
  if $L \in \pw(Q(D))$, i.e.,
  if $L$ is a possible result.
  The \emph{certainty
  problem} (\cert) asks if $\pw(Q(D)) = \{L\}$, 
  i.e., if $L$ is the only possible result.

  Likewise,
  if $Q$ is a \PosRAacc query with 
 accumulation monoid $\calM$, for a result $v
 \in \calM$, the \poss problem asks whether $v \in Q(D)$, and \cert asks
 whether $Q(D) = \{v\}$.
\end{definition}

\subparagraph*{Discussion.}
For \PosRAacc, our definition follows the usual notion of
possible and certain answers in data integration
\cite{DBLP:conf/pods/Lenzerini02} and incomplete information \cite{Libkin06}.
For \PosRA,
we ask for possibility or certainty of an \emph{entire} output list
relation,
i.e., 
\emph{instance possibility and certainty}~\cite{KO06}. 
We now justify that these notions are useful and discuss more ``local'' alternatives.

First, as we exemplify below,
the output of a query may be certain even for complex
queries and uncertain input. 
It is important to identify such cases and present the user with the certain
answer in full,
like order-by query results in current DBMSs.
Our \cert problem is useful for this task, because we can use it to decide if a
certain output exists, and if yes, we can compute it in PTIME (by choosing any
linear extension).
However, \cert is a challenging problem to solve, because of duplicate values 
(see ``\textsf{Technical difficulties}'' below).

\begin{example}
	\label{ex:certexample}
  Consider the po-database $D$ of Figure~\ref{fig:examplerels} with the po-relations
  $\mathit{Rest}$ and $\mathit{Hotel}_2$.
  To find recommended pairs of hotels and restaurants in the same
  district, the user can write
  $Q \defeq \sigma_{\mathit{Rest}.\mathit{distr}=\mathit{Hotel}_2.\mathit{distr}}
  (\mathit{Rest} \times_{\dir} \mathit{Hotel}_2)$. Evaluating $Q(D)$ yields only
  one possible world, namely, the list relation $(\langle G,8,B,8\rangle,
  \langle \mathit{TA},5,M,5\rangle)$, which is a \emph{certain} result.

  This could also happen with larger input relations. Imagine for example that
  we join hotels and restaurants to find pairs of a hotel and a restaurant
  located in that hotel. The result can be certain if the relative ranking of the
  hotels and of their restaurants agree.
\end{example}

If there is no certain answer, deciding possibility of an instance may be
considered as ``best effort''. It can be useful, e.g., to check if a
list relation (obtained from another source) is consistent with a query result. For
example, we may wish to check if a website's 
ranking of 
hotel--restaurant pairs is
\emph{consistent} with the preferences expressed in its rankings for hotels and
restaurants, to detect when a pair is ranked higher than its components would
warrant.

When there is no overall certain answer, or when we want to check the
possibility of some aggregate property of the relation, we can use a \PosRAacc
query.
In particular, in addition to the applications of Example~\ref{exa:aggreg},
accumulation allows us to encode alternative notions of \poss and \cert for {\em
\PosRA} queries, and to express them as \poss and \cert for \PosRAacc. For example,
instead of possibility or certainty for a full relation, we can express
possibility or certainty of the \emph{location}\footnote{Remember that the {\em existence} of a tuple is not
order-dependent and thus vacuous in our setting.}
of particular tuples of interest:

\begin{example}
\label{ex:variants}
 
With accumulation we can model
  {\emph{position-based selection}} queries. Consider for instance a \emph{top-$k$}
operator on list relations, which retrieves a list relation of the first~$k$~tuples. For a
po-relation, the set of results is all possible such list relations. We can
implement {\em top-$k$} as $\accum_{h_3,\concat}$
with $h_3(t, n)$ being $(t)$ for $n \leq k$ and $\epsilon$ otherwise,
and with $\text{concat}$ being list concatenation. We can similarly compute \emph{select-at-$k$}, i.e., return the
tuple at position $k$, via $\accum_{h_4,\concat}$
with $h_4(t, n)$ being $(t)$ for $n=k$ and $\epsilon$ otherwise.

  Accumulation can also be used for a {\emph{tuple-level comparison}}.
  To check whether 
the first occurrence of a tuple $t_1$ precedes any occurrence of
$t_2$, we define $h_5$ for all $n\in\NN$ by $h_5(t_1,n) \defeq \top$, $h_5(t_2,n) \defeq \bot$ and
$h_5(t,n)\defeq\epsilon$ for $t \neq t_1,t_2$, and a monoid operator $\oplus$ such
that $\top\oplus\top=\top\oplus\bot=\top$, $\bot\oplus
\bot=\bot\oplus\top=\bot$:
  assuming that $t_1$ and $t_2$ are both present, then the result is~$\top$ if
  the first occurrence of $t_1$ precedes any occurrence of $t_2$, and it
  is~$\bot$ otherwise.
\end{example}

We study the complexity of these variants in Section \ref{sec:fpt}. We now
give examples of their use:

\begin{example}
  \label{exa:accumul}
  Consider $Q=\Pi_{\mathit{distr}}(
  \sigma_{\mathit{Rest}.\mathit{distr}=\mathit{Hotel}.\mathit{distr}}
  (\mathit{Rest} \times_{\dir} \mathit{Hotel}))$, which computes ordered
  recommendations of districts including both hotels and restaurants. Using
  accumulation as in Example~\ref{ex:variants}, 
  the user can compute the best district to stay in with
  $Q'=\text{top-}1(Q)$.
  If $Q'$ has a certain answer,
  then 
  there is a dominating
  hotel--restaurant pair in this district, which answers the user's need.
  If there is no certain answer, \poss allows the user to 
  determine 
  the \emph{possible} top-$1$ districts.
	
  We can also use \poss and \cert for \PosRAacc queries to restrict attention to
  \emph{tuples} of interest. If the user hesitates between districts $5$ and $6$, they
  can apply tuple-level comparison  to see
  whether the best pair of district~$5$ may be better (or is always better) than
  that of~$6$.
\end{example}

\subparagraph*{Technical difficulties.} The main challenge to solve \poss and
\cert for a \PosRA query $Q$ on an input po-database $D$ is that the tuple values of
the desired result~$L$ may
occur multiple times in the po-relation $Q(D)$, making it hard to match $L$ and
$Q(D)$.
In other words, even though we may compute the po-relation $Q(D)$ in PTIME (by
Proposition~\ref{prp:repsys}) and present it to the user, they still cannot
easily ``read'' possible and certain answers out of the po-relation:

\begin{example}
  \label{exa:notposet}
  Consider a po-relation $\OR = (\ID, T, {<})$ with
  $\ID = \{\id_{\mathrm{a}}, \allowbreak\id_{\mathrm{b}},
  \allowbreak\id_{\mathrm{c}},
  \allowbreak\id_{\mathrm{d}}, \allowbreak\id_{\mathrm{e}},
  \allowbreak\id_{\mathrm{f}}\}$, with
  $T(\id_{\mathrm{a}}) \defeq \langle\text{Gagnaire}, \text{fr}\rangle$,
  $T(\id_{\mathrm{b}}) \defeq
  \langle\text{Italia}, \text{it}\rangle$,
  $T(\id_{\mathrm{c}}) \defeq \langle\text{TourArgent}, \text{fr}\rangle$,
  $T(\id_{\mathrm{d}}) \defeq
  \langle\text{Verdi}, \text{it}\rangle$, $T(\id_{\mathrm{e}}) \defeq
  \langle\text{Tsukizi}, \text{jp}\rangle$,
  $T(\id_{\mathrm{f}}) \defeq \langle\text{Sola}, \text{jp}\rangle$,
  and with $\id_{\mathrm{a}} < \id_{\mathrm{c}}$, $\id_{\mathrm{b}} <
  \id_{\mathrm{c}}$, $\id_{\mathrm{c}} <
  \id_{\mathrm{e}}$, $\id_{\mathrm{d}} < \id_{\mathrm{e}}$, and
  $\id_{\mathrm{d}} < \id_{\mathrm{f}}$.
  Intuitively, $\OR$ describes a preference relation over restaurants,
  with their name and the type of their cuisine.
  Consider the \PosRA query $Q \defeq \Pi(\OR)$ that projects $\OR$ on type; we
  illustrate the result (with the original identifiers) in Figure~\ref{fig:notposet}.
  Let $L$ be the list
  relation $(\text{it}, \text{fr}, \text{jp}, \text{it}, \text{fr},
  \text{jp})$, and consider
  \poss for $Q$, $\OR$, and $L$.
 
  We have that $L \in \pw(Q(\OR))$, as shown by the linear extension
  $\id_{\mathrm{d}} <' \id_{\mathrm{a}} <' \id_{\mathrm{f}} <' \id_{\mathrm{b}} <'
  \id_{\mathrm{c}} <' \id_{\mathrm{e}}$
  of~$<$. However, this is hard to see, because 
  each of \text{it}, \text{fr}, \text{jp} appears more than once in the
  candidate list as well as in the po-relation; there are thus multiple
  ways to ``map'' the elements of the candidate list to those of the po-relation, and only some of these mappings lead to the existence of a corresponding linear extension.
  It is also challenging to check if $L$ is a
  certain answer: here, it is not, as there are other possible answers,
  e.g.: $(\text{it},
  \text{fr}, \text{fr}, \text{it}, \text{jp}, \text{jp})$.
\end{example}
For \PosRAacc queries, this technical difficulty is even accrued because of the
need to figure out the possible ways in which the desired accumulation result
can be obtained.

\section{General Complexity Results}\label{sec:posscert}
\begin{toappendix}
  We summarize the complexity results of
  Sections~\ref{sec:posscert}--\ref{sec:fpt} in Table~\ref{tab:complexity}.
  \begin{table*}
  \footnotesize
  \caption{Summary of complexity results for possibility and
  certainty}
  \vspace{-1em}
  \label{tab:complexity}
  {
    {
      \renewcommand{\tabcolsep}{4pt}
  \begin{tabularx}{\linewidth}{Xllll@{~~}l}
    \toprule
    &
    {\bf Query} &
    {\bf Restrict.\ on accum.} &
    {\bf Input po-relations} &
    \multicolumn{2}{l}{\bf Complexity} \\
    \midrule
    \poss &
    \PosRA/\posRAagg &
    ---
      & arbitrary
      & NP-c. & (Thm.~\ref{thm:posscomp1}) \\
\cert & \posRAagg & --- & arbitrary
      & coNP-c. & (Thm.~\ref{thm:certcomp}) \\
    \cert & \PosRA & --- &
arbitrary
      & PTIME & (Thm.~\ref{thm:certptime}) \\
    \midrule
    \poss & \Plex & ---
      & totally ordered
      & PTIME & (Thm.~\ref{thm:aggregwCorr}) \\
    \poss & \Plex & ---
      & width $\leq k$
      & PTIME & (Thm.~\ref{thm:aggregw}) \\
    \poss & \Pgen & ---
      & totally ordered
      & NP-c. & (Thm.~\ref{thm:posscompextend1}) \\
    \poss & \Pnoprod & ---
      & ia-width or width $\leq k$
      & PTIME & (Thm.~\ref{thm:aggregnoprod}) \\
    \poss & \Plex/\Pgen & ---
      & 1 total.\ ord., 1 unord.
      & NP-c. & (Thm.~\ref{thm:posscompextended}) \\
    \midrule
\cert & \posRAagg & cancellative &
arbitrary
      & PTIME & (Thm.~\ref{thm:certaintyptimec}) \\
    \poss & \PosRAacc & finite and pos.-invar.
    & totally ordered
      & NP-c. & (Thm.~\ref{thm:possfrihypoposs}) \\
    \cert & \PosRAacc & finite and pos.-invar.
    & totally ordered
      & coNP-c. & (Thm.~\ref{thm:possfrihypocert}) \\
    both & \Plexacc & finite
      & width $\leq k$
    & PTIME & (Thm.~\ref{thm:aggregwa}) \\
    both & \Pnoprodacc & finite and pos.-invar.
      & ia-width or width $\leq k$
      & PTIME & (Thm.~\ref{thm:aggregnoproda}) \\
    \poss & \Pnoprodacc & pos.-invar.
      & unordered
      & NP-c. & (Thm.~\ref{thm:possgri}) \\
    \bottomrule
  \end{tabularx}
}
}
\end{table*}

\end{toappendix}

We have defined the \PosRA and \PosRAacc query languages, and defined and
motivated the problems \poss
and \cert. We now start the study of their complexity, which is the main technical
contribution of our paper. We will always study their \emph{data complexity}\footnote{In \emph{combined complexity}, with $Q$ part of
  the input, \poss and \cert are easily seen to be respectively NP-hard and coNP-hard, by reducing from
  the evaluation of Boolean conjunctive queries (which is NP-hard in data
complexity \cite{AHV-1995}) even without order.},
where the query $Q$ is fixed: in particular, for \PosRAacc, the accumulation map and
monoid, which we assumed to be PTIME-evaluable, is fixed as part of the query,
though it is allowed to be infinite. The input to \poss and \cert for the fixed
query $Q$ is the
po-database $D$ and the candidate result (a list relation for \PosRA, an
accumulation result for \PosRAacc).

\subparagraph*{Possibility.} We start with \poss, which we show
to be NP-complete in general.

\begin{toappendix}
\subsection{Proofs of Theorems~\ref{thm:posscomp1} and~\ref{thm:certcomp}}
\end{toappendix}

\begin{theoremrep}\label{thm:posscomp1} The \poss\ problem is in NP for any \PosRA
  or \PosRAagg query. Further, there exists a \PosRA query and a \PosRAagg query for which the \poss problem is
  NP-complete.
\end{theoremrep}

\begin{proofsketch}
 The membership for \PosRA in NP is clear: guess a linear extension and check that
  it realizes the candidate possible result. 
  For hardness, as in previous work
  \cite{warmuth1984complexity}, we reduce from the UNARY-3-PARTITION
problem~\cite{garey-johnson}: given a number $B$ and $3m$ numbers written in unary,
  decide if they can be partitioned in triples that all sum to~$B$.
We reduce this to \poss for the identity \PosRA query, on an arity-1 input
  po-relation where each input number~$n$ is represented as a chain of $n+2$
  elements. The first and last elements of each chain are respectively called
  start and end markers, and elements of distinct
  chains are pairwise
  incomparable. The candidate
  possible world $L$ consists of $m$ repetitions of the following pattern: three start markers, $B$
  elements, three end markers. A linear extension achieves $L$ iff the
  triples matched by~$<$ to each copy of the pattern are a solution to
  UNARY-3-PARTITION, hence \poss for $Q$ is NP-hard. 
  This implies hardness for \PosRAagg, when accumulating
  with the identity map and concatenation (so that any list relation is mapped
  to itself).
\end{proofsketch}

In fact, as we will later point out,
hardness holds even for quite a restrictive setting, with a more intricate proof: see
Theorem~\ref{thm:posscompextend1}.

\subparagraph*{Certainty.} We show that \cert is coNP-complete for \PosRAacc:

\begin{theoremrep}\label{thm:certcomp}
	The \cert problem is in coNP for any \PosRAagg query, and there is a
        \PosRAagg query for which it is coNP-complete.
\end{theoremrep}

\begin{proofsketch}
	Again, membership is immediate. We show hardness of \cert by studying a
        \PosRAacc query $Q_{\a}$ that checks if two input
        po-relations $\OR$ and $\OR'$ have some common possible world: $Q_{\a}$
        does so
        so by testing if one can alternate between elements of $\OR$ and
        $\OR'$ with the same label, using accumulation in the 
	transition monoid of a deterministic finite automaton.
        We show hardness of \poss for $Q_{\a}$ (as in the previous result), and
        further ensure that $Q_{\a}$ always has at most two possible
        accumulation results, no matter the input. Hence, \poss for~$Q_{\a}$ reduces
        to the negation of \cert for $Q_{\a}$, so that \cert is also hard.
\end{proofsketch}

\begin{toappendix}
  We first show the upper bounds:

\begin{proposition}
  For any \PosRA or \PosRAagg{} query $Q$, \poss for $Q$ is in NP and \cert for $Q$ is in co-NP.
\end{proposition}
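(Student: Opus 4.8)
The plan is to exhibit, in both cases, a polynomial-size certificate that can be verified in polynomial time, namely a suitably chosen linear extension of the relevant po-relation. I will handle \poss and \cert in turn, treating the \PosRA and \PosRAagg cases together since the latter subsumes the former up to one extra accumulation step at the end.

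For \poss, given the fixed query $Q$ and the input (the po-database $D$ together with the candidate list relation $L$ in the \PosRA case, or the candidate monoid value $v \in \calM$ in the \PosRAagg case), I first compute the witnessing po-relation: this is $Q(D)$ for \PosRA, and $Q'(D)$ for $Q = \accum_{h,\oplus}(Q')$ in the \PosRAagg case. By Proposition~\ref{prp:repsys} this can be done in time polynomial in $|D|$ (the degree depending on the fixed $Q$), so the computed po-relation $\OR = (\ID, T, <)$ has polynomial size. The NP machine then guesses a total order $<'$ on $\ID$ — a polynomial-size object, being an ordering of the $|\ID|$ identifiers — and checks in polynomial time that $<'$ is a linear extension of $<$, i.e.\ that it is consistent with every pair related by $<$. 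From $<'$ it reads off the unique possible world $L' \in \pw(\OR)$ induced by listing the tuple values of $\ID$ in $<'$-order, again in polynomial time. In the \PosRA case it accepts iff $L' = L$; in the \PosRAagg case it additionally computes $\accum_{h,\oplus}(L')$, which is polynomial-time by the standing PTIME-evaluability requirement on accumulation operators in Definition~\ref{def:aggregationb}, and accepts iff this equals $v$. By the definition of $\pw$ (and of accumulation over po-relations) this accepts on some guess iff $L \in \pw(Q(D))$, respectively $v \in Q(D)$, so \poss is in NP.

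For \cert, the key observation is that $\pw(\OR)$ is never empty — a finite strict partial order always admits at least one linear extension — so $\pw(Q(D)) = \{L\}$ holds iff \emph{every} possible world of $Q(D)$ equals $L$; there is no need for a separate check that $L$ is itself possible. Hence the complement of \cert is the statement ``there is a possible world of $Q(D)$ that differs from $L$'' (respectively ``there is an accumulation result of $Q(D)$ that differs from $v$''). An NP machine decides this exactly as above: compute $\OR$ via Proposition~\ref{prp:repsys}, guess a total order $<'$, verify it extends $<$, read off the induced list relation $L'$ and (in the \PosRAagg case) compute $\accum_{h,\oplus}(L')$ in polynomial time, and accept iff $L' \neq L$ (respectively $\accum_{h,\oplus}(L') \neq v$). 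Thus the complement of \cert is in NP, i.e.\ \cert is in coNP.

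The only point that needs a little care — and the closest thing to an obstacle — is checking that every ingredient is genuinely polynomial: the size of the computed po-relation is controlled by Proposition~\ref{prp:repsys}, where it is essential that $Q$ is fixed (the polynomial degree grows with the query, e.g.\ with the number of product operators); a guessed linear extension is linear in that size; verifying consistency with $<$ is quadratic in the number of identifiers; and evaluating the accumulation operator on the induced list relation is polynomial by the explicit PTIME-evaluability condition in Definition~\ref{def:aggregationb}. Everything else is routine bookkeeping.
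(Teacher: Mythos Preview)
Your proof is correct and follows essentially the same approach as the paper: compute the po-relation in PTIME via Proposition~\ref{prp:repsys}, guess a linear extension, verify it in PTIME, and check whether the induced possible world (or accumulation result) matches the candidate for \poss, or differs from it for the complement of \cert. Your added remark that $\pw(\OR)$ is nonempty, and the explicit accounting of why each step is polynomial, are welcome details but do not change the argument.
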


\begin{proof}
  We show the results for \PosRAagg{} queries, as the same clearly holds for
  \PosRA queries.
To show the NP membership of \poss, evaluate in PTIME the query without
accumulation using Proposition~\ref{prp:repsys}, yielding a
po-relation~$\OR$.
Now, guess a total order of $\OR$, checking in PTIME
that it is compatible with the comparability relations of $\OR$.
If there is no accumulation function, check that it achieves the
candidate result. Otherwise, evaluate the accumulation (in PTIME as the
accumulation operator is PTIME-evaluable),
and check that the correct result is obtained.
  
To show the co-NP membership of \cert, follow the same reasoning but guessing
an order that achieves a result different from the candidate result.
\end{proof}

We now show the lower bounds. We first show the lower bound of
Theorem~\ref{thm:posscomp1} for \poss on a \PosRA query. In fact, when arbitrary
po-relations are allowed, \poss is already hard for a trivial query: we will
use non-trivial \PosRA queries later to show hardness of \poss on restricted input
po-relations (cf.\ Theorem~\ref{thm:posscompextend1} and
Theorem~\ref{thm:posscompextended}).

\begin{proposition}
  \label{prp:posshardsimple}
  There is a \PosRA query $Q$ such that the \poss problem for $Q$ is NP-hard.
\end{proposition}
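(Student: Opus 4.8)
The plan is to reduce from the UNARY-3-PARTITION problem, following the proof-sketch of Theorem~\ref{thm:posscomp1}. Recall that an instance consists of a target bound $B$ and a multiset of $3m$ positive integers $a_1, \ldots, a_{3m}$, all written in unary, with $\sum_j a_j = mB$; the question is whether the $a_j$ can be partitioned into $m$ triples each summing to~$B$. This problem is strongly NP-complete, so the unary encoding is legitimate input for a data-complexity reduction.

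First I would fix the query to be the identity \PosRA query $Q$ on a single arity-$1$ relation name $R$, i.e.\ $Q \defeq R$. Given a UNARY-3-PARTITION instance, I construct the po-database $D$ as follows: for each input number $a_j$, create a chain of $a_j + 2$ fresh identifiers, totally ordered among themselves, and pairwise incomparable with the identifiers of all other chains. The tuple value of the first element of the $j$-th chain is a ``start'' marker~$\langle \s\rangle$, the value of the last element is an ``end'' marker~$\langle\e\rangle$, and the $a_j$ middle elements all get a ``filler'' value~$\langle \f\rangle$; here $\s,\e,\f$ are three distinct constants in~$\calD\setminus\Nat$. Thus $D[R]$ is a disjoint (parallel) union of $3m$ chains, and $\pw(Q(D))$ is exactly the set of interleavings of these chains that respect each chain's internal order. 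The candidate list relation~$L$ is the concatenation of $m$ identical blocks, each block being $(\s,\s,\s,\underbrace{\f,\ldots,\f}_{B},\e,\e,\e)$; note $L$ has length $m(B+6) = \sum_j(a_j+2) + 3m + 3m = |\ID|$, so it has the right size, and it is computable in polynomial time since everything is in unary.

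Next I would prove correctness, i.e.\ that the instance is a positive instance of UNARY-3-PARTITION iff $L \in \pw(Q(D))$. For the forward direction, given a partition into triples, I lay the chains out block by block: for the triple assigned to block $i$, I first emit its three start markers (legal since start markers are the minima of their chains), then interleave the $a_j$ fillers of the three chains arbitrarily --- there are exactly $B$ of them since the triple sums to~$B$ --- then emit the three end markers; concatenating over $i=1,\ldots,m$ gives a linear extension of~$<$ realizing~$L$. For the converse, suppose a linear extension $<'$ realizes~$L$. The key observation is that the positions of the three $\s$'s at the top of block~$i$ must be occupied by the start markers of three distinct chains (the start marker is the unique minimal element of each chain, so at the point where block~$i$'s starts are emitted, no filler or end marker of those chains can have appeared, and conversely once a chain's filler or end appears its start is already placed); moreover the three $\e$'s at the bottom of block~$i$ force those same three chains to be fully consumed within blocks $1,\ldots,i$. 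An induction on~$i$ then shows that block~$i$ consumes exactly three whole chains, and since block~$i$ contains exactly $B$ filler positions, the three chains assigned to it have sizes summing to $B + 6$, i.e.\ their original numbers sum to~$B$. This yields the desired triple partition. Finally, NP-hardness for a \PosRAagg query follows immediately, as noted in the sketch, by taking the accumulation operator with $h$ the identity map $h(t,n)\defeq (t)$ and $\oplus$ list concatenation, so that $\accum_{h,\oplus}$ maps every list relation to itself, reducing \poss for this \PosRAagg query to \poss for~$Q$.

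The main obstacle is the converse direction of correctness: one must argue carefully that an arbitrary linear extension realizing~$L$ is forced to align its block structure with a genuine partition into triples, rather than, say, letting a chain ``straddle'' two blocks or be split across non-adjacent blocks. The start/end markers are exactly the gadget that prevents this --- a chain cannot contribute an end marker before its start marker, and the rigid pattern of three $\s$'s and three $\e$'s per block leaves no room for a fourth chain to sneak in or for a chain to be only partially used --- but making this precise requires the induction on block index sketched above, tracking which chains are ``open'' (start emitted, end not yet emitted) at each position of~$<'$ and showing this set has size exactly three within a block and is emptied at each block boundary.
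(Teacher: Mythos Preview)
Your proposal is correct and follows essentially the same approach as the paper's own proof: both reduce from UNARY-3-PARTITION via the identity query on a single arity-$1$ relation, encode each input number as a chain with start/filler/end markers, and take the candidate list to be $m$ repetitions of $\s^3\,\f^B\,\e^3$; the converse correctness argument in the paper is exactly the block-by-block induction you outline. (One trivial slip: the intermediate expression $\sum_j(a_j+2) + 3m + 3m$ in your size check is wrong --- the correct count is simply $\sum_j(a_j+2) = mB + 6m = m(B+6)$, which already matches $|L|$.)
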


This result can also be shown from existing
work~\cite{warmuth1984complexity} about the complexity of the so-called
\emph{shuffle problem}: given a string $w$ and a tuple of strings $s_1, \ldots,
s_n$ on the fixed alphabet $A=\{a, b\}$, decide whether there is an interleaving
of $s_1, \ldots, s_n$ which is equal to~$w$. It is easy to see that there is a
reduction from the shuffle problem to the \poss problem, by representing each string
$s_i$ as a totally ordered relation $L_i$ of tuples labeled~$a$ and~$b$ that
code the string, letting $\OR$ be the po-relation which is the union of the
$L_i$, and asking if the totally ordered relation that codes~$w$ is a possible
world of the identity query on the po-relation~$\OR$. Hence, as the shuffle
problem is shown to be NP-hard in~\cite{warmuth1984complexity}, this implies the
same for \poss. We nevertheless give a self-contained proof of
Proposition~\ref{prp:posshardsimple}, because we will be extending this proof 
to show different results in Theorem~\ref{thm:posscompextend1}.
We note that our proof is in fact very similar to the hardness
proof of~\cite{warmuth1984complexity}; see specifically Lemma~3.2
of~\cite{warmuth1984complexity}.

\begin{proof}
  The reduction is from the UNARY-3-PARTITION
  problem, which is NP-hard~\cite{garey-johnson}: given $3m$ integers $E = (n_1, \ldots, n_{3m})$
  written in unary (not necessarily distinct) and a number~$B$, decide if the integers can be partitioned
  in triples such that the sum of each triple is~$B$. We reduce an instance
  $\mathcal{I} = (E, B)$ of UNARY-3-PARTITION to a \poss instance in PTIME. We
  use the trivial identity query $Q \defeq R$, where $R$ is a relation name of
  arity~$1$. We will use an input po-database $D$ that maps the relation name $R$ to
  a po-relation $\OR$, and we now
  describe how to construct the input relation $\OR = (\ID, T, <)$ in PTIME from
  the UNARY-3-PARTITION instance.

  We set $\ID$ to be $\{\id_i^j \mid 1 \leq i \leq 3m, 1 \leq j \leq n_i + 2 \}$:
  this is constructible in PTIME, because the input to UNARY-3-PARTITION is
  written in unary. The relation $\OR$ will have arity~$1$ and domain $\{\s, \n,
  \e\}$, where $\s$, $\n$ and $\e$ are three arbitrary distinct values chosen
  from~$\calD$ (standing for ``start'', ``inner'', and ``end''). We set
  $T(\id_i^1) \colonequals \s$ and
  $T(\id_i^{n_i+2}) \colonequals \e$
  for all $1 \leq i \leq 3m$, and set
  $T(\id_i^j) \colonequals \n$ in all other cases, i.e., for all $1 \leq i \leq 3m$ and all
  $2 \leq j \leq n_i + 1$. Last, we define the order relation~$<$ by letting
  $\id_i^j < \id_i^{j'}$ for all $1 \leq i \leq 3m$ and $1 \leq j < j' \leq
  n_i+2$. This implies in particular that, for all $1 \leq i, i' \leq 3m$, for all $1 \leq
  j \leq n_i+2$ and $1 \leq j' \leq n_{i'}+2$, if $(i,j) \neq (i', j')$, then
  the elements $\id_i^j$ and $\id_{i'}^{j'}$ are comparable by~$<$ iff $i = i'$.
  
  Now, let $L'$ be the list relation $\s^3 \n^B \e^3$, where
  exponents denote repetition of tuples, and let $L$ be the list relation
  $(L')^m$, which we will use as a candidate possible world. We now claim that the UNARY-3-PARTITION
  instance defined by $E$ and $B$ has a solution iff $L \in \pw(\OR)$, which
  concludes the proof because the reduction is clearly in PTIME.

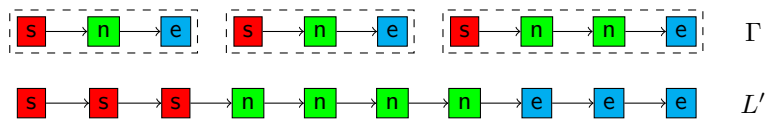
\begin{figure}
\centering
  \begin{tikzpicture}[scale=.95]
    \node (S') at (11, 1.5) {$\OR$};

    \node[draw,fill=red] (R1s) at (1, 1.5) {$\s$};
    \node[draw,fill=green] (R1n) at (2, 1.5) {$\n$};
    \node[draw,fill=cyan] (R1e) at (3, 1.5) {$\e$};
    \draw[->] (R1s) -- (R1n);
    \draw[->] (R1n) -- (R1e);

    \node[draw,fill=red] (R2s) at (4, 1.5) {$\s$};
    \node[draw,fill=green] (R2n) at (5, 1.5) {$\n$};
    \node[draw,fill=cyan] (R2e) at (6, 1.5) {$\e$};
    \draw[->] (R2s) -- (R2n);
    \draw[->] (R2n) -- (R2e);

    \node[draw,fill=red] (R3s) at (7, 1.5) {$\s$};
    \node[draw,fill=green] (R3n1) at (8, 1.5) {$\n$};
    \node[draw,fill=green] (R3n2) at (9, 1.5) {$\n$};
    \node[draw,fill=cyan] (R3e) at (10, 1.5) {$\e$};
    \draw[->] (R3s) -- (R3n1);
    \draw[->] (R3n1) -- (R3n2);
    \draw[->] (R3n2) -- (R3e);

    \draw[dashed] (0.7, 1.2) rectangle (3.3, 1.8);
    \draw[dashed] (3.7, 1.2) rectangle (6.3, 1.8);
    \draw[dashed] (6.7, 1.2) rectangle (10.3, 1.8);

    \node (L') at (11, .5) {$L'$};

    \node[draw,fill=red] (R1sl) at (1, .5) {$\s$};
    \node[draw,fill=red] (R1nl) at (2, .5) {$\s$};
    \node[draw,fill=red] (R1el) at (3, .5) {$\s$};
    \draw[->] (R1sl) -- (R1nl);
    \draw[->] (R1nl) -- (R1el);

    \node[draw,fill=green] (R2sl) at (4, .5) {$\n$};
    \node[draw,fill=green] (R2nl) at (5, .5) {$\n$};
    \node[draw,fill=green] (R2el) at (6, .5) {$\n$};
    \draw[->] (R2sl) -- (R2nl);
    \draw[->] (R2nl) -- (R2el);

    \node[draw,fill=green] (R3sl) at (7, .5) {$\n$};
    \node[draw,fill=cyan] (R3n1l) at (8, .5) {$\e$};
    \node[draw,fill=cyan] (R3n2l) at (9, .5) {$\e$};
    \node[draw,fill=cyan] (R3el) at (10, .5) {$\e$};
    \draw[->] (R3sl) -- (R3n1l);
    \draw[->] (R3n1l) -- (R3n2l);
    \draw[->] (R3n2l) -- (R3el);

    \draw[->] (R1el) -- (R2sl);
    \draw[->] (R2el) -- (R3sl);
  \end{tikzpicture}
  \caption{Example for the proof of Proposition~\ref{prp:posshardsimple}, with $E =
  (1, 1, 2)$ and $B = 4$.}
  \label{fig:gridpic2}
\end{figure}

  To see why the reduction is correct, we first show that, if $E$ is a positive
  instance of \mbox{UNARY-3-PARTITION}, then there is a linear extension $<'$ of~$<$
  which witnesses that $L \in \pw(\OR)$. Indeed, consider a
  3-partition $\mathbf{s} = (s^i_1, s^i_2, s^i_3)$ for $1 \leq i \leq m$, with $n_{s^i_1} +
  n_{s^i_2} + n_{s^i_3} = B$ for all $1 \leq i \leq m$, and each integer of $\{1, \ldots, 3m\}$ occurring exactly
  once in $\mathbf{s}$. We can realize $L$ from~$\mathbf{s}$, picking
  successively the following for $1
  \leq i \leq m$ to realize~$L'$: the tuples
  $\id^{s^i_p}_1$ for $1 \leq p \leq 3$ that are mapped to~$\s$ by~$T$; the tuples
  $\id^{s^i_p}_{j_p}$ for $1
  \leq p \leq 3$ and $2 \leq j_p \leq n_{s^i_p} + 1$ that
  are mapped to~$\n$ by~$T$ (hence, $B$ tuples in total, by the condition
  on~$\mathsf{s}$); the tuples
  $\id^{s^i_p}_{n_{s^i_p}+2}$ for $1 \leq p \leq 3$ that are mapped to~$\e$
  by~$T$.

  Conversely, we show that, if there is a linear extension $<'$ of~$<$ which
  witnesses that $L \in \pw(\OR)$, then we can build a 3-partition
  $\mathbf{s} = (s^i_1, s^i_2, s^i_3)$ for $1 \leq i \leq m$ which satisfies the
  conditions above. To see why, we first observe that, for each $1 \leq i \leq
  m$, for the $i$-th occurrence of the sublist $L'$ in~$L$, there
  must be three distinct values $s^i_1, s^i_2, s^i_3$, such that the elements
  of~$\ID$ which occur in~$<'$ at the positions of the
  value~$\n$ in this occurrence of~$L'$
  are precisely the elements of the form $\id_{s^i_p}^{j_p}$ for
  $1 \leq p \leq 3$ and $1 \leq j_p \leq n_{s^i_p}+1$. Indeed, we show this
  claim for increasing values of~$i$, from $i = 1$ to $i = m$. For the $i$-th
  occurrence of~$L'$ for some $1 \leq i \leq m$, we define $s^i_1,
  s^i_2, s^i_3$, such that the elements $\s^3$ in this occurrence of~$L'$
  are mapped to $\id^1_{s^i_1}, \id^1_{s^i_2}, \id^1_{s^i_3}$: they must indeed
  be mapped to such elements because they are the only ones mapped to~$\s$
  by~$T$. Now, the elements of the form $\id_{s^i_p}^{j_p}$ for
  $1 \leq p \leq 3$ and $1 \leq j_p \leq n_{s^i_p}+1$ are the only ones that can
  be enumerated, because are the only ones that have not been enumerated yet, and they have no
  ancestors mapped to~$\s$ by~$T$ that have not been enumerated. Further, all
  these elements must be enumerated, because this is the only possible way for~$<'$ to be
  able to enumerate $\e$-labeled elements, namely, the
  $\id_{s^i_p}^{n_{s^i_p}+2}$ for $1 \leq p \leq 3$. Now that we have defined
  the 3-partition $\mathbf{s}$, it is clear by definition of a linear extension
  that all numbers in~$\mathbf{s}$ must be distinct. Further, as $<'$
  achieves~$L'$, by considering each occurrence of~$L'$, we know that, for $1
  \leq i \leq m$, we have $s^i_1 + s^i_2 + s^i_3 = B$. Hence, $\mathbf{s}$
  witnesses that $E$ is a positive instance to the UNARY-3-PARTITION problem.
  
  This establishes the correctness of the reduction, and concludes the proof.
\end{proof}

To show the lower bound for \PosRAacc, we show a general lemma about reducing
\poss and \cert for \PosRA queries to the same problems for \PosRAacc queries:

\begin{lemma}
  \label{lem:addacc}
  For any arity $k \in \mathbb{N}$,
  there exists an infinite and cancellative monoid $(\calM_k, \oplus, \epsilon)$ (see
  Definition~\ref{def:cancellative}), a position-invariant and arity-$k$
  accumulation map $h_k$ (see Definition~\ref{def:ri}), and a PTIME-evaluable accumulation operator
  $\accum_{h_k,\oplus}$ such that, for any \PosRA query~$Q$ of arity~$k$, the \poss and \cert
  problems for $Q$ are respectively equivalent to the \poss and \cert problems for the
  \PosRAacc query $\accum_{h_k, \oplus} Q$.
\end{lemma}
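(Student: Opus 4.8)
The plan is to pick the accumulation so that accumulating a list relation simply recovers the list relation itself, encoded inside a free monoid; then the possible accumulation results coincide (under an explicit bijection) with the possible worlds, and both reductions become mere relabelings of the candidate.

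Concretely, I would take $\calM_k$ to be the free monoid over the alphabet $\calD^k$ of arity-$k$ tuples, that is, the set of finite words over such tuples, with $\oplus$ the concatenation of words and $\epsilon$ the empty word. This monoid is infinite, because $\calD$ (hence $\calD^k$) is infinite, and it is cancellative, since every free monoid is (cf.\ Definition~\ref{def:cancellative}). I would then define the arity-$k$ accumulation map by $h_k(t, n) \defeq (t)$, the one-letter word consisting of the tuple~$t$; this does not use the position argument~$n$, so $h_k$ is position-invariant (cf.\ Definition~\ref{def:ri}). For a list relation $L = (t_1, \ldots, t_m)$ we then get $\accum_{h_k, \oplus}(L) = (t_1) \oplus \cdots \oplus (t_m)$, which is precisely the word $t_1 \cdots t_m$, i.e., $L$ itself seen as an element of~$\calM_k$; and $\accum_{h_k, \oplus}$ maps the empty list relation to~$\epsilon$, the empty word. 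Thus $L \mapsto \accum_{h_k, \oplus}(L)$ is a bijection between arity-$k$ list relations and~$\calM_k$, and both this map and its inverse are computable in linear time, so $\accum_{h_k, \oplus}$ is PTIME-evaluable.

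It then remains to transfer \poss and \cert across this bijection. For any \PosRA query~$Q$ of arity~$k$ and any po-database~$D$, Definition~\ref{def:aggreg-po} gives $\accum_{h_k, \oplus}(Q(D)) = \{\accum_{h_k, \oplus}(L) \mid L \in \pw(Q(D))\}$, i.e., the image of $\pw(Q(D))$ under our bijection. Hence $L \in \pw(Q(D))$ iff $\accum_{h_k, \oplus}(L) \in \accum_{h_k, \oplus}(Q(D))$, which is exactly the equivalence of the \poss instance $(D, L)$ for~$Q$ with the \poss instance $(D, \accum_{h_k, \oplus}(L))$ for $\accum_{h_k, \oplus} Q$; and $\pw(Q(D)) = \{L\}$ iff $\accum_{h_k, \oplus}(Q(D)) = \{\accum_{h_k, \oplus}(L)\}$, which is the corresponding equivalence for~\cert. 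The reductions are PTIME in both directions: one way, replace the candidate list~$L$ by the word $\accum_{h_k, \oplus}(L)$; the other way, a candidate $v \in \calM_k$ is just a finite sequence of arity-$k$ tuples, and we replace it by the list relation that it encodes.

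I do not anticipate a real obstacle: the only thing to be careful about is to verify that $(\calM_k, \oplus, \epsilon)$ and $h_k$ satisfy all the formal requirements recalled in the cited definitions -- infiniteness, cancellativity, position-invariance, and PTIME-evaluability of the resulting accumulation operator -- all of which hold directly by construction. If one wanted to avoid an infinite alphabet, one could instead encode each tuple as a binary string over a fixed finite alphabet and take the free monoid there, but using $(\calD^k)^*$ directly is the cleanest option.
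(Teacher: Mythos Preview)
Your proposal is correct and is essentially identical to the paper's own proof: the paper also takes $\calM_k$ to be the free monoid of finite sequences over $\calD^k$ (phrased as ``list relations on $\calD^k$'' with concatenation and the empty list as neutral element), and $h_k$ to be the position-invariant map sending $t$ to the singleton list $\singleton{t}$, so that $\accum_{h_k,\oplus}$ acts as the identity on list relations and $\pw(Q(D))$ coincides with the set of accumulation results of $\accum_{h_k,\oplus}(Q)(D)$. Your write-up is in fact slightly more careful than the paper's in explicitly checking infiniteness, cancellativity, position-invariance, PTIME-evaluability, and the PTIME reductions in both directions.
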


\begin{proof}
  Fix $k \in \mathbb{N}$. We will use the identity accumulation
  operator. Consider the monoid $(\calM_k, \oplus, \epsilon)$ defined as
  follows: $\calM_k$ is
  the list relations on $\calD^k$, that is, the finite sequences of
  elements of $\calD^k$, the neutral element $\epsilon$ is the empty list, and
  the associative operation $\oplus$ is the concatenation of list relations. This clearly defines a
  monoid, and it is clearly cancellative. Let $h_k$ be the
  position-invariant accumulation map that maps any tuple $t$ to the singleton
  list relation $\singleton{t}$ containing precisely one tuple with that value.
  
  Now, consider the query $Q' \defeq
  \accum_{h_k, \oplus} (Q)$. Let $D$ be an po-database. It is clear
  that any list relation $L$ is a possible world of 
  $Q(D)$ iff $L$ is a possible result of $Q'(D)$: in other words, we have
  $\pw(Q(D)) = \pw(Q'(D))$. This clearly ensures that \poss and \cert for $Q$
  are respectively equivalent to \poss and \cert for~$Q'$.
\end{proof}

We deduce:

\begin{corollary}
  There is a \PosRAacc query $Q$ such that the \poss problem for~$Q$ is NP-hard.
\end{corollary}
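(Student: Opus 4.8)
The plan is to obtain the corollary as an immediate consequence of the two results just established, namely Proposition~\ref{prp:posshardsimple} and Lemma~\ref{lem:addacc}.

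First, Proposition~\ref{prp:posshardsimple} supplies a \PosRA query for which \poss is NP-hard; concretely, this is the identity query $Q_0 \defeq R$ on a relation name $R$ of arity~$1$. Since this query has fixed arity $k = 1$, I would invoke Lemma~\ref{lem:addacc} with $k = 1$: it produces an infinite cancellative monoid $(\calM_1, \oplus, \epsilon)$, a position-invariant accumulation map $h_1$, and a PTIME-evaluable operator $\accum_{h_1,\oplus}$ such that, for every po-database $D$, the query $Q \defeq \accum_{h_1,\oplus}(Q_0)$ satisfies $\pw(Q(D)) = \pw(Q_0(D))$ --- intuitively because $h_1$ sends each tuple to the one-element list relation holding it and $\oplus$ concatenates list relations, so the accumulation step acts as the identity on list relations. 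Consequently the \poss instance ``$L \in \pw(Q(D))$?'' is literally the same instance as ``$L \in \pw(Q_0(D))$?'', so the NP-hardness of the latter transfers verbatim to the former (the reduction being the identity on the input $D$ and the candidate $L$). Combined with the NP membership of \poss for \PosRAacc queries (Theorem~\ref{thm:posscomp1}), this in fact yields NP-completeness of \poss for~$Q$.

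I do not anticipate any genuine obstacle: the corollary is a pure instantiation of Lemma~\ref{lem:addacc}. The only point that needs a line of justification is that the hard query of Proposition~\ref{prp:posshardsimple} has a fixed arity, which is needed in order to apply the lemma with a concrete value of $k$; this holds since that query is the arity-one identity query. An alternative, self-contained route would be to re-run the UNARY-3-PARTITION reduction of Proposition~\ref{prp:posshardsimple} directly, wrapping the identity query in an accumulation step whose map is the identity and whose monoid operation is list concatenation, so that every list relation accumulates to itself; but going through Lemma~\ref{lem:addacc} is cleaner and reuses work already done.
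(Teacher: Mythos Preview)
Your proposal is correct and matches the paper's approach exactly: the corollary is stated immediately after Lemma~\ref{lem:addacc} with the words ``We deduce,'' and the intended argument is precisely to combine Proposition~\ref{prp:posshardsimple} (the arity-one identity query for which \poss is NP-hard) with Lemma~\ref{lem:addacc} applied at $k=1$, so that the concatenation accumulation acts as the identity and \poss for the \PosRAacc query coincides with \poss for the underlying \PosRA query.
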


What remains is to show the hardness result for \cert and \PosRAacc. This result
is more complex, and is presented (in a slightly stronger form) 
as
Theorem~\ref{thm:possfrihypocert}
in
Appendix~\ref{sec:restracc}.

\end{toappendix}

For \PosRA queries, however, we show
that \cert is in PTIME.
As we will see later, this follows from the tractability of \cert for \PosRAacc on
\emph{cancellative monoids} (Theorem~\ref{thm:certaintyptimec}).

\begin{toappendix}
\subsection{Proof of Theorem~\ref{thm:certptime}}
\end{toappendix}

\begin{theoremrep}\label{thm:certptime}
  \cert is in PTIME for any \PosRA\ query.
\end{theoremrep}

\begin{proof}
  Let $Q$ be the \PosRA query of interest,
  and let $k \in \mathbb{N}$ be its arity: let $(\calM_k, \oplus)$ be the
  cancellative monoid (see Definition~\ref{def:cancellative}) and $h_k$ be the accumulation map obtained from Lemma~\ref{lem:addacc}. By
  Lemma~\ref{lem:addacc}, we know that \cert for $Q$ is equivalent to \cert for
  the \PosRAacc query $Q' \defeq \accum_{h,\oplus}$, which is clearly
  constructible in PTIME. 

  Now, by Theorem~\ref{thm:certaintyptimec} (proven in
Appendix~\ref{sec:proof-cancellative-monoids}), we know that the \cert problem is in
PTIME for $Q'$, because it performs accumulation in a cancellative monoid.
  Hence, using the PTIME reduction above, we deduce that the \cert problem for $Q$ is in PTIME
as well.
\end{proof}

We next identify further tractable cases, first for \PosRA\ and then for \PosRAacc.

\section{Tractable Cases for POSS on PosRA Queries}\label{sec:fpt2}
We show that \poss is tractable for \PosRA queries if we restrict the allowed
operators and if we bound some order-theoretic parameters of the input po-database,
such as \emph{poset width}.

We call \Plex the fragment of \PosRA that disallows the
$\times_{\dir}$ operator, but allows all other operators (including
$\times_{\lex}$). We also define \Pgen that disallows $\times_{\lex}$ but
not $\times_{\dir}$.

\subparagraph*{Totally ordered inputs.}
\begin{toappendix}
\subsection{Totally Ordered Inputs}
\end{toappendix}
We start by the natural case where the individual
po-relations are \emph{totally ordered}, i.e., their order relation is a total
order (so they actually represent a list relation).
This applies
to situations where we integrate data from multiple sources that are
certain (totally ordered), and where uncertainty only results
from the integration query
(so that the result may still have exponentially many
possible worlds, e.g., the \emph{union} of two total orders has
exponentially many possible interleavings). 
In a sense, the $\times_\dir$ operator is the one introducing the most
uncertainty and ``complexity'' in the result, so we
consider the fragment \Plex of \PosRA queries without
$\times_{\dir}$, and show:

\begin{toappendix}
  \subsubsection{Tractability Result: Proof of
  Theorems~\ref{thm:aggregwCorr} and~\ref{thm:aggregw}}
  \label{sec:totaltract}
The point of
restricting to \Plex queries is that they can only make the
width increase in a way that depends on the \emph{width} of the input relations, but
not on their \emph{size}:
\begin{propositionrep}\label{prp:lexwidth}
  Let $k \geq 2$ and $Q$ be a \Plex query. Let $k' =
  k^{\card{Q}+1}$.
  For any po-database~$D$ of width~$\leq k$, the po-relation $Q(D)$ has width $\leq k'$.
\end{propositionrep}

\begin{proof}
  We prove by induction on the \Plex query $Q$ that one can compute a bound on
  the width of the output of the query as a function of the bound $k$ on the
  width of the inputs. For the base cases:

  \begin{itemize}
    \item Input po-relations have width $\leq k$.
    \item Constant po-relations (singletons and constant chains) have width~$1$.
  \end{itemize}

  For the induction step:

  \begin{itemize}
    \item Given two po-relations $\OR_1$ and $\OR_2$ with bounds $k_1$ and $k_2$, their
      union $\OR_1 \cupgen \OR_2$ clearly has bound $k_1 + k_2$, as any antichain in the union must be
      the union of an antichain of $\OR_1$ and of an antichain of $\OR_2$.
    \item Given a po-relation $\OR_1$ with bound $k_1$, applying a projection or
      selection to $\OR_1$ cannot make the width increase.
    \item Given two po-relations $\OR_1$ and $\OR_2$ with bounds $k_1$ and $k_2$, their
      product $\OR \defeq \OR_1 \times_{\lex} \OR_2$ has bound $k_1 \cdot k_2$. To show
      this, consider any set $A$ of $\OR$ containing strictly more than $k_1
      \cdot k_2$ identifiers, which we see as
      pairs of an identifier of $\OR_1$ and an identifier of $\OR_2$. It is immediate
      that one of the following must hold:

      \begin{enumerate}
        \item Letting $S_1 \defeq \{u \mid \exists v, (u, v) \in A\}$, we have
          $\card{S_1} > k_1$
        \item There exists $u$ such that, letting $S_2(u) \defeq \{v \mid (u, v) \in
          A\}$, we have $\card{S_2} > k_2$
      \end{enumerate}

      Informally, when putting $> k_1 \cdot k_2$ values in buckets (the value of
      their first component), either $> k_1$ different buckets are used, or
      there is a bucket containing $> k_2$ elements.

      In the first case, as $S_1$ is a subset of identifiers of $\OR_1$ of cardinality
      $> k_1$ and $\OR_1$ has width~$k_1$, it cannot be an antichain, so it must
      contain two comparable elements $u_1 < u_2$, so that, considering $v_1$
      and $v_2$ such that $a_1 = (u_1, v_1)$ and $a_2 = (u_2, v_2)$ are in~$A$,
      we have by definition of $\times_\lex$ that $a_1 <_\OR a_2$, so that
      $A$ is not an antichain of~$\OR$.

      In the second case, as $S_2(u)$ is a subset of identifiers of~$\OR_2$ of
      cardinality $> k_2$ and $\OR_2$ has width $k_2$, it cannot be an antichain,
      so it must contain two comparable elements $v_1 < v_2$. Hence, considering
      $a_1 = (u, v_1)$ and $a_2 = (u, v_2)$ which are in~$A$, we have
      $a_1 <_\OR
      a_2$, and again $A$ is not an antichain of $\OR$.

      Hence, we deduce that no set of cardinality $> k_1 \cdot k_2$
      of~$\OR$ is an
      antichain, so that $\OR$ has width $\leq k_1 \cdot k_2$, as desired.
    \end{itemize}

  Letting $o$ be the number of product operators in~$Q$ plus the number of
  union operators, it is now clear that we can take $k' = k^{o + 1}$.
  Indeed, po-relations with no product or union operators have width at most $k$
  (using that $k \geq 1$). As projections and selections do not change the
  width, the only operators to consider are product and union. If $Q_1$ has
  $o_1$ operators and $Q_2$ has $o_2$ operators, bounding by induction the width
  of $Q_1(D)$ to be $k^{o_1 + 1}$ and $Q_2(D) = k^{o_2 + 1}$, for $Q = Q_1 \cup
  Q_2$, the number of operators is $o_1 + o_2 + 1$, and the new bound is $k^{o_1
  + 1} + k^{o_2 + 1}$, which as $k \geq 2$ is less than $k^{o_1 + 1 + o_2 + 1}$,
  that is, $k^{(o_1 + o_2 + 1) + 1}$. For~$\times_\lex$, we proceed in the
  same way and directly obtain the $k^{(o_1 + o_2 + 1) + 1}$ bound. Hence, we
  can indeed take $k' = k^{\card{Q}+1}$.
\end{proof}

From this, we will deduce \poss is tractable for
\Plex queries when the input po-database consists of relations of
bounded width.
We
  now
  prove Theorem~\ref{thm:aggregw}, which clearly generalizes
  Theorem~\ref{thm:aggregwCorr}. We will prove both the result for \Plex queries
  and its extension to \Plexacc queries with finite accumulation
  (Theorem~\ref{thm:aggregwa}).
\end{toappendix}

\begin{theoremrep}\label{thm:aggregwCorr}
\poss is in PTIME for \Plex
queries if
input po-relations are totally ordered.
\end{theoremrep}

In fact, we can show tractability for
relations of bounded
\emph{poset width}:

\begin{definition}[\cite{schroder2003ordered}]
  \label{def:width}
  An \deft{antichain} in a po-relation~$\OR=(\ID,T,<)$ is a set
  $A\subseteq\ID$ of pairwise
  incomparable tuple identifiers.
  The
  \deft{width} of $\OR$ is the size of its largest
  antichain.
  The \deft{width} of a po-database is the maximal width of its po-relations.
\end{definition}

In particular, totally ordered po-relations have width~$1$, and
unordered po-relations have a width equal to their size (number of
tuples); the width of a po-relation can be computed in
PTIME~\cite{fulkerson1955note}.
Po-relations of low width are a common practical case: they cover, for instance,
po-relations that are totally ordered except for a few tied identifiers at each
level. We show:

\begin{theoremrep}\label{thm:aggregw}
  For any fixed $k \in \NN$ and fixed \Plex query $Q$,
  the \poss problem for~$Q$ is in PTIME when
  all po-relations of the input
  po-database have width $\leq k$.
\end{theoremrep}

\begin{proofsketch}
  As $\times_{\dir}$ is disallowed,
  we can show that the po-relation $\OR \defeq Q(D)$
  has width~$k'$ depending only on $k$ and the query~$Q$ (but not on~$D$). We can then compute in PTIME a
  \emph{chain partition} of $\OR$
  \cite{dilworth1950decomposition,fulkerson1955note}, namely, a decomposition
  of~$\OR$ in totally ordered chains, with additional order constraints between
  them. This allows us to 
  apply a dynamic algorithm to decide \poss: the state of the algorithm is the position on
  the chains. The number of states is polynomial with degree $k'$, which is
  a constant when $Q$ and $k$ are fixed.
\end{proofsketch}

\begin{toappendix}
  Let $\OR \defeq Q(D)$ be the po-relation obtained by evaluating the query $Q$
  of interest on the input po-database $D$, excluding the
  accumulation operator if any (so we are evaluating a \Plex query).
  We can compute this in PTIME using
  Proposition~\ref{prp:repsys}. Letting $k'$ be the constant (only depending on~$Q$ and $k$)
  given by Proposition~\ref{prp:lexwidth}, we know that
  $w(\OR) \leq k'$.

  We first show the tractability of \poss and \cert for \Plexacc queries with
  finite accumulation, which amounts to applying directly a finite accumulation
  operator to $\OR$. We then deal with \Plex queries, which amounts to solving
  directly \poss and \cert on the po-relation~$\OR$.

  \subparagraph*{\Plexaccbold queries with finite accumulation.}
  It suffices to show the following rephrasing of the result:

  \begin{theorem}
    \label{thm:aggregwb}
    For any constant $k' \in \mathbb{N}$,
    and accumulation operator $\accum_{h, \oplus}$ with
    finite domain, we can compute in PTIME,
    for any input po-relation $\OR$ such that $w(\OR) \leq k'$,
    the set $\accum_{h, \oplus}(\OR)$.
  \end{theorem}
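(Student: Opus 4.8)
The plan is to decide \poss (and compute the full accumulation set) by a dynamic-programming sweep over a fixed chain partition of~$\OR$. Since $w(\OR)\leq k'$ with $k'$ a constant, Dilworth's theorem together with the polynomial algorithm of~\cite{dilworth1950decomposition,fulkerson1955note} lets me decompose $\ID$ into at most $k'$ chains $C_1,\dots,C_{k'}$; I also precompute, for each pair of identifiers in different chains, whether they are comparable in~$<$ and in which direction. A linear extension of~$<$ is then exactly an interleaving of the $k'$ chains that respects all cross-chain order constraints, read off identifier by identifier. I track a \emph{state} consisting of a tuple $(p_1,\dots,p_{k'})$ of positions, where $p_j$ is the number of identifiers of $C_j$ already emitted; the number of reachable states is at most $\prod_j(|C_j|+1)\leq(|\ID|+1)^{k'}$, which is polynomial in $|D|$ with degree $k'$, a constant once $Q$ and~$k$ are fixed (using Proposition~\ref{prp:repsys} to build $\OR=Q(D)$ in PTIME first).

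From a state $(p_1,\dots,p_{k'})$ the legal moves are: for each chain $C_j$ with $p_j<|C_j|$, emit its next identifier $x$, provided that $x$ has no not-yet-emitted predecessor for~$<$ — equivalently, every identifier $y<x$ in a chain $C_i$ lies among the first $p_i$ elements of $C_i$ (this is a constant-time check per move using the precomputed cross-chain comparability table, since within $C_j$ the predecessor condition is automatic). The key invariant is that the states reachable from the initial state $(0,\dots,0)$ by such moves are exactly the downward-closed "prefixes" of linear extensions of~$<$, and a run reaching $(|C_1|,\dots,|C_{k'}|)$ corresponds to a linear extension, i.e.\ a possible world of~$\OR$. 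This handles plain \poss against a candidate list relation $L=(t_1,\dots,t_m)$ by additionally carrying an index $i$ into $L$ and only allowing a move that emits an identifier $x$ with $T(x)=t_{i+1}$, advancing $i$; the answer is "yes" iff the accepting state with $i=m$ is reachable. For the accumulation set $\accum_{h,\oplus}(\OR)$, I augment each state with the accumulated monoid value so far, but since the accumulation operator has \emph{finite} domain $\calM$ the number of distinct values is a constant; so the augmented state space stays polynomial, and I propagate the reachable set of $(\text{state},\text{value})$ pairs, emitting at each move the contribution $h(T(x),p)$ where $p=p_1+\dots+p_{k'}+1$ is the absolute position. The output is the set of values attached to the accepting states $(|C_1|,\dots,|C_{k'}|)$.

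Correctness reduces to: (i) interleavings of the chain partition respecting cross-chain constraints biject with linear extensions of~$<$ — this is immediate since the chains cover $\ID$ and comparability within a chain is exactly chain order; (ii) the DP explores all and only such interleavings; and (iii) the accumulated value recorded is precisely $\accum_{h,\oplus}$ applied to the emitted list relation, by associativity of~$\oplus$ and the position bookkeeping. Complexity is $O\!\big((|\ID|+1)^{k'}\cdot k'\cdot|\calM|\big)$ times the per-move cost, which is polynomial for fixed $k'$ and fixed finite~$\calM$.

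The main obstacle is ensuring that bounding the width of~$\OR$ is legitimate in the first place: this is exactly where disallowing $\times_\dir$ is essential, and it is supplied by Proposition~\ref{prp:lexwidth}, which gives $w(\OR)\leq k'=k^{|Q|+1}$ for \Plex queries. A secondary subtlety is that the chain partition carries \emph{cross-chain order edges} that are not captured by chain membership alone, so the move-legality test must consult the comparability relation and not merely the position counters; getting this check right (and precomputing it efficiently from the representation of~$\OR$) is the one place where care is needed.
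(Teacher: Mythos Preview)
Your proposal is correct and follows essentially the same approach as the paper: compute a chain partition of width $\leq k'$ via Dilworth/Fulkerson, then run a dynamic program over the polynomially many position vectors $(p_1,\dots,p_{k'})$ (the paper's ``sane vectors'' corresponding to order ideals), propagating the finite set of reachable monoid values at each state via the obvious recurrence. The only cosmetic difference is that the paper stores a set $t(\mathbf{m})\subseteq\calM$ per position vector while you pair each position vector with a single value, but these are equivalent formulations of the same DP; your remarks about Proposition~\ref{prp:lexwidth} belong to the surrounding Theorems~\ref{thm:aggregw}/\ref{thm:aggregwa} rather than to Theorem~\ref{thm:aggregwb} itself, which already takes $w(\OR)\leq k'$ as hypothesis.
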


  Indeed, by what precedes, we can assume that the query has already been
  evaluated to a po-relation; further, once the possible results are determined, it is immediate
  to solve possibility and certainty.

  To take care of this task, we need the following notions:

  \begin{definition}
    A \deft{chain partition} of a poset $P$ is a partition $\mathbf{\Lambda} =
    (\Lambda_1,
    \ldots, \Lambda_n)$ of the elements of $P$, i.e., $P = \Lambda_1 \sqcup \cdots \sqcup
    \Lambda_n$, such that each $\Lambda_i$ is a total order.
    (However, $P$ may feature comparability relations not present in the
    $\Lambda_i$,
    i.e., relating elements in $\Lambda_i$ to elements in $\Lambda_j$ for $i \neq j$.) The
    \deft{width} of the partition $\mathbf{\Lambda} = (\Lambda_1, \ldots,
    \Lambda_n)$ is~$n$.
  \end{definition}

  \begin{definition}
    Given a poset $P$, an \deft{order ideal} of~$P$ is a subset $S$ of $P$ such that,
    for all $x, y \in P$, if $x < y$ and $y \in S$ then $x \in S$.
  \end{definition}

  We also need the following known results:

  \begin{theorem}[\cite{dilworth1950decomposition}]
    \label{thm:dilworth}
    Any poset $P$ has a chain partition of width $w(P)$.
  \end{theorem}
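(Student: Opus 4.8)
The plan is to derive Dilworth's theorem from König's theorem on bipartite graphs (maximum matching equals minimum vertex cover), since one of the two inequalities is free: in any partition of $P$ into chains, a fixed antichain of $P$ meets each chain in at most one element, so the number of chains is at least $w(P)$. Hence all the work goes into producing a chain partition with at most $w(P)$ parts.

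First I would build a bipartite graph $G$ with vertex classes $\{x^\ell : x \in P\}$ and $\{x^r : x \in P\}$, putting an edge between $x^\ell$ and $y^r$ exactly when $x < y$ in~$P$. The dictionary to establish is that a matching $M$ of~$G$ corresponds to a partition of~$P$ into $\card{P} - \card{M}$ chains. Indeed, reading a matched pair $x^\ell y^r$ as ``$y$ directly follows $x$'' defines a partial injection on~$P$ whose functional graph has no branching (each $x^\ell$ and each $y^r$ is matched at most once) and no cycles (a cycle would force $x < x$), hence decomposes into $\card{P} - \card{M}$ vertex-disjoint directed paths; along each path the elements are strictly increasing for~$<$, so by transitivity each path is a chain. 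Running this correspondence in both directions shows that the minimum number of chains covering~$P$ equals $\card{P} - \nu(G)$, where $\nu(G)$ is the maximum matching size.

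Next I would invoke König's theorem, $\nu(G) = \tau(G)$ with $\tau(G)$ the minimum vertex cover size, and convert a minimum cover $U$ into an antichain: let $A \defeq \{x \in P : x^\ell \notin U \text{ and } x^r \notin U\}$. An element leaves $A$ only when one of its two copies lies in~$U$, and distinct elements use distinct copies, so $\card{A} \geq \card{P} - \card{U} = \card{P} - \tau(G)$. Moreover $A$ is an antichain: if $x, y \in A$ with $x < y$, the edge $x^\ell y^r$ is not covered by~$U$, a contradiction. Hence $w(P) \geq \card{A} \geq \card{P} - \tau(G) = \card{P} - \nu(G) = (\text{minimum number of chains}) \geq w(P)$, so all these quantities coincide and $P$ admits a chain partition of width exactly $w(P)$.

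The only genuine obstacle is making the matching/chain-partition correspondence airtight — in particular checking that the ``direct successor'' relation coming from a matching has no forks and no cycles, and that transitivity upgrades each resulting path into a bona fide chain; everything afterwards is a mechanical application of König's theorem plus a counting step. An inductive route is also available — Dilworth's original argument, or Galvin's short induction on $\card{P}$ that deletes an extremal element and re-attaches it to one of the chains obtained for the rest — but in any such approach the delicate case is when deleting the extremal element leaves the width unchanged, which is precisely where the combinatorial work concentrates.
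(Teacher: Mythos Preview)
Your proof via the bipartite-matching reduction and K\"onig's theorem is correct and complete; this is essentially Fulkerson's proof of Dilworth's theorem, and the matching/chain-partition correspondence you describe is airtight as written.

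However, there is nothing to compare against: the paper does not prove this statement. Theorem~\ref{thm:dilworth} is quoted as a classical result with a citation to Dilworth's original 1950 paper, and is used as a black box (together with Theorem~\ref{thm:fulkerson} on computing such a partition in PTIME) inside the proofs of Theorems~\ref{thm:aggregwb} and~\ref{thm:aggregwuawb}. So your write-up supplies a proof where the paper deliberately does not give one; it is correct, but the comparison you were asked to make is vacuous here.
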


  \begin{theorem}[\cite{fulkerson1955note}]
    \label{thm:fulkerson}
    For any poset $P$, we can compute in PTIME a chain partition of $P$ of
    minimal width.
  \end{theorem}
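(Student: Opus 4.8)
The plan is to reduce the computation of a minimum chain partition to a maximum bipartite matching problem, following the classical constructive proof of Dilworth's theorem (Theorem~\ref{thm:dilworth}). Write $P = \{x_1, \ldots, x_n\}$. I would build a bipartite graph $G$ on two disjoint copies $\{a_1, \ldots, a_n\}$ and $\{b_1, \ldots, b_n\}$ of the ground set, placing an edge between $a_i$ and $b_j$ exactly when $x_i < x_j$ in~$P$. Crucially, I would put an edge for \emph{every} comparable pair, not only for covering pairs; this is what makes the correspondence below exact. Since the order relation of $P$ is given explicitly, $G$ has at most $n^2$ edges and is constructed in PTIME (computing the transitive closure first if only cover relations were available, which is also PTIME).

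Next I would establish a tight correspondence between chain partitions of~$P$ and matchings of~$G$. In one direction, a matching $M$ induces a \emph{successor pointer} $x_i \mapsto x_j$ for each edge $a_i b_j \in M$; because $M$ is a matching, each element has at most one successor and at most one predecessor, so following these pointers decomposes~$P$ into disjoint paths, and transitivity of~$<$ guarantees each such path is genuinely a chain (if $x_i < x_j$ and $x_j < x_k$ are successive matched edges, then $x_i < x_k$). The number of chains so obtained is exactly $n - \card{M}$, since each of the $\card{M}$ matched edges merges two path fragments into one. In the converse direction, any chain partition into~$c$ chains yields, by matching each element of a chain to its immediate successor within that chain, a matching of size $n - c$. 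Hence minimum chain partitions correspond precisely to maximum matchings, and the minimum number of chains equals $n - \nu(G)$, where $\nu(G)$ denotes the maximum matching size.

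Given this, the algorithm is: compute a maximum matching $M$ of~$G$ in PTIME (e.g.\ by repeatedly searching for augmenting paths, or via Hopcroft--Karp), then reconstruct the chains by following the successor pointers induced by~$M$. Both steps are polynomial, so the whole procedure runs in PTIME, and by the correspondence it returns a chain partition of minimum size; as a sanity check, Dilworth's theorem guarantees this size equals the width $w(P)$. The points requiring care — and the main obstacle to a fully rigorous argument — are the two directions of the correspondence: verifying that an \emph{arbitrary} maximum matching always unfolds into valid chains (which relies on using edges for all comparable pairs, together with transitivity of~$<$), and verifying that every chain partition is realized by a matching of the corresponding size, so that maximizing $\card{M}$ genuinely \emph{minimizes} the chain count rather than merely bounding it from below.
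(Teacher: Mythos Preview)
The paper does not supply its own proof of this statement: Theorem~\ref{thm:fulkerson} is simply cited from~\cite{fulkerson1955note} as a known result and used as a black box (alongside Dilworth's theorem, Theorem~\ref{thm:dilworth}) inside the proofs of Theorems~\ref{thm:aggregwb} and~\ref{thm:aggregwuawb}. So there is nothing in the paper to compare against.

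Your proposal is nonetheless correct and is precisely Fulkerson's original argument: the bipartite graph on two copies of~$P$ with an edge $a_i b_j$ for every pair $x_i < x_j$, the observation that matchings correspond bijectively to systems of successor pointers and hence to chain partitions with $n - \card{M}$ chains, and the computation of a maximum matching in PTIME. Your caveat about using edges for \emph{all} comparable pairs (not just covers) is well placed, since it is exactly what makes every matching unfold into genuine chains via transitivity. One small remark: you could drop the final sanity check invoking Dilworth's theorem, since your two-way correspondence already shows directly that maximizing~$\card{M}$ minimizes the number of chains; Dilworth is only needed if you additionally want to identify that minimum with the width~$w(P)$, which the statement of Theorem~\ref{thm:fulkerson} does not actually require.
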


  We now prove Theorem~\ref{thm:aggregwb}:

  \begin{proof}[Proof of Theorem~\ref{thm:aggregwb}]
    Consider a po-relation $\OR=(ID,T,{<})$, with underlying
    poset~$P=(ID,{<})$. Using Theorems~\ref{thm:dilworth}
    and Theorem~\ref{thm:fulkerson}, compute in PTIME a chain partition $\mathbf{\Lambda}$
    of $P$ of width $k'$.
    For $1 \leq i \leq k'$,
    write $n_i \defeq \card{\Lambda_i}$, and for $0 \leq j \leq n_i$, write
    $\Lambda_i^{\leq
    j}$ to denote the subset of $\Lambda_i$ containing the first $j$ elements of the
    chain (in particular $\Lambda_i^{\leq 0} = \emptyset$).

    We now consider all vectors of the form $(m_1, \ldots, m_{k'})$, with
    $0 \leq m_i \leq n_i$, of which there are polynomially many (there are $\leq
    \card{\OR}^{k'}$, where $k'$ is constant). To each such vector $\mathbf{m}$ we
    associate the subset $s(\mathbf{m})$ of $P$ consisting of
    $\bigsqcup_{i=1}^{k'} \Lambda_i^{\leq m_i}$.

    We call such a vector $\mathbf{m}$ \deft{sane} if $s(\mathbf{m})$ is an
    order ideal. (While $s(\mathbf{m})$ is always an order ideal of the subposet
    of the comparability relations within the chains, it may not be an order
    ideal overall because of
    the additional comparability relations across the chains that may be
    featured in $P$.) For each vector
    $\mathbf{m}$, we can check in PTIME whether it is sane, by materializing
    $s(\mathbf{m})$ and checking that it is an ideal for each comparability
    relation (of which there are $O(\card{P}^2)$).

    By definition, for each sane vector $\mathbf{m}$, $s(\mathbf{m})$
    is an ideal. We now observe that the converse is true, and that for every
    ideal $S$ of $P$, there is a sane vector $\mathbf{m}$ such that
    $s(\mathbf{m}) = S$. To see why, consider an ideal $S$, and determine for
    each chain $\Lambda_i$ the last element of the chain present in the ideal; let
    $m_i$ be its position in the chain. $S$ then does not include any
    element of $\Lambda_i$ at a later position, and because $\Lambda_i$ is a chain it must
    include all elements before, hence, $S \cap \Lambda_i = \Lambda_i^{\leq m_i}$. As
    $\mathbf{\Lambda}$ is a chain partition of~$P$, this entirely determines $S$. Thus we
    have indeed $S = s(\mathbf{m})$, and the fact that $s(\mathbf{m})$ is sane
    is witnessed by~$S$.

    For any sane vector $\mathbf{m}$, we now write $t(\mathbf{m}) \defeq \accum_{h,
    \oplus}(T(s(\mathbf{m})))$ (recall that $T$ maps elements of the
    poset to tuples, and can therefore naturally be extended to map
    sub-posets to sub-po-relations). This is a subset of the
    accumulation domain
    $\calM$ (since the latter is finite, this subset is of constant
    size). It is immediate that $t((0, \ldots, 0)) = \{\epsilon\}$, the neutral
    element of the accumulation monoid, and that
    $t((n_1, \ldots, n_{k'})) = \accum_{h, \oplus}(\OR)$ is our desired answer. Denoting by $e_i$ the vector
    consisting of $n-1$ zeroes and a $1$ at position $i$, for $1 \leq i
    \leq k'$,
    we now observe that, for any sane vector $\mathbf{m}$, we have:
    \begin{equation}
      \label{eqn:tprop}
      t(\mathbf{m}) = \bigcup_{1 \leq i \leq k'} \left\{
        v \oplus h\left(T(\Lambda_i[m_i]), \sum_{i'} m_{i'}\right)
      \:\middle|\: v \in t(\mathbf{m} - e_i)\right\}
    \end{equation}
    where the operator ``$-$'' is the component-by-component integer difference
    on tuples,
    and where we define $t(\mathbf{m} - e_i)$ to be $\emptyset$ if $\mathbf{m} -
    e_i$ is not
    sane or if one of the coordinates of $\mathbf{m} - e_i$ is $< 0$.
    Equation~\ref{eqn:tprop} holds because
    any linear extension of $s(\mathbf{m})$ must end with one of the maximal
    elements of $s(\mathbf{m})$, which must be one of the $\Lambda_i[m_i]$ for $1 \leq
    i \leq m$ such that $m_i \geq 1$, and the preceding elements must be a linear
    extension of the ideal where this element was removed (which must be an
    ideal, i.e., $\mathbf{m} - e_i$ must be sane, otherwise the removed
    $\Lambda_i[m_i]$ was not actually maximal because it was comparable to (and
    smaller than) some $\Lambda_j[m_j]$ for $j \neq i$). Conversely, any sequence
    constructed in this fashion is indeed a linear extension. Thus, the possible
    accumulation results are computed according to this characterization of the
    linear extensions. We store with each possible accumulation result a
    witnessing totally ordered relation from which it can be computed in PTIME,
    namely, the linear extension prefix considered in the previous reasoning,
    so that we can use the PTIME-evaluability of the underlying monoid to ensure
    that all computations of accumulation results can be performed in PTIME.

    This last equation allows us to compute $t(n_1, \ldots, n_{k'})$ in PTIME by a
    dynamic algorithm, enumerating the vectors (of which there are polynomially
    many) in lexicographical order, and computing their image by $t$ in PTIME
    according to the equation above, from the base case $t((0, \ldots, 0)) =
    \epsilon$ and from the previously computed values of $t$. Hence, we have
    computed $\accum_{h, \oplus}(\OR)$ in PTIME, which concludes the proof.
  \end{proof}

  \subparagraph*{\Plexbold queries.}
  First note that, for queries with no accumulation, we cannot reduce \poss and
  \cert to the case with accumulation, because the monoid of tuples under
  concatenation does not satisfy the hypothesis of finite accumulation. Hence, we
  need specific arguments to prove Theorem~\ref{thm:aggregw} for queries with no
  accumulation.

  Recall that the \cert problem is in PTIME for such queries by
  Theorem~\ref{thm:certptime}, so it suffices to study the case of \poss.
  We do so by the following result, which is obtained by adapting the proof of
  Theorem~\ref{thm:aggregwb}:

  \begin{theorem}
    \label{thm:aggregwbc}
    For any constant $k \in \mathbb{N}$,
    we can determine in PTIME,
    for any input po-relation $\OR$ such that $w(\OR) \leq k$
    and list relation $L$,
    whether $L \in \pw(\OR)$.
  \end{theorem}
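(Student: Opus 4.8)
The plan is to adapt the dynamic-programming argument in the proof of Theorem~\ref{thm:aggregwb} so that, instead of accumulating a finite set of monoid values, we merely test whether the list relation~$L$ can be produced. First I would dispose of the trivial case: $L \in \pw(\OR)$ requires $\card{L} = \card{\ID}$, so assume this equality holds. Using Theorem~\ref{thm:dilworth} and Theorem~\ref{thm:fulkerson}, compute in PTIME a chain partition $\mathbf{\Lambda} = (\Lambda_1, \ldots, \Lambda_w)$ of the underlying poset of~$\OR$ with $w = w(\OR) \leq k$ chains; write $n_i \defeq \card{\Lambda_i}$ and, recalling the notation of that proof, for a vector $\mathbf{m} = (m_1, \ldots, m_w)$ with $0 \leq m_i \leq n_i$ let $\Lambda_i^{\leq m_i}$ be the first $m_i$ elements of $\Lambda_i$, let $s(\mathbf{m}) \defeq \bigsqcup_i \Lambda_i^{\leq m_i}$, and call $\mathbf{m}$ \emph{sane} when $s(\mathbf{m})$ is an order ideal of~$\OR$. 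There are at most $(\card{\ID}+1)^k$ such vectors, saneness is testable in PTIME, and $\mathbf{m} \mapsto s(\mathbf{m})$ restricts to a bijection between sane vectors and order ideals of~$\OR$.

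Then I would run a dynamic program over these vectors, processed by increasing $\card{\mathbf{m}} \defeq \sum_i m_i$, computing a Boolean $\mathrm{reach}(\mathbf{m})$ meant to hold iff some linear extension of the ideal $s(\mathbf{m})$ produces the length-$\card{\mathbf{m}}$ prefix $L[1], \ldots, L[\card{\mathbf{m}}]$ of~$L$. Set $\mathrm{reach}(\mathbf{0})$ to true, $\mathrm{reach}(\mathbf{m})$ to false for non-sane~$\mathbf{m}$, and otherwise $\mathrm{reach}(\mathbf{m})$ to true iff there is some $i$ with $m_i \geq 1$ such that $\mathbf{m} - e_i$ is sane, $\mathrm{reach}(\mathbf{m} - e_i)$ is true, and $T(\Lambda_i[m_i]) = L[\card{\mathbf{m}}]$, where $e_i$ is the $i$-th unit vector as in Theorem~\ref{thm:aggregwb}. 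Each value is computed from at most $w \leq k$ earlier ones, so the whole table is filled in PTIME; the algorithm answers positively iff $\mathrm{reach}((n_1, \ldots, n_w))$ is true.

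For correctness I would invoke the ideal/linear-extension correspondence already used for Theorem~\ref{thm:aggregwb}: the set of identifiers in any prefix of a linear extension of~$\OR$ is an order ideal, every ideal arises this way, and a linear extension of an ideal~$S$ of size $p \geq 1$ is a maximal element $x$ of~$S$ appended to a linear extension of the ideal $S \setminus \{x\}$, and conversely. In vector terms, the maximal elements of $s(\mathbf{m})$ are exactly the $\Lambda_i[m_i]$ with $m_i \geq 1$ such that $\mathbf{m} - e_i$ is sane, which is precisely the recurrence above. A short induction on $\card{\mathbf{m}}$ then shows that $\mathrm{reach}(\mathbf{m})$ holds iff $s(\mathbf{m})$ admits a linear extension realising $L[1], \ldots, L[\card{\mathbf{m}}]$; applying this to $\mathbf{m} = (n_1, \ldots, n_w)$, where $s(\mathbf{m})$ is all of~$\OR$ and $\card{\mathbf{m}} = \card{\ID} = \card{L}$, yields exactly $L \in \pw(\OR)$.

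I expect the only genuinely delicate point to be the same one as in Theorem~\ref{thm:aggregwb}: explaining why \emph{saneness}, rather than mere coordinatewise admissibility of $\mathbf{m} - e_i$, is the right side condition once $\OR$ carries comparabilities across chains — an element that is maximal within its chain prefix may still lie below some $\Lambda_j[m_j]$ in another chain, in which case deleting it does not leave an ideal and the step is illegitimate. This is handled verbatim as there, and the non-injectivity of~$T$ introduces no further difficulty, since the recurrence already branches over every identifier $\Lambda_i[m_i]$ whose value matches $L[\card{\mathbf{m}}]$.
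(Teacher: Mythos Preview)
Your proposal is correct and follows essentially the same approach as the paper: compute a chain partition of bounded size, index order ideals by sane vectors, and run a dynamic program storing for each sane vector the Boolean of whether the corresponding prefix of~$L$ is realisable. Your recurrence $\mathrm{reach}(\mathbf{m}) = \bigvee_i \big(\mathrm{reach}(\mathbf{m}-e_i) \wedge T(\Lambda_i[m_i]) = L[\card{\mathbf{m}}]\big)$ matches the paper's exactly (and your indexing $L[\card{\mathbf{m}}]$ is in fact cleaner than the paper's stated $L[1+\sum_{i'} m_{i'}]$, which appears to be an off-by-one slip there).
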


  \begin{proof}
    The proof of Theorem~\ref{thm:aggregwb} adapts because of
    the following: to decide instance possibility, we do not need to compute
    \emph{all} possible accumulation results (which may be exponentially
    numerous), but it suffices to store, for each sane
    vector $\mathbf{m}$, whether the prefix of the correct length of the candidate
    possible world can be achieved in the order ideal $s(\mathbf{m})$.
    More formally, we define $t((0, \ldots, 0))
    \defeq \text{true}$, and:
    \[
      t(\mathbf{m}) \defeq \bigvee_{1 \leq i \leq k'} \left(
      t(\mathbf{m} - e_i) \wedge T(L_i[m_i]) = L\left[1 + \sum_{i'} m_{i'}\right]\right)
    \]
    where $L$ is the candidate possible world. We conclude by a dynamic
    algorithm as in Theorem~\ref{thm:aggregwb}.
  \end{proof}

  This concludes the proof of Theorem~\ref{thm:aggregw}, and, as an
  immediate
  corollary, of Theorem~\ref{thm:aggregwCorr}.

\end{toappendix}

We last justify our choice of disallowing the $\times_{\dir}$ product. Indeed, if we
allow $\times_{\dir}$, then \poss is hard on totally ordered po-relations, even if we disallow
$\times_{\lex}$:

\begin{toappendix}
  \subsubsection{Hardness result: Proof of Theorem~\ref{thm:posscompextend1}}
  \label{sec:posscompextend1proof}
\end{toappendix}

\begin{theoremrep}\label{thm:posscompextend1}
  There is a \Pgen query for which the \poss problem is NP-complete
  even when the input po-database is restricted to consist only of totally ordered
  po-relations.
\end{theoremrep}

\begin{toappendix}

\begin{figure}
  \begin{tikzpicture}[scale=.95]
    \node (S) at (-1.5, -1) {$S$};
    \node[draw] (S0) at (-.5, 0) {$0$};
    \node[draw] (S1) at (-.5, -1) {$1$};
    \node[draw] (S2) at (-.5, -2) {$2$};
    \draw[->] (S0) -- (S1);
    \draw[->] (S1) -- (S2);

    \node (S') at (11, 2) {$S'$};

    \node[draw,fill=red] (R1s) at (1, 1.5) {$\s$};
    \node[draw,fill=green] (R1n) at (2, 1.5) {$\n$};
    \node[draw,fill=cyan] (R1e) at (3, 1.5) {$\e$};
    \draw[->] (R1s) -- (R1n);
    \draw[->] (R1n) -- (R1e);

    \node[draw,fill=red] (R2s) at (4, 1.5) {$\s$};
    \node[draw,fill=green] (R2n) at (5, 1.5) {$\n$};
    \node[draw,fill=cyan] (R2e) at (6, 1.5) {$\e$};
    \draw[->] (R2s) -- (R2n);
    \draw[->] (R2n) -- (R2e);

    \node[draw,fill=red] (R3s) at (7, 1.5) {$\s$};
    \node[draw,fill=green] (R3n1) at (8, 1.5) {$\n$};
    \node[draw,fill=green] (R3n2) at (9, 1.5) {$\n$};
    \node[draw,fill=cyan] (R3e) at (10, 1.5) {$\e$};
    \draw[->] (R3s) -- (R3n1);
    \draw[->] (R3n1) -- (R3n2);
    \draw[->] (R3n2) -- (R3e);

    \draw[->] (R1e) -- (R2s);
    \draw[->] (R2e) -- (R3s);

    \draw[dashed] (0.7, 1.2) rectangle (3.3, 1.8);
    \draw[dashed] (3.7, 1.2) rectangle (6.3, 1.8);
    \draw[dashed] (6.7, 1.2) rectangle (10.3, 1.8);

    \node[draw,fill=red] (R1sb) at (1, 0) {$\s$};
    \node[draw,fill=green] (R1nb) at (2, 0) {$\n$};
    \node[draw,fill=cyan] (R1eb) at (3, 0) {$\e$};
    \draw[->] (R1sb) -- (R1nb);
    \draw[->] (R1nb) -- (R1eb);

    \node[draw,fill=red] (R2sb) at (4, 0) {$\s$};
    \node[draw,fill=green] (R2nb) at (5, 0) {$\n$};
    \node[draw,fill=cyan] (R2eb) at (6, 0) {$\e$};
    \draw[->] (R2sb) -- (R2nb);
    \draw[->] (R2nb) -- (R2eb);

    \node[draw,fill=red] (R3sb) at (7, 0) {$\s$};
    \node[draw,fill=green] (R3n1b) at (8, 0) {$\n$};
    \node[draw,fill=green] (R3n2b) at (9, 0) {$\n$};
    \node[draw,fill=cyan] (R3eb) at (10, 0) {$\e$};
    \draw[->] (R3sb) -- (R3n1b);
    \draw[->] (R3n1b) -- (R3n2b);
    \draw[->] (R3n2b) -- (R3eb);

    \draw[->] (R1eb) -- (R2sb);
    \draw[->] (R2eb) -- (R3sb);

    \draw[dashed] (0.7, -1.7) rectangle (3.3, -2.3);
    \draw[dashed] (3.7, -0.7) rectangle (6.3, -1.3);
    \draw[dashed] (6.7, .3) rectangle (10.3, -.3);

    \node[draw,fill=red] (R1sc) at (1, -1) {$\s$};
    \node[draw,fill=green] (R1nc) at (2, -1) {$\n$};
    \node[draw,fill=cyan] (R1ec) at (3, -1) {$\e$};
    \draw[->] (R1sc) -- (R1nc);
    \draw[->] (R1nc) -- (R1ec);

    \node[draw,fill=red] (R2sc) at (4, -1) {$\s$};
    \node[draw,fill=green] (R2nc) at (5, -1) {$\n$};
    \node[draw,fill=cyan] (R2ec) at (6, -1) {$\e$};
    \draw[->] (R2sc) -- (R2nc);
    \draw[->] (R2nc) -- (R2ec);

    \node[draw,fill=red] (R3sc) at (7, -1) {$\s$};
    \node[draw,fill=green] (R3n1c) at (8, -1) {$\n$};
    \node[draw,fill=green] (R3n2c) at (9, -1) {$\n$};
    \node[draw,fill=cyan] (R3ec) at (10, -1) {$\e$};
    \draw[->] (R3sc) -- (R3n1c);
    \draw[->] (R3n1c) -- (R3n2c);
    \draw[->] (R3n2c) -- (R3ec);

    \draw[->] (R1ec) -- (R2sc);
    \draw[->] (R2ec) -- (R3sc);

    \node[draw,fill=red] (R1sd) at (1, -2) {$\s$};
    \node[draw,fill=green] (R1nd) at (2, -2) {$\n$};
    \node[draw,fill=cyan] (R1ed) at (3, -2) {$\e$};
    \draw[->] (R1sd) -- (R1nd);
    \draw[->] (R1nd) -- (R1ed);

    \node[draw,fill=red] (R2sd) at (4, -2) {$\s$};
    \node[draw,fill=green] (R2nd) at (5, -2) {$\n$};
    \node[draw,fill=cyan] (R2ed) at (6, -2) {$\e$};
    \draw[->] (R2sd) -- (R2nd);
    \draw[->] (R2nd) -- (R2ed);

    \node[draw,fill=red] (R3sd) at (7, -2) {$\s$};
    \node[draw,fill=green] (R3n1d) at (8, -2) {$\n$};
    \node[draw,fill=green] (R3n2d) at (9, -2) {$\n$};
    \node[draw,fill=cyan] (R3ed) at (10, -2) {$\e$};
    \draw[->] (R3sd) -- (R3n1d);
    \draw[->] (R3n1d) -- (R3n2d);
    \draw[->] (R3n2d) -- (R3ed);

    \draw[->] (R1ed) -- (R2sd);
    \draw[->] (R2ed) -- (R3sd);

    \draw[->] (R1sb) -- (R1sc);
    \draw[->] (R1nb) -- (R1nc);
    \draw[->] (R1eb) -- (R1ec);
    \draw[->] (R2sb) -- (R2sc);
    \draw[->] (R2nb) -- (R2nc);
    \draw[->] (R2eb) -- (R2ec);
    \draw[->] (R3sb) -- (R3sc);
    \draw[->] (R3n1b) -- (R3n1c);
    \draw[->] (R3n2b) -- (R3n2c);
    \draw[->] (R3eb) -- (R3ec);

    \draw[->] (R1sc) -- (R1sd);
    \draw[->] (R1nc) -- (R1nd);
    \draw[->] (R1ec) -- (R1ed);
    \draw[->] (R2sc) -- (R2sd);
    \draw[->] (R2nc) -- (R2nd);
    \draw[->] (R2ec) -- (R2ed);
    \draw[->] (R3sc) -- (R3sd);
    \draw[->] (R3n1c) -- (R3n1d);
    \draw[->] (R3n2c) -- (R3n2d);
    \draw[->] (R3ec) -- (R3ed);
    \node (Q) at (12, -1) {$\Pi_2(S \times_{\dir} S')$};

    \node (L') at (11, -3.5) {$L'$};

    \node[draw,fill=red] (R1sl) at (1, -3.5) {$\s$};
    \node[draw,fill=red] (R1nl) at (2, -3.5) {$\s$};
    \node[draw,fill=red] (R1el) at (3, -3.5) {$\s$};
    \draw[->] (R1sl) -- (R1nl);
    \draw[->] (R1nl) -- (R1el);

    \node[draw,fill=green] (R2sl) at (4, -3.5) {$\n$};
    \node[draw,fill=green] (R2nl) at (5, -3.5) {$\n$};
    \node[draw,fill=green] (R2el) at (6, -3.5) {$\n$};
    \draw[->] (R2sl) -- (R2nl);
    \draw[->] (R2nl) -- (R2el);

    \node[draw,fill=green] (R3sl) at (7, -3.5) {$\n$};
    \node[draw,fill=cyan] (R3n1l) at (8, -3.5) {$\e$};
    \node[draw,fill=cyan] (R3n2l) at (9, -3.5) {$\e$};
    \node[draw,fill=cyan] (R3el) at (10, -3.5) {$\e$};
    \draw[->] (R3sl) -- (R3n1l);
    \draw[->] (R3n1l) -- (R3n2l);
    \draw[->] (R3n2l) -- (R3el);

    \draw[->] (R1el) -- (R2sl);
    \draw[->] (R2el) -- (R3sl);
  \end{tikzpicture}
  \caption{Example for the proof of Theorem~\ref{thm:posscompextend1}, with $E =
  (1, 1, 2)$ and $B = 4$. The dashed parts of the grid represent $T$, as
  mentioned in the proof sketch.}
  \label{fig:gridpic}
\end{figure}
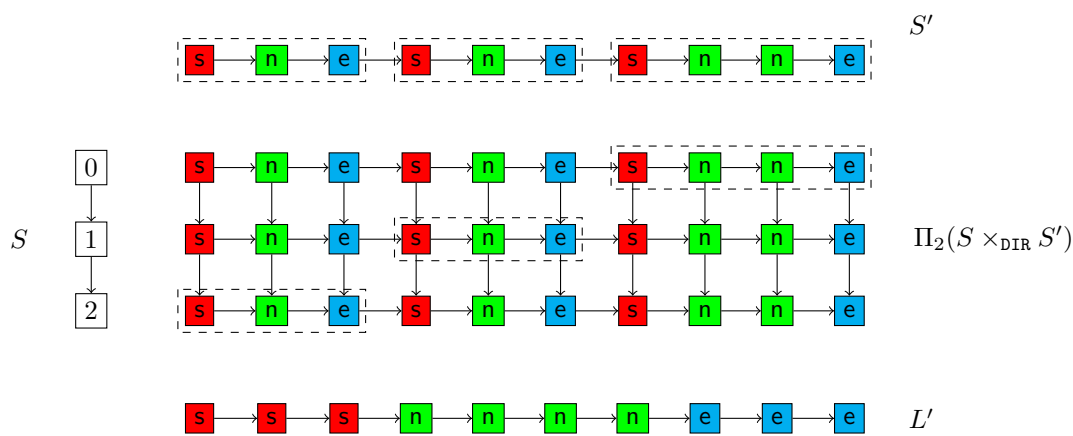

Note that, unlike Proposition~\ref{prp:posshardsimple}, this result does not
follow immediately from the results of~\cite{warmuth1984complexity}. Remember
that \cite{warmuth1984complexity} studies the \emph{shuffle problem} which asks, given a string $w$ and a tuple of strings
$s_1, \ldots, s_n$, whether there is an interleaving of $s_1, \ldots, s_n$ which
is equal to~$w$. It is easy to describe the possible interleavings of the~$s_i$
in \PosRA as a union of
totally ordered list relations, but it is more challenging to test, \emph{with a
constant query}, whether the~$s_i$ have an interleaving equal to~$w$. This is
what we do in the proof of Theorem~\ref{thm:posscompextend1}:

\begin{proof}
  The proof is an adaptation of Proposition~\ref{prp:posshardsimple}. Again, we
  reduce from the NP-hard UNARY-3-PARTITION
  problem~\cite{garey-johnson}: given $3m$ integers $E = (n_1, \ldots, n_{3m})$
  written in unary (not necessarily distinct) and a number $B$, decide if the integers can be partitioned
  in triples such that the sum of each triple is $B$. We reduce an instance
  $\mathcal{I} = (E, B)$ of UNARY-3-PARTITION to a \poss instance in PTIME.
  We fix $\calD \defeq \NN \sqcup \{\s, \n, \e\}$, with $\s$, $\n$ and $\e$
  standing for \emph{start}, \emph{inner},
  and \emph{end} as in the previous proof.
  
  Let $S$ be the totally ordered po-relation $\ordern{3m-1}$,
  and let $S'$ be the totally ordered po-relation whose one possible world is
  constructed from the instance~$\calI$ as
  follows: for $1 \leq i \leq 3m$, we consider the concatenation of one tuple
  $t^i_1$ with value $\s$, $n_i$~tuples $t^i_j$ (with $2 \leq j \leq n_i+1$)
  with value $\n$, and one tuple $t^i_{n_i+2}$ with value $\e$, and $S'$ is the
  total order formed by concatenating the $3m$ sequences of length $n_i+2$.
  Consider the query $Q \defeq \Pi_2(S \times_{\gen} S')$, where $\Pi_2$
  projects to the attribute coming from relation $S'$. See
  Figure~\ref{fig:gridpic} for an illustration, and note the similarity with
  Figure~\ref{fig:gridpic2}. Note that $S$ and $S'$ are
  \emph{input} relations, not constant expressions that would give the same
  relation.
  
  We define the candidate possible world as follows:

  \begin{itemize}
    \item $L_1$ is a list relation defined as the concatenation, for $1
  \leq i \leq 3m$, of $3m-i$ copies of the following 
  sublist: one tuple with value~$\s$, $n_i$
  tuples with value~$\n$, and one tuple with value~$\e$.
    \item $L_2$ is a list relation defined as above, except that
      $3m-i$ is replaced by $i-1$.
    \item $L'$ is the list relation defined as in the proof of
      Proposition~\ref{prp:posshardsimple}, namely, the concatenation of
      $m$ copies of the following sublist:
      three tuples with value $\s$, $B$ tuples with value $\n$, three tuples with
  value~$\e$.
    \item $L$ is the concatenation of $L_1$, $L'$, and $L_2$.
  \end{itemize}

  We now consider the \poss instance that asks whether
  $L$ is a possible world of the query $Q(S, S')$, where $S$ and $S'$ are the
  input totally ordered po-relations. We claim that this instance is positive
  iff the original UNARY-3-PARTITION instance $\calI$ is positive. As the
  reduction process described above is
  clearly PTIME, this suffices to show our desired hardness result, so all that
  remains to show our hardness result for \Pgen is to prove this claim. We now
  do so: the intuition is to eliminate parts of the grid that match to~$L_1$
  and~$L_2$, so that we are left with an order relation that allows us to re-use
  the proof of Proposition~\ref{prp:posshardsimple}.

  \medskip

  Denote by $R$ the po-relation obtained by evaluating~$Q(S, S')$,
  and note that all tuples of $R$ have value in $\{\s, \n, \e\}$.
  For $0 \leq k \leq \card{L_1}$, we write $L_1^{\leq k}$ for the prefix of $L_1$ of length $k$.
  We say that $L_1^{\leq k}$ is a \emph{whole prefix} if either $k
  = 0$ (that is, the empty prefix) or the $k$-th symbol of $L_1$ has value $\e$.
  We say that a linear extension $L''$ of $R$ \emph{realizes} $L_1^{\leq k}$ if
  the sequence of its $k$-th first values is $L_1^{\leq k}$, and that it
  realizes $L_1$ if it realizes $L_1^{\leq \card{L_1}}$. When $L''$ realizes
  $L_1^{\leq k}$, we call the \emph{matched} elements the elements of $R$ that
  occur in the first $k$ positions of $L''$, and say that the other elements are
  \emph{unmatched}. We call the \emph{$i$-th row} of $R$ the elements whose
  first component before projection was $i-1$: note that, for each $i$, $R$
  imposes a total order on the $i$-th row.

  We first observe that for any linear extension $L''$ realizing $L_1^{\leq k}$,
  for all $i$, writing the $i$-th row as $t'_1 < \ldots <
  t'_{\card{S'}}$, the unmatched elements must be all of the form $t'_j$ for $j
  > k_i$ for some $k_i$, i.e., they must be a prefix of the total order of the $i$-th row. Indeed, if they did not form a
  prefix, then some order constraint of $R$ would have been violated when
  enumerating $L''$. Further, by cardinality we clearly have $\sum_{i} k_i=k$. 

  Second, when a linear extension $L''$ of $R$ realizes $L_1^{\leq k}$, we say
  that we are in a \emph{whole situation} if for all $i$, the
  value of element $t'_{k_i+1}$ is either undefined (i.e., there are no row-$i$
  unmatched elements, which means $k_i=\card{S'}$) or it is $\s$. This 
  clearly implies that $k_i$ is of the form $\sum_{j=1}^{l_i} (n_j+2)$ for
  some~$l_i$; letting $S_i$ be the multiset of the $n_j$ for $1 \leq j \leq
  l_i$, we call $S_i$
  the 
  bag of \emph{row-$i$ consumed integers}. The \emph{row-$i$ remaining integers}
  are $E \backslash S_i$ (seeing $E$ as a multiset, and performing difference of
  multisets by subtracting the multiplicities in~$S_i$ to the multiplicities
  in~$E$).

  We now prove the following claim: for any linear extension of~$R$ realizing
  $L_1$, we are in a whole situation, and the multiset union $\biguplus_{1 \leq i
  \leq 3m} S_i$ is equal to the multiset obtained by repeating integer $n_i$ of $E$
  $3m-i$ times for all $1\leq i\leq 3m$.

  We prove the first part of the claim by showing it for all whole prefixes
  $L_1^{\leq k}$, by induction on $k$. It is certainly the case for $L_1^{\leq
  0}$ (the empty prefix). Now, assuming that it holds for prefixes of length up to
  $l$, to realize a whole prefix $L^{\leq l'}$ with $l' > l$, you must first realize a
  strictly shorter whole prefix $L^{\leq l''}$ with $l'' \leq l$ (take it to be of maximal
  length), so by induction hypothesis you are in a whole situation when
  realizing $L^{\leq l''}$. Now to realize the whole prefix $L^{\leq l'}$ having
  realized the whole prefix $L^{\leq l''}$, by construction of~$L_1$, the
  sequence~$L''$ of additional values to realize is $\s$, a certain number of $\n$'s, and
  $\e$, and it is easily seen that this must bring you from a whole situation to
  a whole situation: since there is only one $\s$ in $L''$, there is only one
  row such that an $\s$ value becomes matched; now, to match the additional
  $\n$'s and $\e$, only this particular row can be used, as any first
  unmatched element (if any) of another row is $\s$. Hence the claim is
  proved.

  To prove the second part of the claim, observe that whenever we go from a
  whole prefix to a whole prefix by additionally matching $\s$, $n_j$ times
  $\n$, and $\e$, then we add to $S_i$ the integer~$n_j$. So the claim holds by
  construction of $L_1$.

  A similar argument shows that for any linear extension $L''$ of $R$ whose first
  $\card{L_1}$ tuples achieve $L_1$ and whose last $\card{L_2}$ tuples achieve $L_2$,
  the row-$i$ unmatched elements are a contiguous sequence $t'_j$ with $k_i < j <
  m_i$ for some $k_i$ and $m_i$. In addition, if we have $k_i<m_i-1$, then $t'_{k_i}$ has value $\e$
  and $t'_{m_i}$ has value $\s$, and the unmatched values (defined in an
  analogous fashion) are a multiset corresponding exactly to the elements
  $n_1,\dots,n_{3m}$. So the unmatched elements when having read $L_1$ (at
  the beginning) and $L_2$ (at the end) are formed of $3m$
  lists, of length $n_i + 2$ for $1 \leq i \leq 3m$, of the
  form $\s$, $n_i$ times $\n$, and $\e$, with a certain order relation between
  the elements of the sequences (arising from the fact that some may be on the
  same row, or that some may be on different rows but comparable by definition
  of $\times_\gen$).

  But we now notice that we can clearly achieve $L_1$ by picking the following,
  in that order: for $1\leq j\leq 3m$, for $1\leq i\leq 3m-j$, pick the first
  $n_j+2$ unmatched tuples of row $i$. Similarly, to achieve $L_2$ at the end,
  we can pick the following, in \emph{reverse} order: for $3m\geq j\geq 1$, for
  $3m\geq i\geq 3m-j+1$, the last $n_j + 2$ unmatched tuples of row $i$. When we
  pick elements this way, the unmatched elements are $3m$ lists
  (one for each row, with that of row $i$ being $\s$, $n_i$ times $\n$
  and $\e$, for all $i$) and there are no order relations across sequences.
  Let $T$ be the sub-po-relation of $R$ that consists of exactly these unmatched
  elements. We denote the elements of~$T$ as $u_l^j$ with $1\leq j\leq 3m$
  iterating over the lists, and $1\leq l\leq n_j+2$
  iterating within each sequence. $T$ is the parallel composition of $3m$ total
  orders, namely, $u_1^j < u_2^j < \cdots < u_{n_j+2}^j$ for all $j$, having
  values $\s$ for $u_1^j$, $\e$ for $u_{n_j+2}^j$, and $\n$ for the others.

  We now claim that for any sequence $L''$, the concatenation $L_1 L'' L_2$ is a
  possible world of~$R$ if and only if $L''$ is a possible world of $T$. The
  ``only if'' direction was proved with the construction above. The ``if''
  direction comes from the fact that $T$ is the \emph{least constrained}
  possible po-relation for the unmatched sequences, since the order on the
  sequences of remaining elements when matching $L_1$ and $L_2$ is known to be
  total. Hence, to prove our original claim, it only remains to show that the
  UNARY-3-PARTITION instance $\calI$ is positive iff $L'$ is a possible world
  of~$T$. This claim is shown exactly as in the proof of
  Proposition~\ref{prp:posshardsimple}, as $L'$ is the same as in that proof,
  and $T$ is the same order relation as $\OR$ in that proof. This concludes the
  proof of the desired result.
\end{proof}

\end{toappendix}

\subparagraph*{Disallowing product.}\label{sec:product}
\begin{toappendix}
\subsection{Disallowing Product}
\end{toappendix}

We have shown the tractability of \poss when disallowing the $\times_\gen$
operator, when the input po-relations are assumed to have bounded width.
We now show that if we disallow both kinds of product, we obtain tractability for
more general input po-relations. Specifically, we will allow input po-relations
that are almost totally ordered, i.e., have bounded \emph{width}; and we will also
allow input po-relations that are almost unordered, which we measure using a new
order-theoretic notion of \emph{ia-width}. The idea of ia-width is to decompose
the relation in classes of indistinguishable sets of incomparable elements:

\begin{definition}
  \label{def:iawidth}
  Given a poset $P = (\ID, <)$, a subset $A \subseteq \ID$ is an
  \emph{indistinguishable antichain} if it is both an antichain (there are no
  $x, y \in A$ such that $x < y$) and an \emph{indistinguishable set} (or
  \emph{interval}~\cite{fraisse1984intervalle}): for all
  $x, y \in A$ and $z \in \ID \backslash A$,
  we have $x < z$ iff $y < z$, and $z < x$ iff $z < y$.

  An \emph{indistinguishable antichain partition} (ia-partition) of~$P$ is a
  partition $\ID = A_1 \sqcup \cdots \sqcup A_n$ of~$\ID$ such that each $A_i$
  for $1 \leq i \leq n$ is an indistinguishable antichain. The
  \emph{cardinality} of the partition is~$n$.
  The \emph{ia-width} of~$P$ is the
  cardinality of its smallest ia-partition.
  The \emph{ia-width} of a po-relation is that of its underlying poset,
  and the \emph{ia-width} of a po-database is the maximal ia-width of its po-relations.
\end{definition}

Hence, any po-relation~$\OR$ has ia-width at most~$\card{\OR}$, with the trivial
ia-partition consisting of singleton indistinguishable antichains, and unordered
po-relations have an ia-width of 1.
Po-relations may have low ia-width in practice if order is completely
unknown except for a
few comparability pairs given by users, or when they consist of objects from a
constant number of types that are ordered based only on some order on the types.

We can now state our tractability result when disallowing both kinds of
products, and allowing both bounded-width and bounded-ia-width relations. For
instance, this result allows us to
combine sources whose order is fully unknown or irrelevant, with
sources that are completely ordered (or almost totally ordered).

\begin{toappendix}
  \subsubsection{Tractability Result: Proof of Theorem~\ref{thm:aggregnoprod}}
  \label{sec:noprodtract}
\end{toappendix}

\begin{theoremrep}\label{thm:aggregnoprod}
  For any fixed $k \in \mathbb{N}$
  and fixed \Pnoprod query $Q$,
  the \poss problem for~$Q$ is in PTIME
  when all po-relations of the input po-database
  have either ia-width
  $\leq k$ or width~$\leq k$.
\end{theoremrep}

\begin{toappendix}
  We start by making a simple observation:

  \begin{lemma}
    \label{lem:rewritenoprod}
    Any \PosRA query $Q$ without any product can be rewritten as a union of
    projections of selections of a constant number of input relations and constant
    relations.
  \end{lemma}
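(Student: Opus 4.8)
The plan is to prove the lemma by a routine structural induction on the product-free query $Q$, using that the only available operators are selection, projection, union, and constant expressions, together with the standard relational-algebra rewriting identities — but checking at each step that each identity is an equality of \emph{po-relations up to isomorphism}, not merely of the underlying bags of tuple values.

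For the base cases, an input relation name $R$ is $\Pi_{\mathrm{id}}(\sigma_{\true}(R))$, and each constant expression $\singleton{t}$ or $\ordern{n}$ is $\Pi_{\mathrm{id}}(\sigma_{\true}(\cdot))$ applied to itself; so each is already a union of a single projection of a selection of one (input or constant) relation. For the inductive step, assume each immediate subquery is in the claimed form $\bigcupgen_{j} \Pi_{\vec A_j}(\sigma_{\psi_j}(R_j))$ where the $R_j$ are input or constant relations. The union case is immediate: if $Q = Q_1 \cupgen Q_2$, I just concatenate the two lists of union terms (the arities agree because union requires it, and parallel composition of partial orders is associative and commutative up to isomorphism, as the paper already uses for unions).

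For selection I push $\sigma_\psi$ inward using three facts. First, selection distributes over union, $\sigma_\psi(\bigcupgen_j X_j) = \bigcupgen_j \sigma_\psi(X_j)$, which holds on po-relations because restricting a parallel composition to a subset of identifiers equals the parallel composition of the restrictions. Second, selection commutes past projection after rewriting the predicate, $\sigma_\psi(\Pi_{\vec A}(X)) = \Pi_{\vec A}(\sigma_{\psi'}(X))$, where $\psi'$ is $\psi$ with each reference to a projected attribute replaced by the corresponding pre-projection attribute (such a preimage always exists, since we are moving the selection toward the leaves); this is order-preserving because projection keeps the order and selection restricts it, and it selects the same identifiers since $T'(\id) = \Pi_{\vec A}(T(\id))$. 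Third, two successive selections merge, $\sigma_{\psi''}(\sigma_{\psi'}(R)) = \sigma_{\psi'' \wedge \psi'}(R)$, and $\psi'' \wedge \psi'$ is still a legal tuple predicate (a Boolean combination of equalities and inequalities over attributes and values of $\calD$). For projection I argue symmetrically: projection distributes over union, and a composition of projections is again a projection, $\Pi_{\vec B}(\Pi_{\vec A}(X)) = \Pi_{\vec B \circ \vec A}(X)$, where $\vec B \circ \vec A$ is the composed attribute map (which may still reorder or duplicate attributes), with both sides carrying the same underlying order. Since $Q$ is fixed, each rewriting step only multiplies or leaves unchanged the number of union terms, so the final count is a constant depending only on $Q$, and each resulting term mentions exactly one input or constant relation.

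The one point that needs care — though it is not a genuine obstacle — is the bookkeeping in the selection/projection commutation: one must track precisely how projections rename, reorder, and duplicate attributes in order to pull the predicate $\psi$ back correctly through $\vec A$, and one must verify that each identity above really is an equality of po-relations (recalling that selection restricts the order, projection preserves it, and union is parallel composition), not just an equality of underlying bags. Duplicate tuple values cause no trouble here, because each term of the final union corresponds to exactly one syntactic occurrence of a relation in $Q$, so multiplicities are preserved automatically (for instance $R \cupgen R$ becomes two copies of $\Pi_{\mathrm{id}}(\sigma_{\true}(R))$, matching the bag semantics).
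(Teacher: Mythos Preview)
Your proposal is correct and takes essentially the same approach as the paper: push selections and projections inward past unions using the standard commutation and distribution identities, verified at the level of po-relations. The paper's own proof is a one-sentence remark that ``selection commutes with union, selection commutes with projection, and projection commutes with union,'' so your version is simply a more careful and explicit rendering of the same argument.
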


  \begin{proof}
    This follows from the fact that, for the semantics that we have defined for
    operators, the following is clear:
    selection commutes with union, selection commutes with projection, and
    projection commutes with union. Hence, we can perform the desired rewriting.
  \end{proof}

We can thus rewrite the input query using this lemma. The idea is that we will
evaluate the query in PTIME using Proposition~\ref{prp:repsys}, argue that
the width bounds are preserved using Proposition~\ref{prp:lexwidth}, and compute a
chain partition of the relations using Theorem~\ref{thm:dilworth} and
  Theorem~\ref{thm:fulkerson}. However, we first
need to show analogues of Proposition~\ref{prp:lexwidth},
  Theorem~\ref{thm:dilworth}, and Theorem~\ref{thm:fulkerson} for the new notion of ia-width. We first show the
analogue of Proposition~\ref{prp:lexwidth} for the case without product:

\begin{proposition}\label{prp:lexiawidthnoprod}
  Let $k \geq 2$ and $Q$ be a \Pnoprod query. Let $k' \colonequals
  \max(k,q) \times \card{Q}$, where $\card{Q}$ denote the number of
  symbols of~$Q$, and where $q$ denotes the largest value such that
  $\ordern{q}$ appears in~$Q$. For any po-database~$D$ of ia-width~$\leq
  k$, the po-relation $Q(D)$ has ia-width $\leq k'$.
\end{proposition}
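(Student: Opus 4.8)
The plan is to prove the ia-width bound by structural induction on subqueries $Q'$ of $Q$, mirroring the proof of Proposition~\ref{prp:lexwidth} but tracking ia-width instead of ordinary width, and exploiting crucially that no product operator is available. Concretely, I would show by induction that the ia-width of $Q'(D)$ is at most $\max(k,q)\cdot c(Q')$, where $c(Q')$ is the number of leaves of $Q'$ (relation names and constant expressions, reading $q$ as $0$ if no $\ordern{\cdot}$ occurs); since $c(Q)\le\card{Q}$ this gives the claimed $k'=\max(k,q)\cdot\card{Q}$. Alternatively one could first apply Lemma~\ref{lem:rewritenoprod} to rewrite $Q$ as a union of $O(\card{Q})$ projections of selections of single input or constant relations, then bound the ia-width of each such piece and of the union separately; the inductive argument below subsumes this.

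For the base cases: an input relation has ia-width $\le k$ by hypothesis on $D$; the singleton $[t]$ has a single element and hence ia-width $1$; and the constant chain $\ordern{i}$ with $i\le q$ is a total order, whose only antichains are singletons, so its ia-width is exactly $i\le q$. In all cases $c=1$ and the bound holds.

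For the induction step I would check the three remaining operators. Projection leaves both the set of identifiers and the order relation entirely unchanged (only tuple values change), so it alters neither the ia-width nor $c$. Selection replaces the poset by an induced subposet; here I would argue that, given an ia-partition of the input, intersecting each class with the surviving identifiers and discarding the empty classes yields an ia-partition of no larger cardinality — each surviving class is still an antichain, and still an indistinguishable set, because the required equivalences ``$x<z$ iff $y<z$'' and ``$z<x$ iff $z<y$'' for a surviving $z$ outside the class already held before restriction. For union, recalling that $\cup$ is parallel composition, I would take ia-partitions of $Q_1(D)$ and $Q_2(D)$ and form their disjoint union: each class lies entirely on one side, hence stays an antichain (parallel composition adds no comparabilities within a side) and stays an indistinguishable set (for $z$ on the same side the property is inherited; for $z$ on the other side both elements of the class are incomparable with $z$, so the equivalences hold trivially). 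This shows the ia-width of the union is at most the sum of the two ia-widths, which matches $c(Q_1\cup Q_2)=c(Q_1)+c(Q_2)$. Combining the base cases with these three steps closes the induction and yields the bound for $Q$ itself.

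I expect the only genuinely delicate point to be verifying that the ``indistinguishable set'' half of the ia-antichain condition survives restriction and parallel composition — it does, but it must be checked by a small case distinction on the position of the witness element $z$. The conceptual observation worth stressing is that, unlike ordinary poset width, a constant chain $\ordern{q}$ costs its full length $q$ in ia-width, which is exactly why the factor $\max(k,q)$ rather than $k$ appears; and that, product operators being excluded, the blow-up is purely additive over unions, hence linear (not exponential) in $\card{Q}$.
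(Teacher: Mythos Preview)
Your proposal is correct and follows essentially the same approach as the paper's own proof: a structural induction on the query, with the same treatment of the base cases (input relations bounded by~$k$, constants by~$q$) and of the three operators (projection preserves ia-width, selection can only decrease it via class restriction, union is additive via disjoint union of ia-partitions). Your leaf-counting refinement $c(Q')$ and the explicit case distinction on the position of~$z$ for union are slight sharpenings, but the argument is the same.
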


\begin{proof}
  We first show by induction on~$Q$ that the ia-width of the query output can be
  bounded as a function of the bound~$k$ on the ia-width of the query inputs.
  For the base cases:

  \begin{itemize}
    \item The input relations have ia-width at most~$k$.
    \item The constant relations have ia-width $\leq q$ with the trivial
      ia-partition consisting of singleton classes.
  \end{itemize}

For the induction step:

  \begin{itemize}
    \item Projection clearly does not change ia-width.
    \item Selection may only decrease the ia-width. Indeed, consider an
      ia-partition of the input po-relation, apply the selection to each class,
      and remove the classes that became empty. The number of classes has not
      increased, and it is clear that the result is still an ia-partition of the
      output po-relation.
    \item The union of two relations
  with ia-width $k_1$ and $k_2$ has ia-width at most $k_1 + k_2$. Indeed, we can
      obtain an ia-partition for the union as the union of ia-partitions for the
      input relations.
  \end{itemize}

  Second, we see that the bound $k' \colonequals \max(k,q) \times \card{Q}$ is
  clearly correct, because the base cases have ia-width $\leq \max(k, q)$ and
  the worst operators are unions, which amount to summing the ia-width bounds on
  all inputs, of which there are $\leq \card{Q}$. So we have shown the desired
  bound.
\end{proof}

We next show that, like chain partitions for bounded-width po-relations, we can
efficiently compute an ia-partition for a bounded-ia-width po-relation:

\begin{proposition}\label{prp:ptime-ia-partition}
  The ia-width of any poset, and a corresponding ia-partition, can be computed
  in PTIME.
\end{proposition}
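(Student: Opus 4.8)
The plan is to show that the relation "lies in a common indistinguishable antichain" is an equivalence relation on the identifiers of a po-relation, and that the partition into its equivalence classes is the unique coarsest ia-partition, hence witnesses the ia-width; since this partition is computable in PTIME, the result follows. First I would fix a poset $P = (\ID, <)$ and define, for $x, y \in \ID$, the relation $x \sim y$ to hold iff $x$ and $y$ are incomparable (or equal) and, for every $z \in \ID \setminus \{x,y\}$, we have $x < z \iff y < z$ and $z < x \iff z < y$. Reflexivity and symmetry are immediate; the key routine check is transitivity, which goes through because the "interval"/indistinguishability condition, when phrased for all $z$ outside the relevant pair, composes cleanly (one must be slightly careful about the excluded elements when checking $x \sim z$ from $x \sim y$ and $y \sim z$, but incomparability of the three elements makes the cross-conditions automatic).

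Next I would argue that each equivalence class $A$ of $\sim$ is an indistinguishable antichain: it is an antichain by construction (any two elements of a class are incomparable), and it is an indistinguishable set because if $x, y \in A$ and $z \notin A$ then $x \sim y$ gives $x < z \iff y < z$ and $z < x \iff z < y$ directly (the case $z \in \{x,y\}$ does not arise since $z \notin A$). Thus the partition $\mathcal{A}^\star$ of $\ID$ into $\sim$-classes is an ia-partition. Then I would show it is the coarsest one: if $A_1 \sqcup \cdots \sqcup A_n$ is any ia-partition and $x, y$ lie in the same block $A_i$, then by definition of indistinguishable antichain $x$ and $y$ are incomparable and behave identically with respect to every $z \notin A_i$; for $z \in A_i \setminus \{x,y\}$ the required equivalences hold vacuously since all of $x,y,z$ are pairwise incomparable. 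Hence $x \sim y$, so every block of any ia-partition is contained in a $\sim$-class, which means $\mathcal{A}^\star$ has the fewest blocks. Therefore the number of $\sim$-classes equals the ia-width, and $\mathcal{A}^\star$ is a minimum ia-partition.

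Finally, for the complexity: computing $\sim$ amounts to, for each pair $(x,y)$ of identifiers, checking the two conditions against all $z \in \ID$, which is $O(\card{\ID}^3)$ comparisons in the (materialized) order relation, and then extracting connected components / equivalence classes in $O(\card{\ID}^2)$ time by union-find. So the ia-width and a corresponding ia-partition are computed in polynomial time. I do not expect a serious obstacle here; the only mildly delicate point is handling the elements $z \in \{x,y\}$ and $z$ inside a block when verifying transitivity and coarsest-ness, but these are disposed of by the observation that the indistinguishability clauses are required only for $z$ outside the relevant set and are vacuous (or automatic from incomparability) otherwise.
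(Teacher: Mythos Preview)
Your proposal is correct and takes a genuinely different route from the paper's own proof.

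The paper proceeds algorithmically: it starts from the trivial ia-partition into singletons and greedily merges any two classes whose union is still an indistinguishable antichain, repeating until no merge is possible. Optimality is then argued by contradiction, relying on two auxiliary lemmas (that any subset of an indistinguishable antichain is one, and that two overlapping indistinguishable antichains have a union which is one): if a strictly smaller ia-partition existed, one of its blocks would meet two distinct greedy classes, and the lemmas force those two classes to be mergeable, contradicting termination of the greedy process.

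Your approach is structural rather than algorithmic: you define the binary relation $\sim$ directly, check it is an equivalence, and show its classes form an ia-partition that every other ia-partition refines. This is arguably cleaner; it yields the minimum ia-partition in one shot as a canonical object, and the transitivity check for $\sim$ is essentially a pairwise unpacking of what the paper's intersection lemma establishes at the level of sets. The paper's route has the (minor) advantage that its greedy phrasing makes the PTIME bound self-evident without even writing down the $\sim$ relation, whereas you compute $\sim$ explicitly; conversely, your argument makes uniqueness of the minimum ia-partition transparent, which the paper only obtains implicitly. Both arrive at the same object and the same complexity bound.
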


To show this result, we need two preliminary observations about
indistinguishable antichains:

\begin{lemma}
  \label{lem:iasubset}
  For any poset $(\ID, <)$ and indistinguishable antichain $A$, for any $A'
  \subseteq A$, then $A'$ is an indistinguishable antichain.
\end{lemma}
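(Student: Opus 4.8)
The plan is to verify directly that $A'$ satisfies the two defining conditions of an indistinguishable antichain, exploiting that $A' \subseteq A$.

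First, the antichain condition for $A'$ is immediate: if there were $x, y \in A'$ with $x < y$, then since $A' \subseteq A$ we would have $x, y \in A$ with $x < y$, contradicting that $A$ is an antichain.

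Second, and this is the only part requiring a small argument, I verify the indistinguishable-set (interval) condition for $A'$, namely that for all $x, y \in A'$ and all $z \in \ID \backslash A'$ one has $x < z \iff y < z$ and $z < x \iff z < y$. The point to watch is that $\ID \backslash A'$ is larger than $\ID \backslash A$: it additionally contains $A \backslash A'$, which the definition of indistinguishable set applied to $A$ does not cover directly. So I split on where $z$ lies. If $z \in \ID \backslash A$, then since $x, y \in A' \subseteq A$ and $A$ is an indistinguishable set, the two equivalences hold by definition. If instead $z \in A \backslash A'$, then $x$, $y$, and $z$ all belong to $A$, which is an antichain; hence none of $x < z$, $z < x$, $y < z$, $z < y$ holds, so both equivalences hold vacuously (false iff false). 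In both cases the condition is satisfied, so $A'$ is an indistinguishable antichain.

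Since this argument is entirely elementary, there is no genuine obstacle; the only subtlety, as noted, is remembering to handle the case $z \in A \backslash A'$ in the interval condition, which is not subsumed by applying indistinguishability of $A$ directly.
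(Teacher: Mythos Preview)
Your proof is correct and follows essentially the same approach as the paper's: both verify the antichain property immediately from $A' \subseteq A$, and both handle the indistinguishable-set condition by the same case split on whether $z \in \ID \backslash A$ (where indistinguishability of $A$ applies directly) or $z \in A \backslash A'$ (where the antichain property of $A$ makes all relevant comparabilities false).
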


\begin{proof}
  Clearly $A'$ is an antichain because $A$ is. We show that it is an
  indistinguishable set. Let $x, y \in A'$ and $z \in \ID \backslash A'$, and show
  that $x < z$ implies $y < z$ (the other three implications are symmetric). If
  $z \in \ID \backslash A$, we conclude because $A$ is an indistinguishable set.
  If $z \in A \backslash A'$, we conclude because, as $A$ is an antichain, $z$
  is incomparable both to $x$ and to $y$.
\end{proof}

\begin{lemma}
  \label{lem:indistinguishablea}
  For any poset $(\ID, <)$ and indistinguishable antichains $A_1, A_2 \subseteq
  \ID$ such that $A_1 \cap A_2 \neq \emptyset$, the union $A_1 \cup A_2$ is an
  indistinguishable antichain.
\end{lemma}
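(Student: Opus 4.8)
The plan is to fix an element $w \in A_1 \cap A_2$ (which exists by hypothesis) and use it as a ``bridge'' to transfer comparability information between the two sets. Writing $A \defeq A_1 \cup A_2$, I would establish the two defining properties of an indistinguishable antichain in turn: that $A$ is an antichain, and that $A$ is an indistinguishable set.

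For the antichain property, take $x, y \in A$; I want to show they are incomparable. If both lie in $A_1$, or both in $A_2$, this is immediate since $A_1$ and $A_2$ are antichains. Otherwise, without loss of generality $x \in A_1 \setminus A_2$ and $y \in A_2 \setminus A_1$. Then $x \in \ID \setminus A_2$ while $y, w \in A_2$, so indistinguishability of $A_2$ gives that $x < y$ iff $x < w$; but $x < w$ is impossible since $x, w \in A_1$ and $A_1$ is an antichain, so $x \not< y$. Symmetrically, using that $A_1$ is an indistinguishable set, $y \in \ID \setminus A_1$, and $x, w \in A_1$, we get $y < x$ iff $y < w$, which is impossible since $y, w \in A_2$; hence $y \not< x$, and $x, y$ are incomparable.

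For the indistinguishable-set property, take $x, y \in A$ and $z \in \ID \setminus A$ (so $z \notin A_1$ and $z \notin A_2$), and prove that $x < z$ iff $y < z$ (the other three implications being entirely symmetric). If $x$ and $y$ lie in the same $A_i$, this is just the indistinguishability of $A_i$ applied to $z$. Otherwise, say $x \in A_1$ and $y \in A_2$: since $x, w \in A_1$ and $z \notin A_1$, we get $x < z$ iff $w < z$; since $w, y \in A_2$ and $z \notin A_2$, we get $w < z$ iff $y < z$; chaining these yields the claim.

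I do not anticipate a genuine obstacle here: the only idea is the use of the common element $w$, and the sole piece of bookkeeping is ensuring that in each invocation of indistinguishability the ``third point'' really lies outside the relevant $A_i$, which is exactly why the case split into $x \in A_1 \setminus A_2$, $y \in A_2 \setminus A_1$ (etc.) is made. This lemma, together with Lemma~\ref{lem:iasubset}, is then meant to be used to prove Proposition~\ref{prp:ptime-ia-partition}, presumably by a greedy merging argument: repeatedly unioning overlapping indistinguishable antichains yields well-defined maximal ones, and a minimal ia-partition can be read off from them in polynomial time.
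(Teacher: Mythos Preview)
Your proposal is correct and follows essentially the same approach as the paper: fix a common element $w \in A_1 \cap A_2$ and use it as a bridge, treating the two defining properties separately with the same case split. The only cosmetic differences are the order in which you treat the two properties and, in the antichain step, which $A_i$'s indistinguishability you invoke (you apply that of $A_2$ with $x$ as the outside point and then the antichain property of $A_1$, whereas the paper applies that of $A_1$ with $y$ outside and then the antichain property of $A_2$); these are symmetric variants of the same argument.
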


\begin{proof}
  We first show that $A_1 \cup A_2$ is an indistinguishable set.
  Let $x, y \in A_1 \cup A_2$ and
  $z \in \ID \backslash (A_1 \cup A_2)$, assume that $x < z$ and show that $y <
  z$ (again the other three implications are symmetric).
  As $A_1$ and $A_2$ are indistinguishable sets, this is immediate unless $x
  \in A_1 \backslash A_2$ and $y \in A_2 \backslash A_1$, or vice-versa. We
  assume the first case as the second one is symmetric. Consider
  $w \in A_1 \cap A_2$. As $x < z$, we know that $w < z$ because $A_1$ is an
  indistinguishable set, so that $y < z$ because $A_2$ is an indistinguishable
  set, which proves the desired implication.

  Second, we show that $A_1 \cup A_2$ is an antichain.
  Proceed by contradiction, and let $x, y \in
  A_1 \cup A_2$ such that $x < y$. As $A_1$ and $A_2$ are antichains, we must
  have $x \in A_1 \backslash A_2$ and $y \in A_2 \backslash A_1$, or vice-versa.
  Assume the first case, the second case is symmetric. As $A_1$ is an
  indistinguishable set, letting $w \in A_1 \cap A_2$,
  as $x < y$ and $x \in A_1$, we have $w < y$. But $w \in A_2$ and
  $y \in A_2$, which is impossible because $A_2$ is an antichain. We have
  reached a contradiction, so we cannot have $x < y$. Hence, $A_1 \cup A_2$ is
  an antichain, which concludes the proof.
\end{proof}

We can now show Proposition~\ref{prp:ptime-ia-partition}:

\begin{proof}
  Start with the trivial partition in singletons (which is an
  ia-partition), and for every pair of items, see if their
  current classes can be merged (i.e., merge them, check in PTIME if it
  is an antichain, and if it is an indistinguishable set, and undo the merge if
  it is not). Repeat the process
  while it is possible to merge classes (i.e., at most linearly many times).
  This greedy process concludes in PTIME and yields an ia-partition
  $\mathbf{A}$. Let $n$ be its cardinality.

  Now assume that there is an ia-partition $\mathbf{A'}$ of cardinality $m < n$.
  There has to be a class $A'$ of~$\mathbf{A'}$
  which intersects two different classes $A_1 \neq A_2$ of
  the greedy ia-partition $\mathbf{A}$, otherwise $\mathbf{A'}$ would be a refinement
  of $\mathbf{A}$ so we would have $m \geq n$.
  Now, by Lemma~\ref{lem:indistinguishablea},
  $A \cup A_1$ and $A \cup A_2$, and hence
  $A \cup A_1 \cup A_2$, are
  indistinguishable antichains.
  By Lemma~\ref{lem:iasubset},
  this implies that $A_1 \cup A_2$ is an indistinguishable
  antichain. Now, when constructing the greedy ia-partition $\mathbf{A}$,
  the algorithm has
  considered one element of $A_1$ and one element of $A_2$, attempted to merge
  the classes $A_1$ and $A_2$,
  and, since it has not merged them in~$\mathbf{A}$, the union $A_1 \cup A_2$
  cannot be an indistinguishable antichain. We have reached a contradiction, so
  we cannot have $m < n$, which concludes the proof.
\end{proof}

We have shown the preservation of ia-width bounds through selection, projection,
and union (Proposition~\ref{prp:lexiawidthnoprod}), and shown how to compute an
ia-partition in PTIME (Proposition~\ref{prp:ptime-ia-partition}). Let us now return
to the proof of Theorem~\ref{thm:aggregnoprod}. We use
Lemma~\ref{lem:rewritenoprod} to rewrite the query to a union of projection of
selections. We evaluate the selections and projections in PTIME by
Proposition~\ref{prp:repsys}. As union is clearly associative and commutative,
we evaluate the union of relations of width $\leq k$, yielding $\OR$, and the union of those
of ia-width $\leq k$, yielding~$\OR'$. The first result $\OR$ has bounded width thanks to
Proposition~\ref{prp:lexwidth}, and we can compute a chain partition of it in PTIME using
Theorem~\ref{thm:dilworth} and Theorem~\ref{thm:fulkerson}. The second result has bounded ia-width thanks to
Proposition~\ref{prp:lexiawidthnoprod}, and we can compute an ia-partition of it in
PTIME using Proposition~\ref{prp:ptime-ia-partition}.

  \subparagraph*{Queries with no accumulation.}
  We first prove Theorem~\ref{thm:aggregnoprod} for the case without
  accumulation. It suffices to show the following:

  \begin{proposition}
    \label{prp:aggregwuawb}
    For any constant $k \in \mathbb{N}$,
    we can determine in PTIME,
    for any input po-relation $\OR$ with width $\leq k$,
    input po-relation $\OR'$ with ia-width $\leq k$,
    and list relation $L$,
    whether $L \in \pw(\OR \cupgen \OR')$.
  \end{proposition}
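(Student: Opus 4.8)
The plan is to decide $\poss$ for $\OR\cupgen\OR'$ by a polynomial-time reachability computation in a dynamic-programming graph, combining the chain-partition method already used for bounded width (Theorem~\ref{thm:aggregwbc}) with a deliberately coarse bookkeeping of $\OR'$ that exploits its ia-partition.

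\emph{Setup.} Since $\OR\cupgen\OR'$ is the parallel composition of $\OR$ and $\OR'$, a list relation $L$ lies in $\pw(\OR\cupgen\OR')$ iff $L$ can be split into two complementary subsequences $L_1,L_2$ with $L_1\in\pw(\OR)$ and $L_2\in\pw(\OR')$. Using Theorems~\ref{thm:dilworth} and~\ref{thm:fulkerson} we compute in PTIME a chain partition $\Lambda_1,\dots,\Lambda_k$ of $\OR$ (as its width is $\leq k$), and by Proposition~\ref{prp:ptime-ia-partition} an ia-partition of $\OR'$ into at most $k$ classes $A_1,\dots,A_m$. A short structural lemma shows that any two classes are either fully incomparable or fully ordered: if $x<y$ with $x\in A_i$ and $y\in A_j$, then (since $A_j$ is an antichain, $x\notin A_j$, so) indistinguishability of $A_j$ yields $x<y'$ for every $y'\in A_j$, and then indistinguishability of $A_i$ yields $x'<y'$ for every $x'\in A_i$. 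Hence the classes carry a quotient partial order $\widehat P$ on at most $k$ points, every down-set of $\OR'$ is a union of classes, and $\pw(\OR')$ is exactly the set of shuffles that respect $\widehat P$ between classes while ordering the elements within each class freely. As $\widehat P$ has width $\leq k$, we also chain-partition it into chains of classes $C_1,\dots,C_{k'}$ with $k'\leq k$, where ``consuming $C_p$'' means emptying its classes in $\widehat P$-order.

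\emph{The dynamic program.} A state is a pair $(\mathbf m,\mathbf d)$ where $\mathbf m=(m_1,\dots,m_k)$ records how far we have gone down each chain $\Lambda_i$ --- kept only when $s(\mathbf m)\defeq\bigsqcup_i\Lambda_i^{\leq m_i}$ is an order ideal of $\OR$, i.e.\ $\mathbf m$ is \emph{sane} --- and $\mathbf d=(d_1,\dots,d_{k'})$ records how many elements have been consumed from each chain of classes $C_p$, kept only when the induced set of ``started'' classes is downward closed in $\widehat P$. The length of the consumed prefix is $j\defeq\card{s(\mathbf m)}+\sum_p d_p$, so the multiset $M$ of values of $L[1..j]$ that were \emph{not} assigned to $\OR$ (the values of $L[1..j]$ minus those of $T(s(\mathbf m))$, as multisets) is determined by the state. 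We call $(\mathbf m,\mathbf d)$ \emph{admissible} if $\mathbf m$ is sane, $\mathbf d$ is $\widehat P$-respecting, and $M$ can be written as a disjoint union $\biguplus_i M_i$ where each $M_i$ is a sub-multiset of the values of $A_i$ of size exactly $c_i$, the $c_i$ being the class-level counts dictated by $\mathbf d$ (each $A_i$ lies in a unique chain $C_p$, and $d_p$ determines how many of $A_i$'s elements have been consumed); this is a transportation-feasibility test with $O(k)$ classes, decidable in PTIME. Edges go from $(\mathbf m,\mathbf d)$ to $(\mathbf m+e_i,\mathbf d)$ when $\Lambda_i$ is not exhausted, $T(\Lambda_i[m_i+1])=L[j+1]$, and the target is admissible; and to $(\mathbf m,\mathbf d+e_p)$ when the $(d_p+1)$-th element of $C_p$ lies in a class all of whose $\widehat P$-predecessors are already fully consumed under $\mathbf d$, and the target is admissible. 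We answer ``yes'' iff the fully-consumed state is reachable from the empty state.

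\emph{Correctness and complexity.} Soundness of each edge is routine, and the final reachability condition is exactly $L\in\pw(\OR\cupgen\OR')$. The step I expect to be the main obstacle is \emph{completeness}: from a genuine interleaving of a linear extension of $\OR$ with one of $\OR'$ producing $L$, one must build a walk in this graph, which needs an exchange/normalization argument --- elements consumed from a not-yet-finished class may be permuted, and elements of $\widehat P$-incomparable classes reinterleaved --- so that deferring the choice of \emph{which} element of $\OR'$ realizes each $\OR'$-position, and checking value consistency only through the admissibility flow test, never discards a real possible world. For the running time, there are at most $(\card{\OR}+1)^k\cdot(\card{\OR'}+1)^k$ states, each with $O(k)$ outgoing edges, every admissibility and edge check is in PTIME, and $k$ is fixed, so the whole procedure runs in PTIME.
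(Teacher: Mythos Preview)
Your approach is genuinely different from the paper's: instead of enumerating the constantly many \emph{finishing orders} of the classes of~$\OR'$ and running a greedy strategy for~$\OR'$ alongside the chain DP for~$\OR$ (which is what the paper does, building on Proposition~\ref{prp:aggreguawinst}), you track per-chain consumption counts $\mathbf d$ for~$\OR'$ and rely on a multiset ``admissibility'' flow test at each state. This is a plausible design, but the proof as written has a real gap, and you have identified the wrong direction as the hard one.

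The gap is in the \emph{soundness} direction, which you dismiss as routine. Your admissibility test at state $(\mathbf m,\mathbf d)$ only certifies that \emph{some} distribution of the consumed $\OR'$-values among classes with the counts~$(c_i)$ exists; it does not certify that the path's own class assignment (determined by which $e_p$ was incremented at each $\OR'$-step) is value-consistent. Concretely, take $A_1,A_2$ incomparable, each with value multiset $\{a,b\}$, $\OR$ empty, and $L=abba$: the walk $(0,0)\to(1,0)\to(1,1)\to(1,2)\to(2,2)$ is admissible at every step, yet its induced class assignment sends $\{L[1],L[4]\}=\{a,a\}$ to~$A_1$ and $\{b,b\}$ to~$A_2$, which is infeasible. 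Here $L$ happens to be possible anyway via another walk, but to conclude soundness in general you must prove: whenever \emph{some} admissible walk reaches the final state, then $L\restriction S_{\OR'}\in\pw(\OR')$, i.e., \emph{some} $\widehat P$-respecting class assignment is value-consistent. That is exactly where an exchange argument is needed, and it is not routine; you have not supplied it.

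Conversely, the direction you flag as the main obstacle --- completeness --- is easy. A genuine interleaving carries a concrete class assignment $(r_\ell)$; each $r_\ell$ lies in a unique chain $C_p$, your structural lemma guarantees $A_{r_\ell}$ is the current class on that chain at that moment, and the concrete assignment itself witnesses admissibility at every prefix. No normalization is required.

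By contrast, the paper's route sidesteps the soundness difficulty entirely: for each finishing order~$\pi$ it shows (via the greedy argument of Proposition~\ref{prp:aggreguawinst}) that the state of~$\OR'$ after greedy consumption depends only on the \emph{multiset} of values sent to~$\OR'$, hence only on $(\mathbf m,i)$; so the DP can simply store that greedy state, and any accepting run exhibits an explicit valid $\OR'$-matching.
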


Before proving this, we show a weaker result that restricts to a
bounded-ia-width input relation:

\begin{proposition}\label{prp:aggreguawinst}
    For any constant $k \in \mathbb{N}$,
    we can determine in PTIME,
    for any po-relation $\OR$ with ia-width $\leq k$
    and list relation $L$,
    whether $L \in \pw(\OR)$.
  \end{proposition}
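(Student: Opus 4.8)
The plan is to reduce the question to polynomially many bipartite-matching feasibility tests, after guessing a small amount of structure. Using Proposition~\ref{prp:ptime-ia-partition}, I would first compute a minimal ia-partition $\ID = A_1 \sqcup \cdots \sqcup A_n$ of the underlying poset of $\OR = (\ID, T, <)$; since $\OR$ has ia-width $\leq k$, we get $n \leq k$. Because the classes $A_i$ are pairwise disjoint indistinguishable antichains, for any $i \neq j$ the relation $<$ between $A_i$ and $A_j$ is \emph{uniform}: either every element of $A_i$ is $<$ every element of $A_j$, or every element of $A_j$ is $<$ every element of $A_i$, or all such pairs are incomparable. This defines a strict partial order $\prec$ on $\{1, \dots, n\}$. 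For a value $v \in \calD^{\arity{\OR}}$, write $\mu_i(v) \defeq \card{\{\id \in A_i \mid T(\id) = v\}}$. Since any two elements of the same class can be swapped by an automorphism of $\OR$, one obtains the following reformulation, writing $L = (\ell_1, \dots, \ell_m)$ and $m \defeq \card{\ID}$: $L \in \pw(\OR)$ iff $\card{L} = m$ and there exists a \emph{class-labelling} $\beta \colon \{1, \dots, m\} \to \{1, \dots, n\}$ such that $\card{\{p \mid \beta(p) = i,\ \ell_p = v\}} = \mu_i(v)$ for all $i$ and $v$, and such that we never have $\beta(p') \prec \beta(p)$ for $p < p'$.

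A direct dynamic algorithm over ``number of slots used in each class'' does not obviously work: whether a partial labelling can be completed depends on \emph{which} values have been placed in each class, and recording the per-class multisets would require exponentially many states. To get around this I would \emph{guess}, for each class $i$, a ``window'' $[s_i, e_i]$ with $1 \leq s_i \leq e_i \leq m$, intended to bracket the positions labelled $i$; since $n \leq k$, there are only $O(m^{2k})$ such tuples of windows, which is polynomial. Call such a tuple \emph{consistent} if $e_i - s_i + 1 \geq \card{A_i}$ for all $i$ and $e_i < s_j$ whenever $i \prec j$. For a consistent tuple, any labelling $\beta$ with $\beta^{-1}(i) \subseteq [s_i, e_i]$ for all $i$ automatically satisfies the precedence requirement, because if $\beta(p') \prec \beta(p)$ then $p' \leq e_{\beta(p')} < s_{\beta(p)} \leq p$, contradicting $p < p'$.

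It then only remains to decide, for a fixed consistent window tuple, whether the positions can be distributed among the windows so as to realize the prescribed per-class value counts, and the key point is that this \emph{decouples over values}. Fixing a value $v$, the positions $p$ with $\ell_p = v$ must be split among the classes so that class $i$ receives exactly $\mu_i(v)$ of them, with $p$ eligible for class $i$ only if $p \in [s_i, e_i]$; distinct values place no constraint on one another, since each position carries a single value. For each $v$ this is a bipartite feasibility problem (positions of value $v$ on one side, classes with capacities $\mu_i(v)$ on the other), solvable by max-flow in PTIME. The algorithm thus accepts iff the global precondition $\sum_i \mu_i(v) = \card{\{p \mid \ell_p = v\}}$ holds for every $v$ and, for some consistent window tuple, all the per-value matching problems are feasible. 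For the ``if'' direction one glues the per-value matchings into a single $\beta$ and invokes consistency for precedence; for ``only if'', a witnessing $\beta$ yields the tight consistent windows $s_i \defeq \min \beta^{-1}(i)$, $e_i \defeq \max \beta^{-1}(i)$ and feasible per-value matchings. Since there are polynomially many window tuples and each is processed in PTIME, the whole procedure runs in PTIME. The crux is precisely the presence of duplicate values, which obstructs the naive dynamic algorithm; guessing the windows is what makes the per-value matching subproblems independent and therefore tractable.
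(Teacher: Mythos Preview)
Your argument is correct and takes a genuinely different route from the paper's own proof. Both start from the same observation that, after computing a minimal ia-partition $A_1,\dots,A_n$ with $n\le k$, the order between classes is uniform, so a linear extension achieving~$L$ is fully described by a class-labelling $\beta$ respecting the class precedence~$\prec$ and the per-class value multiplicities $\mu_i(v)$.

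From there the two proofs diverge. The paper enumerates the at most $k!$ \emph{finishing orders} of the classes and, for each, runs a greedy left-to-right scan of~$L$: at each step it picks an unused element with the right value from the open class that finishes earliest, proving by an exchange argument that if any labelling works for that finishing order then the greedy one does. You instead enumerate the $O(m^{2k})$ tuples of \emph{windows} $[s_i,e_i]$, check the consistency condition $e_i<s_j$ whenever $i\prec j$, and then observe that the remaining assignment problem decouples across values into independent bipartite feasibility instances solvable by max-flow. Your ``only if'' direction (tight windows from a witnessing~$\beta$) and ``if'' direction (glue per-value matchings; consistency yields precedence) are both sound.

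What each buys: the paper's approach has a much smaller outer loop ($k!$ versus $m^{2k}$) and a linear-time inner loop, so it is asymptotically cheaper. More importantly, the greedy characterization is what the paper then reuses to prove the combined bounded-width/bounded-ia-width result (Proposition~\ref{prp:aggregwuawb}): because the greedy choice in the ia-width part is deterministic once the finishing order is fixed, the state of~$\OR'$ after matching any given prefix depends only on the order ideal consumed in~$\OR$, which lets the dynamic program over chains of~$\OR$ carry just that ideal as state. Your window/matching formulation does not obviously offer the same determinism, so extending it to that combined setting would require additional work. On the other hand, your reduction to matching is conceptually clean and avoids the somewhat delicate exchange argument needed to justify greediness.
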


\begin{proof}
  Let $\mathbf{A} = (A_1, \ldots, A_k)$ be an ia-partition of width $k$ of $\OR
  = (\ID, T, <)$, which can be computed in PTIME
  by Proposition~\ref{prp:ptime-ia-partition}. We assume that the
  length of the candidate possible world $L$ is $\card{\ID}$,
  as we can trivially reject otherwise.

  If there is a way to realize $L$ as a possible world of $\OR$,
  For any linear extension $<'$ of~$\OR$, we call the
  \emph{finishing order} $<'$ the permutation $\pi$ of $\{1, \ldots, k\}$ obtained by
  considering, for each class $A_i$ of $\mathbf{A}$, the largest position $1
  \leq n_i \leq \card{\ID}$
  in~$<'$ to which an element of $A_i$ is mapped, and sorting the
  class indexes by ascending finishing order. We say we can realize $L$ with
  finishing order $\pi$ if there is a linear extension of~$\OR$ that realizes
  $L$ and whose finishing order
  is~$\pi$. Hence, it suffices to check, for every possible permutation $\pi$ of
  $\{1, \ldots, k\}$,
  whether $L$ can be realized from $\OR$ with finishing order $\pi$: this does not
  make the complexity worse because the number of finishing orders depends only on
  $k$ and not on $\OR$, so it is constant. (Note that the order relations across
  classes may imply that some finishing orders are impossible to realize
  altogether.)

  We now claim that to determine whether $L$ can be realized with finishing
  order $\pi$, the following greedy algorithm works. Read $L$ linearly. At any
  point, maintain the set of elements of $\OR$ which have already been used
  (distinguish the \emph{used} and \emph{unused} elements; initially all
  elements are unused), and distinguish the classes of~$\mathbf{A}$ in three
  kinds: the \emph{exhausted
  classes}, where all elements are used; the \emph{open classes},
  the ones where some elements are unused and all ancestor elements outside of
  the class are used;
  and the \emph{blocked
  classes}, where some ancestor element outside of the class is not used.
  Initially, the
  open classes are those which are roots in the poset obtained from the
  underlying poset of~$\OR$ by quotienting by the equivalence relation induced by
  $\mathbf{A}$; and the other classes are blocked.

  When reading a value $t$ from $L$, consider all open classes. If none of these
  classes have an unused element with value $t$, reject, i.e., conclude that we cannot realize $L$
  as a possible world of $\OR$ with finishing order $\pi$. Otherwise, take the
  open class that comes first in the finishing order, and use
  an arbitrary suitable element from it. Update the class to be \emph{exhausted} if it
  is: in this case, check that the class was the next one in the finishing
  order~$\pi$ (and reject otherwise), and 
  update from \emph{blocked} to \emph{open} the classes that
  must be. Once $L$ has been completely read, accept: as $\card{L} = \card{\ID}$
  we know that all elements are now used.

  It is clear by construction that if this greedy algorithm accepts then there
  is a linear extension of~$\OR$ that realizes $L$ with finishing order~$\pi$;
  indeed, when the algorithm suceeds then it has clearly respected the finishing
  order $\pi$, and whenever an identifier $\id$ of~$\OR$ is marked as
  \emph{used} by the algorithm, then $\id$ has
  the right value relative to the element of~$L$ that has just been read, and
  $\id$
  is in an open class so no order relations of~$\OR$ are violated by enumerating
  $\id$ at this point of the linear extension. The interesting direction is
  the converse: show that if $L$
  can be realized by a linear extension $<'$ of~$\OR$ with finishing order $\pi$,
  then the algorithm accepts when considering $\pi$.
  To do so, we must show that if there is such a linear extension,
  then there is such a linear extension where identifiers are enumerated as in
  the greedy algorithm, i.e., we always choose an identifier with the right
  value and in the open class with the smallest finishing time: we call this a
  \emph{minimal} identifier. (Note that we do not need to worry about which
  identifier is chosen: once we have decided on the value of the identifier and
  on its class, then it does not matter which element we choose, because all
  elements in the class are unordered and have the same order relations to
  elements outside the class thanks to indistinguishability.)
  If we can prove this, then this justifies the existence of a linear extension
  that the greedy algorithm will construct, which we call a \emph{greedy linear
  extension}.

  Hence, let us see why it is always possible to enumerate minimal identifiers. 
  Consider a linear extension $<'$ and take the smallest position in~$L$ where
  $<'$ chooses an identifier $\id$ which is non-minimal. We know that $\id$ must
  still have the correct value, i.e., $T(\id)$ is determined, and by definition
  of a linear extension, we know that $\id$ must be in an open class. Hence, we
  know that the class $A$ of~$\id$ is non-minimal, i.e., there is another open
  class $A'$ containing an unused element with value $T(\id)$, and $A'$ is
  before~$A$ in the finishing order~$\pi$. Let us take for $A'$ the first open
  class with such an unused element in the finishing order~$\pi$, and let $\id'$
  be a minimal element, i.e., an element of~$A'$ with $T(\id') = T(\id)$. Let us
  now construct a different linear extension $<''$ by swapping $\id$ and $\id'$,
  i.e., enumerating $\id'$ instead of~$\id$, and enumerating $\id$ in~$<''$ at
  the point where $<'$ enumerates $\id'$. It is clear that the sequence of
  values (images by~$T$) of the identifiers in~$<''$ is still the same as
  in~$<'$. Hence, if we can show that $<''$ additionally satisfies the order
  constraints of~$\OR$, then we will have justified the existence of a linear
  extension that enumerates minimal identifiers until a later position; so,
  reapplying the rewriting argument, we will deduce the existence of a greedy
  linear extension. So it only remains to show that $<''$ satisfies the order
  constraints of~$\OR$.

  Let us assume by way of contradiction that $<''$
  violates an order constraint of~$\OR$. There are two possible kinds of
  violation. The first kind is if $<'$ enumerates an element~$\id''$
  between~$\id$ and~$\id'$ for which $\id < \id''$, so that having $\id'' <''
  \id$ in~$<''$ is a violation. The second kind is if $<'$ enumerates an
  element~$\id''$ between $\id$ and $\id'$ for which $\id'' < \id'$, so that
  having $\id'' <'' \id'$  in~$<''$ is a violation. The second kind of violation
  cannot happen because we know that $\id'$ is in an open class when $<'$
  considers~$\id$, i.e., we have ensured that $\id'$ can be enumerated instead
  of~$\id$. Hence, we focus on violations of the first kind. Consider $\id''$
  such that $\id <' \id'' <' \id'$ and let us show that we do not have $\id <
  \id''$. Letting $A''$ be the class of~$\id''$, we assume that $A'' \neq A$, as
  otherwise there is nothing to show because the classes are antichains.
  Now, we know from~$<'$ that we do not have $\id'
  <' \id''$, and that the class~$A'$ of~$\id'$ is not exhausted when $<'$
  enumerates~$\id''$. As~$<'$ respects the finishing order~$\pi$, and $A'$ comes
  before~$A$ in~$\pi$, we know that $A$ is not exhausted either when~$<'$
  enumerates~$\id''$. Letting $\id_A$ be an element of~$A$ which is still unused
  when~$<'$ enumerates~$\id''$, we know that we do not have $\id_A < \id''$. So
  as $\id'' \notin A$ we know by indistinguishability that we do not have $\id <
  \id''$ either. This is what we wanted to show, so $\id''$ cannot witness a
  violation of the first kind.  Hence $<''$  does not violate the order
  constraints of~$\OR$, and repeating this rewriting argument shows that there
  is a greedy linear extension that the greedy algorithm will find,
  contradicting the proof.
\end{proof}

We now extend this proof to show Proposition~\ref{prp:aggregwuawb}:

\begin{proof}
  As in the proof of Proposition~\ref{prp:aggreguawinst}, we will enumerate all possible finishing
  orders for the classes of $\OR'$, of which there are constantly many, and apply
  an algorithm for each finishing order $\pi$, with the algorithm succeeding iff it
  succeeds for some finishing order.

  We first observe that if there is a way to achieve $L$ as a possible world of
  $\OR \cup \OR'$ for a finishing order $\pi$, then there is one where the subsequence of
  the tuples that are matched to $\OR'$ are matched following a greedy strategy as
  in Proposition~\ref{prp:aggreguawinst}. This is simply because $L$ must then be an
  interleaving of a possible world of $\OR$ and a possible world of $\OR'$, and a
  match for the possible world of $\OR'$ can be found as a greedy match, by what
  was shown in the proof of Proposition~\ref{prp:aggreguawinst}. So it suffices to assume
  that the tuples matched to $\OR'$ are matched following the greedy algorithm of
  Proposition~\ref{prp:aggreguawinst}.

  Second, we observe the following: for any prefix $L'$ of $L$ and order ideal
  $\OR''$ of $\OR$, if we realize $L'$ by matching exactly the tuples of
  $\OR''$ in $\OR$,
  and by matching the other tuples to $\OR'$ following a greedy strategy, then the
  matched tuples in $\OR'$ are entirely determined (up to replacing tuples in a
  class by other tuples with the same value). This is because, while there may
  be multiple ways to match parts of $L'$ to $\OR''$ in a way that leaves a
  different sequence of tuples to be matched to $\OR'$, all these ways make us
  match the same bag of tuples to $\OR'$; now the state of $\OR'$ after matching a bag
  of tuples following the greedy strategy (for a fixed finishing order) is the
  same, no matter the order in which these tuples are matched, assuming that the
  match does not fail.

  This justifies that we can solve the problem with a dynamic algorithm again.
  The state contains the position $\mathbf{b}$ in each chain of $\OR$, and a position $i$
  in the candidate possible world. As in the proof of
  Theorem~\ref{thm:aggregwb}, we filter the configurations so that they are sane
  with respect to the order constraints between the chains of $\OR$. For each
  state, we will store a Boolean value indicating whether the prefix of length
  $i$ of $L$ can be realized by $\OR \cupgen \OR'$ such that the tuples
  of $\OR$ that
  are matched is the order ideal $s(\mathbf{b})$
  described by $\mathbf{b}$, and such that the
  other tuples of the prefix are matched to $\OR'$ following a greedy strategy with
  finishing order $\pi$. By
  our second remark above, when the Boolean is true, the state of $\OR'$ is
  uniquely determined, and we also store it as part of the state (it is
  polynomial) so that we do not have to recompute it each time.

  From each state we can make progress by consuming the next tuple from the
  candidate possible world, increasing the length of the prefix, and reaching
  one of the following states: either match the tuple to a chain of $\OR$,
  in which case we make progress in one chain and the consumed tuples in
  $\OR'$
  remain the same; or make progress in $\OR'$, in which case
  we look at the previous state of $\OR'$ that was stored and consume a tuple from
  $\OR'$ following the greedy algorithm of Proposition~\ref{prp:aggreguawinst}:
  more specifically, we find an unused tuple with
  the right label which is in the open class that appears first in the finishing
  order, 
  if the class is now exhausted we verify that it was supposed to be the next
  one according to the finishing order, and
  we update the open, exhausted and blocked status of the
  classes.

  Applying the dynamic algorithm allows us to conclude whether $L$ can be
  realized by matching all tuples of $\OR$, and matching tuples in $\OR'$ following
  the greedy algorithm with finishing order $\pi$ (and checking cardinality
  suffices to ensure that we have matched all tuples of $\OR'$). If the answer of
  the dynamic algorithm is YES, then it is clear that, following
  the path from the initial to the final state found by the dynamic algorithm,
  we can realize $L$. Conversely, if $L$ can be realized, then by our
  preliminary remark it can be realized in a way that matches tuples in
  $\OR'$
  following the greedy algorithm for some finishing order. Now, for that
  finishing order, the path of the dynamic algorithm that matches tuples
  to $\OR$
  or to $\OR'$ following that match will answer YES.
\end{proof}

  \subparagraph*{\Pnoprodacc queries with finite and position-invariant
  accumulation.}
  We now prove the result for the case of a query with accumulation. In this
  setting, the
  results for \poss and \cert follow from the following claim:

  \begin{theorem}
    \label{thm:aggregwuawb}
    For any constant $k \in \mathbb{N}$,
    and position-invariant accumulation operator $\accum_{h, \oplus}$ with
    finite domain, we can compute in PTIME,
    for any input po-relation $\OR$ with width $\leq k$
    and input po-relation $\OR'$ with ia-width $\leq k$,
    the set $\accum_{h, \oplus}(\OR \cupgen \OR')$.
  \end{theorem}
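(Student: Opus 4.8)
The plan is to merge two machineries we already have: the chain‑partition dynamic program of Theorem~\ref{thm:aggregwb} (which handles a bounded‑width relation with finite accumulation) for $\OR$, and the greedy ia‑partition argument of Proposition~\ref{prp:aggreguawinst} for $\OR'$, interleaving the two and carrying a partial accumulation value in $\calM$. The enabling observation is that, since $\accum_{h,\oplus}$ is position‑invariant with finite domain, the accumulation of a list relation depends only on the sequence of $h$‑images of its tuples, and these images all lie in the finite set $\calM$; so only the $h$‑image (an element of $\calM$) of each tuple ever matters.

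\textbf{Preparation of the two relations.} For $\OR$, compute in PTIME a chain partition $\Lambda_1,\dots,\Lambda_{\le k}$ using Theorems~\ref{thm:dilworth} and~\ref{thm:fulkerson}. For $\OR'$, compute in PTIME an ia‑partition of cardinality $\le k$ by Proposition~\ref{prp:ptime-ia-partition}, and then refine each class by splitting its elements according to their $h$‑image; by Lemma~\ref{lem:iasubset} each refined class is still an indistinguishable antichain, and the number $N$ of refined classes is at most $k\cdot\card{\calM}$, a constant. Within a refined class all elements carry the same $h$‑image and are pairwise incomparable with identical relations to the outside, so they are interchangeable both for the order and for accumulation; hence the only state we need for $\OR'$ is the vector $\mathbf{p}=(p_1,\dots,p_N)$ of used‑counts per refined class, together with the derived open/exhausted/blocked status and a fixed finishing order $\pi$ over the refined classes (of which there are constantly many, so we will enumerate all of them).

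\textbf{The dynamic program.} For each finishing order $\pi$, run a DP whose states are triples $(\mathbf{b},\mathbf{p},v)$, where $\mathbf{b}$ is a sane position vector in $\Lambda_1,\dots,\Lambda_{\le k}$ (as in Theorem~\ref{thm:aggregwb}), $\mathbf{p}$ is a valid used‑count vector for the refined ia‑classes (the $\OR'$‑side analogue of saneness, checkable in PTIME against the quotient poset of $\OR'$), and $v\in\calM$; there are polynomially many such states once $k$, $Q$ and $\calM$ are fixed. From a reachable state one emits one more tuple, either from $\OR$ — advance one chain of $\mathbf{b}$ while keeping it sane, and set $v \mapsto v\oplus \ell$ with $\ell$ the $h$‑image of that tuple — or from $\OR'$ — as in Proposition~\ref{prp:aggreguawinst}, pick the open refined class first in $\pi$ that still has an unused element carrying a chosen label $\ell$, use one such element, update $\mathbf{p}$ and the open/blocked statuses (and, if the class becomes exhausted, check it is next in $\pi$), and set $v\mapsto v\oplus\ell$. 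Starting from $\mathbf{b}=\mathbf{0}$, $\mathbf{p}=\mathbf{0}$, $v=\epsilon$, the contribution of $\pi$ is the set of $v$ reachable at the terminal state (all chains full, all classes full), and $\accum_{h,\oplus}(\OR\cupgen\OR')$ is the union of these sets over all $\pi$.

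\textbf{Correctness and the main obstacle.} The forward inclusion is routine: a DP path gives an explicit interleaving of a linear extension of $\OR$ and one of $\OR'$, i.e.\ a possible world achieving $v$. The reverse inclusion is the crux: given $L\in\pw(\OR\cupgen\OR')$, split it into the subsequence matched to $\OR$ (a linear extension of $\OR$, which induces a sane‑vector path by the analysis of Theorem~\ref{thm:aggregwb}) and the subsequence matched to $\OR'$; then rewrite the latter into a greedy realization for some finishing order $\pi$ by the swap argument of Proposition~\ref{prp:aggreguawinst}, noting that that argument only ever exchanges two elements with the same $h$‑image, hence preserves the label sequence of $L$ and therefore $\accum_{h,\oplus}(L)$; the rewritten interleaving is then exactly traced by the DP for $\pi$. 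The main obstacle is to verify that the greedy correctness of Proposition~\ref{prp:aggreguawinst} is preserved (i) under interleaving with $\OR$'s emissions — harmless because $\OR\cupgen\OR'$ adds no order constraints between the two sides, so an $\OR$‑emission never changes the open/blocked status of an $\OR'$‑class — and (ii) under refining the ia‑partition by $h$‑image, and to check that $(\mathbf{b},\mathbf{p},\text{statuses},v)$ is genuinely a Markovian state, so that the state space stays polynomial; this is exactly where position‑invariance and finiteness of $\calM$ are used, since they are what keep the label alphabet, and hence the number $N$ of refined classes, constant.
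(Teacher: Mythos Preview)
Your argument is correct, and at its core it sets up exactly the same state space as the paper's proof: chain positions in~$\OR$ together with, for each ia-class of~$\OR'$ and each monoid value~$m$, the number of elements of that class with $h$-image~$m$ already consumed (your ``refined classes'' are precisely these $(\text{class},m)$ pairs). Where you diverge is in overlaying the finishing-order/greedy machinery of Proposition~\ref{prp:aggreguawinst} on the $\OR'$ side. The paper does not do this: it simply allows the DP, at each step, to emit from \emph{any} open refined class (or any chain of~$\OR$), and argues correctness directly from the observation that the state $(\mathbf{b},\mathbf{p})$ determines an order ideal of $\OR\cupgen\OR'$ up to interchangeable choices within a refined class, so that every ideal is visited and every transition between valid states is a legal linear-extension step.

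The reason the greedy/finishing-order device is unnecessary here is worth noting. In Proposition~\ref{prp:aggreguawinst} the target was a \emph{fixed} list relation~$L$, and without the greedy trick one would have to remember, for each ia-class, the full multiset of tuple values already consumed --- a potentially exponential blowup, since tuple values are unbounded. The finishing order is what collapses that to a single deterministic trajectory per~$\pi$. In the present theorem the situation is different: thanks to finiteness of~$\calM$ and position-invariance of~$h$, the per-class-per-label count vector~$\mathbf{p}$ is already a polynomial-size complete description of the $\OR'$-ideal, so one can afford to branch freely over all open refined classes without any greedy discipline. Your approach is not wrong --- the union over the constantly many~$\pi$ recovers everything --- but it imports a complication that the setting no longer calls for, and the swap argument you invoke (to show that every linear extension is label-equivalent to a greedy one for some~$\pi$) is doing work that the simpler DP gets for free.
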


  \begin{proof}
    We use Theorem~\ref{thm:dilworth} and Theorem~\ref{thm:fulkerson} to compute in PTIME a chain partition of
  $\OR$, and we use Proposition~\ref{prp:ptime-ia-partition} to compute in PTIME
  an ia-partition $A_1 \sqcup \cdots \sqcup A_n$ of minimal cardinality of
  $\OR'$, with $n \leq k$.

  We then apply a dynamic algorithm whose state
  consists of:
  \begin{itemize}
  \item for each chain in the partition of $\OR$, the position in
    the chain;
  \item for each class $A$ of the ia-partition of $\OR'$, for each element $m$ of the
    monoid, the number of identifiers $\id$ of~$A$ such that $h(T(\id), 1)
      = m$ that have already been used.
  \end{itemize}

  There are polynomially many possible states; for the second bullet point, this
  uses the fact that the monoid is finite, so its size is constant because it is
  fixed as part of the query. Also note that we use the rank-invariance of~$h$
  in the second bullet point.

  The possible accumulation results for each of the possible states can then be
  computed by a dynamic algorithm. At each state, we can decide to make progress
  either in a chain of $\OR$ (ensuring that the element that we enumerate has
  the right image by~$h$, and that the new vector of positions of the chains is
  still sane, i.e., yields an order ideal of~$\OR$) or in a class of $\OR'$
  (ensuring that this class is open, i.e., it has no ancestors in~$\OR'$ that
  were not enumerated yet, and that it contains an element which has the right
  image by~$h$). The correctness of this algorithm is because these is a
  bijection between the ideals of $\OR \cupgen \OR'$ and the pairs of ideals of
  $\OR$ and of ideals of~$\OR'$. Now, the dynamic algorithm considers all ideals
  of~$\OR$ as in the proof of Theorem~\ref{thm:aggregwb}, and it clearly
  considers all possible ideals of~$\OR'$ except that we identify ideals that
  only differ by elements in the same class which are mapped to the same value
  by~$h$ (but this choice does not matter because the class is an antichain and
  these elements are indistinguishable outside the class).

  As in the proof of Theorem~\ref{thm:aggregwb}, we can ensure that all
  accumulation operations are in PTIME, using PTIME-evaluability of the
  accumulation operator, up to the technicality of storing at each state,
  for each of the possible accumulation results, a witnessing totally ordered
  relation from which to compute it in PTIME.
  \end{proof}
\end{toappendix}

Disallowing product is 
severe, but we can still integrate sources by taking the
\emph{union} of their tuples, selecting subsets, and modifying tuple
values with projection. In fact, allowing product makes
\poss intractable when allowing both unordered and totally ordered input:

\begin{toappendix}
  \subsubsection{Hardness result: Proof of Theorem~\ref{thm:posscompextended}}
  \label{sec:posscompextendedproof}
\end{toappendix}

\begin{theoremrep}\label{thm:posscompextended}
  There is a \Plex query and a \Pgen query for which the \poss problem is NP-complete
  even when the input po-database is restricted to consist only of one totally
  ordered and one unordered po-relation.
\end{theoremrep}

\begin{toappendix}
  The proof is by adapting the proof of Theorem~\ref{thm:posscompextend1}.
  The argument is exactly the
  same, except that we take relation $S$ to be \emph{unordered} rather than
  totally ordered. Intuitively, in Figure~\ref{fig:gridpic}, this means that we
  drop the vertical edges. The proof adapts, because it only used
  the fact that $t'_j < t'_k$ for $j < k$ within a row-$i$; we never used
  the comparability across groups.
\end{toappendix}

\section{Tractable Cases for Accumulation Queries}\label{sec:fpt}
We next study tractable cases for \poss and
\cert in presence of accumulation.

\subparagraph*{Cancellative monoids.}
\begin{toappendix}
  \subsection{Cancellative monoids}
\end{toappendix}
We first consider a natural restriction on the accumulation function:

\begin{definition}[\cite{howie1995fundamentals}]
    \label{def:cancellative}
    For any monoid $(\calM, \oplus, \epsilon)$,
    we call $a \in \calM$ \deft{cancellable} if, for all~$b, c \in \calM$, we
    have that $a \oplus b = a \oplus c$ implies $b = c$, and we also have that $b \oplus a = c
    \oplus a$ implies $b = c$.
    We call $\calM$ a \deft{cancellative monoid} if all
    its elements are cancellable.
\end{definition}

Many interesting monoids are cancellative; in particular, this is the
case of both monoids in
Example~\ref{exa:aggreg}. More generally, all \emph{groups} are cancellative
monoids (but some infinite cancellative monoids are not groups, e.g., the monoid
of concatenation).
For this large class of accumulation functions, we design an efficient algorithm for certainty.

\begin{theoremrep}\label{thm:certaintyptimec}
    \cert is in PTIME for any \PosRAagg{} query
    that performs accumulation in a cancellative monoid.
\end{theoremrep}

\begin{proofsketch}
  We show that the accumulation result in cancellative monoids is certain iff the
  po-relation on which we apply accumulation respects the following \emph{safe
  swaps} criterion: for all tuples $t_1$ and $t_2$ and consecutive positions $p$ and $p+1$
  where they may appear, we have $h(t_1, p) \oplus h(t_2, p+1) = h(t_2, p) \oplus h(t_1,
  p+1)$. We can check this in PTIME.
\end{proofsketch}

\begin{toappendix}
    \label{sec:proof-cancellative-monoids}
We formalize the definition of possible ranks for pairs of incomparable
elements, and of the \emph{safe swaps} property:

\begin{definition}
    \label{def:pr2}
    Given two \emph{incomparable} elements $x$ and $y$ in $\OR$, their
    \deft{possible ranks} $\pr_\OR(x, y)$ is the interval $[a+1, \card{\OR} - d]$,
    where $a$ is the number of elements that are either ancestors of~$x$ or
    of~$y$ in~$\OR$ (not including $x$ and $y$), and $d$ is the number of elements
    that are either descendants of $x$ or of $y$ (again excluding $x$ and $y$
    themselves).

    Let $(\calM, \oplus, \epsilon)$ be an accumulation monoid and let $h :
    \calD \times \mathbb{N} \to \calM$ be an accumulation map.
        The po-relation $\OR$ has the \emph{safe swaps} property with respect
    to $\calM$ and $h$ if the following holds: for any pair $t_1 \neq t_2$ of
    incomparable tuples of $\OR$, for any pair $p, p+1$ of \emph{consecutive}
    integers in $\pr_\OR(t_1, t_2)$, we have:
    \[
    h(t_1, p) \oplus h(t_2, p+1) = h(t_2, p) \oplus h(t_1, p+1)
    \]
\end{definition}

We first show the following soundness result for possible ranks:

\begin{lemma}
  \label{lem:achieverank}
  For any poset $P$ and incomparable elements $x, y \in P$, for any $p \neq q \in \pr_P(x,
  y)$, there exists a linear extension~$\Lambda$ of~$P$ such that element $x$ is
  enumerated at position $p$ in $\Lambda$, and element $y$ is enumerated at position
  $q$, and we can compute it in PTIME from~$P$.
\end{lemma}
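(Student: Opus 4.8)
The plan is to split $P$ into an ancestor block, a middle block, and a descendant block, and to place $x$ and $y$ at the desired positions inside the middle block. Let $A \defeq \{z \in P : z < x \text{ or } z < y\}$ and $D \defeq \{z \in P : x < z \text{ or } y < z\}$, so that $\card A = a$ and $\card D = d$ in the notation of Definition~\ref{def:pr2}, and let $M \defeq P \setminus (A \cup D \cup \{x,y\})$. The first step is a handful of easy structural observations, all of which rely on the incomparability of $x$ and $y$: $A$ is a down-set and $D$ is an up-set of $P$ (immediate from the definitions and transitivity); $A$, $D$ and $\{x,y\}$ are pairwise disjoint (an element below one of $x,y$ and above one of them would force $x<y$ or $y<x$); and $x$ and $y$ are isolated points in the induced subposet $P' \defeq \restr{P}{M \cup \{x,y\}}$ (any element of $M$ comparable to $x$ would lie in $A$ or $D$, and $y$ is incomparable to $x$ by hypothesis, and symmetrically for $y$). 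In particular $\card M + 2 = \card P - a - d$, which is exactly the number of integers in the interval $\pr_P(x,y) = [a+1, \card P - d]$.

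Next I would build a linear extension $\Lambda_M$ of $P'$ that puts $x$ at relative position $i \defeq p - a$ and $y$ at relative position $j \defeq q - a$. Since $p \neq q$ and both lie in $\pr_P(x,y)$, the values $i$ and $j$ are distinct elements of $\{1, \ldots, \card M + 2\}$. Because $x$ and $y$ are isolated in $P'$, it suffices to take any topological sort $\ell_1 < \cdots < \ell_{\card M}$ of $\restr{P}{M}$, reserve slots $i$ and $j$ for $x$ and $y$ respectively, and fill the remaining $\card M$ slots in increasing order with $\ell_1, \ldots, \ell_{\card M}$; this is a linear extension of $P'$ because the only order relations inside $P'$ are those among the $\ell$'s, which are respected.

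Finally I would assemble $\Lambda$ as the concatenation of a linear extension of $\restr{P}{A}$, then $\Lambda_M$, then a linear extension of $\restr{P}{D}$. This is a linear extension of $P$: each block is internally a linear extension, and no comparability of $P$ can point from a later block to an earlier one, since $A$ is a down-set and $D$ an up-set of $P$. In $\Lambda$ the element $x$ occupies position $a + i = p$ and $y$ occupies position $a + j = q$. Computing $A$, $D$, $M$ (reachability in $P$) and the three topological sorts is polynomial, so $\Lambda$ is produced in PTIME.

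The argument is mostly bookkeeping, and the one point that really carries the proof is the observation that $x$ and $y$ become isolated once ancestors and descendants are stripped away — this is what lets them be slid independently to any two distinct target positions. I would therefore prove the small structural facts about $A$, $D$ and $P'$ carefully and up front, since they all hinge on the same use of incomparability, and the rest follows by direct construction.
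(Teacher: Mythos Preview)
Your proof is correct and follows essentially the same approach as the paper: split $P$ into the ancestor set $A$, the middle block, and the descendant set $D$, observe that $x$ and $y$ are incomparable to everything in the middle, then slot $x$ and $y$ at the desired relative positions inside a topological sort of the middle block. Your treatment is in fact a bit more careful than the paper's, since you explicitly verify that $A$ is a down-set and $D$ an up-set (which is what makes the concatenation of the three linear extensions a linear extension of all of~$P$), whereas the paper leaves this implicit.
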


\begin{proof}
  We can construct the desired linear extension $\Lambda$ by starting to enumerate
  all elements which are ancestors of either~$x$ or~$y$ in any order, and
  finishing by enumerating all elements which are descendants of either~$x$
  or~$y$, in any order: that this can be done without enumerating either $x$
  or $y$ follows from the fact that $x$ and $y$ are incomparable.

  Call $p' = p - a$, and $q' = q - a$; it follows from the definition of
  $\pr_P(x, y)$ that $1 \leq p', q' \leq \card{P} - d - a$, and clearly $p'
  \neq q'$.

  All unenumerated elements are either $x$, $y$, or incomparable to both $x$
  and $y$. Consider any linear extension of the unenumerated elements except
  $x$ and $y$; it has length $\card{P} - d - a - 2$.
  Now, as $p' \neq q'$, if $p' < q'$, we can enumerate $p' - 1$ of these
  elements, enumerate $x$, enumerate $q' - p' - 1$ of these elements,
  enumerate $y$, and enumerate the remaining elements, following the linear
  extension. We proceed similarly, reversing the roles of $x$ and $y$, if $q'
  < p'$. The overall process is clearly in PTIME.
\end{proof}

We can then show:

\begin{lemma}
  \label{lem:safeptime}
  For any fixed (PTIME-evaluable) accumulation operator $\accum_{h, \oplus}$ we
  can determine in PTIME, given a po-relation $\OR$, whether $\OR$ has safe
  swaps with respect to~$h$.
\end{lemma}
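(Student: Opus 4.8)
The plan is to turn the safe-swaps condition, which quantifies over pairs of incomparable tuples and over consecutive achievable ranks, into a polynomially-bounded family of equality tests inside the monoid $\calM$, and then argue that each test is computable in PTIME. Step 1 (computing $\pr_\OR$): given $\OR = (\ID, T, <)$, I would first compute, by transitive closure of the comparability relation (equivalently, from its Hasse diagram), the ancestor set and the descendant set of every identifier, which is PTIME. From these sets, for any two incomparable identifiers $\id_1, \id_2 \in \ID$ the integers $a$ and $d$ of Definition~\ref{def:pr2} are read off directly, and hence so is the interval $\pr_\OR(\id_1, \id_2) = [a+1, \card{\OR} - d]$.

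Step 2 (enumeration and polynomial count): I would then enumerate all unordered pairs $\{\id_1, \id_2\}$ of incomparable identifiers of $\OR$ with $T(\id_1) \neq T(\id_2)$ — there are $O(\card{\ID}^2)$ of them — and, for each, all consecutive integers $p, p+1$ lying in $\pr_\OR(\id_1, \id_2)$, of which there are at most $\card{\OR} - 1$. Writing $t_1 = T(\id_1)$ and $t_2 = T(\id_2)$, the po-relation $\OR$ has safe swaps with respect to $h$ iff every one of these $O(\card{\ID}^3)$ triples $(\id_1, \id_2, p)$ satisfies $h(t_1,p) \oplus h(t_2,p+1) = h(t_2,p) \oplus h(t_1,p+1)$ (pairs with $T(\id_1) = T(\id_2)$ satisfy this equation trivially, so excluding them loses nothing). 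Thus the number of equality tests is polynomial, and it remains only to perform one test in PTIME.

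Step 3 (one test): fix a triple $(\id_1, \id_2, p)$ and pick any tuple value $u$ of the arity of $\OR$. Let $L_p$ be the list relation of length $p+1 \leq \card{\OR}+1$ whose first $p-1$ entries are all $u$, whose $p$-th entry is $t_1$, and whose $(p+1)$-th entry is $t_2$, and let $L'_p$ be $L_p$ with its last two entries swapped. Then $\accum_{h,\oplus}(L_p) = g \oplus h(t_1,p) \oplus h(t_2,p+1)$ and $\accum_{h,\oplus}(L'_p) = g \oplus h(t_2,p) \oplus h(t_1,p+1)$, where $g = h(u,1) \oplus \cdots \oplus h(u,p-1)$ (with $g = \epsilon$ when $p=1$) is common to both; by PTIME-evaluability of the accumulation operator, both values are computed in PTIME. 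Since the monoid is cancellative — this lemma is invoked only within the proof of Theorem~\ref{thm:certaintyptimec} — the element $g$ is left-cancellable, so $\accum_{h,\oplus}(L_p) = \accum_{h,\oplus}(L'_p)$ holds iff $h(t_1,p) \oplus h(t_2,p+1) = h(t_2,p) \oplus h(t_1,p+1)$. Hence the test reduces to a single equality comparison between two elements of $\calM$ obtained from PTIME evaluations, which is exactly the comparison already relied on for the co-NP membership of \cert and is therefore PTIME. (Alternatively, if one reads ``PTIME-evaluable accumulation operator'' as also making the map $h$ and the operation $\oplus$ directly PTIME-computable, the four values $h(t_i,p), h(t_i,p{+}1)$ and the two products can be computed outright and no appeal to cancellativity is needed.)

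Putting Steps 1--3 together yields a PTIME decision procedure for the safe-swaps property. I expect the only delicate point to be isolating the bare products $h(t_1,p)\oplus h(t_2,p+1)$ and $h(t_2,p)\oplus h(t_1,p+1)$, since the accumulation operator a priori exposes only full left-accumulations over whole lists rather than individual map values at an interior position; this is precisely where the swap-list construction together with cancellativity (or, alternatively, direct PTIME-computability of $h$ and $\oplus$) is used in Step 3. The remaining ingredients — the polynomial bound on the number of triples and the PTIME computation of $\pr_\OR$ via transitive closure — are routine.
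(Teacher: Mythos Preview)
Your proposal is correct and follows essentially the same enumeration strategy as the paper: iterate over all incomparable pairs, compute $\pr_\OR$ from ancestor/descendant counts, and test the swap equality for each consecutive pair $(p,p{+}1)$, giving $O(\card{\ID}^3)$ tests.

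The only substantive difference is in how the equality test is performed. The paper builds a list relation of length $p{+}1$ whose first $p{-}1$ entries are ``labeled with the neutral element of the monoid'' so that accumulating yields $h(T(\id_1),p)\oplus h(T(\id_2),p{+}1)$ directly. This is a bit loose, since list-relation entries live in $\calD^n$ rather than in $\calM$, and no tuple value is guaranteed to map to $\epsilon$ under~$h$. Your Step~3 handles this more carefully: you use an arbitrary filler value $u$, obtain the two accumulation values with a common prefix $g$, and then cancel $g$ using the cancellativity hypothesis of Theorem~\ref{thm:certaintyptimec}. That is a valid fix, and your observation that the lemma is invoked only under cancellativity is exactly right; your alternative reading (direct PTIME access to $h$ and $\oplus$) would also resolve the issue without cancellativity. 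In short, your argument and the paper's coincide structurally, and your treatment of the one delicate point is in fact cleaner.
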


\begin{proof}
  Consider each pair $(\id_1, \id_2)$ of elements of $\OR$, of which there are quadratically
  many. Check in PTIME whether they are incomparable. If yes, compute in PTIME
  $\pr_\OR(\id_1, \id_2)$, and consider each pair $p$, $p+1$ of consecutive
  integers (there are linearly many). For each such pair, compute 
  $h(T(\id_1), p) \oplus h(T(\id_2), p+1)$ and $h(T(\id_2), p) \oplus
  h(T(\id_1), p+1)$, and check whether are equal.

  We must only argue that these expressions can be evaluated in PTIME, but this
  follows from the PTIME-evaluability of the accumulation operator.
  Specifically, to
  evaluate, e.g., $h(T(\id_1), p) \oplus h(T(\id_2), p+1)$, we build in PTIME from~$\OR$ a list relation
  $L$ with $p+1$ tuples that are all labeled with the neutral element of the
  monoid of~$h$ except the two last ones which are labeled respectively with
  $T(\id_1)$ and~$T(\id_2)$. We then
  evaluate the accumulation operator in PTIME on~$L$ and obtain the desired
  value.
\end{proof}

Now it is easily seen that Theorem~\ref{thm:certaintyptimec} is implied by
the following claim.

\begin{proposition}
  \label{prp:auxcancel}
  If the monoid $(\calM, \oplus, \epsilon)$ is cancellative, then, for any
  po-relation $\OR$, we have $\card{\accum_{h,\oplus}(\OR)} = 1$ iff $\OR$ has
  safe swaps with respect to $\oplus$ and $h$.
\end{proposition}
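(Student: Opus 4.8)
The plan is to prove the equivalence of Proposition~\ref{prp:auxcancel} by treating the two implications separately, and I note at the outset that the direction \enquote{safe swaps $\Rightarrow \card{\accum_{h,\oplus}(\OR)}=1$} in fact holds for \emph{every} monoid: cancellativity will only be needed for the converse. Throughout I read the safe-swaps property of Definition~\ref{def:pr2} as the statement that, for every pair of incomparable identifiers $\id_1,\id_2$ of $\OR$ with $T(\id_1)\neq T(\id_2)$ and every pair $p,p+1$ of consecutive integers in $\pr_\OR(\id_1,\id_2)$, we have $h(T(\id_1),p)\oplus h(T(\id_2),p+1)=h(T(\id_2),p)\oplus h(T(\id_1),p+1)$.

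For the direction \enquote{safe swaps $\Rightarrow$ unique result}, the key tool is the classical fact that the linear extensions of a finite poset form a connected graph under \emph{adjacent transpositions}, i.e.\ swaps of two consecutive elements incomparable in the poset; since this is not quoted in the paper I would include the short inductive argument (if two linear extensions differ, some adjacent pair in one of them is out of order with respect to the other, that pair is necessarily incomparable in the poset, and swapping it strictly decreases the number of inversions). It then suffices to show that a single adjacent transposition preserves the accumulation value. So let $\Lambda$ be a linear extension in which incomparable identifiers $\id_1,\id_2$ occupy consecutive positions $p,p+1$, and let $\Lambda'$ be obtained by swapping them; $\Lambda'$ is again a linear extension because $\id_1,\id_2$ are incomparable and no element lies between them. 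I first observe that $p,p+1\in\pr_\OR(\id_1,\id_2)$: every element that is an ancestor of $\id_1$ or of $\id_2$ (other than $\id_1,\id_2$ themselves) occurs in $\Lambda$ strictly before position $p$ — an ancestor of $\id_2$ cannot equal $\id_1$, since $\id_1$ and $\id_2$ are incomparable — so the count $a$ of such ancestors satisfies $a\le p-1$; symmetrically the count $d$ of descendants satisfies $d\le\card{\OR}-(p+1)$, giving $\{p,p+1\}\subseteq[a+1,\card{\OR}-d]=\pr_\OR(\id_1,\id_2)$. Writing $\accum_{h,\oplus}(\Lambda)=A\oplus h(T(\id_1),p)\oplus h(T(\id_2),p+1)\oplus B$ and $\accum_{h,\oplus}(\Lambda')=A\oplus h(T(\id_2),p)\oplus h(T(\id_1),p+1)\oplus B$ with the same prefix value $A$ and suffix value $B$, the two are equal: trivially if $T(\id_1)=T(\id_2)$, and by the safe-swaps hypothesis otherwise (which applies precisely because $p,p+1$ are consecutive in $\pr_\OR(\id_1,\id_2)$). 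Hence all linear extensions of $\OR$ yield the same value, and since $\OR$ is finite it has at least one, so $\card{\accum_{h,\oplus}(\OR)}=1$.

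For the converse I argue by contraposition. Assume safe swaps fails, witnessed by incomparable $\id_1,\id_2$ with values $t_1=T(\id_1)\neq T(\id_2)=t_2$ and consecutive $p,p+1\in\pr_\OR(\id_1,\id_2)$ such that $h(t_1,p)\oplus h(t_2,p+1)\neq h(t_2,p)\oplus h(t_1,p+1)$. Applying Lemma~\ref{lem:achieverank} with the two distinct target ranks $p$ and $p+1$ gives a linear extension $\Lambda$ placing $\id_1$ at position $p$ and $\id_2$ at position $p+1$; because these ranks are consecutive, the construction in that lemma leaves no element between them, so $\Lambda'$ obtained by swapping $\id_1$ and $\id_2$ is again a linear extension. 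As before, $\accum_{h,\oplus}(\Lambda)=A\oplus h(t_1,p)\oplus h(t_2,p+1)\oplus B$ and $\accum_{h,\oplus}(\Lambda')=A\oplus h(t_2,p)\oplus h(t_1,p+1)\oplus B$ with common $A,B$. Were these equal, cancelling $A$ on the left and $B$ on the right — legitimate because $\calM$ is cancellative — would give $h(t_1,p)\oplus h(t_2,p+1)=h(t_2,p)\oplus h(t_1,p+1)$, a contradiction. Hence $\accum_{h,\oplus}(\OR)$ contains at least two elements, so $\card{\accum_{h,\oplus}(\OR)}\neq1$.

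The routine parts are the monoid bookkeeping and the verification that swapping two consecutive incomparable identifiers yields a genuine linear extension. The part needing the most care — and the main obstacle — is the \enquote{safe swaps $\Rightarrow$ unique result} direction: it rests on (i) the connectivity of linear extensions under adjacent transpositions, which I must supply a proof of since the paper does not state it, and (ii) the observation that any such transposition automatically has its two positions inside the relevant $\pr_\OR$ interval, so that the hypothesis of Definition~\ref{def:pr2} genuinely applies to it; the edge case where the two swapped identifiers carry equal tuple values must also be noted (it is handled trivially), since Definition~\ref{def:pr2} only constrains pairs with distinct values.
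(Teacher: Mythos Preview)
Your proof is correct, and the direction \enquote{not safe swaps $\Rightarrow$ multiple results} is essentially identical to the paper's. The other direction, however, you argue a bit differently. The paper does not invoke the general connectivity of linear extensions under adjacent transpositions; instead it assumes two linear extensions $L_1,L_2$ give different values, takes them to share a longest common prefix, strips that prefix off (using cancellativity to conclude the residual values $w_1',w_2'$ still differ), and then bubbles the first element of~$L_1'$ to the front of~$L_2'$ by a sequence of safe swaps, contradicting maximality of the prefix. Your route --- prove connectivity first as a pure order-theoretic fact, then check that a single swap preserves the value --- is cleaner and, as you observe, shows that this implication holds in \emph{any} monoid: the paper's version of this direction actually consumes cancellativity (to pass from $w_1\neq w_2$ to $w_1'\neq w_2'$), so your argument is genuinely more general here. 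The price you pay is having to supply the connectivity lemma, but your inversion-count sketch is standard and adequate.
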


Indeed, given an instance $(D, v)$ of the \cert problem for query $Q$, we can
find $\OR$ such that $\pw(\OR)=Q(D)$ in PTIME by Proposition~\ref{prp:repsys}, and we can
test in PTIME by Lemma~\ref{lem:safeptime} whether $\OR$ has safe swaps with
respect to $\oplus$ and $h$. If it does not, then, by the above claim, we know
that $v$ cannot be certain, so $(D, v)$ is not a positive instance of \cert. If
it does, then, by the above claim, $Q(D)$ has only one possible result, so to
determine whether $v$ is certain it suffices to compute any linear extension of
$\OR$, obtaining one possible world $L$ of~$Q(D)$, and checking whether
accumulation on $L$ yields $v$. If it does not, then $(D, v)$ is not a positive
instance of \cert. If it does, then as this is the only possible result, $(D,
v)$ is a positive instance of \cert.

We now prove this claim:

\begin{proof}[Proof of Proposition~\ref{prp:auxcancel}]
  For one direction, assume that $\OR$ does \emph{not} have the safe swaps
  property. Hence, there exist two incomparable elements $t_1$ and $t_2$ in
  $\OR$ and a pair of consecutive integers $p, p+1$ in $\pr_\OR(t_1, t_2)$ such
  that the following disequality holds:
  \[h(t_1, p) \oplus h(t_2, p+1) \neq h(t_2, p) \oplus h(t_1, p+1)\]
  We use Lemma~\ref{lem:achieverank}
  to compute
  two possible worlds $L$ and $L'$ of $\OR$ that are identical except that $t_1$
  and $t_2$ occur respectively at positions $p$ and $p+1$ in~$L$, and at
  positions $p+1$ and $p$ respectively in~$L'$. We then use cancellativity (as
  in the same proof) to deduce that $L$ and $L'$ are possible worlds of~$\OR$
  that yield different accumulation results $w \neq w'$, so we conclude that $\card{\accum_{h,\oplus}(\OR)} > 1$.

  \medskip

  For the converse direction, assume that $\OR$ has the safe swaps property.
  Assume by way of contradiction that there are two possible worlds $L_1$ and
  $L_2$ of~$\OR$ such that the result of accumulation on $L_1$ and on $L_2$,
  respectively $w_1$ and $w_2$, are different, i.e., $w_1 \neq w_2$.
  Take $L_1$ and $L_2$ to have the longest possible common prefix,
  i.e., the first position $i$ such that tuple $i$ of $L_1$ and tuple $i$ of
  $L_2$ are different is as large as possible. Let $i_0$ be the length of the
  common prefix. Let $\OR'$ be $\OR$ but removing the
  elements enumerated in the common prefix of $L_1$ and $L_2$, and let $L_1'$
  and $L_2'$ be $L_1$ and $L_2$ without their common prefix. Let $t_1$ and
  $t_2$, $t_1 \neq t_2$, be the first elements respectively of $L_1'$ and
  $L_2'$; it is immediate that $t_1$ and $t_2$ are roots of $\OR'$, that is, no
  element of $\OR'$ is less than them. Further, it is clear
  that accumulation over $L_2'$ (but offsetting all ranks by $i_0$) and
  accumulation over $L_1'$ (also offsetting all ranks by $i_0$),
  respectively $w_1'$ and $w_2'$, are different, because, by the contrapositive
  of cancellativity, combining them with the accumulation result of the common
  prefix leads to the different accumulation results $w_1$ and $w_2$.

  Our goal is to
  construct a possible world $L_3'$ of $\OR'$ whose first element is $t_1$
  but such that the result of accumulation on $L_3'$ is $w_2'$. If we can build
  such an $L_3'$, then combining it
  with the common prefix will give a possible world $L_3$ of $\OR$ such that the
  result of accumulation on $L_3$ is $w_2 \neq w_1$, yet $L_1$ and $L_3$
  have a common prefix of length $>i_0$, contradicting minimality. Hence, it
  suffices to show how to construct such a $L_3'$.

  As $t_1$ is a root of $\OR'$, $L_2'$ must enumerate $t_1$, and all elements
  before $t_1$ in $L_2'$ must be incomparable to $t_1$. Write these elements as
  $L_2'' = s_1, \ldots, s_m$, and write $L_2'''$ the
  sequence following $t_1$, so that $L_2'$ is the concatenation of $L_2''$,
  $\singleton{t_1}$, and $L_2'''$. We now consider the following sequence of
  list relations, which are clearly possible worlds of $\OR'$:

  \begin{itemize}
    \item $s_1 \ldots s_m t_1 L_2'''$
    \item $s_1 \ldots s_{m-1} t_1 s_m L_2'''$
    \item $s_1 \ldots s_{m-2} t_1 s_{m-1} s_m L_2'''$
    \item $s_1 \ldots s_{m-3} t_1 s_{m-2} \ldots  s_m L_2'''$
    \item $\vdots$
    \item $s_1 \ldots s_{3} t_1 s_{4} \ldots  s_m L_2'''$
    \item $s_1 s_{2} t_1 s_{3} \ldots  s_m L_2'''$
    \item $s_1 t_1 s_{2} \ldots  s_m L_2'''$
    \item $t_1 s_{1} \ldots  s_m L_2'''$
  \end{itemize}

  We can see that any consecutive pair in this list achieves the same
  accumulation result. Indeed, it suffices to show that the accumulation result for the only
  two contiguous indices where they differ is the same, and this is exactly what
  the safe swaps property for $t_1$ and $s_j$ says, as it is easily checked that
  $j, j+1 \in \pr_{\OR'}(s_j, t_1)$, so that $j+i_0, j+i_0+1 \in \pr_\OR(s_j, t_1)$. 
  Now, the first list relation in the list is $L_2'$, and the last list relation in this list is our desired $L_3'$. This
  concludes the second direction of the proof.

  Hence, the desired equivalence is shown.
\end{proof}

This finishes the proof of Proposition~\ref{prp:auxcancel}, which, as we argued,
concludes the proof of Theorem~\ref{thm:certaintyptimec}.

\end{toappendix}

Hence, \cert is tractable for
\PosRA (Theorem~\ref{thm:certptime}), via the concatenation monoid, and 
\cert is also tractable for top-$k$ (defined in Example~\ref{ex:variants}).
The hardness of \poss for \PosRA
(Theorem~\ref{thm:posscomp1}) then implies that \poss, unlike \cert,
is hard even on cancellative monoids.

\subparagraph*{Other restrictions on accumulation.}
\begin{toappendix}
  \subsection{Other restrictions on accumulation}\label{sec:restracc}
\end{toappendix}

We next revisit the results of Section~\ref{sec:fpt2} for
\PosRAacc. However, we need to make other
assumptions on accumulation (besides PTIME-evaluability).
First, in the next results in this section, we 
assume that the accumulation monoid is \emph{finite}:

\begin{definition}
    \label{def:accfin}
    A \PosRAacc query is said to perform \emph{finite} accumulation if the
    accumulation monoid $(\calM, \oplus, \epsilon)$ is finite.
\end{definition}

For instance, if the domain of the output is assumed to be fixed (e.g., ratings
in $\{1, \ldots, 10\}$), then
select-at-$k$ and top-$k$ (the latter for fixed $k$),
as defined in Example~\ref{ex:variants}, are finite.

Second, for some of the next results, we require \emph{position-invariant
accumulation}, namely, that the accumulation map does not depend on
the absolute position of tuples:

\begin{definition}
    \label{def:ri}
    Recall that the accumulation map $h$ has in general two inputs: a tuple and
    its position. A \PosRAacc query is said to be \emph{position-invariant} if its
    accumulation map ignores the
    second input, so that effectively its only input is the tuple itself.
\end{definition}

Note that 
accumulation in the monoid is still performed in order, so we can still perform,
e.g., concatenation.
These two restrictions do not suffice to make \poss and \cert tractable (see
Appendix~\ref{apx:possfrihypo}), but we will use them
to lift the results of Section~\ref{sec:fpt2}.

\begin{toappendix}
  \label{apx:possfrihypo}

  We show the additional claim that assuming finiteness and position-invariance
  of accumulation does not suffice to make \poss or \cert tractable.
  Specifically, we show the following two results:

  \begin{theoremrep}\label{thm:possfrihypoposs}
  There is a \PosRAagg{} query performing finite and position-invariant accumulation
  for which \poss is NP-hard 
        even assuming that the input po-database contains only totally ordered po-relations.
  \end{theoremrep}

  \begin{theoremrep}\label{thm:possfrihypocert}
  There is a \PosRAagg{} query performing finite and position-invariant accumulation
  for which \cert is coNP-hard 
        even assuming that the input po-database contains only totally ordered po-relations.
  \end{theoremrep}

  We will first show the result about \poss (Theorem~\ref{thm:possfrihypoposs}),
  and then use it to show the result about \cert
  (Theorem~\ref{thm:possfrihypocert}).

    \subsubsection{Proof of Theorem~\ref{thm:possfrihypoposs} for \poss}
    \label{sec:possfrihypoposs}
    We show the following strengthening of Theorem~\ref{thm:possfrihypoposs}, which 
will be useful to prove the result for \cert in
Appendix~\ref{sec:possfrihypocert}.

\begin{proposition}
  \label{prp:posscerthard}
  There is a \PosRAagg{} query $Q_{\a}$ with finite and position-invariant accumulation such that the \poss problem is NP-hard
  for~$Q_{\a}$, even assuming that all input po-relations are totally ordered.
  Further, for any input po-database $D$ (no matter whether the relations are
  totally ordered or not), we have $\card{Q_{\a}(D)} \leq 2$.
\end{proposition}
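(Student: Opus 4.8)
The plan is to reduce from UNARY-3-PARTITION, reusing the construction behind Theorem~\ref{thm:posscompextend1}, and to wrap it inside accumulation in the \emph{transition monoid of a fixed deterministic finite automaton}~$\mathcal{A}$. Recall that the proof of Theorem~\ref{thm:posscompextend1} yields a \Pgen query $Q_{\mathrm{grid}} \defeq \Pi_2(S \times_{\dir} S')$ and a PTIME map from a UNARY-3-PARTITION instance $\mathcal I$ to a pair of totally ordered input relations $S, S'$ over $\{\s,\n,\e\}$ together with a list relation $L$, such that $L \in \pw(Q_{\mathrm{grid}}(S,S'))$ iff $\mathcal I$ is positive. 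I will add to the schema one extra \emph{totally ordered} relation name $R_T$, set in the reduction to the list relation $L$, and let $Q_{\a}$ do the following on inputs $S, S', R_T$: tag the tuples of $Q_{\mathrm{grid}}$ with a marker $\mathsf{G}$ and those of $R_T$ with a marker $\mathsf{T}$ (each tagging being a lexicographic product with a singleton relation, which preserves the order), take the union of the two tagged relations, prepend a leading marker tuple and append a trailing marker tuple via the concatenation operator $\cupcat$ (expressible in \Plex by Lemma~\ref{lem:lexconcat}), and accumulate the resulting po-relation with $\accum_{h,\oplus}$, where $h(t,n)$ ignores~$n$ and returns the transition function $\delta_{\mathcal{A},t}$ of $\mathcal A$ on the ``letter''~$t$, and $\oplus$ is composition in the transition monoid of $\mathcal A$. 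Since $\mathcal A$ is fixed and finite, the monoid is finite, $\accum_{h,\oplus}$ is PTIME-evaluable, and $h$ is position-invariant, as required.

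The automaton $\mathcal A$ verifies, while reading a linear extension $I$ of the \PosRA sub-query of $Q_{\a}$, that $I$ is a \emph{perfect alternation} $g_1\,\ell_1\,g_2\,\ell_2\cdots$ of $\mathsf{G}$-tagged and $\mathsf{T}$-tagged tuples whose value components agree ($g_i = \ell_i$), any deviation sending it to a reject sink $q_{\mathrm{rej}}$; reading the leading marker resets every state to the initial state, and reading the trailing marker maps the ``verification completed'' state to an accept sink $q_{\mathrm{acc}}$ and every other state to $q_{\mathrm{rej}}$. Because of this bracketing, the accumulation result of any word consisting of the leading marker, then an arbitrary $I$, then the trailing marker is \emph{always} either the constant function sending every state to $q_{\mathrm{acc}}$ or the constant function sending every state to $q_{\mathrm{rej}}$; hence $Q_{\a}(D)$ is contained in a fixed two-element set for \emph{every} po-database $D$, giving $\card{Q_{\a}(D)} \leq 2$. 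For correctness of the reduction: the $\mathsf{G}$-subsequence of any linear extension of the tagged union is a linear extension of $Q_{\mathrm{grid}}(S,S')$, and its $\mathsf{T}$-subsequence is necessarily $L$ since $R_T$ is totally ordered, so the accept value is in $Q_{\a}(D)$ iff some linear extension of $Q_{\mathrm{grid}}(S,S')$ equals $L$, i.e.\ iff $\mathcal I$ is positive by Theorem~\ref{thm:posscompextend1}. As one can always realize a ``bad'' interleaving (e.g.\ reading a $\mathsf{T}$-tagged tuple first), the reject value lies in $Q_{\a}(D)$ on all reduction instances, so \poss for $Q_{\a}$ with the accept value as candidate is exactly UNARY-3-PARTITION; since \poss is already known to be in NP, $Q_{\a}$ witnesses NP-hardness. (This also sets up the later coNP-hardness of \cert, as asking whether the reject value is certain is the complement.)

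The main obstacle I expect is making these ingredients interlock. First, the finite automaton cannot itself count up to the instance-dependent bound $B$, so the 3-partition structure must still be forced by the product-based po-relation of Theorem~\ref{thm:posscompextend1}; this is why the detour through the extra target relation $R_T$ is needed, and one must verify that the tagged union and the alternation test capture ``the $\mathsf{G}$-subsequence realizes $L$'' \emph{exactly}, with no spurious accepting interleavings --- which requires aligning the lengths $\card{R_T}$ and $\card{Q_{\mathrm{grid}}(S,S')}$ (possibly after a trivial padding) so that a perfect alternation consumes every tuple. Second, the reset/finalize gadget must collapse the accumulation result to two values on \emph{arbitrary} inputs, not just the reduction instances, which is precisely what forces the leading/trailing-marker bracketing and the requirement that $q_{\mathrm{acc}}$ and $q_{\mathrm{rej}}$ be genuine sinks. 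The remaining bookkeeping --- matching arities for the $\cupcat$ steps, checking that the tagging lexicographic products preserve the orders, and PTIME-evaluability of $\accum_{h,\oplus}$ --- is then mechanical.
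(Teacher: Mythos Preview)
Your proposal is correct and follows essentially the same approach as the paper: both take the parallel composition of the grid-query output from Theorem~\ref{thm:posscompextend1} with a totally ordered ``target'' relation, bracket it with start/end markers via $\cupcat$, and accumulate in the transition monoid of a fixed DFA that tests for a value-matching alternation between the two sides, the bracketing forcing at most two possible accumulation results. The only differences are cosmetic---you tag with $\mathsf{G}/\mathsf{T}$ where the paper uses disjoint alphabets $\calD_-$ and $\calD_+$, and your leading marker resets all states whereas the paper's sends non-initial states to the sink---and your remarks on length alignment and on the sink states correspond to the paper's Lemma~\ref{lem:transmon}.
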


Define the following finite domains:

\begin{itemize}
  \item $\calD_- \defeq \{\s_-, \n_-, \e_-\}$;
  \item $\calD_+ \defeq \{\s_+, \n_+, \e_+\}$;
  \item $\calD_{\pm} \defeq \calD_- \sqcup \calD_+ \sqcup \{\l, \r\}$ (the additional
    elements stand for ``left'' and ``right'').
\end{itemize}

Define the following regular expression on $\calD_{\pm}^*$, and call
\deft{balanced} a word that satisfies it:
\[
  e \defeq \l \left(\s_- \s_+ | \n_- \n_+ | \e_- \e_+\right)^* \r
\]

We now define the following problem for any \PosRA query:

\begin{definition}
  The \deft{balanced checking problem} for a \PosRA query $Q$ asks, given a po-database $D$ of
  po-relations over $\calD_{\pm}$, whether there is $L \in \pw(Q(D))$ such that $L$
  is balanced (i.e., can be seen as a word over $\calD_{\pm}$
  that satisfies $e$).
\end{definition}

Note that the balanced checking problem only makes sense (i.e., is not vacuously
false) for unary queries (i.e., queries whose output arity is~$1$) whose output
tuples have value in~$\calD_{\pm}$.

We also introduce the following regular
expression: $e' \defeq \l\, \calD_{\pm}^*\, \r$, which we will use later to guarantee that
there are only two possible worlds. We show the following lemma:

\begin{lemma}
  \label{lem:balhard}
  There exists a \PosRA query $Q_\b$ over po-databases with domain in
  $\calD_{\pm}$ such that the balanced checking problem for $Q_\b$ is
  NP-hard, even when all input po-relations are totally ordered. Further, $Q_\b$ is such that, for any input po-database~$D$,
  all possible worlds of~$Q_\b(D)$ satisfy $e'$.
\end{lemma}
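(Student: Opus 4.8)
The plan is to reduce from UNARY-3-PARTITION, reusing the grid construction from the proof of Theorem~\ref{thm:posscompextend1}. The guiding observation is this: if $P^+$ is a \emph{totally ordered} po-relation over $\calD_+$ with unique possible world $v_0$, then for any po-relation $P^-$ over $\calD_-$, a possible world $w$ of $P^- \cupgen P^+$, once bracketed as $\l\, w\, \r$, is balanced exactly when $w$ is the interleaving $u_1\, v_{0,1}\, u_2\, v_{0,2}\cdots$ of some possible world $u$ of $P^-$ with $v_0$, where $u$ and $v_0$ have the \emph{same} underlying letter sequence over $\{\s,\n,\e\}$. Indeed, the regular expression $e$ forces the $\calD_-$- and $\calD_+$-elements of $w$ to alternate strictly starting with a $\calD_-$-element (hence $|u|=|v_0|$ and the interleaving is the unique one above) and forces each consecutive $\calD_-\calD_+$ pair to carry the same letter; conversely, if letters agree pairwise then exactly this interleaving witnesses a balanced word. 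So detecting a balanced possible world of $P^- \cupgen P^+$ is the same as solving \poss for $P^-$ with target $v_0$ (read over $\{\s,\n,\e\}$), and \poss is already known to be NP-hard for $\Pgen$ queries on totally ordered inputs by Theorem~\ref{thm:posscompextend1}.

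Concretely, from a UNARY-3-PARTITION instance $\mathcal{I}=(E,B)$ with $E=(n_1,\dots,n_{3m})$, I take the totally ordered input po-relations $S$ and $S'$ of the proof of Theorem~\ref{thm:posscompextend1}, except that $S'$ is written over $\calD_-$ (each $\s$ becomes $\s_-$, etc.); and I add one more totally ordered input po-relation $T^+$ over $\calD_+$ whose unique possible world is the $\calD_+$-image of the candidate list relation $L=L_1 L' L_2$ used in that proof (this is PTIME-constructible, since the $n_i$ are in unary). I then set
\[
  Q_\b \defeq \singleton{\langle \l \rangle}\cupcat\bigl(\Pi_2(S \times_\gen S')\cupgen T^+\bigr)\cupcat\singleton{\langle \r \rangle},
\]
where $\cupcat$ is concatenation, implementable in $\Plex$ at the fixed arity $1$ by Lemma~\ref{lem:lexconcat} (substituting the relevant subqueries for the distinguished relation names); thus $Q_\b$ is a fixed $\PosRA$ query, independent of $\mathcal{I}$. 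The extra condition on $e'$ is immediate: every possible world of $Q_\b(D)$ has the form $\l\cdot w\cdot\r$ with $w$ a possible world of $\Pi_2(S \times_\gen S')\cupgen T^+$, hence $w$ ranges over $\calD_-\cup\calD_+\subseteq\calD_\pm$ and contains no occurrence of $\l$ or $\r$, so $\l\,w\,\r$ matches $e'=\l\,\calD_\pm^*\,\r$.

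For correctness, the observation in the first paragraph gives: $Q_\b(D)$ has a balanced possible world iff $\Pi_2(S \times_\gen S')$ (over $\calD_-$, equivalently over $\{\s,\n,\e\}$ after the obvious relabeling) has a possible world whose letter sequence equals that of $v_0$, i.e.\ iff $L_1 L' L_2\in\pw(\Pi_2(S\times_\gen S'))$ in the notation of the proof of Theorem~\ref{thm:posscompextend1}. That proof establishes precisely that this holds iff $\mathcal{I}$ is a positive instance of UNARY-3-PARTITION. Since the reduction is PTIME and all input po-relations $S$, $S'$, $T^+$ are totally ordered, this shows the balanced checking problem for $Q_\b$ is NP-hard on totally ordered inputs.

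The heavy lifting — the NP-hardness core, in particular the staircase/$L_1 L_2$ argument that forces the grid $\Pi_2(S\times_\gen S')$ to be traversed along the $3m$ chains encoding the integers $n_i$ — is entirely inherited from Theorem~\ref{thm:posscompextend1}. The only points requiring care here are (i) that the strict-alternation enforced by $e$ makes a balanced interleaving of $P^-$ with a totally ordered $P^+$ equivalent to \poss for $P^-$ against $v_0$, and (ii) that the bracketing by $\singleton{\langle\l\rangle}$ and $\singleton{\langle\r\rangle}$ via concatenation keeps $Q_\b$ a fixed query and yields the $e'$ property; both are routine. I expect the main (minor) obstacle to be bookkeeping: correctly importing the precise objects $S$, $S'$, $L_1$, $L'$, $L_2$ from the proof of Theorem~\ref{thm:posscompextend1} and checking the relabeling to $\calD_-$/$\calD_+$ goes through cleanly.
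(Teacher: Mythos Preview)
Your proposal is correct and takes essentially the same approach as the paper. The paper builds $Q_\b$ as $\singleton{\l}\cupcat\bigl(R\cupgen Q_0(D)\bigr)\cupcat\singleton{\r}$ where $Q_0$ is the hard query from Theorem~\ref{thm:posscompextend1} and $R$ carries the candidate list over $\calD_-$; you do the same with the roles of $\calD_-$ and $\calD_+$ swapped (your query output $\Pi_2(S\times_\gen S')$ lives in $\calD_-$ and your candidate $T^+$ in $\calD_+$), and the paper factors your ``balanced interleaving $\Leftrightarrow$ \poss'' observation out as a separate named lemma (Lemma~\ref{lem:balred}), but these differences are cosmetic.
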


To prove this lemma, we construct the query $Q'_\b(R, T) \defeq
\singleton{\l}\cupcat((R \cupgen T)\cupcat \singleton{\r})$,
i.e., $Q'_\b(R, T)$ is
the parallel composition of $R$ and $T$, preceded by $\l$ and followed by
$\r$. Recall the definition of $\cupcat$ (Definition~\ref{def:concat}), and
recall from Lemma~\ref{lem:lexconcat} that $\cupcat$ can be
expressed by a \PosRA query.

We write $L^-_w$ for any word $w \in \calD_+^*$ to be the unary list
relation defined by mapping each
letter of $w$ to the corresponding letter in $\calD_-$.
We define $\OR^-_w$ as the totally ordered po-relation with $\pw(\OR^-_w) =
\{L^-_w\}$. We claim the following:

\begin{lemma}
  \label{lem:balred}
  For any $w \in \calD_+^*$ and unary po-relation $T$ over $\calD_+$, we have $w \in
  \pw(T)$ iff $\{R \mapsto \OR^-_w, T \mapsto T\}$ is a positive instance to the
  balanced checking problem for $Q'_\b$; in other words, iff $Q'_\b(\OR^-_w, T)$ has some
  balanced possible world.
\end{lemma}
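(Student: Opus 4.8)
The plan is to unfold the definition of $Q'_\b$ into an explicit description of $\pw(Q'_\b(\OR^-_w, T))$ and then read off the equivalence. First I would use the semantics of series composition $\cupcat$ (Definition~\ref{def:concat}) and of union $\cupgen = \cup$ (parallel composition), together with the fact that $\pw(\OR^-_w) = \{L^-_w\}$, to observe that the possible worlds of $Q'_\b(\OR^-_w, T)$ are exactly the list relations $\l \cdot W \cdot \r$ where $W$ ranges over all \emph{interleavings} of the fixed sequence $L^-_w$ (a word over $\calD_-$) with some $L \in \pw(T)$ (a word over $\calD_+$, since $T$ is a po-relation over $\calD_+$); here an interleaving is any merge of the two sequences that preserves the internal order of each.

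For the direction from balancedness to realizability, suppose $\l \cdot W \cdot \r$ is balanced, i.e.\ $W$ matches $(\s_-\s_+ \mid \n_-\n_+ \mid \e_-\e_+)^*$, with $W$ an interleaving of $L^-_w$ and some $L \in \pw(T)$. Write $W = c_1 d_1 \cdots c_m d_m$ as dictated by the regular expression, with each $c_j \in \calD_-$, each $d_j \in \calD_+$, and each pair $(c_j, d_j)$ one of $\s_-\s_+$, $\n_-\n_+$, $\e_-\e_+$. Since $\calD_-$ and $\calD_+$ are disjoint, the $c_j$ are exactly the letters of $L^-_w$ in order and the $d_j$ are exactly the letters of $L$ in order; hence $m = |L^-_w| = |L| = |w|$, we have $c_j = (w_j)_-$ (the $\calD_-$-recoloring of the $j$-th letter of $w$), and $d_j = L_j$. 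The pair condition then forces $L_j$ to be the $\calD_+$-counterpart of $c_j = (w_j)_-$, i.e.\ $L_j = w_j$ for all $j$; thus $L = w$ and $w \in \pw(T)$. Conversely, if $w \in \pw(T)$, take $L = w$ and let $W$ be the ``zipper'' interleaving $(w_1)_-\, w_1\, (w_2)_-\, w_2 \cdots (w_k)_-\, w_k$, which is a legal interleaving of $L^-_w$ and $L$. Then $\l \cdot W \cdot \r$ is a possible world of $Q'_\b(\OR^-_w, T)$, and it matches $e$ because each two-letter block $(w_j)_-\, w_j$ is one of $\s_-\s_+$, $\n_-\n_+$, $\e_-\e_+$ according to the type of $w_j$; so the balanced checking instance is positive.

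The only genuinely delicate step is the ``decoding'' in the forward direction: one must argue carefully that matching the alternating regular expression, together with the disjointness of $\calD_-$ and $\calD_+$ and the order-preservation of interleavings, pins down $W$ letter by letter and in particular forces the two interleaved sequences to have equal length and to agree letter for letter up to recoloring. Everything else is a routine unfolding of the definitions of $\cupcat$, $\cupgen$, and $\pw(\cdot)$, so I expect no further obstacles.
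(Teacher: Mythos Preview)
Your proposal is correct and follows essentially the same approach as the paper: characterize the possible worlds of $Q'_\b(\OR^-_w,T)$ as $\l \cdot W \cdot \r$ with $W$ an interleaving of $L^-_w$ and some $L \in \pw(T)$, then use the alternating structure of the regular expression together with the disjointness of $\calD_-$ and $\calD_+$ to force $L = w$ in one direction, and build the ``zipper'' interleaving in the other. The paper's proof is terser (it compresses your decoding argument into a single sentence), but the underlying idea is identical.
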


\begin{proof}
  For the first direction, assume that $w$ is indeed a possible world~$L$ of
  $T$ and let us construct a balanced possible world~$L'$ of $Q'_\b(\OR^-_w,T)$.
  $L'$ starts with $\l$. Then, $L'$ successively contains alternatively
  one
  tuple from $\OR^-_w$ (in their total order) and one from $T$ (taken in the
  order of the linear extension that yields $L$).
  Finally, $L'$ ends with $\r$. $L'$ is clearly balanced.

  For the converse direction, observe that a
  balanced possible world of $Q'_\b(\OR^-_w,T)$ must consist of first $\l$, last $\r$, and, between the two,
  tuples alternatively enumerated from  $\OR^-_w$ from one of the possible worlds
  of~$T$, with that possible world of $T$ 
  achieving $w$.
\end{proof}

We now use Lemma~\ref{lem:balred} to prove Lemma~\ref{lem:balhard}:

\begin{proof}[Proof of Lemma~\ref{lem:balhard}]
  By Theorem~\ref{thm:posscompextend1} and its proof, there is a unary query
  $Q_0$ in \PosRA{} such
  that the \poss problem for~$Q_0$ is
  NP-hard, even for input relations over $\calD_+$ (this is by observing that the
  proof uses $\{\s, \n, \e\}$ and renaming the alphabet), and even assuming that
  $D$ contains only totally
  ordered relations. Consider the query
  $Q_\b(R, D) \defeq Q'_\b(R, Q_0(D))$; $Q_\b$ is a \PosRA{} query, and by definition
  of $Q'_\b$ it satisfies the additional condition of all possible worlds satisfying
  $e'$.
  
  We reduce the \poss problem for $Q_0$ to the balanced checking problem for
  $Q_\b$ in PTIME: more specifically, we claim that $(D, w)$ is a positive instance to
  \poss for $Q_0$ iff $D'$, obtained by adding to $D$ the relation name $R$ that
  maps to the totally ordered $\OR^-_w$, is a positive instance of the balanced checking problem for
  $Q_\b$. This
  is exactly what Lemma~\ref{lem:balred} shows. This concludes the reduction, so
  we have shown that the balanced checking problem for $Q_\b$ is NP-hard, even
  assuming that the input po-database (here, $D'$) contains only totally ordered
  po-relations.
\end{proof}

Hence, all that remains to show is to prove Proposition~\ref{prp:posscerthard}
(and hence Theorem~\ref{thm:possfrihypoposs})
using Lemma~\ref{lem:balhard}. The idea is that we will reduce the balanced
checking problem to \poss, using an accumulation operator to do
the job, which will allow us to ensure that there are at most two possible
results.
To do this, we need to introduce some new concepts.

Let $A$ be the deterministic complete finite automaton defined as
follows, which clearly recognizes the language of the regular
expression $e$, and let $S$ be its state space:

\begin{itemize}
  \item there is a $\l$-transition from the initial state $q_{\i}$
    to a state $q_0$;
  \item there is a $\r$-transition from $q_0$ to the final state $q_{\f}$;
  \item for $\alpha \in \{\s, \n, \e\}$:
    \begin{itemize}
      \item there is an $\alpha_+$-transition from $q_0$ to a state $q_\alpha$;
      \item there is an $\alpha_-$-transition from $q_\alpha$ to $q_0$;
    \end{itemize}
\item all other transitions go to a sink state $q_{\bot}$.
\end{itemize}

We now define the \deft{transition monoid} of this automaton, which is a finite
monoid (so we are indeed performing finite accumulation). 
Let $\calF_S$ be the finite set of
total functions from~$S$ to $S$, and consider the monoid defined on
$\calF_S$ with the identity function $\id$ as the neutral element, and
with function composition $\circ$ as the (associative) binary operation. We
define inductively a mapping $h$ from $\calD_{\pm}^*$ to
$\calF_S$ as follows, which can be understood as a homomorphism from
the free monoid $\calD_{\pm}^*$ to the transition monoid of~$A$:

\begin{itemize}
  \item For $\epsilon$ the empty word, $h(\epsilon)$ is the identity
    function $\mathrm{id}$.
  \item For $a \in \calD_{\pm}$, $h(a)$ is the transition table for symbol
    $a$ for the automaton $A$, i.e., the function that maps each state $q \in S$
    to the one state $q'$ such that there is an $a$-labeled transition from $q$
    to $q'$; the fact that $A$ is deterministic and complete is what ensures
    that this is well-defined.
  \item For $w \in \calD_{\pm}^*$ and $w \neq \epsilon$, writing $w = a w'$ with
    $a \in \calD_{\pm}$, we define $h(w) \defeq h(w') \circ h(a)$.
\end{itemize}

It is easy to show inductively that, for any $w \in \calD_{\pm}^*$,
for any $q \in S$, $(h(w))(q)$ is the state that we reach in $A$
when reading word $w$ from state $q$. We will identify two special
elements of $\calF_S$:

\begin{itemize}
  \item $f_0$, the function mapping every state of $S$ to the sink state
    $q_{\bot}$;
  \item $f_1$, the function mapping the initial state $q_{\i}$ to the final
    state $q_{\f}$, and mapping every other state in $S \backslash \{q_{\i}\}$
    to $q_{\bot}$.
\end{itemize}

Recall the definition of the regular expression $e'$ earlier. We
claim the following property on the automaton $A$:

\begin{lemma}
  \label{lem:transmon}
  For any word $w \in \calD_{\pm}^*$ that matches $e'$, we have $h(w) =
  f_1$ if $w$ is balanced (i.e., satisfies $e$) and $h(w) = f_0$ otherwise.
\end{lemma}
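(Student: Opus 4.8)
The plan is to reduce the lemma to two structural observations about the automaton~$A$. First, the functions $f_0$ and $f_1$ in $\calF_S$ agree on every state except the initial state $q_{\i}$: both send each $q \in S \setminus \{q_{\i}\}$ to the sink $q_{\bot}$, and they differ only in that $f_0(q_{\i}) = q_{\bot}$ while $f_1(q_{\i}) = q_{\f}$. Hence it suffices to determine $h(w)(q)$ for $q \neq q_{\i}$ (where I will show it is always $q_{\bot}$) and separately $h(w)(q_{\i})$ (where I will show it equals $q_{\f}$ iff $w$ is balanced, and $q_{\bot}$ otherwise). Throughout I rely on the already-noted fact that $(h(w))(q)$ is exactly the state $A$ reaches by reading $w$ from $q$.

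First I would handle the states $q \neq q_{\i}$. Since $w$ matches $e'$, it begins with the letter $\l$, and the only $\l$-transition of $A$ that does not go to $q_{\bot}$ is the one out of $q_{\i}$; so from any $q \neq q_{\i}$ the first letter $\l$ already leads to $q_{\bot}$, and since $q_{\bot}$ is a sink, reading the rest of $w$ stays there. Thus $(h(w))(q) = q_{\bot} = f_0(q) = f_1(q)$ for all $q \neq q_{\i}$, matching both candidate values on those states.

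Next I would analyse $(h(w))(q_{\i})$. Because $A$ is deterministic, complete, and recognizes the language of $e$, with $q_{\f}$ as its unique accepting state, running $w$ from $q_{\i}$ ends in $q_{\f}$ exactly when $w \in L(e)$, i.e.\ exactly when $w$ is balanced; in that case $h(w)(q_{\i}) = q_{\f}$, and combined with the previous paragraph this gives $h(w) = f_1$. For the remaining case I use the shape $w = \l u \r$ forced by $e'$: reading $\l$ from $q_{\i}$ gives $q_0$; reading $u$ from $q_0$ gives some state $s$, which cannot be $q_{\i}$ since no transition of $A$ enters $q_{\i}$, so $s \in \{q_0, q_\s, q_\n, q_\e, q_{\f}, q_{\bot}\}$; then reading the trailing $\r$ from $s$ gives $q_{\f}$ if $s = q_0$ and $q_{\bot}$ in every other case, since none of $q_\alpha$, $q_{\f}$, $q_{\bot}$ has a non-sink $\r$-transition. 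If $w$ is not balanced then $s \neq q_0$ (otherwise $w$ would be accepted), so $h(w)(q_{\i}) = q_{\bot}$, and with the previous paragraph this gives $h(w) = f_0$.

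The argument is essentially bookkeeping over the transition table; the one step requiring a little care — which I would treat as the main obstacle — is the final case distinction, ruling out that an unbalanced $w$ could drive $A$ from $q_{\i}$ into some non-accepting state other than $q_{\bot}$. This is exactly where the hypothesis that $w$ matches $e'$ (in particular, ends with $\r$) is used, and it is cleanest to phrase it by first noting that $q_{\i}$ has no incoming edges, so the intermediate state $s$ lies in the explicit set above, and then inspecting the $\r$-successor of each element of that set.
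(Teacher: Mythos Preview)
Your proposal is correct and follows essentially the same approach as the paper's proof: both use that $w$ begins with $\l$ to force $(h(w))(q) = q_{\bot}$ for all $q \neq q_{\i}$, then use that $w$ ends with $\r$ to show $(h(w))(q_{\i}) \in \{q_{\bot}, q_{\f}\}$, and finally invoke that $A$ recognizes $L(e)$ to decide which. The only difference is organizational: the paper first concludes $h(w) \in \{f_0, f_1\}$ and then distinguishes the two cases, whereas you treat the balanced and unbalanced cases directly; your explicit enumeration of the possible intermediate states $s$ (and the observation that $q_{\i}$ has no incoming transitions) is slightly more detailed than needed but not wrong.
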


\begin{proof}
  By definition of~$A$, for any state $q \neq q_{\i}$, we have $(h(\l))(q) =
  q_{\bot}$,
  so that, as $q_{\bot}$ is a sink state, we have
  $(h(w))(q) = q_{\bot}$ for any $w$ that satisfies $e'$. Further, by
  definition of~$A$, for any state $q$, we have $(h(\r))(q) \in \{q_{\bot},
  q_{\f}\}$, so that, for any state $q$ and $w$ that satisfies $e'$, we have
  $(h(w))(q) \in \{q_{\bot}, q_{\f}\}$. This implies that, for any word $w$
  that satisfies $e'$, we have $h(w) \in \{f_0, f_1\}$.

  Now, as we know that $A$ recognizes the language of $e$, we have the desired
  property, because, for any $w$ satisfying $e'$, $h(w)(q_{\i})$ is $q_{\f}$
  or not depending on whether $w$ satisfies $e$ or not, so $h(w)$ is $f_1$ or
  $f_0$ depending on whether $w$ satisfies $e$ or not.
\end{proof}

Hence, consider the query $Q_\b$ whose existence is guaranteed by
Lemma~\ref{lem:balhard}, and such that all its possible worlds satisfy $e'$, and
construct the query $Q_{\a} \defeq \accum_{h,\circ} (Q_\b)$ -- we see $h$ as
a position-invariant accumulation map. We
conclude the proof of Proposition~\ref{prp:posscerthard} by showing that \poss
is NP-hard for $Q_{\a}$, even when the input po-database consists only of totally ordered
po-relations; and that $\card{Q_\a(D)}\leq 2$ in any case:

\begin{proof}[Proof of Proposition~\ref{prp:posscerthard}]
  To see that $Q_{\a}$ has at most two possible results on $D$, observe that, for any
  po-database $D$, writing $Q_{\b}(D)$ as a word $w \in \calD_{\pm}$, we know that
  $w$ matches $e'$. Hence, by Lemma~\ref{lem:transmon}, we
  have $h(w) \in \{f_0, f_1\}$, so that $Q_{\a}(D) \in \{f_0, f_1\}$.

  To see that \poss in NP-hard for $Q_{\a}$ even on totally ordered po-relations, we reduce the balanced checking
  problem for $Q_{\b}$ to \poss for $Q_{\a}$ with the trivial reduction: we claim
  that for any po-database $D$, there is a balanced possible world in~$Q_{\b}(D)$ iff $f_1 \in Q_{\a}(D)$,
  which is proved by Lemma~\ref{lem:transmon} again. Hence, $Q_{\b}(D)$ is balanced
  iff $(D, f_1)$ is a positive instance of \poss for~$Q_{\a}$. This concludes the reduction.
\end{proof}

This concludes the proof of Proposition~\ref{prp:posscerthard}, hence of
Theorem~\ref{thm:possfrihypoposs}.

    \subsubsection{Proof of Theorem~\ref{thm:possfrihypocert} for \cert}
    \label{sec:possfrihypocert}
    We prove Theorem~\ref{thm:possfrihypocert} by relying on Proposition~\ref{prp:posscerthard}, proven in
Appendix~\ref{sec:possfrihypoposs}:

\begin{proof}[Proof of Theorem~\ref{thm:possfrihypocert}]
  Consider the query $Q_{\a}$ from Proposition~\ref{prp:posscerthard}. We show a
  PTIME reduction from the NP-hard problem of \poss for $Q_{\a}$ (for totally ordered
  input po-databases) to the negation of
  the \cert problem for $Q_{\a}$ (for input po-databases of the same kind). The
  query $Q_{\a}$ uses accumulation, so it is of the
  form $\accum_{h,\oplus}(Q')$.

  Consider an instance of \poss for $Q_{\a}$ consisting of an input po-database $D$
  and candidate result $v \in \calM$. Evaluate $R=Q'(D)$ in
  PTIME by Proposition~\ref{prp:repsys}, and compute in PTIME an
  arbitrary possible world $L'$ of $R$: this can be done by a topological
  sort of $R$. Let $v'=\accum_{h,\oplus}(L')$. If $v=v'$ then $(D, v)$ is a positive
  instance for \poss for $Q_{\a}$. Otherwise, we have $v\neq v'$. Now, solve
  the \cert problem for $Q_{\a}$ on the input $(D, v')$.
    If the answer is YES, then $(D, v)$ is a negative instance for \poss
    for $Q_{\a}$. Otherwise, there must exist a possible world $L''$
    in $\pw(R)$ with $v''=\accum_{h,\oplus}(L'')$ and $v''\neq
    v'$. However, we know that $\card{\pw(Q_{\a}(D))} \leq 2$
    by
      Proposition~\ref{prp:posscerthard}. Hence, as $v \neq v'$ and $v'
      \neq v''$,
        we must have $v = v''$. So $(D, v)$ is a
        positive
          instance for \poss for $Q_{\a}$.

  Thus, we have reduced \poss for $Q_{\a}$ in PTIME to the negation of \cert for
  $Q_{\a}$,
  showing that \cert for $Q_{\a}$ is coNP-hard.
\end{proof}

\end{toappendix}

\subparagraph*{Revisiting Section~\ref{sec:fpt2}.}
\begin{toappendix}
  \subsection{Revisiting Section~\ref{sec:fpt2}}
\end{toappendix}

We now extend our previous results to queries with accumulation,
for \poss and \cert, under the additional assumptions on
accumulation that we presented. We call \Plexacc and \Pnoprodacc
the
extension of \Plex and \Pnoprod
with accumulation.

\begin{toappendix}
  For the proof of the results of this paragraph, refer to the proof of
  the corresponding results in Section~\ref{sec:fpt}: Theorem~\ref{thm:aggregwa}
  is proven together with Theorem~\ref{thm:aggregw} in
  Appendix~\ref{sec:totaltract}, and
  Theorem~\ref{thm:aggregnoproda} is proven together with
  Theorem~\ref{thm:aggregnoprod} in Appendix~\ref{sec:noprodtract}.
\end{toappendix}

We can first generalize Theorem~\ref{thm:aggregw} to \Plexacc queries with
\emph{finite} accumulation:
\begin{theorem}
  \label{thm:aggregwa}
  For any \Plexacc query performing \emph{finite}
  accumulation,
  \poss and \cert are in PTIME on po-databases of bounded width.
\end{theorem}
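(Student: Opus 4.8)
The plan is to reduce, in polynomial time, to the task of computing the \emph{entire} set of possible accumulation results of a po-relation of bounded width, and then to read off \poss and \cert from that set. Write the query as $Q = \accum_{h,\oplus}(Q')$ with $Q'$ a \Plex query, and let $D$ be an input po-database all of whose po-relations have width $\leq k$. First, evaluate $Q'(D)$ in PTIME using Proposition~\ref{prp:repsys}, obtaining a po-relation $\OR$. Since $Q'$ uses no $\times_{\dir}$ operator, Proposition~\ref{prp:lexwidth} bounds $w(\OR)$ by the constant $k' \defeq k^{\card{Q}+1}$ (a constant because $Q$ and $k$ are fixed). It then remains to compute $\accum_{h,\oplus}(\OR) = Q(D)$ in PTIME, which is where the hypothesis that the accumulation monoid $\calM$ is finite comes in.

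For this last step I would proceed as in Theorem~\ref{thm:aggregwb}. Compute in PTIME, by Dilworth's theorem and its algorithmic counterpart (Theorems~\ref{thm:dilworth} and~\ref{thm:fulkerson}), a chain partition $\Lambda_1, \dots, \Lambda_{k'}$ of the underlying poset of $\OR$. Each order ideal of $\OR$ is described by a vector $\mathbf{m} = (m_1, \dots, m_{k'})$ recording, for each chain, how far along it the ideal reaches; there are at most $\card{\OR}^{k'}$ such vectors, hence polynomially many, and for each one we can test in PTIME whether it actually yields an ideal of $\OR$ (the ``sane'' vectors) by materialising the candidate subset $s(\mathbf{m})$ and checking every comparability of $\OR$. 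For every sane $\mathbf{m}$ we store the set of accumulation values achievable over the linear extensions of $s(\mathbf{m})$; since $\calM$ is finite, each such set has constant size. These sets are filled by a dynamic program: the empty ideal gives $\{\epsilon\}$, and any linear extension of $s(\mathbf{m})$ must end with some maximal element, which is one of the $\Lambda_i[m_i]$, so the set at $\mathbf{m}$ is obtained from the sets at the predecessor ideals $\mathbf{m} - e_i$ by applying $h$ to that last element at position $\sum_{i'} m_{i'}$ and combining with $\oplus$. To keep everything in PTIME, store with each recorded value a witnessing linear-extension prefix and recompute the extended value using the PTIME-evaluability of the accumulation operator. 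The value stored at $\mathbf{m} = (\card{\Lambda_1}, \dots, \card{\Lambda_{k'}})$ is exactly $Q(D)$.

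Once $Q(D)$ is available as an explicitly listed set of constant size, \poss for a candidate result $v$ is just the test $v \in Q(D)$, and \cert is the test $Q(D) = \{v\}$; both run in PTIME. Hence \poss and \cert are in PTIME for \Plexacc queries with finite accumulation on po-databases of bounded width.

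The main obstacle is the dynamic-programming step. One must verify that the correspondence between sane vectors and order ideals is a bijection even though $\OR$ can carry comparabilities that cross the chains of the partition (so $s(\mathbf{m})$ is \emph{not} automatically an ideal, only an ideal of the sub-poset internal to the chains), and that the recurrence decomposes and re-assembles every linear extension exactly once. This is precisely where finiteness of $\calM$ is essential — it keeps the stored value-sets of constant size, so there are polynomially many states each carrying a constant amount of data — and where PTIME-evaluability is needed so that each $h(\cdot,\cdot)$ and each $\oplus$ can be carried out feasibly.
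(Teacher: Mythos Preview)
Your proposal is correct and follows essentially the same approach as the paper: evaluate the \Plex part to a po-relation of bounded width via Propositions~\ref{prp:repsys} and~\ref{prp:lexwidth}, then run the chain-partition dynamic program of Theorem~\ref{thm:aggregwb} to enumerate the (constant-sized) set of accumulation results, from which \poss and \cert follow directly. The details you spell out---the sane-vector characterization of ideals, the recurrence on maximal elements, and the storage of witnessing prefixes for PTIME-evaluability---match the paper's proof exactly.
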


We can then adapt the tractability result for queries without product (Theorem~\ref{thm:aggregnoprod}):

\begin{theorem}
  \label{thm:aggregnoproda}
  For any \Pnoprodacc query performing \emph{finite} and \emph{position-invariant}
  accumulation,
  \poss and \cert are in PTIME on po-databases whose relations
  have either bounded width or bounded ia-width.
\end{theorem}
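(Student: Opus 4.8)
The plan is to mirror the argument for Theorem~\ref{thm:aggregnoprod}, replacing the bookkeeping of a single candidate possible world by a bookkeeping of the \emph{set of reachable accumulation results}; this set stays of constant size because the accumulation monoid is finite, and tracking it is possible in polynomial space because the accumulation map is position-invariant. First I would invoke Lemma~\ref{lem:rewritenoprod} to rewrite the fixed \Pnoprodacc query as $\accum_{h,\oplus}$ applied to a union of projections of selections of a constant number of input relations and constant relations. Evaluating these selections and projections in PTIME by Proposition~\ref{prp:repsys}, and using that selection and projection change neither the width nor the ia-width, each summand of the union is a po-relation of bounded width (if it stems from a bounded-width input relation, or from a constant relation) or of bounded ia-width (if it stems from a bounded-ia-width input relation). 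Grouping the summands accordingly and unioning within each group, Proposition~\ref{prp:lexwidth} gives that the first group's union $\OR$ has width bounded by a constant depending only on $Q$ and $k$, and Proposition~\ref{prp:lexiawidthnoprod} gives that the second group's union $\OR'$ has ia-width bounded by such a constant; the query output before accumulation is exactly $\OR \cupgen \OR'$.

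Second, I would apply Theorem~\ref{thm:aggregwuawb} (its position-invariant, finite-monoid form) to $\OR$ and $\OR'$ to compute in PTIME the full set $\accum_{h,\oplus}(\OR \cupgen \OR')$ of possible accumulation results. As $\calM$ is finite this set has constant size, so \poss for a candidate result $v$ (test $v$ membership) and \cert (test that the set equals $\{v\}$) both follow immediately in PTIME.

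For completeness I would spell out the dynamic program behind Theorem~\ref{thm:aggregwuawb}: compute a minimal-width chain partition of $\OR$ via Theorems~\ref{thm:dilworth} and~\ref{thm:fulkerson}, and a minimal-cardinality ia-partition $A_1 \sqcup \cdots \sqcup A_n$ of $\OR'$ via Proposition~\ref{prp:ptime-ia-partition} (so $n \le k$). Use as DP state a ``sane'' vector of positions along the chains of $\OR$ --- exactly the order-ideal vectors from the proof of Theorem~\ref{thm:aggregwb} --- together with, for each class $A_i$ and each monoid element $m$, the number of already-used identifiers $\id$ of $A_i$ with $h$-image $m$; recording only these counts is legitimate precisely because $h$ ignores positions. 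There are polynomially many states (a factor $|\OR|^{w}$ from the chain part, a factor $|\OR'|^{|\calM|\,n}$ from the class part, both with constant exponents). With each state store the constant-size set of accumulation results of linear extensions of the associated order ideal, each with a witnessing prefix so that PTIME-evaluability of $\accum_{h,\oplus}$ can be used. A transition appends one element, either from a chain of $\OR$ (keeping the vector sane, i.e.\ still an order ideal) or from an \emph{open} class $A_i$ of $\OR'$ (all of whose ancestor classes in the well-defined quotient poset are already exhausted --- readable off the counts --- and which still has an unused identifier of the required $h$-image).

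I expect the main obstacle to be the correctness of this dynamic program: that it reaches every accumulation result of a possible world of $\OR \cupgen \OR'$ and never a spurious one. The two delicate points are (i) that one may collapse order ideals of $\OR'$ differing only by swapping indistinguishable identifiers with equal $h$-image --- this uses position-invariance crucially, together with the fact that within a class identifiers are pairwise incomparable and indistinguishable from the outside; and (ii) that the order ideals of $\OR \cupgen \OR'$ are in bijection with pairs of an order ideal of $\OR$ and one of $\OR'$, so iterating over the two separately is exhaustive. Everything else --- the rewriting, preservation of width and ia-width, and the PTIME computation of chain and ia-partitions --- is delivered by the earlier results.
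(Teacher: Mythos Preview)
Your proposal is correct and follows essentially the same route as the paper: rewrite via Lemma~\ref{lem:rewritenoprod}, split into a bounded-width union $\OR$ and a bounded-ia-width union $\OR'$ using Propositions~\ref{prp:lexwidth} and~\ref{prp:lexiawidthnoprod}, and then invoke Theorem~\ref{thm:aggregwuawb} with the same dynamic program (chain positions for~$\OR$, per-class per-monoid-value counts for~$\OR'$). Your correctness observations (i) and (ii) match the paper's justification almost verbatim, including the use of position-invariance to legitimize the count-based state and the bijection between ideals of $\OR \cupgen \OR'$ and pairs of ideals.
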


The finiteness assumption is important, as the previous result
does not hold otherwise. Specifically, there exists a query
that performs \emph{position-invariant} but not \emph{finite}
accumulation, for which \poss is NP-hard even on unordered
po-relations (see Appendix~\ref{app:hardinf}).

\begin{toappendix}
  \subsection{Hardness Without the Finiteness Assumption}
  \label{app:hardinf}
  We show the additional claim that \poss for \Pnoprodacc queries can be hard if
  we do not assume finiteness. Namely, we show:

\begin{theoremrep}\label{thm:possgri}
  There is a position-invariant accumulation operator $\accum_{h, \oplus}$
  such that \poss is NP-hard for the \Pnoprodacc query $Q \defeq \accum_{h,
  \oplus}(\OR)$ (i.e., accumulation applied directly to an input po-relation
  $\OR$), even on input po-databases where $\OR$ is restricted to be an unordered relation.
\end{theoremrep}

\begin{proof}
  We consider the NP-hard partition problem: given a multiset $S$ of
  integers, decide whether it can be partitioned as $S = S_1 \sqcup S_2$ such
  that $S_1$ and $S_2$ have the same sum.
  Let us reduce an instance of the partition problem with
  this restriction to an instance of the \poss problem, in PTIME.

  Let $\calM$ be the monoid generated by
  the functions $f : x \mapsto -x$ and $g_a : x \mapsto x + a$ for $a \in \ZZ$
  under the function composition operation. We have $g_a \circ g_b = g_{a+b}$
  for all $a, b \in \NN$, $f \circ f = \id$, and $f \circ g_a = g_{-a} \circ f$,
  so we actually have $\calD = \{g_a \mid a \in \ZZ\} \sqcup \{f \circ g_a \mid
  a \in \ZZ\}$. Further, $\calM$ is actually a group, as we can define
  $(g_a)^{-1} = g_{-a}$ and $(f \circ g_a)^{-1} = f \circ g_a$ for all $a \in
  \ZZ$.

  We fix $\calD = \NN \sqcup \{-1\}$.
  We define the position-invariant accumulation map $h$ as mapping $-1$ to $f$ and $a
  \in \NN$ to $g_a$.
  We encode the partition problem instance~$S$ in PTIME
  to an unordered po-relation $\OR_S$ with a single attribute, that contains
  one tuple with value $s$ for each $s \in S$, plus two tuples with value~$-1$.
  Let the candidate result~$v$ be $\id \in \calM$, and consider the \poss
  instance for the query $\accum_{h,+}(\OR)$, on the po-database $D$ where $\OR$
  is the relation $\OR_S$, and the candidate result~$v$.

  We claim that this \poss instance is positive iff the partition problem has a
  solution. Indeed, if $S$ has a partition, 
  let $s = \sum_{i \in S_1} i = \sum_{i \in S_2} i$.
  Consider the total order on
  $\OR_S$ which enumerates the tuples corresponding to the elements of $S_1$, then
  one tuple $-1$, then the tuples corresponding to the elements of $S_2$, then
  one tuple $-1$. The result of accumulation is then $g_s \circ f \circ g_s \circ
  f$, which is $\id$.

  Conversely, assume that the \poss problem has a solution. Consider a witness
  total order of~$\OR_S$; it must a (possibly empty) sequence of tuples
  corresponding to a subset $S_1$ of~$S$, then a tuple $-1$, then a (possibly
  empty) sequence corresponding to $S_2 \subseteq S$, then a tuple~$-1$, then a
  (possibly empty) sequence corresponding to $S_1' \subseteq S$, with $S = S_1
  \sqcup S_1' \sqcup S_2$. Let $s_1$, $s_1'$ and $s_2$ respectively be the sums
  of these subsets of $S$. The result of accumulation is then $g_{s_1} \circ f
  \circ g_{s_2} \circ f \circ g_{s_1'}$, which simplifies to $g_{s_1 + s_1' -
  s_2}$. Hence, we have $s_1 + s_1' = s_2$, so that $(S_1 \sqcup S_1')$ and
  $S_2$ are a partition witnessing that $S$ is a positive instance of the
  partition problem.

  As the reduction is in PTIME, this concludes the proof.
\end{proof}

\end{toappendix}

\subparagraph*{Other definitions.}
\begin{toappendix}
  \subsection{Other definitions}
\end{toappendix}
Finally, recall that we can use accumulation as in Example~\ref{ex:variants} to
capture \emph{position-based selection}
($\text{top-}k$, $\text{select-at-}k$) and \emph{tuple-level comparison}
(whether the first occurrence of a tuple precedes all occurrences of another
tuple) for \PosRA queries. Using a direct construction for these problems, we
can show that they are tractable:

\begin{propositionrep}\label{prop:otherdef}
        For any \PosRA query $Q$, the following problems are in PTIME:
        \begin{description}
          \item[select-at-$k$:] Given a po-database $D$, tuple value $t$, and
            position $k \in \NN$, whether it is \mbox{possible/certain} that $Q(D)$ has
            value $t$ at position~$k$;
          \item[top-$k$:] For any \emph{fixed} $k \in \NN$, given a po-database
            $D$ and list relation $L$ of length $k$, whether it is
            possible/certain that the top-$k$ values in $Q(D)$ are exactly $L$;
          \item[tuple-level comparison:] Given a po-database $D$ and two tuple
            values $t_1$ and $t_2$,
            whether it is possible/certain that the first occurrence of~$t_1$
            precedes all occurrences of~$t_2$.
        \end{description}
\end{propositionrep}

\begin{proof}
  To solve each problem, we first compute the po-relation $\OR \defeq Q(D)$ in
  PTIME by Proposition~\ref{prp:repsys}.
  We now address each problem in turn.

  \begin{description}
    \item[select-at-$k$:]
    Considering the po-relation $\OR = (\ID, T, <)$, we can compute in PTIME,
    for every element $\id \in \ID$, its \emph{earliest
		index} $\i^-(\id)$, which is its number of ancestors by $<$ plus one, and its \emph{latest
		index} $\i^+(\id)$, which is the number of elements of $\OR$ minus the number of
	descendants of~$\id$. It is easily seen that for any element $\id \in \ID$, there is a
        linear extension of~$\OR$ where $\id$ appears at position $\i^-(\id)$ (by
        enumerating first exactly the ancestors of~$\id$), or at position
        $\i^+(\id)$ (by enumerating first everything except the descendants
        of~$\id$), or in fact at any position of $[\i^-(\id), \i^+(\id)]$, the
        \emph{interval} of~$\id$ (this is by enumerating first the ancestors
        of~$\id$, and then as many elements as needed that are incomparable
        to~$\id$, along a linear extension of these elements).
	
	Hence, select-at-$k$ possibility for tuple $t$ and position $k$ can be
        decided by checking, for each $\id \in \ID$ such that $T(\id) = t$,
        whether $k \in [\i^-(\id), \i^+(\id)]$, and answering YES iff we can find
        such an~$\id$. For select-at-$k$ certainty, we answer NO iff we can find
        an $\id \in \ID$ such that $k \in [\i^-(\id), \i^+(\id)]$ but we have
        $T(\id) \neq t$.

      \item[top-$k$:]
        Considering the po-relation $\OR = (\ID, T, <)$,
  we consider each sequence of $k$ elements of~$\OR$, of which there are at most
      $\card{ID}^k$, i.e., polynomially many, as $k$ is fixed. To solve possibility for top-$k$, we
  consider each such sequence $\id_1, \ldots, \id_k$ such that $(T(\id_1),
  \ldots, T(\id_k))$ is equal to the candidate list relation $L$, and we check if this sequence is indeed a prefix of a linear extension
  of~$\OR$, i.e., whether, for each $i \in \{1, \ldots, k\}$, for any $\id \in
  \ID$ such that $\id < \id_i$, if $\id_i \in \{\id_1, \ldots, \id_{i-1}\}$,
  which we can do in PTIME. We answer YES iff we can find such a sequence.

  For certainty, we consider each sequence $\id_1, \ldots, \id_k$ such that
  $(T(\id_1), \ldots, T(\id_k)) \neq L$, and we check whether it is a prefix of
  a linear extension in the same way: we answer NO iff we can find such a
  sequence.
	
\item[tuple-level comparison:]
  We are given the two tuple values $t_1$ and $t_2$, and we assume that both are
  in the image of~$T$, as the tuple-level comparison problem is vacuous
  otherwise.

  For possibility, given the two tuple values $t_1$ and $t_2$, we consider each
  $\id \in \ID$ such that $T(\id) = t_1$, and for each of them, we construct
  $\OR_\id \defeq (\ID, T, {<_\id})$ where ${<_\id}$ is the transitive closure
  of ${<} \cup \{(\id, \id') \mid \id' \in \ID, T(\id') = t_2\}$. We answer YES iff one of the
  $\OR_\id$ is indeed a po-relation, i.e., if $<_\id$ as defined does not
  contain a cycle. This is correct, because it is possible that the first
  occurrence of~$t_1$ precedes all occurrences of~$t_2$ iff there is some
  identifier $\id$ with tuple value $t_1$ that precedes all identifiers with
  tuple value $t_2$, i.e., iff one of the $\OR_\id$ has a linear extension.

  For certainty, given $t_1$ and $t_2$, we answer the negation of possibility
  for $t_2$ and $t_1$. This is correct because certainty is false iff there is a
  linear extension of~$\OR$ where the first occurrence of~$t_1$ does not precede
  all occurrences of~$t_2$, i.e., iff there is a linear extension where the
  first occurrence of $t_2$ is not after an occurrence of~$t_1$, i.e., iff some
  linear extension is such that the first occurrence of $t_2$ precedes all
  occurrences of~$t_1$, i.e., iff possibility is true for $t_2$
      and~$t_1$. \qedhere
  \end{description}
\end{proof}

\section{Extensions}\label{sec:extensions}
We next briefly consider two extensions to our model: group-by and
duplicate elimination. 

\subparagraph*{Group-by.} First, we extend accumulation with a \emph{group-by}
operator, inspired by SQL.
\begin{definition}
    Let $(\calM, \oplus, \epsilon)$ be a monoid and $h : \calD^k \to \calM$ be
    an accumulation map (cf.\ Definition \ref{def:aggregationb}), and
    let 
    $\mathbf{A} = A_1,...,A_n$ be a
    sequence of attributes: we call $\accumgby_{h, \oplus, \mathbf{A}}$ an \emph{accumulation
        operator with group-by}.
    Letting $L$ be a list relation with compatible schema,
    we define $\accumgby_{h, \oplus, \mathbf{A}}(L)$ as
    an \emph{unordered} relation
    that has, for each tuple value $t
    \in \pi_{\mathbf{A}}(L)$, one tuple $\langle t, v_t \rangle$ where $v_t$ is $\accum_{h,
        \oplus}(\sigma_{A_1 = t.A_1,...A_n=t.A_n}(L))$ with $\pi$ and $\sigma$ on
    the list relation $L$ having the expected semantics. The result on a
    po-relation $\OR$ is the set of unordered relations $\{\accumgby_{h, \oplus, \mathbf{A}}(L) \mid L \in
    \pw(\OR)\}$.
\end{definition}

In other words, the operator ``groups by'' the values of $A_1,...,A_n$, and
performs accumulation within each group, forgetting the order
across groups. As for standard accumulation, we only allow 
group-by as an outermost operation,
calling \PosRAaccgby the language of \PosRA queries followed by one accumulation
operator with group-by. Note that the set of possible results is generally not 
a po-relation,
    because the underlying bag relation is not certain.

We next study the complexity of \poss and \cert for \PosRAaccgby queries. Of course,
whenever \poss and \cert are hard for some \PosRAacc query $Q$ on some kind of
input po-relations, then there is a corresponding \PosRAaccgby query
for which hardness also holds (with empty $\mathbf{A}$). The main point of this section is to show that
the converse is not true: the addition of group-by increases
complexity. Specifically, we show that the \poss problem for \PosRAaccgby is
hard even on totally ordered po-relations and without the $\times_\dir$
operator:

\begin{toappendix}
    \subsection{Proof of Theorem~\ref{thm:hardpossgby}: Hardness of \poss with
        Group-By}
\end{toappendix}

\begin{theoremrep}\label{thm:hardpossgby}
    There is a \PosRAaccgby query $Q$ 
    with finite and position-invariant
    accumulation, not using $\times_\dir$, such that \poss
    for $Q$ is NP-hard even on totally ordered po-relations.
\end{theoremrep}

This result contrasts with the tractability of \poss for \Plex queries
(Theorem~\ref{thm:aggregwCorr})
 and for \Plexacc queries with finite
accumulation (Theorem~\ref{thm:aggregwa})
on totally ordered po-relations.

\begin{proof}
  Let $Q$ be the query $\accumgby_{\oplus, h, \{1\}}(Q')$, where we define:
  \[
    Q' \defeq \Pi_{3,4}(\sigma_{1=2}(R \times_\lex
    S_1 \cup S_2 \cup S_3))
  \]
  In the accumulation operator, the accumulation map $h$ maps each tuple $t$ to its
  second component. Further, we define the finite monoid $\calM$ to be the
  \emph{syntactic monoid} \cite{pin1997syntactic} of the language defined by the regular expression $\s (\l_+\l_- | \l_-
  \l_+)^* \e$, where $\s$ (for ``start''), $\l_-$ and $\l_+$, and $\e$ (for
  ``end'') are fresh values from $\calD$: this monoid 
  ensures that, 
  for any non-empty word $w$ on the alphabet $\{\s, \l_-, \l_+, \e\}$ that
  starts with $\s$ and ends with $\e$, the word $w$
  evaluates to $\epsilon$ in $\calM$ iff $w$ matches this regular expression.

  We reduce from the NP-hard 3-SAT problem: we are given a conjunction of
  clauses $C_1, \ldots, C_n$, with each clause being a disjunction of three
  literals, namely, a variable or negated variable among $x_1, \ldots, x_m$, and
  we ask whether there is a valuation of the variables such that the clause is
  true. We fix an instance of this problem. We assume without loss of generality
  that the instance has been preprocessed to ensure that no clause contained two
  occurrences of the same variable (neither with the same polarity nor with
  different polarities).

  We define the relation $R$ to be $\ordern{m+3}$. The totally ordered relations
  $S_1$, $S_2$, and $S_3$ consist of $3m+2n$ tuple values, which we define in a
  piecewise fashion:

  \begin{itemize}
    \item First, for the tuples with positions from $1$ to $m$ (the ``opening gadget''):
      \begin{itemize}
        \item The first coordinate is $1$ for all tuples in $S_1$ and $0$ for
          all tuples in $S_2$ and $S_3$ (which do not join with $R$);
        \item The second coordinate is $i$ for the $i$-th tuple in $S_1$ (and
          irrelevant for tuples in $S_2$ and $S_3$);
        \item The third coordinate is $\s$ for all these tuples.
      \end{itemize}
      The intuition for the opening gadget is that it ensures that accumulation in
      each of the $m$ groups will start with the start value $\s$, used to
      disambiguate the possible monoid values and ensure that there is exactly
      one correct value.
    \item For the tuples with positions from $m+1$ to $2m$ (the ``variable
      choice'' gadget):
      \begin{itemize}
        \item The first coordinate is $2$ for all tuples in~$S_1$ and~$S_2$ and
          $0$ for all tuples in~$S_3$ (which do not join with $R$):
        \item The second coordinate is $i$ for the $(m+i)$-th tuple in $S_1$ and in
          $S_2$;
        \item The third coordinate is $\l_-$ for all tuples in~$S_1$ and $\l_+$
          for all tuples in~$S_2$.
      \end{itemize}
      The intuition for the variable choice gadget is that, for each group, we
      have two incomparable elements, one labeled $\l_-$ and one labeled $\l_+$.
      Hence, any linear extension must choose to enumerate one after the other,
      committing to a valuation of the variables in the 3-SAT instance;
      to achieve the candidate possible world, the linear extension will then
      have to continue enumerating the elements of this group in the correct
      order.
    \item For the tuples with positions from $2m+1$ to $2m + 2n$ (the ``clause
      check'' gadget), for each $1 \leq j \leq n$, letting $j' \defeq 2n + j +
      1$, we describe tuples $j'$ and $j'+1$ in $S_1$, $S_2$, $S_3$:
      \begin{itemize}
        \item The first coordinate is $j+2$;
        \item The second coordinate carries values in
          $\{a, b, c\}$, where we write clause $C_j$ as $\pm x_a \vee \pm x_b
          \vee \pm x_c$. Specifically:
          \begin{itemize}
            \item Value $a$ is assigned to tuple $j'+1$ in relation $S_1$ and
              tuple $j'$ in relation $S_2$;
            \item Value $b$ is assigned to tuple $j'+1$ in relation $S_1$ and
              tuple $j'$ in relation $S_2$;
            \item Value $c$ is assigned to tuple $j'+1$ in relation $S_1$ and
              tuple $j'$ in relation $S_2$;
          \end{itemize}
        \item The third
          coordinate carries values in $\{\l_-, \l_+\}$; namely, writing $C_j$ as above:
        \begin{itemize}
          \item Tuple $j'+1$ in relation $S_1$ carries $\l_+$ if
            variable $x_a$ occurs positively in $C_j$, and $\l_-$ otherwise; tuple
            $j'$ in relation $S_2$ carries the other value;
          \item The elements at the same positions in relation $S_2$ and $S_3$,
            respectively in $S_3$ and $S_1$, are defined in the same way depending
            on the sign of $x_b$, respectively of $x_c$.
        \end{itemize}
      \end{itemize}
      The intuition for the clause check gadget is that, for each $1 \leq j \leq
      n$, the tuples at levels $j'$ and $j'+1$ check that clause $C_j$ is
      satisfied by the valuation chosen in the variable choice gadget.
      Specifically, if we consider the order constraints on the two elements
      from the same group (i.e., second coordinate) which are implied by the
      order chosen for this variable in the variable choice gadget, the
      construction ensures that these order constraints plus the comparability
      relations of the chains imply a cycle (that is, an impossibility) iff the
      clause is violated by the chosen valuation.
    \item For the tuples with positions from $2n + 2m + 1$ to $3n + 2m$ (the
      ``closing gadget''), the definition is like the opening gadget but
      replacing $\e$ by $\s$, namely:
      \begin{itemize}
        \item The first coordinate is $m+3$ for all tuples in $S_1$ and $0$ for
          all tuples in $S_2$ and $S_3$ (which again do not join with $R$);
        \item The second coordinate is $i$ for the $i$-th tuple in $S_1$;
        \item The third coordinate is $\e$ for all these tuples.
      \end{itemize}
      The intuition for the closing gadget is that it ensures that accumulation
      in each group ends with value $\e$.
  \end{itemize}

  We define the candidate possible world to consist of a list relation of $n$
  tuples; the $i$-th tuple carries value $i$ as its first component and the
  acceptation value from the monoid $\calM$. The reduction that we described is
  clearly in PTIME, so all that remains is to show correctness of the reduction.
  
  To do so, we first describe the result of evaluating $\OR \defeq Q'(R, S_1, S_2, S_3)$ on
  the relations described above. Intuitively, it is just like $\Pi_{2, 3} (\sigma_{2 \neq
  \text{``0''}}(S_1 \cup S_2 \cup S_3))))$, but with the following additional
  comparability relations: all tuples in all chains whose first coordinate
  carried a value $i$ are less than all tuples in all chains whose first
  coordinate carried a value $j > i$. In other words, we add comparability
  relations across chains as we move from one ``first component'' value to the
  next. The point of this is that it forces us to enumerate the tuples of the
  chains in a way that ``synchronizes'' across all chains whenever we change the
  first component value. Observe that, in keeping with
  Proposition~\ref{prp:lexwidth}, the width of $\OR$ has a constant bound,
  namely, it is~$3$.

  Let us now show the correctness of the reduction. For the forward direction,
  consider a valuation $\nu$ that satisfies the 3-SAT instance. Construct the
  linear extension of~$\OR$ as follows:

  \begin{itemize}
    \item For the start gadget, enumerate all tuples of $S_1$ in the prescribed
      order. Hence, the current accumulation result in all $n$ groups is $\s$.
    \item For the variable choice gadget, for all $i$, enumerate the $i$-th
      tuples of $S_1$ and $S_2$ of the gadget in an order depending on
      $\nu(x_i)$: if $\nu(x_i)$ is $1$, enumerate first the tuple of~$S_1$ and
      then the tuple of~$S_2$, and do the converse if $\nu(x_i) = 0$. Hence, for
      all $1 \leq i \leq n$, the
      current accumulation result in group~$i$ is $\s \l_- \l_+$ if $\nu(x_i)$
      is $1$ and $\s \l_+ \l_-$ otherwise.
    \item For the clause check gadget, we consider each clause in order, for $1
      \leq j \leq n$, maintaining the property that, for each group $1 \leq i
      \leq n$, the current accumulation result in group $i$ is of the form $\s
      (\l_- \l_+)^*$ if $\nu(x_i) = 1$ and $\s (\l_+ \l_-)^*$ otherwise.
      
      Fix
      a clause $C_j$, let $j' \defeq 2n + j + 1$ as before, and study the tuples
      $j'$ and $j'+1$ of~$S_1, S_2, S_3$. As $C_j$ is satisfied under $\nu$, let
      $x_d$ be the witnessing literal (with $d \in \{a, b, c\}$), and let $d'$
      be the index (in $\{1, 2, 3\}$) of variable~$d$. Assume that $x_d$
      occurs positively; the argument is symmetric if it occurs negatively. By
      definition, we must have $\nu(x_d) = 1$, and by construction tuple $j'$ in relation
      $S_{1 + (d' + 1 \text{ mod } 3)}$ must carry value $\l_-$ and it is in
      group $d$. Hence, we can enumerate it and group $d$ now carries a value of
      the form $\s (\l_- \l_+)^* \l_-$. Now, letting $x_e$ be the $1 + (d' + 1
      \text{ mod } 3)$-th variable of $\{x_a, x_b, x_c\}$, the two elements of
      group $e$ (tuple $j'+1$ of
      $S_{1 + (d' + 1 \text{ mod } 3)}$ and tuple $j'$ of $S_{1 + (d' + 1 \text{
        mod } 3)}$) both had all their predecessors enumerated; so we can
      enumerate them in the order that we prefer to satisfy the condition on the
      accumulation values; then we enumerate likewise the two elements in the
      remaining group in the order that we prefer, and last we enumerate the
      second element of group~$d$; so we have satisfied the invariants.
    \item Last, for the end gadget, we enumerate all tuples of~$S_1$ and we have
      indeed obtained the desired accumulation result.
  \end{itemize}

  This concludes the proof of the forward direction.

  \medskip

  For the backward direction, consider any linear extension of~$\OR$. Thanks to
  the order constraints of $\OR$, the linear extension must enumerate tuples in
  the following order:

  \begin{itemize}
    \item First, all tuples of the start gadget.
    \item Then, all tuples of the variable choice gadget. We use this to define
      a valuation $\nu$: for each variable $x_i$, we set $\nu(x_i) = 1$ if the
      tuple of $S_1$ in group $i$ was enumerated before the one in group~$S_2$,
      and we set $\nu(x_i) = 0$ otherwise.
    \item Then, for each $1 \leq j \leq n$, in order, tuples $2n + j + 1$ of
      $S_1$, $S_2$, $S_3$.

      Observe that this implies that, whenever we enumerate such tuples, it must
      be the case that the current accumulation value for any variable $x_i$ is
      of the form $\s (\l_- \l_+)^*$ if $\nu(x_i) = 1$, and $\s (\l_+ \l_-)^*$
      otherwise. Indeed, fixing $1 \leq i \leq n$, assume that we are in the
      first case (the second one is symmetric). In this case, the accumulation
      state for $x_i$ after the variable choice gadget was $\s \l_- \l_+$, and
      each pair of levels in the clause check gadget made us enumerate either
      $\epsilon$ (variable $x_i$ did not occur in the clause) or one of
      $\l_-\l_+$ or $\l_+\l_-$ (variable $x_i$ occurred in the clause); as the
      3-SAT instance was preprocessed to ensure that each variable occurred only
      at most once in each clause, this case enumeration is exhaustive. Hence,
      the only way to obtain the correct accumulation result is to always
      enumerate $\l_- \l_+$, as if we ever do the contrary the accumulation
      result can never satisfy the regular expression that it should satisfy.

    \item Last, all tuples of the end gadget.
  \end{itemize}

  What we have to show is that the valuation $\nu$ thus defined indeed satisfies
  the formula of the 3-SAT instance. Indeed, fix $1 \leq j \leq n$ and consider
  clause $C_j$. Let $S_i$ be the first relation where the linear extension
  enumerated a tuple for the clause check of~$C_j$, and let $x_d$ be its
  variable (where $d$ is its group index). If $\nu(x_d) = 1$, then the
  observation above implies that the label of the enumerated element must be
  $\l_-$, as otherwise the accumulation result cannot be correct. Hence, by
  construction, it means that variable $x_d$ must occur positively in $C_j$, so
  it witnesses that $\nu$ satisfies $C_j$. If $\nu(x_d) = 0$, the reasoning is
  symmetric. This concludes the proof in the backwards direction, so we have
  established correctness of the reduction, which concludes the proof.
\end{proof}

By contrast, it is not hard to see that the \cert problem for \PosRAaccgby
reduces to \cert for the same query without group-by, so it is no harder than
the latter problem. Specifically:

\begin{toappendix}
    \subsection{Proof of Theorem~\ref{thm:easycertgby}: Tractability of \cert with
        Group-By}
\end{toappendix}

\begin{theoremrep}\label{thm:easycertgby}
    All \cert tractability results from Section \ref{sec:fpt} extend to \PosRAaccgby
    when imposing the same restrictions on query operators, accumulation, and input
    po-relations.
\end{theoremrep}

\begin{toappendix}
  We show the following auxiliary result:

  \begin{proposition}
    For any \PosRAaccgby query $Q \defeq \accumgby_{h, \oplus, P} Q'$ and family
    $\calD$ of po-databases, the \cert problem for $Q$ on input po-databases
    from $\calD$
    reduces in PTIME to the \cert problem for $\accum_{h, \oplus} R$ (where
    $\OR$ is a po-relation name), on the family $\calD'$ of po-databases mapping
    the name $\OR$ to a subset of a po-relation of $\{Q'(D) \mid D \in \calD\}$.
  \end{proposition}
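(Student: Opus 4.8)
The plan is to reduce certainty of the group\nobreakdash-by result, \emph{group by group}, to certainty of ordinary accumulation on a suitable sub\nobreakdash-po\nobreakdash-relation of $Q'(D)$, and to check that each such sub\nobreakdash-po\nobreakdash-relation indeed lies in the family $\calD'$.

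Given an instance $(D,U)$ of \cert for $Q \defeq \accumgby_{h,\oplus,\mathbf{A}}(Q')$, I would first compute $\OR \defeq Q'(D)$ in PTIME using Proposition~\ref{prp:repsys}. Let $t_1,\dots,t_\ell$ be the distinct values taken by $\Pi_{\mathbf{A}}$ on the tuples of~$\OR$, which are computable in PTIME. Since the underlying bag relation of~$\OR$ is certain, these are exactly the group keys occurring in every possible result of~$Q(D)$; so if $U$ is not of the form $\{\langle t_i,v_i\rangle \mid 1\le i\le\ell\}$ for some values~$v_i$, then $Q(D)\ne\{U\}$ and we output a trivially negative target instance. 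Otherwise, for each~$i$ let $\OR_i \defeq \sigma_{A_1=t_i.A_1 \wedge \dots \wedge A_n=t_i.A_n}(\OR)$, which is computable in PTIME; it is a subset of the po\nobreakdash-relation $Q'(D)$, so the po\nobreakdash-database mapping a relation name $R$ to~$\OR_i$ belongs to~$\calD'$.

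The key lemma is that, for each group key~$t$, the achievable group\nobreakdash-$t$ sublists of possible worlds are precisely the possible worlds of the restriction: $\{\sigma_{\mathbf{A}=t}(L) \mid L\in\pw(\OR)\} = \pw(\OR_t)$. The inclusion ``$\subseteq$'' is immediate, as restricting a linear extension of~$\OR$ to the identifiers of~$\OR_t$ is a linear extension of~$\OR_t$. For ``$\supseteq$'', given a linear extension $<'$ of~$\OR_t$, I would show that the transitive closure of ${<_{\OR}}\cup{<'}$ is acyclic, hence extends to a total order on~$\OR$ whose restriction to group~$t$ is~$<'$, witnessing that the induced list relation is achievable. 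Acyclicity follows by a collapsing argument: in a cycle, every maximal run of $<_{\OR}$\nobreakdash-edges joining two group\nobreakdash-$t$ identifiers collapses --- by transitivity of~$<_{\OR}$ and comparability of those endpoints inside~$\OR_t$ --- to a single $<'$\nobreakdash-edge; a cycle using only $<'$\nobreakdash-edges is impossible since $<'$ is a linear order, and a cycle using only $<_{\OR}$\nobreakdash-edges is impossible since $<_{\OR}$ is a strict partial order. I expect this ``$\supseteq$'' direction --- the statement that fixing the internal order of one group never over\nobreakdash-constrains~$\OR$ --- to be the only delicate point; it relies crucially on $<_{\OR}$ being transitively closed.

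Finally, unfolding the definition of group\nobreakdash-by, and using that $\Pi_{\mathbf{A}}$ is certain, $Q(D)=\{U\}$ holds iff for every $L\in\pw(\OR)$ and every~$i$ we have $\accum_{h,\oplus}(\sigma_{\mathbf{A}=t_i}(L)) = v_i$. Since the quantification over~$L$ is universal the groups do not interact, so by the lemma this is equivalent to: for every~$i$, $\accum_{h,\oplus}(\OR_i)=\{v_i\}$. Each of these $\ell$ conditions is exactly a \cert instance for the query $\accum_{h,\oplus}(R)$ on a po\nobreakdash-database in~$\calD'$ with candidate result~$v_i$. This is thus a polynomial\nobreakdash-time conjunctive (Turing) reduction: deciding \cert for~$Q$ amounts to the PTIME key\nobreakdash-check above together with the conjunction of $\ell$ instances of \cert for $\accum_{h,\oplus}(R)$ over~$\calD'$. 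In particular, whenever the latter problem is in PTIME on~$\calD'$ --- which the restrictions on operators, accumulation, and input po\nobreakdash-relations of Section~\ref{sec:fpt} guarantee --- so is \cert for~$Q$ on~$\calD$, which yields Theorem~\ref{thm:easycertgby} for \PosRAaccgby queries.
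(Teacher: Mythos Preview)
Your proof is correct and follows essentially the same approach as the paper: compute $Q'(D)$, split it into per-group sub-po-relations, and answer YES iff \cert holds for each group separately. The only difference is that you spell out and prove the key lemma $\{\sigma_{\mathbf{A}=t}(L)\mid L\in\pw(\OR)\}=\pw(\OR_t)$ via an acyclicity argument, whereas the paper simply asserts the ``$\supseteq$'' direction by saying one extends a per-group linear extension ``by enumerating other tuples in an indifferent way''; your added detail is sound and arguably makes the argument cleaner.
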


\begin{proof}
  To prove that, consider an instance of \cert for $Q$, defined by an input
  po-database $D$ of $\calD$
  and candidate possible world $L$. We first evaluate $\OR' \defeq Q'(D)$ in
  PTIME. Now, for each tuple value $t$ in $\pi_P(\OR')$, let $\OR_t$ be the
  restriction of~$\OR'$ to the elements matching this value; note that the
  po-database mapping $R$ to $\OR_t$ is indeed in the family $\calD'$. We solve \cert for
  each $\accum_{h, \oplus} \OR_t$ in PTIME with the candidate possible world
  obtained from $L$ by extracting the accumulation value for that group, and
  answer YES to the original \cert instance iff all these invocations answer
  YES. As this process is clearly in PTIME, we must show correctness of the
  relation.

  For one direction, assume that each of the invocations answers YES,
  but the initial instance to \cert was negative. Consider two linear extensions
  of $\OR'$ that achieve different accumulation results and witness that the
  initial instance was negative, and consider a group $t$ where these
  accumulation results for these two linear extensions differ. Considering the
  restriction of these linear extensions to that group, we obtain the two
  different accumulation values for that group, so that the \cert invocation for
  $\OR_t$ should not have answered YES.

  For the other direction, assume that invocation for tuple~$t$ does not answer YES, then
  considering two witnessing linear extensions for that invocation, and
  extending them two linear extensions of~$\OR'$ by enumerating other tuples in
  an indifferent way, we obtain two different accumulation results for~$Q$ which
  differ in their result for~$t$. This concludes the proof.
\end{proof}

This allows us to show Theorem~\ref{thm:easycertgby} by considering all results
of Section~\ref{sec:fpt} in turn, and showing that they extend to \PosRAaccgby
queries, under the same restrictions on operators, accumulation, and input
po-relations:

\begin{itemize}
\item Theorem~\ref{thm:certaintyptimec} extends, because
\cert is
tractable on any family $\calD'$ of input po-databases, so tractability for
\PosRAaccgby holds for any family $\calD$ of input po-databases. 
\item Theorem~\ref{thm:aggregwa} extends, because, for any
family $\calD$ of po-databases whose po-relations have width at most $k$ for some
$k \in \Nat$, we know by 
Proposition~\ref{prp:lexwidth} that the result $Q'(D)$ for $D \in \calD$ also
has width depending only on~$Q'$ and on~$k$, and we know that restricting to a
subset of $Q'(D)$ (namely, each group) does not increase the width (this is like
the case of selection in the proof of Proposition~\ref{prp:lexwidth}). Hence, the
family $\calD'$ also has bounded width.
\item Theorem~\ref{thm:aggregnoproda} extends because we know (see
Lemma~\ref{lem:rewritenoprod} and subsequent observations) that the result
$Q'(D)$ for $D \in \calD$ is a union of a po-relation of bounded width and of a
po-relation with bounded ia-width. Restricting to a subset (i.e., a group), this
    property is preserved (as in the case of selection in the proof of
    Proposition~\ref{prp:lexwidth} and of Proposition~\ref{prp:lexiawidthnoprod}),
    which allows us to
    conclude.\qedhere
\end{itemize}
\end{toappendix}

\subparagraph*{Duplicate elimination.} We last study the problem of consolidating tuples with
\emph{duplicate values}. To this end, we define a new operator, $\dupelim$, and introduce a
semantics for it. The main problem is that tuples with the same
values may be ordered differently relative to other
tuples. To mitigate this, we introduce the notion
of \emph{id-sets}:

\begin{definition}
    \label{def:idset}
    Given a totally ordered po-relation $(\ID, T, <)$, a subset $\ID'$ of $\ID$
    is an \emph{indistinguishable duplicate set} (or
    \deft{id-set}) if for every $\id_1, \id_2 \in \ID'$,
    we have $T(\id_1) = T(\id_2)$,
    and for every $\id \in \ID \backslash \ID'$,
    we have $\id < \id_1$ iff $\id < \id_2$, and $\id_1 < \id$ iff $\id_2 < \id$.
\end{definition}

\begin{example}
    \label{exa:dup1}
    Consider the totally ordered relation
    $\OR_1 \defeq \Pi_{\mathit{hotelname}}(\mathit{Hotel})$, with
    $\mathit{Hotel}$ as in Figure~\ref{fig:examplerels}. The two
    ``Mercure''
    tuples are not an id-set: they disagree on their
    ordering with ``Balzac''.
    Consider now a totally ordered relation $\OR_2 = (\ID_2, T_2, <_2)$ whose only possible world is
    a list
    relation $(A, B, B,
    C)$ for some tuples $A$, $B$, and $C$ over $\calD$. The set
    $\{\id \in \ID_2 \mid T_2(\id) = B\}$ is an id-set in~$\OR_2$. Note that a singleton is always
    an id-set.
\end{example}
We define a semantics for $\dupelim$ on a totally ordered po-relation $\OR =
(\ID, T, <)$
via id-sets.
First, check that for every
tuple value $t$ in the image of~$T$, the set $\{\id \in \ID \mid T(\id) = t\}$ is an id-set
in~$\OR$. If this
holds, we call $\OR$ \emph{safe}, and set $\dupelim(\OR)$ to be the singleton
$\{L\}$ of the only possible world of
the restriction of~$\OR$ obtained by picking one representative
element per id-set (clearly $L$ does not depend on the chosen
representatives).
Otherwise, we call $\OR$ \emph{unsafe} and say that
duplicate consolidation has \emph{failed}; we then set $\dupelim(\OR)$ to
be an empty set of possible worlds. Intuitively, duplicate
consolidation tries to reconcile (or ``synchronize'') order
constraints for tuples with the same values, and fails when it
cannot be done.

\begin{example}
    In Example~\ref{exa:dup1}, we have $\dupelim(\OR_1)=\emptyset$
    but $\dupelim(\OR_2) = (A, B, C)$.
\end{example}
We then extend 
$\dupelim$ to po-relations by
considering all possible
results of duplicate elimination on the possible worlds,
ignoring the unsafe possible worlds. If no possible worlds are safe, then
we \emph{completely fail}:

\begin{definition}
    For each list relation $L$, we let $\OR_L$ be a po-relation such that
    $\pw(\OR_L) = \{L\}$.
    Letting $\OR$ be a po-relation, we set $\dupelim(\OR) \defeq
    \bigcup_{L \in pw(\OR)}\dupelim(\OR_L)$. We say that $\dupelim(\OR)$
    \deft{completely fails} if $\dupelim(\OR) = \emptyset$, i.e.,
    $\dupelim(\OR_L) = \emptyset$ for every $L\in
    pw(\OR)$.
\end{definition}

\begin{example}
    Consider the totally ordered po-relation
    $\mathit{Rest}$ from
    Figure~\ref{fig:examplerels}, and a totally ordered po-relation
    $\mathit{Rest}_2$ whose only possible
    world is $(\langlem
    \textup{Tsukizi} \ranglem,$ $\langlem \textup{Gagnaire} \ranglem)$.
    Consider 
    $Q \defeq
    \dupelim(\Pi_{\mathit{restname}}(\mathit{Rest}) \cupgen \mathit{Rest}_2)$.
    Intuitively, $Q$ combines restaurant rankings,
    using duplicate consolidation to collapse two occurrences of the
    same 
    name to a single tuple.
    The only possible world of $Q$ is
    (\textup{Tsukizi}, \textup{Gagnaire}, \textup{TourArgent}), since
    duplicate elimination fails in the other possible worlds: 
    indeed, this is the only possible way to combine the rankings.
\end{example}
We next show that the result of $\dupelim$ can still be represented as a
po-relation,
up to complete failure (which may be efficiently identified).

\begin{toappendix}
  \subsection{Proof of Theorems~\ref{thm:duelim-por}
  and~\ref{thm:incomparable2} and Proposition~\ref{prp:dupunion}}
    We first define the notion of \emph{quotient} of a
    po-relation by \emph{value equality}:

    \begin{definition}
        For a po-relation $\OR=(\ID,T,{<})$, we define the \emph{value-equality
            quotient of~$\OR$} as
        the directed graph $\mathrm{G}_\OR=(\ID', E)$ where:
        \begin{itemize}
            \item $\ID'$ is the quotient of
            $\ID$ by the equivalence relation $\id_1\sim\id_2\Leftrightarrow
            T(id_1)=T(id_2)$;
            \item $E\defeq\{(\id'_1,\id'_2)\in{\ID'}^2\mid \id_1' \neq
            \id_2'\land \exists
            (\id_1,\id_2) \in \id_1'\times \id_2'\text{~s.t.~}id_1<id_2\}$.
        \end{itemize}
    \end{definition}

    We claim that cycles in the value-equality quotient of~$\OR$ precisely
    characterize complete failure of $\dupelim$.

    \begin{propositionrep}\label{prp:pocycle}
        For any po-relation $\OR$, $\dupelim(\OR)$ completely fails iff
        $\mathrm{G}_\OR$ has a cycle.
    \end{propositionrep}

    \begin{proof}
      Fix the input po-relation $\OR = (\ID, T, <)$.
        We first show that the existence of a cycle implies complete failure of
        $\dupelim$. Let $\id'_1,\dots,\id'_n,\id'_1$ be a simple cycle of
        $\mathrm{G}_\OR$.
        For all $1\leq i\leq n$, there exists $\id_{1i},\id_{2i}\in\id'_1$ such
        that $\id_{2i}<\id_{1(i+1)}$
        (with the convention $\id_{1(n+1)}=\id_{11}$)
        and the $T(\id_{2i})$ are pairwise distinct.

        Let $L$ be a possible world of $\OR$ and let us show that $\dupelim$
        fails on any po-relation $\OR_L$ that \emph{represents}~$L$, i.e.,
        $\OR_L = (\ID_L, T_L, {<_L})$ is totally ordered and $\pw(\OR_L) = \{L\}$.
        Assume by contradiction that
        for all $1\leq i\leq n$, $\id'_i$ forms an id-set of $\OR_L$. Let us show
        by induction on~$j$ that for all $1\leq j\leq n$,
        $\id_{21}\leq_{L}\id_{2j}$, where $\leq_L$ denotes the non-strict order
        defined from $<_L$ in the expected fashion. The base case is trivial. Assume this holds
        for $j$ and let us show it for $j+1$. Since $\id_{2j}<\id_{1(j+1)}$, we
        have $\id_{21}\leq\id_{2j}<_L\id_{1(j+1)}$. Now, if $\id_{2(j+1)}<_{L}\id_{21}$,
        then $\id_{2(j+1)}<_{L}\id_{21}<_L\id_{1(j+1)}$ with
        $T(\id_{2(j+1)})=T(\id_{1(j+1)})\neq T(\id_{21})$, so this contradicts the fact that
        $\id'_{j+1}$ is an id-set. Hence, as $L$ is a total order, we must have
        $\id_{21} \leq_L \id_{2(j+1)}$, which
        proves the induction case. Now the claim proved by induction implies that
        $\id_{21}\leq_{L}\id_{2n}$, and we had $\id_{2n}<\id_{11}$ in~$\OR$ and
        therefore $\id_{2n}<_L\id_{11}$, so this contradicts the fact that
        $\id'_1$ is an id-set. Thus, $\dupelim$ fails in $\OR_L$. We have thus shown that
        $\dupelim$ fails in every possible world of~$\OR$, so that it completely fails.

        \medskip

        Conversely, let us assume that $\mathrm{G}_\OR$ is acyclic. Consider a
        topological sort of $\mathrm{G}_\OR$ as $\id'_1,\dots,\id'_n$. For $1\leq
        j\leq n$, let $L_j$ be a linear extension of the poset
        $(\id'_j,\restr{<}{\id'_j})$. Let $L$ be the concatenation of $L_1,\dots L_n$.
        We claim $L$ is a linear extension of $\OR$ such that $\dupelim$ does not
        fail in~$\OR_L = (\ID_L, T_L, {<_L})$; this latter fact is clear by construction of $L$, so we must only show
        that $L$ obeys the comparability relations of~$\OR$. Now, let
        $\id_1<\id_2$ in $\OR$. Either for some $1\leq j\leq n$,
        $\id_1, \id_2\in\id'_j$ and then the tuple for $\id_1$ precedes the one
        for $\id_2$ in~$L_j$ by construction, so
        means $t_1<_L t_2$; or they are in different classes $\id'_{j_1}$ and
        $\id'_{j_2}$
        and this is
        reflected in $\mathrm{G}_\OR$, which means that $j_1<j_2$ and
        $\id_1<_L \id_2$. Hence, $L$ is a linear extension, which concludes the proof.
    \end{proof}
    We can now state and prove the result:
\end{toappendix}

\begin{theoremrep}\label{thm:duelim-por}
    For any po-relation $\OR$, we can test in PTIME if
    $\dupelim(\OR)$ completely fails; if it does not,
    we can compute in PTIME a po-relation $\OR'$ such that
    $pw(\OR')=\dupelim(\OR)$.
\end{theoremrep}

\begin{proof}
    We first observe that $\mathrm{G}_\OR$ can be constructed in PTIME,
    and that testing that $\mathrm{G}_\OR$ is acyclic is also done in
    PTIME. Thus, using Proposition~\ref{prp:pocycle}, we can determine
    in PTIME whether $\dupelim(\OR)$ fails.

    If it does not, we let $\mathrm{G}_\OR=(\ID',E)$ and construct the
    relation $\OR'$ that will stand for $\dupelim(\OR)$ as
    $(\ID',T',<')$ where  $T'(\id')$ is the unique $T'(\id)$ for
    $\id\in\id'$ and $<'$ is the transitive closure of $E$, which is
    antisymmetric because $\mathrm{G}_\OR$ is acyclic. Observe that
    the underlying bag relation of $\OR'$ has one identifier for each distinct
    tuple value in $\OR$, but has no duplicates.

    Now, it is easy to check that $\pw(\OR')=\dupelim(\OR)$. Indeed, any
    possible world $L$ of~$\OR'$ can be achieved in $\dupelim(\OR)$ by
    considering, as in the proof of Proposition~\ref{prp:pocycle}, some
    possible world of $\OR$ obtained following the topological sort of
    $\mathrm{G}_\OR$ defined by $L$. This implies that $\pw(\OR')
    \subseteq \dupelim(\OR)$.

    Conversely, for any possible world $L$ of $\OR$, $\dupelim(\OR_L)$ (for
    $\OR_L$ a po-relation that represents $L$) fails
    unless, for each tuple value, the occurrences of that tuple value in
    $\OR_L$ is an id-set. Now, in such an $L$, as the occurrences of each
    value are contiguous and the order relations reflected in
    $\mathrm{G}_\OR$ must be respected, $L$ is defined by a topological
    sort of $\mathrm{G}_\OR$ (and some topological sort of each id-set
    within each set of duplicates), so that $\dupelim(\OR_L)$ can also be
    obtained as the corresponding linear extension of $\OR'$. Hence, we
    have $\dupelim(\OR) \subseteq \pw(\OR')$, proving their equality and
    concluding the proof.
\end{proof}

We note that $\dupelim$ is not redundant with any of the other PosRA
operators, generalizing Theorem~\ref{thm:incomparable}:

\begin{theoremrep}\label{thm:incomparable2}
  No operator among those of \PosRA and
  $\dupelim$ can be expressed through a combination of the
others.
\end{theoremrep}

\begin{proof}
  This is shown in the proof of Theorem~\ref{thm:incomparable} in
  Appendix~\ref{apx:incomparableproof}.
\end{proof}

Last, we observe that $\dupelim$ can indeed be used to undo some of the effects
of bag semantics. For instance, we can show the following:

\begin{toappendix}
    We also use the value-equality quotient to show:
\end{toappendix}

\begin{propositionrep}\label{prp:dupunion}
    For any po-relation $\OR$, we have $\dupelim(\OR \cup \OR) = \dupelim(\OR)$:
    in particular, one completely fails iff the other does.
\end{propositionrep}

\begin{proof}
    Let $G_\OR$ be the value-equality quotient of $\OR$ and $G'_\OR$ be the
    value-equality quotient of $\OR \cup \OR$. It is easy to see that these two
    graphs are identical: any edge of~$G_\OR$ witnesses the existence of the same
    edge in~$G'_\OR$, and conversely any edge in $G'_\OR$ must correspond to a
    comparability relation between two tuples of one of the copies of~$\OR$ (and
    also in the other copy, because they are two copies of the same relation), so
    that it also witnesses the existence of the same edge in~$\OR$. Hence, one
    duplicate elimination operation completely fails iff the other does, because
    this is characterized by acyclicity of the value-equality quotient (see
    Proposition~\ref{prp:pocycle}). Further, by Theorem~\ref{thm:duelim-por}, as
    duplicate elimination is constructed from the value-equality quotient, we have
    indeed the equality that we claimed.
\end{proof}

We can also show that most of our previous tractability results still apply when
the duplicate elimination operator is added:

\begin{toappendix}
  \subsection{Possibility and Certainty Results}
    \label{apx:posscertdupes}
    
    We first clarify the semantics of query evaluation when complete failure
    occurs: given a query~$Q$ in \PosRA{} extended with $\dupelim$, and given a po-database $D$,
    if complete failure occurs at any
    occurrence of the $\dupelim$ operator when evaluating $Q(D)$, we set
    $\pw(Q(D)) \defeq \emptyset$,
    pursuant to our choice of defining query evaluation on po-relations as
    yielding all possible results on all possible worlds.
    If $Q$ is
    a \posRAagg{} query extended with $\dupelim$, we likewise say that its
    possible accumulation results are $\emptyset$.

    This implies that for any \PosRA query $Q$ extended with $\dupelim$,
    for any input po-database $D$, and
    for any candidate possible world $v$, the \poss and \cert problems for $Q$ are
    vacuously false on instance $(D, v)$
    if complete failure occurs at any stage when evaluating $Q(D)$. The same holds
    for \PosRAacc queries.

    \subsubsection{Proof of Theorem~\ref{thm:easypossdupelim}: Adapting the Results of
        Section~\ref{sec:posscert}--\ref{sec:fpt}}
\end{toappendix}

\begin{theoremrep}\label{thm:easypossdupelim}
  All \poss and \cert tractability results of
  Sections~\ref{sec:posscert}--\ref{sec:fpt}, 
  except Theorem~\ref{thm:aggregnoprod} and
  Theorem~\ref{thm:aggregnoproda},
  extend to \PosRA and \PosRAagg where we allow $\dupelim$ (but impose
  the same restrictions on query operators, accumulation, and input
  po-relations).
\end{theoremrep}

\begin{toappendix}
    All complexity upper bounds in Sections~\ref{sec:posscert}--\ref{sec:fpt} are proved by first
    evaluating the query result in PTIME using Proposition~\ref{prp:repsys}. So we
    can still evaluate the query in PTIME, using in addition
    Theorem~\ref{thm:duelim-por}. Either complete failure occurs at some point in
    the evaluation, and we can
    immediately solve \poss and \cert by our initial remark above, or no complete failure occurs and we obtain
    in PTIME a po-relation on which to solve \poss and \cert. Hence, in what
    follows, we can assume that no complete failure occurs at any stage.

    Now, except Theorems~\ref{thm:aggregnoprod} and
    Theorem~\ref{thm:aggregnoproda}, the only assumptions that are
    made on the po-relation obtained from query evaluation are proved using the
    following facts:

    \begin{itemize}
        \item For all theorems in Section~\ref{sec:posscert}, for
          Theorem~\ref{thm:certaintyptimec}, and for
          Proposition~\ref{prop:otherdef}, no assumptions are
        made, so the theorems continue to hold.
        \item For Theorem~\ref{thm:aggregw} and Theorem~\ref{thm:aggregwa}, that the property of having a constant
        width is preserved during \Plex query evaluation, using
        Proposition~\ref{prp:lexwidth}.
    \end{itemize}

    Hence, Theorem~\ref{thm:easypossdupelim} follows from the following
        width preservation result:

    \begin{proposition}
        \label{prp:wdupe}
        For any constant $k \in \NN$ and po-relation $\OR$ of width $\leq k$, if
        $\dupelim(\OR)$ does not completely fail then it has width $\leq k$.
    \end{proposition}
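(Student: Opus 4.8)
The plan is to reduce immediately to the explicit representation of $\dupelim(\OR)$ supplied by Theorem~\ref{thm:duelim-por}. Since $\dupelim(\OR)$ does not completely fail, Proposition~\ref{prp:pocycle} gives that the value-equality quotient $\mathrm{G}_\OR=(\ID',E)$ is acyclic, and the proof of Theorem~\ref{thm:duelim-por} exhibits a po-relation $\OR'\defeq(\ID',T',{<'})$ with $pw(\OR')=\dupelim(\OR)$, where $\ID'$ is the quotient of~$\ID$ by value equality and $<'$ is the transitive closure of~$E$. Because $\OR'$ has no duplicate tuple values (one identifier per distinct value), its partial order—hence its width—is determined by its set of possible worlds, so ``the width of $\dupelim(\OR)$'' is well defined as $w(\OR')$, and it suffices to prove $w(\OR')\leq k$.

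The key observation I would use is that incomparability in $\OR'$ lifts to incomparability of \emph{all} representatives in $\OR$: if $\id'_1\neq\id'_2$ are classes of~$\ID'$ and there were $\id_1\in\id'_1$, $\id_2\in\id'_2$ with $\id_1<\id_2$ in~$\OR$, then by definition of~$E$ we have $(\id'_1,\id'_2)\in E$, hence $\id'_1<'\id'_2$, so $\id'_1$ and $\id'_2$ are comparable in~$\OR'$. Given any antichain $A'$ of~$\OR'$, I would then pick an arbitrary representative $r(\id')\in\id'$ for each $\id'\in A'$ and set $A\defeq\{r(\id')\mid\id'\in A'\}$. Since distinct classes are disjoint subsets of~$\ID$, the map $r$ is injective on~$A'$, so $\card{A}=\card{A'}$; and by the observation above, for distinct $\id'_1,\id'_2\in A'$ the identifiers $r(\id'_1),r(\id'_2)$ are incomparable in~$\OR$, so $A$ is an antichain of~$\OR$. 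Therefore $\card{A'}=\card{A}\leq w(\OR)\leq k$, and since $A'$ was an arbitrary antichain of~$\OR'$ we conclude $w(\OR')\leq k$.

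There is essentially no serious obstacle here: once Theorem~\ref{thm:duelim-por} is invoked, this is a short direct argument (and it parallels the ``selection'' case in the proof of Proposition~\ref{prp:lexwidth}, since $\dupelim$ can only merge identifiers and delete comparabilities at the level of the quotient). The only point requiring a little care is getting the direction of the lifting right—using that $<'$, being the transitive closure of~$E$, contains every one-step comparability, so that incomparable classes force incomparable representatives—together with the preliminary remark that the width of $\dupelim(\OR)$ is unambiguous because the canonical representing po-relation $\OR'$ is duplicate-free.
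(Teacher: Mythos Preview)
Your proposal is correct and takes essentially the same approach as the paper: take an antichain of the quotient $\OR'$, pick one representative per class to form a set $A$ in~$\OR$, and observe that any comparability $\id_1<\id_2$ between representatives would yield an edge of~$\mathrm{G}_\OR$ and hence a comparability of their classes in~$\OR'$, contradicting the antichain assumption. Your write-up is in fact slightly more explicit than the paper's (you spell out injectivity of the representative map and the direction of the lifting), but the argument is the same.
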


    \begin{proof}
        It suffices to show that to every antichain $A$ of $\dupelim(\OR)$ corresponds an
        antichain $A'$ of the same cardinality in $\OR$. Construct $A'$ by picking a
        member of each of the classes of~$A$. Assume by contradiction that $A'$ is
        not an antichain, hence, there are two tuples $t_1 < t_2$ in~$A'$, and
        consider the corresponding classes $\id_1$ and $\id_2$ in $A$. By our
        characterization of the possible worlds of $\dupelim(\OR)$
        in the proof of Theorem~\ref{thm:duelim-por}
        as obtained from the topological sorts
        of the value-equality quotient $\mathrm{G}_\OR$ of $\OR$, as $t_1 < t_2$ implies
        that $(\id_1, \id_2)$ is an edge of $\mathrm{G}_\OR$, we conclude that we have $\id_1
        < \id_2$ in $A$, contradicting the fact that it is an antichain.
    \end{proof}

    We conclude by illustrating that Theorem~\ref{thm:aggregnoprod} cannot be
    adapted as-is, because the preservation result that it uses does not adapt to
    the $\dupelim$ operator.
    \begin{example}
        \label{exa:noprodfails}
        Fix $n \in \NN$. Consider the totally ordered relation $R \defeq \ordern{n}$; it
        has width~$1$.
        Consider the po-relation $S = (\ID, T, <)$ that consists of $n$ pairwise incomparable
        identifiers $\id^\uparrow_1,
        \ldots, \id^\uparrow_n$ whose images by $T$ are respectively $1, \ldots, n$, and $n$ pairwise
        incomparable identifiers $\id_1^\downarrow, \ldots, \id_n^\downarrow$ with pairwise distinct fresh
        values, with the order relation $\id_i^\uparrow < \id_j^\downarrow$ for all $1 \leq i, j \leq n$;
        The po-relation $S$ has ia-width~$2$, with the partition
        $(\{\id^\uparrow_i \mid 1 \leq i \leq n\},
        \{\id^\downarrow_i \mid 1 \leq i \leq n\})$. Hence, $R \cupgen S$ would satisfy the
        hypotheses of Theorem~\ref{thm:aggregnoprod}. However, $R' \defeq \dupelim(R
        \cupgen S)$
        is the po-relation consisting of tuples $\id'_1, \ldots, \id'_n$ with values
        respectively $1, \ldots, n$, tuples $\id''_1, \ldots, \id''_n$ with the values of
        the $\id^\downarrow_i$, and the order relation $\id'_i < \id'_j$ iff $i
        < j$ and $\id'_i < \id''_j$
        for all $1 \leq i, j \leq n$.

        We now observe that, for every partition of $R'$ into two sets, there is a
        comparability relation going from one set to the other. Hence, $R'$ cannot
        be written as the union of two non-empty po-relations. Yet, $R'$ has width
        $n$, as witnessed by the $\id''_i$, and it has ia-width $n$, as witnessed by
        the $\id'_i$.

        This illustrates that, when performing duplicate consolidation on the union of a
        constant-width po-relation and of a constant-ia-width po-relation, we cannot
        hope that the result has constant width, or constant ia-width, or can be
        written as the union of two relations where each has one of these properties.
    \end{example}
\end{toappendix}

Furthermore, if in a set-semantics spirit we {\em require} that the
query output has no duplicates, \poss and \cert are always tractable (as this
avoids the technical difficulty of Example~\ref{exa:notposet}):

\begin{toappendix}
    \subsubsection{Proof of Theorem~\ref{thm:posscertnodupes}: \poss and \cert
    After Removing Duplicates}
\end{toappendix}

\begin{theoremrep}\label{thm:posscertnodupes}
    For any \PosRA query $Q$, \poss and \cert for $\dupelim(Q)$
    are in PTIME.
\end{theoremrep}

\begin{proof}
    Let $D$ be an input po-relation, and $L$ be the candidate possible world
    (a list relation).
    We compute the po-relation $\OR'$ such that $\pw(\OR')= Q(D)$ in PTIME using
    Proposition~\ref{prp:repsys}
    and the po-relation $\OR \defeq \dupelim(\OR')$ in PTIME using
    Theorem~\ref{thm:duelim-por}. If duplicate elimination fails, we vacuously
    reject for \poss and \cert, following the remark at the beginning of
    Appendix~\ref{apx:posscertdupes}.
    Otherwise, the result is a po-relation $\OR$, with the property that each tuple
    value is realized exactly once, by definition of $\dupelim$. Note that we can
    reject immediately if $L$ contains multiple occurrences of the same tuple, or
    does not have the same underlying set of tuples as~$\OR$; so we assume that $L$ has the
    same underlying set of tuples as~$\OR$ and no duplicate tuples.

    The \cert problem is in PTIME on $\OR$ by Theorem~\ref{thm:certptime}, so
    we need only study the case of \poss, namely, decide whether $L \in
    \pw(\OR)$.
    Let $\OR_L$ be a po-relation that represents~$L$.
    As $\OR_L$ and $\OR$ have no duplicate tuples, there is only one way to
    match each identifier of $\OR_L$ to an identifier of $\OR$.
    Build $\OR''$ from $\OR$ by adding, for each pair $\id_i <_L \id_{i+1}$ of consecutive
    tuples of~$\OR_L$, the order constraint $\id_i'' {<''} \id''_{i+1}$ on the
    corresponding identifiers in $\OR''$. We claim that $L \in \pw(\OR)$ iff the
    resulting $\OR''$
    is a po-relation, i.e., its transitive closure is still
    antisymmetric, which can be tested in PTIME by computing the strongly connected
    components of $\OR''$ and checking that they are all trivial.

    To see why this
    works, observe that, if the result $\OR''$ is a po-relation, it is a total
    order, and so it describes a way to achieve $L$ as a linear extension of
    $\OR$ because it doesn't contradict any of the comparability relations
    of $\OR$.
    Conversely, if $L \in \pw(\OR)$, assuming to the contrary the existence of a
    cycle in $\OR''$, we observe that such a cycle must consist of order relations
    of $\OR$ and $\OR_L$, and the order relations of $\OR$ are reflected in
    $\OR_L$ as it is a
    linear extension of $\OR$, so we deduce the existence of a cycle in $\OR_L$,
    which is impossible by construction. Hence, we have reached a contradiction,
    and we deduce the desired result.
\end{proof}

\begin{toappendix}
    \subsection{Alternative Semantics for Duplicate Elimination}

A main downside of our
proposed semantics for $\dupelim$ is the fact that complete failure is
allowed. We conclude 
by briefly considering alternative
semantics that avoid failure, and illustrate the other
problems that they have.

A first possibility is to do a \emph{weak} form of duplicate
elimination: keep one 
element for each \emph{maximal
id-set}, rather than for each value, and leave some duplicates in
the output:

\begin{example}
  \label{exa:dupelim2}
  Letting $A \neq B$ be two tuples,
  let us consider a po-relation $\OR_L$ representing the list relation $L \defeq (A, B, B, A)$. With
  weak duplicate elimination, 
  we would have $\dupelim(\OR_L) =
  (A, B, A)$.
\end{example}

However, when generalizing this semantics from totally ordered relations to
po-relations, we notice that the result of $\dupelim$ on a
po-relation may not be representable as a po-relation, since possible worlds differ in their tuples and not only on their order:

\begin{example}
  Consider the po-relation $\OR = (\{a_1, b, a_2\}, T, {<})$ with $T(a_1) = T(a_2) = A$
  and $T(b) = B$, where $A \neq B$ are tuples, and $<$ defined by $a_1
  < b$ and $a_1 < a_2$.
  We have $\pw(\OR) = \{(A, B, A), (A, A, B)\}$ and 
  $\dupelim(\OR) = \{(A, B, A), (A, B)\}$ for \emph{weak} duplicate
  elimination:
  we cannot represent it as a po-relation (the underlying relation is
  not certain).
\end{example}

A second possibility is to do an \emph{aggressive} form of duplicate
elimination: define $\dupelim(L)$ for a list relation~$L$ as the set
of \emph{all} totally ordered relations that we can obtain by
picking one representative element for each value, even
when the representatives are not indistinguishable. In other
words, we do not fail even if we cannot reconcile the order
between duplicate tuples:

\begin{example}
 Applying \emph{aggressive} $\dupelim$ to $\OR_L$ from Example~\ref{exa:dupelim2} yields
  $\{(A, B), (B, A)\}$.
\end{example}

However, again
$\dupelim(\OR_L)$ may not be
representable as a po-relation, this time because the set of possible orders may not correspond to a partial order:

\begin{example}
  Consider  a po-relation $\OR_L$ representing the list relation
  $L \defeq (A, C, B, C, A)$
  with distinct tuples $A$, $B$, $C$. Then
  $\dupelim(\OR_L)$ is $\{(A, C, B),
  (A, B, C), (B, C, A), (C, B, A)\}$. No po-relation $\OR$ satisfies
  $\pw(\OR) = \dupelim(L)$, because no comparability pair holds in all
  possible worlds, so $\OR$ must be unordered, but then all
  permutations of $\{A, B, C\}$ are possible worlds of~$\OR$, which is
  unsuitable because
  some of the six permutations of $\{A, B, C\}$ are not possible worlds.
\end{example}

We leave for future work the question of designing a practical semantics
for duplicate consolidation that can be incorporated in our framework
and avoids failure.

\end{toappendix}

\subparagraph*{Discussion.}  The introduced group-by and duplicate elimination operators have some  shortcomings: the result of group-by is in general not representable by po-relations, and duplicate elimination may fail. These are both consequences of our design choices, where we capture only uncertainty on order (but not on tuple values) and design each operator so that its result corresponds to the result of applying it to each individual world of the input (see further discussion in Section \ref{sec:compare}). Avoiding these shortcomings is left for future work.

\section{Comparison With Other Formalisms}\label{sec:compare}
\begin{toappendix}
  \label{apx:relatedwork}
\end{toappendix}

We next compare our formalism to previously proposed formalisms:
query languages over bags (with no order); a query language for
partially ordered multisets; and other related work. 
To our knowledge, however, none of these works studied the possibility or
certainty problems for partially ordered data, so that our technical results do not follow from them.

\subparagraph*{Standard bag semantics.}
We first compare to related work on the \emph{bag semantics} for relational
algebra. Indeed, a natural desideratum for our semantics on
(partially) ordered relations is that it should be a faithful extension
of bag semantics. We first consider the $\BALG^{1}$ language on bags
\cite{GrumbachMiloBags} (the ``flat fragment'' of their language $\BALG$ on
nested relations).
We denote by $\BALG_{+}^{1}$ the fragment of~$\BALG^{1}$,
that includes the standard extension of positive relational algebra
operations to bags: additive union, cross product, selection,
and projection.
We observe that, indeed, our semantics
faithfully extends $\BALG_{+}^{1}$:
\emph{query
evaluation commutes with ``forgetting'' the order}. Formally, for a po-relation $\OR$, we
denote by $\bag(\OR)$ its underlying bag relation, and define likewise 
$\bag(D)$ for a po-database~$D$ as the database of the underlying bag relations.
For the
following comparison, we identify $\times_{\dir}$ and
$\times_{\lex}$ with the $\times$ of~\cite{GrumbachMiloBags} and our
union with the additive union of~\cite{GrumbachMiloBags}, and then the following trivially holds:
\begin{propositionrep}\label{prop:faithful}
    For any \PosRA query $Q$ and a po-relation $D$, $\bag(Q(D))=Q(\bag(D))$ where
$Q(D)$ is defined according to our semantics and $Q(\bag(D))$ is defined by
$\BALG_{+}^{1}$.
\end{propositionrep}

\begin{proof}
    There is an exact correspondence in terms of the output bags between
    additive union and our union; between cross product and $\times_{\dir}$ and
    $\times_{\lex}$ (both our product operations yield the same bag as output, for
    any input); between our selection and that of $\BALG_{+}^{1}$, and similarly
    for projection (as noted before the statement of
    Proposition~\ref{prop:faithful} in the main text, a technical subtlety is that the projection
    of $BALG$ can only project on a single attribute, but one can encode
    ``standard'' projection on multiple attributes).
     The proposition follows by induction on the query structure.
\end{proof}

The full $\BALG^{1}$ language includes additional operators, such as bag
intersection and subtraction, which are non-monotone and as such may not be
expressed in our language: it is also unclear how they could be extended to our
setting (see further discussion in ``\textsf{Algebra on pomsets}'' below).
On the other hand, $\BALG^{1}$ does not include
aggregation, and so \posRAagg and $\BALG^{1}$ are incomparable in terms of
expressive power.

A better yardstick to compare against for accumulation could be~\cite{libkin1997query}:
they show that their basic language $\BQL$ is equivalent
to $\BALG$, and then further extend the language with aggregate operators, to
define a language called $\NRLaggr$ on nested relations.
On flat relations,
$\NRLaggr$
captures
functions that cannot be captured in our
language: in particular the average function \textsf{AVG} is non-associative and thus cannot be captured by
our accumulation function (which anyway focuses on order-dependent functions, as
\poss/\cert are trivial otherwise).
On the other hand, $\NRLaggr$ cannot test 
parity 
(Corollary~5.7 in \cite{libkin1997query}) whereas this is easily captured by our accumulation operator.
We conclude that $\NRLaggr$ and \posRAagg are incomparable in terms of captured transformations on bags, even when restricted to flat relations.

\subparagraph*{Algebra on pomsets.} We now compare our work to algebras defined
on \emph{pomsets}~\cite{grumbach1995algebra,grumbach1999algebra}, which also 
attempt to bridge partial order theory and data management (although, again, they do not
study possibility and certainty). 
\emph{Pomsets} are labeled posets quotiented by
isomorphism (i.e., renaming of identifiers), like po-relations.
A
major
conceptual difference between our formalism and that
of~\cite{grumbach1995algebra,grumbach1999algebra} is that their language focuses on
processing {\em connected components} of the partial order graph,
and their operators are tailored for that semantics. As a
consequence, their semantics is {\em not} a faithful extension of bag
semantics, i.e., their language would not satisfy the counterpart of Proposition \ref{prop:faithful} (see for instance the semantics of union in \cite{grumbach1995algebra}). 
By contrast, we manipulate po-relations that stand for sets of possible list
relations, and our operators are designed accordingly, unlike those of
\cite{grumbach1995algebra} where transformations take into account the structure
(connected components) of the entire poset graph.
Because of this choice, \cite{grumbach1995algebra} introduces
non-monotone operators that we cannot express, and 
can design a duplicate elimination operator that cannot fail. Indeed, the possible failure of our duplicate elimination operator is a direct consequence of its semantics of operating on each possible world, possibly leading to contradictions.

If we consequently disallow duplicate elimination in both languages for the sake
of comparison, we note that the resulting fragment $\PomAlgEps$ of the language
of \cite{grumbach1995algebra} can yield only series-parallel output
(Proposition~4.1 of~\cite{grumbach1995algebra}), unlike \PosRA
queries whose output order may be arbitrary (see
Appendix~\ref{apx:relatedwork}). Hence, $\PomAlgEps$ does not subsume
\PosRA.

\begin{toappendix}
We formally prove that the output of a \PosRA query can be arbitrary:

\begin{propositionrep}\label{prp:gen}
  For any po-relation $\OR$, there is a \PosRA query $Q$ with no inputs
  s.t. $Q() = \OR$.
\end{propositionrep}

  To prove the result, we will need the notion of a \emph{realizer} of a
  poset:
  
  \begin{definition} \label{def:realizer}
  \cite{schroder2003ordered}
  Letting $P = (V, <)$ be a poset, we say that a set of total orders
  $(V, {<_1}), \ldots, (V, {<_n})$ is a \deft{realizer} of $P$ if for every
  $x, y \in V$, we have $x < y$ iff $x <_i y$ for all~$i$.
\end{definition}

  We will use this notion for the following lemma. This lemma is given as Theorem~9.6 of~\cite{hiraguchi1955dimension},
see also~\cite{ore1962theory}; we rephrase it in our vocabulary, and for
  convenience we also give a self-contained proof.

  \begin{lemma}
    \label{lem:dimprodoneway}
    Let $n \in \NN$, and let
    $(P, <_P)$ be a poset that has a realizer $(L_1, \ldots, L_n)$ of size~$n$.
    Then $P$ is isomorphic to a subset $\OR'$ of $\OR = \ordern{l} \times_{\gen}
    \cdots \times_{\gen} \ordern{l}$, with $n$ factors in the product, for some integer $l \in \mathbb{N}$ (the
    order on $\OR'$ being the restriction on that of $\OR$).
  \end{lemma}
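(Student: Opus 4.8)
The plan is to construct the embedding explicitly from the realizer, using rank functions. Let $V$ be the underlying set of $P$ and set $l \defeq \card{V}$; the case $V = \emptyset$ is trivial (the empty poset is isomorphic to the empty po-relation $\ordern{0} \times_{\gen} \cdots \times_{\gen} \ordern{0}$), so assume $V \neq \emptyset$. For each $1 \leq i \leq n$, since $L_i = (V, {<_i})$ is a total order on $V$, every $x \in V$ has a well-defined \emph{rank} $r_i(x) \in \{1, \dots, l\}$, namely its position in $L_i$; each map $r_i \colon V \to \{1, \dots, l\}$ is a bijection, and by construction $x <_i y$ iff $r_i(x) < r_i(y)$.

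Next I would reconcile the notation. Recall that $\ordern{l}$ has arity~$1$, one identifier per value in $\{1, \dots, l\}$, and a total order in which the identifier of value~$a$ precedes that of value~$b$ iff $a < b$. Hence the underlying poset of $\OR \defeq \ordern{l} \times_{\gen} \cdots \times_{\gen} \ordern{l}$ (with $n$ factors) can be identified with $\{1, \dots, l\}^n$ equipped with the componentwise strict order: $(a_1, \dots, a_n) <_\OR (b_1, \dots, b_n)$ iff $a_i < b_i$ for all $1 \leq i \leq n$, which is exactly the direct product order $\times_{\ptw}$ of the $n$ copies. I then define $\phi \colon V \to \{1, \dots, l\}^n$ by $\phi(x) \defeq (r_1(x), \dots, r_n(x))$ and take $\OR' \defeq \phi(V)$ with the order restricted from $\OR$.

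It then remains to check that $\phi$ is an isomorphism of posets onto $\OR'$. For injectivity: if $\phi(x) = \phi(y)$ then $r_1(x) = r_1(y)$, and since $r_1$ is a bijection, $x = y$. For the order: if $x <_P y$, then since $(L_1, \dots, L_n)$ is a realizer we have $x <_i y$, hence $r_i(x) < r_i(y)$, for every $i$, so $\phi(x) <_\OR \phi(y)$; conversely, if $\phi(x) <_\OR \phi(y)$ then $r_i(x) < r_i(y)$, hence $x <_i y$, for every $i$, and the realizer property gives back $x <_P y$. Thus $\phi$ is a bijection from $P$ onto $\OR'$ that preserves and reflects the order, i.e.\ an isomorphism, which is the claim.

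I do not expect a real obstacle here: this is essentially the classical fact (Hiraguchi) that a poset of order dimension at most $n$ embeds into a product of $n$ chains. The only points needing care are (i) matching our notation for $\ordern{l}$ and $\times_{\gen}$ to the componentwise order on $\{1, \dots, l\}^n$, and (ii) observing that all factors may be taken of the same length $l = \card{V}$ — no padding is needed since each $r_i$ already ranges over all of $\{1, \dots, l\}$ — together with the trivial empty-poset boundary case noted above.
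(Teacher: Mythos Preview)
Your proposal is correct and follows essentially the same approach as the paper: both set $l = \card{P}$, define the embedding by sending each element to its tuple of positions (ranks) in the realizer's linear orders, and then verify that the componentwise strict order of $\times_{\gen}$ matches the realizer condition. Your write-up is slightly more explicit (handling the empty case and spelling out that each $r_i$ is a bijection), but there is no substantive difference.
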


  \begin{proof}
    We define $\OR$ by taking $l \defeq \card{P}$,
    and we
    identify each
    element $x$ of $P$ to $f(x) \defeq (n_1^x, \ldots, n_n^x)$, where
$n^x_i$ is
    the position where $x$ occurs in $L_i$. Now, for any $x, y \in P$, we have $x <_P y$ iff $n_i^x < n_i^y$ for
    all $1 \leq i \leq n$ (that is, $x <_{L_i} y$), hence iff
    $f(x) <_\OR f(y)$: this uses the fact that there are no two elements $x \neq y$
    and $1 \leq i \leq n$
    such that the $i$-th components of $f(x)$ and of $f(y)$ are the same.
    Hence, taking $\OR'$ to be the image of $f$ (which is
    injective), $\OR'$ is indeed isomorphic to $P$.
  \end{proof}

  We are now ready to prove Proposition~\ref{prp:gen}:
\begin{proof}[Proof of Proposition~\ref{prp:gen}]
  We first show that for any
  poset $(P, <)$, there exists a \Pgen query $Q$
  such that the tuples of $\OR' \defeq Q()$ all have unique
  values and the underlying poset of $\OR'$ is $(P, <)$.
  Indeed, we can take $d$ to be the \emph{order dimension} of
  $P$, which is necessarily finite~\cite{schroder2003ordered}, and then by
  definition $P$ has a realizer of size~$d$. By
  Lemma~\ref{lem:dimprodoneway}, there is an integer
  $l \in \mathbb{N}$ such that $\OR'' \defeq \ordern{l} \times_\gen \cdots
  \times_\gen \ordern{l}$ (with $n$ factors in the product)
  has a subset $S$ isomorphic to $(P, <)$. Hence, letting $\psi$ be a tuple predicate
  such that $\sigma_\psi(\OR'') = S$ (which can clearly be constructed by
  enumerating the elements of $S$), the query $Q' \defeq \sigma_\psi(\OR'')$ proves the
  claim, with $\OR''$
  expressed as above.

  Now, to prove the desired result from this claim, build $Q$ from $Q'$ by
  taking its join (i.e., $\times_\lex$-product, selection,
projection) with a union of singleton
  constant expressions that map each unique tuple value of $Q'()$ to the desired
  value of the corresponding tuple in the desired po-relation~$\OR$. This
  concludes the proof.
\end{proof}
\end{toappendix}

\label{sec:related}
\subparagraph*{Incompleteness in databases.}
Our work is inspired by the field of incomplete information management, which
has been studied for
various models~\cite{Incompletexml,Libkin06}, in particular
relational databases~\cite{IL84}. This field inspires our design
of po-relations
and our study of
possibility and certainty~\cite{KO06,lipski1979semantic}. However, uncertainty in
these settings typically focuses on \emph{whether} tuples exist
or on what their \emph{values} are (e.g., with nulls~\cite{codd1979extending},
including the novel approach of~\cite{libkin2014incompleteness,libkin2015icdt};
with c-tables~\cite{IL84}, probabilistic
databases~\cite{probdbbook} or fuzzy numerical values as
in~\cite{soliman2009ranking}). To our knowledge, though, our work is the first to study possible and
certain answers in the context of 
{\em
order}-incomplete data. Combining order incompleteness with standard
tuple-level uncertainty is left as a challenge for future work.
Note that some works~\cite{BunemanJO91,libkin1998semantics,libkin2015icdt} use
partial orders on \emph{relations} to compare the
informativeness of representations. This is
unrelated to our partial orders on \emph{tuples}.

\subparagraph*{Ordered domains.} Another line of work has studied
relational data management where the \emph{domain elements} are
(partially) ordered \cite{Immermanptime,Ng,van1997complexity}. However, the perspective
is different: we see order on tuples as part of the relations, and
as being constructed by applying our operators; these works see
order as being given \emph{outside} of the query, hence do not study the propagation of uncertainty through queries.
Also, queries in such works can often directly access the order
relation ~\cite{van1997complexity,BenS2009}. Some works also study uncertainty on totally ordered
\emph{numerical}
domains~\cite{soliman2009ranking,soliman2010supporting}, while
we look at general order relations.

\subparagraph*{Temporal databases.}
\emph{Temporal
databases}~\cite{chomicki2005time,snodgrass2000developing} consider
order on facts, but it is usually induced by timestamps, hence
total. A notable exception is~\cite{fan2012determining} which
considers that some facts may be \emph{more current} than others,
with constraints leading to a partial order. In particular, they
study the complexity of retrieving query answers that are certainly
current, for a rich query class. In contrast, we can
{\em manipulate} the order via queries, and we can also ask
about aspects beyond currency, as shown throughout the paper
(e.g., via accumulation).

\subparagraph*{Using preference information.} Order theory has been also used to handle \emph{preference information} in
database systems~\cite{jacob2014system,arvanitis2014preferences,
kiessling2002foundations,alexe2014preference, stefanidis2011survey},
with some operators being the same as ours, and for \emph{rank
aggregation}~\cite{Fagin,jacob2014system,dwork2001rank}, i.e. retrieving top-$k$ query answers given multiple
rankings. However, such works typically try to \emph{resolve}
uncertainty by reconciling many conflicting representations (e.g.
via knowledge on the individual scores given by different sources
and a function to aggregate them \cite{Fagin}, or a preference
function~\cite{alexe2014preference}). In contrast, we focus on maintaining a faithful model of
\emph{all} possible worlds without reconciling them, studying possible and certain answers in this respect.

\section{Conclusion}
\nosectionappendix
\label{sec:conclusion}
This paper introduced an algebra for order-incomplete data. We have studied the complexity of possible and certain
answers for this algebra, have shown the problems to be generally intractable, and identified
several tractable cases. In future work we plan to study the incorporation of additional operators (in particular non-monotone ones), investigate how to combine order-uncertainty with uncertainty on values, and study additional semantics for $\dupelim$. Last, it would be interesting to establish a dichotomy
result for the complexity of \poss, and a
complete syntactic characterization of cases where \poss is tractable.

\subparagraph*{Acknowledgements.} We are grateful to Marzio De
Biasi, P\'alv\"olgyi D\"om\"ot\"or, and Mikhail Rudoy, from
\url{cstheory.stackexchange.com}, for helpful suggestions.
This research was partially supported by the Israeli Science Foundation (grant 1636/13) and the Blavatnik~ICRC.

\begin{toappendix}
  \section{Discussion of Changes in this Version}
\label{apx:errata}

In the process of preparing a journal version of this
paper~\cite{amarilli2018computing}, we have discovered a flaw in the proof of
some of our tractability results on ia-width. We have accordingly
removed these results from~\cite{amarilli2018computing}
and from the present version of this paper.
However, the results still survive in the first version of this paper on
arXiv~\cite{amarilli2017possiblev1} and in the published
version in the TIME proceedings~\cite{amarilli2017possible}. In this appendix,
we list the affected theorems, point out the source of the error, and discuss our
current understanding of their correctness.

\subparagraph*{Affected theorems.}
The affected theorems are numbered as follows in the TIME proceedings version~\cite{amarilli2017possible} and in
the main text of~\cite{amarilli2017possiblev1}:

\begin{itemize}
  \item Theorem 19: tractability of \poss for any \PosRA query on po-databases
  of unordered po-relations.
  \item Theorem 22: tractability of \poss for any \PosRA query on po-databases
  of bounded-ia-width po-relations.
  \item Theorem 30: tractability of \poss and \cert for any \PosRAacc query on 
  po-databases of bounded-ia-width po-relations.
\end{itemize}

\subparagraph*{Source of the error.}
The error is in Proposition~66 of~\cite{amarilli2017possiblev1}.
This proposition claims
that, for any \PosRA query $Q$ and $k \in \NN$, there is a bound $k' \in \NN$
such that, for any po-database $D$ of po-relations of ia-width $\leq k$, the
po-relation $Q(D)$ has ia-width $\leq k'$. The proof is by induction, but in the
case of the product operators $\times_\lex$ and $\times_\gen$, the argument
does not correctly reflect the behavior of the product operators. For this
reason, the proof of the proposition is incorrect, and this affects the theorems
listed previously, because their proofs rely on Proposition~66.

\subparagraph*{Status of the results.}
It is easy to see that the \emph{statement} of Proposition~66 fails to hold:

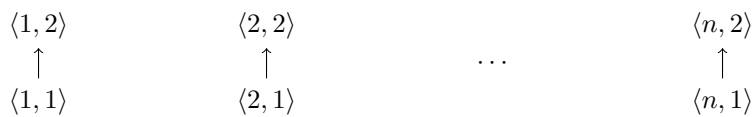
\begin{figure}
  \centering
  \begin{tikzpicture}[xscale=3,->]
    \node (a11) at (0, 0) {$\langle 1, 1 \rangle$};
    \node (a12) at (0, 1) {$\langle 1, 2 \rangle$};
    \node (a21) at (1, 0) {$\langle 2, 1 \rangle$};
    \node (a22) at (1, 1) {$\langle 2, 2 \rangle$};
    \node (d) at (2, .5) {$\cdots$}; , 
    \node (an1) at (3, 0) {$\langle n, 1 \rangle$};
    \node (an2) at (3, 1) {$\langle n, 2 \rangle$};
    \draw (a11) -- (a12);
    \draw (a21) -- (a22);
    \draw (an1) -- (an2);
  \end{tikzpicture}
  \caption{Illustration of the Hasse diagram of $Q_\lex(D_n)$ in Example~\ref{exa:noiawidth}}
  \label{fig:noiawidth}
\end{figure}

\begin{example}
  \label{exa:noiawidth}
  For any $n \in \NN$, consider the po-relation $\OR_n = (\ID_n, T_n,
  <_n)$ with $\ID_n = \{1, \ldots, n\}$, with $T_n$ being the identity function,
  and with $<_n$ being empty. As $\OR_n$ is unordered, it has ia-width~$1$.
  Consider the \PosRA query $Q_\lex \colonequals R \times_\lex \ordern{2}$.
  Call $D_n$ the po-database
  interpreting relation name~$R$ by~$\OR_n$, and let
  $\OR_{n,\lex} \colonequals Q_\lex(D_n)$.
  The set of identifiers of~$\OR_{n,\lex}$ is $\{(i, j) \mid 1 \leq i
  \leq n, 1 \leq j \leq 2\}$ (where we use $1$ and $2$ as the identifiers
  of the tuples in~$\ordern{2}$), and the order relation~$<$ is defined as
  follows (see Figure~\ref{fig:noiawidth} for an illustration):

  \begin{itemize}
  \item For all $1 \leq i \leq n$, we have $(i, 1) < (i, 2)$;
  \item For all $i \neq j$ in $\{1, \ldots, n\}$, for all $p, q \in \{1, 2\}$,
  the tuples $(i, p)$ and $(j, q)$ are incomparable.
  \end{itemize}

  We now show that the ia-width of $\OR_{n,\lex}$
  is equal to~$2n$, by arguing that there is no indistinguishable antichain
  containing two different identifiers. Indeed, consider any two identifiers
  $(i,p) \neq (j,q)$, assume 
  that there is an indistinguishable antichain $A$ that contains both of them,
  and let us show a contradiction.
  If $i = j$, then the identifiers are comparable, so they cannot both
  occur in~$A$, contradicting our assumption. Otherwise, letting $p' \colonequals 3-p$, we know that $(i,p)$ and $(i,p')$
  are comparable, so $(i,p')$ cannot be in~$A$. We now see that $(i,p')$
  violates indistinguishability for~$A$: we know that it is comparable
  to~$(i,p)$, but it is not comparable to~$(j,q)$ because $i \neq j$. Hence, we
  have a contradiction, and $(i,p)$ and $(j,q)$ cannot both occur in~$A$. So
  indeed the ia-width of $\OR_{n,\lex}$ is equal to~$2n$.

  Hence, we have an example of a \PosRA query using only the $\times_\lex$
  product for which the query
  result on a po-database of ia-width~$1$ can have unbounded ia-width. This
  contradicts the statement of Proposition~66.
  
  We note that we can also
  use the $\times_\gen$ product instead of~$\times_\lex$, e.g., with the query 
  $Q_\gen \colonequals R \times_\gen \ordern{2}$ and with the same
  construction: it is easy to see that $\OR_{n,\gen} \colonequals
  Q_\gen(D_n)$ is exactly equal to~$\OR_{n,\lex}$. Hence, Proposition~66
  fails even when restricted to \Plex or to \Pgen, which concludes the example.
\end{example}

We can also show the following result, which contradicts Theorem~19 and
Theorem~22 under the assumption that P is different from NP:

\begin{theorem}
  \label{thm:intractable}
  There is a \PosRA query $Q$ for which the \poss problem is NP-complete even
  when the input po-database is restricted to consist only of unordered
  po-relations.
\end{theorem}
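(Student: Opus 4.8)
The statement has two halves: \poss is in NP for any \PosRA query (that is exactly the upper bound of Theorem~\ref{thm:posscomp1}), so the whole point is the hardness half. The natural target is to reuse the \texttt{UNARY-3-PARTITION} reduction behind Proposition~\ref{prp:posshardsimple}, whose hard instance is the parallel composition of $3m$ chains, the $i$-th of which has shape $\s\,\n^{n_i}\,\e$. The difficulty specific to the present theorem is that the input relations must be \emph{unordered}, so none of this order is given: all of it has to be manufactured by a fixed query out of the constant chains $\ordern{k}$. Since $Q$ is fixed, the $\ordern{k}$'s have bounded length, so we cannot produce long chains directly; the trick is that $\times_\lex$ with an \emph{unordered} relation $R$ on the left produces $\card{R}$ independent parallel copies of whatever sits on the right, and a subsequent selection may perform a \emph{join} that carves each copy down to exactly the tuples it should contain. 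The variable lengths $n_i+2$ then come from tuple multiplicities in an unordered input relation, pinned to the right copy by the join.

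Concretely, I would fix relation names $R$ (arity~$1$) and $U$ (arity~$2$), fix three distinct values $\s,\n,\e\in\calD$, and take the fixed query
\[
  Q \defeq \Pi_{4}\big(\sigma_\psi\big(R \times_\lex (\ordern{3} \times_\lex U)\big)\big),
\]
where on the resulting arity-$4$ schema (attribute $1$ from $R$, attribute $2$ from $\ordern{3}$, attributes $3,4$ from $U$) the tuple predicate $\psi$ asks that attribute $1$ equals attribute $3$ and that attribute $2$ is $1$, $2$ or $3$ according to whether attribute $4$ is $\s$, $\n$ or $\e$ — a fixed Boolean combination of equalities between attributes and between attributes and constants, so $Q$ does not depend on the instance. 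Given a \texttt{UNARY-3-PARTITION} instance $(n_1,\dots,n_{3m},B)$ (NP-hard, \cite{garey-johnson}), I would build the unordered po-database $D$ with $D[R]=\{\langle i\rangle : 1\le i\le 3m\}$ and $D[U]$ containing, for each $i$, one tuple $\langle i,\s\rangle$, $n_i$ tuples $\langle i,\n\rangle$ and one tuple $\langle i,\e\rangle$ (all with empty order), and take as candidate list relation $L \defeq (\s^3\,\n^B\,\e^3)^m$; both are constructible in PTIME since the numbers are in unary.

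The verification then goes: since $R$ is unordered, $R \times_\lex (\ordern{3} \times_\lex U)$ is the parallel composition of $3m$ copies of $\ordern{3} \times_\lex U$, the $i$-th copy carrying value-prefix $i$; after $\sigma_\psi$ the $i$-th copy retains precisely the element coming from $\langle i,\s\rangle$ at level $1$, the $n_i$ elements coming from $\langle i,\n\rangle$ at level $2$, and the element from $\langle i,\e\rangle$ at level $3$, with all of level $1$ below all of level $2$ below level $3$ and nothing else; and after $\Pi_4$ the $i$-th copy becomes a $\s$-valued element below an antichain of $n_i$ $\n$-valued elements below an $\e$-valued element. Thus $Q(D)$ is the parallel composition of these $3m$ gadgets, which has the same possible worlds as the chain-based po-relation of Proposition~\ref{prp:posshardsimple} (the middle elements all share the value $\n$, so whether they form a chain or an antichain is irrelevant to $\pw$). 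Hence $L\in\pw(Q(D))$ iff the instance is positive, by the correctness argument already given there, and together with the NP upper bound this yields NP-completeness of \poss for the fixed $Q$ on unordered po-databases. The only delicate step — and exactly the place where the replaced ia-width results went wrong — is pinning down the shape of $Q(D)$: one must argue carefully that $\times_\lex$ over an unordered left argument genuinely yields disjoint parallel copies (recall Example~\ref{exa:noiawidth}, where such a product already blows up ia-width) and that $\psi$ behaves as the intended per-copy join; once that is done the reduction is immediate.
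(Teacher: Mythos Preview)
Your proposal is correct and in fact takes a more elementary route than the paper's own proof. The paper reduces from the $k$-clique problem via an intermediate result on $\a\b$-bipartite po-relations (Proposition~\ref{prp:bipartiteposs}), and the fixed query it constructs combines both $\times_\gen$ and $\times_\lex$ in a fairly intricate way. Your reduction from \texttt{UNARY-3-PARTITION} stays close to Proposition~\ref{prp:posshardsimple} and reuses its correctness argument essentially verbatim; the only new ingredient is the (correct) observation that $R \times_\lex X$ with unordered~$R$ is a parallel composition of $\card{R}$ copies of~$X$, so that the join with~$U$ carves each copy down to a single~$\s$, an antichain of $n_i$ copies of~$\n$, and a single~$\e$. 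The one step you flag as delicate---that replacing each $\n$-chain of Proposition~\ref{prp:posshardsimple} by an $\n$-antichain leaves $\pw$ unchanged even for the parallel composition of all gadgets---is correct: given a linear extension of the antichain version, within each gadget re-sort the $\n$'s into the chain order; this does not alter the list relation (all those identifiers carry value~$\n$) and preserves validity because the set of positions occupied by gadget~$i$'s $\n$'s is unchanged, so the $\s_i<\cdot$ and $\cdot<\e_i$ constraints still hold. A sentence to that effect would tighten the write-up.

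A noteworthy by-product: your query lives entirely in \Plex, whereas the paper's construction uses $\times_\gen$ essentially and the appendix explicitly records as open whether hardness on unordered inputs already holds for \Plex or \Pgen alone. Your argument settles the \Plex half of that question.
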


As for Theorem~30, we do
not know whether a corresponding intractability result can be shown, i.e.,
whether we can adapt Theorem~\ref{thm:intractable} to perform accumulation in a
\emph{finite} monoid rather than in the free monoid. We also note that the query
used to prove Theorem~\ref{thm:intractable} will use both~$\times_\dir$ and $\times_\lex$,
so we do not know whether a restriction of Theorem~19 or Theorem~22 to \Plex or
\Pgen could hold.

We will show Theorem~\ref{thm:intractable} in the rest of this appendix. 
Let $\a \neq \b$ be two distinguished domain values of~$\calD$.
We will reduce from an NP-hard problem on so-called \emph{$\a\b$-bipartite}
po-relations:

\begin{definition}
  Let $\OR = (\ID, T, <)$ be a po-relation. We say that $\OR$ is \emph{bipartite}
  if we can partition $\ID = U \sqcup V$ such that, for any pair $\id < \id'$ of
  comparable identifiers, we have $\id \in U$ and $\id \in V$. (Equivalently,
  the Hasse diagram of the poset $(\ID, <)$ is a directed bipartite graph.) We
  say that $\OR$ is \emph{$\a\b$-bipartite} if the partition can be chosen as
  $U \colonequals \{\id \in \ID \mid T(\id) = \a\}$ and 
  $V \colonequals \{\id \in \ID \mid T(\id) = \b\}$. Note that, in this case,
  the domain of~$\OR$ is necessarily $\{\a, \b\}$, and the partition can be
  computed in PTIME simply by looking at the element labels.
\end{definition}

We show hardness of \poss on $\a\b$-bipartite po-relations for a specific kind
of possible worlds:

\begin{proposition}
  \label{prp:bipartiteposs}
  The following problem is NP-hard: given an $\a\b$-bipartite po-relation $\OR$
  with partition $U \sqcup V$, and two integers $0 \leq p \leq \card{U}$ and $0
  \leq q \leq \card{V}$,
  decide whether the totally ordered relation $L_{p,q} = \a^p \b^q \a^{\card{U}
  - p} \b^{\card{V} - q}$ on~$\{\a, \b\}$ is a possible world of~$\OR$.
\end{proposition}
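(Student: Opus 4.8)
The plan is to replace the possible-world question by a clean combinatorial one, and then reduce the \textsc{Clique} problem to it. Write $U$ for the $\a$-labeled identifiers of $\OR$ and $V$ for the $\b$-labeled ones, so $\ID = U \sqcup V$; by $\a\b$-bipartiteness both $U$ and $V$ are antichains and every comparability $\id < \id'$ has $\id \in U$, $\id' \in V$. For $V_1 \subseteq V$ set $N^-(V_1) \defeq \{\id \in U \mid \exists \id' \in V_1,\ \id < \id'\}$. First I would prove: $L_{p,q} \in \pw(\OR)$ iff there is $V_1 \subseteq V$ with $\card{V_1} = q$ and $\card{N^-(V_1)} \leq p$. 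For the forward direction, a linear extension realizing $L_{p,q} = \a^p\b^q\a^{\card{U}-p}\b^{\card{V}-q}$ must send positions $1,\dots,p$ to some $U_1 \subseteq U$ with $\card{U_1}=p$ and positions $p+1,\dots,p+q$ to some $V_1 \subseteq V$ with $\card{V_1}=q$; since every element of $V_1$ lies at a position $\leq p+q$ while every element of $U \setminus U_1$ lies at a position $> p+q$, no element of $U \setminus U_1$ can be below an element of $V_1$, so $N^-(V_1) \subseteq U_1$ and $\card{N^-(V_1)} \leq p$. For the converse, given such a $V_1$, pick any $U_1 \supseteq N^-(V_1)$ with $\card{U_1} = p$ (possible since $\card{U}\geq p$), and enumerate in turn the blocks $U_1$, $V_1$, $U\setminus U_1$, $V\setminus V_1$, each in an arbitrary internal order; using that $U,V$ are antichains and that comparabilities go only from $U$ to $V$, this is readily checked to be a linear extension of $\OR$ realizing $L_{p,q}$. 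Note also that this condition is equivalent to: there is $U_1 \subseteq U$ with $\card{U_1}=p$ such that at least $q$ elements of $V$ have all their $<$-predecessors inside $U_1$.

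Next I would reduce from \textsc{Clique}. Given a graph $G = (\{1,\dots,n\}, E)$ and $k \in \NN$, we may assume $\binom{k}{2} \leq \card{E}$, since otherwise $G$ has no $k$-clique and we output a fixed negative instance of the target problem (for example $\OR$ consisting of one $\a$-element below one $\b$-element, with $p=0$, $q=1$). Otherwise, build the $\a\b$-bipartite po-relation $\OR_G$ whose $\a$-elements are the vertices of $G$, whose $\b$-elements are the edges of $G$, and where $i < e$ exactly when $i$ is an endpoint of $e$; this is an $\a\b$-bipartite po-relation (its Hasse diagram is exactly the incidence graph, which is already transitively closed) and it is computable in PTIME. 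Set $p \defeq k$ and $q \defeq \binom{k}{2}$, which satisfy $0 \leq p \leq \card{U}$ and $0 \leq q \leq \card{V}$. By the characterization, $L_{p,q} \in \pw(\OR_G)$ iff there exist $\binom{k}{2}$ edges of $G$ whose endpoints span at most $k$ vertices; since $\binom{k}{2}$ distinct edges require at least $k$ vertices, and at most $\binom{k}{2}$ edges fit on $k$ vertices, this holds iff those $k$ vertices induce a clique. Hence $L_{p,q} \in \pw(\OR_G)$ iff $G$ has a $k$-clique, which gives the desired PTIME reduction and proves NP-hardness.

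The main obstacle I expect is getting the characterization lemma exactly right: one must verify that the \emph{only} order constraint that $L_{p,q}$ being a possible world imposes is $N^-(V_1) \subseteq U_1$ — in particular that the trailing $\a$-block and the trailing $\b$-block can never be the source of a violation — and that $\a\b$-bipartiteness (both sides antichains, edges only from $\a$ to $\b$) is precisely what makes the block-by-block enumeration in the converse direction a valid linear extension. Once that is established, the \textsc{Clique} reduction itself is routine, the only wrinkle being the trivial corner case $\binom{k}{2} > \card{E}$ handled above.
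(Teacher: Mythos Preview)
Your proposal is correct and follows essentially the same approach as the paper: both reduce from \textsc{Clique} via the vertex--edge incidence po-relation with $p=k$ and $q=\binom{k}{2}$. Your explicit characterization lemma ($L_{p,q}\in\pw(\OR)$ iff some $q$-subset of~$V$ has at most $p$ lower neighbours) is a clean factoring of what the paper argues inline in its two correctness directions, and your handling of the corner case $\binom{k}{2}>\card{E}$ is a detail the paper leaves implicit.
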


\begin{proof}
  We reduce from the NP-hard \emph{$k$-clique problem}:
  given an undirected graph $G =
  (X, E)$
  and an integer $k \in \NN$, decide whether $G$ contains a clique of $k$
  vertices. Given the undirected graph $G$ and the integer~$k$, we construct the
  po-relation $\OR$ by creating one $\a$-labeled identifier in~$U$ for each
  vertex of~$X$ (that we identify to the vertex), creating one $\b$-labeled identifier in~$V$ for each edge
  of~$X$ (that we identify to the edge), and defining the order as follows: for
  any edge $e = \{x, y\}$ of~$E$, we set $x < e$ and $y < e$.
  It is immediate that $\OR$ is indeed $\a\b$-bipartite. We set $p \colonequals
  k$ and set $q \colonequals {k \choose 2}$. The construction is clearly in
  PTIME.

  Now, to show correctness, if $G$ contains a $k$-clique $X' \subseteq X$, we
  achieve the totally ordered relation $L_{p,q}$ by first enumerating all the
  $p$ identifiers of~$X'$ (they are $\a$-labeled so they are incomparable and have
  no ancestors), then enumerating the $q$ edges of the clique between the
  vertices of~$X'$ (they are $\b$-labeled, so incomparable, and their ancestors
  are all in~$X'$ so they have already been enumerated), then enumerating all
  remaining vertices (they are $\a$-labeled, so incomparable and have no
  ancestors) and edges (they are $\b$-labeled, so incomparable, and their
  ancestors have already been enumerated).

  Conversely, assume that there is a topological sort of~$\OR$ that achieves
  $L_{p,q}$. We define $X'$ to contain the vertices that were enumerated to
  achieve the prefix $\a^p$. We know that, afterwards, we have enumerated $q$
  identifiers that were $\b$-labeled, and the corresponding edges must have been
  between vertices of~$X'$, otherwise the order constraints prevent us from
  enumerating them. So the induced subgraph of~$G$ on~$X'$ contains $q$ edges,
  i.e., it is a clique. This concludes the correctness proof and establishes
  NP-hardness of our problem.
\end{proof}

We are now ready to prove Theorem~\ref{thm:intractable}:

\begin{figure}
  \centering
  \begin{tikzpicture}[xscale=1,->]
    \node (title) at (0, 2) {$\OR_R' \colonequals \sigma_R\big(\ordern{2}
    \times_\lex R\big)$};
    \node (a20) at (0, 1) {$\langle 2, 0 \rangle$};
    \node (a11) at (-2, 0) {$\langle 1, 1 \rangle$};
    \node (a12) at (-1, 0) {$\langle 1, 2 \rangle$};
    \node (a13) at (0, 0) {$\langle 1, 3 \rangle$};
    \node (d) at (1, 0) {$\cdots$}; 
    \node (a1u) at (2, 0) {$\langle 1, n \rangle$};
    \draw (a11.north) -- (a20);
    \draw (a12.north) -- (a20);
    \draw (a13.north) -- (a20);
    \draw (a1u.north) -- (a20);
  \end{tikzpicture}
  \begin{tikzpicture}[xscale=1,->]
    \node (title) at (0, 2) {$\OR_S' \colonequals \sigma_S\big(\ordern{2}
    \times_\lex S\big)$};
    \node (a10) at (0, 0) {$\langle 1, 0 \rangle$};
    \node (a21) at (-2, 1) {$\langle 2, 1 \rangle$};
    \node (a22) at (-1, 1) {$\langle 2, 2 \rangle$};
    \node (a23) at (0, 1) {$\langle 2, 3 \rangle$};
    \node (d) at (1, 1) {$\cdots$}; 
    \node (a2u) at (2, 1) {$\langle 2, m \rangle$};
    \draw (a10) -- (a21.south);
    \draw (a10) -- (a22.south);
    \draw (a10) -- (a23.south);
    \draw (a10) -- (a2u.south);
  \end{tikzpicture}
  \caption{Illustration of the Hasse diagram of~$\OR_R'$ and $\OR_S'$ in the proof of Theorem~\ref{thm:intractable}}
  \label{fig:prf1}
\end{figure}
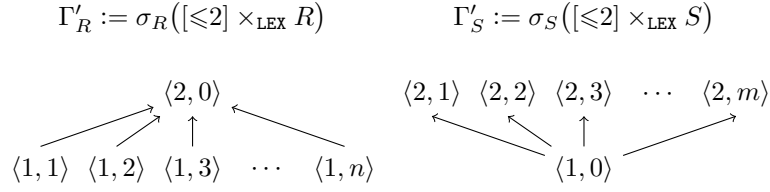

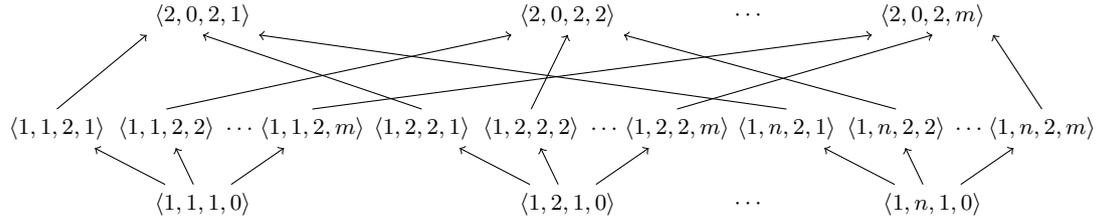
\begin{figure}
  \centering
  \begin{tikzpicture}[xscale=.48,->]
    \footnotesize
    \node (a1110) at (0, 0) {$\langle 1, 1, 1, 0 \rangle$};
    \node (a1210) at (10, 0) {$\langle 1, 2, 1, 0 \rangle$};
    \node (a1d) at (15, 0) {$\cdots$};
    \node (a1u10) at (20, 0) {$\langle 1, n, 1, 0 \rangle$};
    \node (a1121) at (-4, 1) {$\langle 1, 1, 2, 1 \rangle$};
    \node (a1122) at (-1, 1) {$\langle 1, 1, 2, 2 \rangle$};
    \node (a112d) at (1, 1) {$\ldots$};
    \node (a112v) at (3, 1) {$\langle 1, 1, 2, m \rangle$};
    \draw (a1110) -- (a1121);
    \draw (a1110) -- (a1122);
    \draw (a1110) -- (a112v);
    \node (a1221) at (6, 1) {$\langle 1, 2, 2, 1 \rangle$};
    \node (a1222) at (9, 1) {$\langle 1, 2, 2, 2 \rangle$};
    \node (a122d) at (11, 1) {$\ldots$};
    \node (a122v) at (13, 1) {$\langle 1, 2, 2, m \rangle$};
    \draw (a1210) -- (a1221);
    \draw (a1210) -- (a1222);
    \draw (a1210) -- (a122v);
    \node (a1u21) at (16, 1) {$\langle 1, n, 2, 1 \rangle$};
    \node (a1u22) at (19, 1) {$\langle 1, n, 2, 2 \rangle$};
    \node (a1u2d) at (21, 1) {$\ldots$};
    \node (a1u2v) at (23, 1) {$\langle 1, n, 2, m \rangle$};
    \draw (a1u10) -- (a1u21);
    \draw (a1u10) -- (a1u22);
    \draw (a1u10) -- (a1u2v);
    \node (a2021) at (0, 2.5) {$\langle 2, 0, 2, 1 \rangle$};
    \node (a2022) at (10, 2.5) {$\langle 2, 0, 2, 2 \rangle$};
    \node (a2d) at (15, 2.5) {$\cdots$};
    \node (a202v) at (20, 2.5) {$\langle 2, 0, 2, m \rangle$};
    \draw (a1121.north) -- (a2021.south west);
    \draw (a1221.north) -- (a2021.south);
    \draw (a1u21.north) -- (a2021.south east);
    \draw (a1122.north) -- (a2022.south west);
    \draw (a1222.north) -- (a2022.south);
    \draw (a1u22.north) -- (a2022.south east);
    \draw (a112v.north) -- (a202v.south west);
    \draw (a122v.north) -- (a202v.south);
    \draw (a1u2v.north) -- (a202v.south east);
  \end{tikzpicture}
  \caption{Illustration of the Hasse diagram of $\OR'$ (omitting tuple $\langle 2,0,1,0 \rangle$) in the proof of Theorem~\ref{thm:intractable}}
  \label{fig:prf2}
\end{figure}

\begin{proof}
  We will reduce from the NP-hard problem of
  Proposition~\ref{prp:bipartiteposs}. We start by describing formally the
  construction used in the reduction, i.e., the fixed query and input
  unordered po-relations, but the reader may find it more informative to
  digest the query bottom-up
  by reading the explanation of query evaluation given at the beginning of the
  correctness proof.

  The fixed query is as follows:
  \[Q \colonequals W \cup
  \Pi\Bigg(\sigma_=\bigg(\sigma_R\Big(\ordern{2} \times_\lex
  R\Big) \quad\times_\gen\quad \sigma_S\Big(\ordern{2}
  \times_\lex S\Big) \quad \times_\lex \quad T\bigg)\Bigg)\]
  where:

  \begin{itemize}
    \item The selection $\sigma_R$ selects tuples with the criterion
      $(.1 = 1 \land .2 \neq 0) \lor (.1 = 2 \land .2 = 0)$
    \item The selection $\sigma_S$ selects tuples with the criterion
      $(.1 = 2 \land .2 \neq 0) \lor (.1 = 1 \land .2 = 0)$
    \item The selection $\sigma_=$ selects tuples with the criterion $.1 = .5
      \land .2 = .6 \land .3 = .7 \land .4 = .8$
    \item The projection $\Pi$ projects on attribute~$9$.
  \end{itemize}

  We now explain, given the $\a\b$-bipartite po-relation $\OR_{\a\b} = (\ID, T, <)$,
  how we interpret the relation names $R$, $S$, $T$, and $W$ with unordered
  relations. Let $U \sqcup V$ be the partition of~$\ID$ into $\a$-labeled and
  $\b$-labeled elements, let $n \colonequals \card{U}$ and $m \colonequals
  \card{V}$, and write $U = (u_1, \ldots, u_n)$ and $V = (v_1, \ldots, v_m)$
  following some arbitrary order. We define the input po-database $D$ as follows:

  \begin{itemize}
    \item $R$ is interpreted as an unordered po-relation containing tuples
      labeled $0, 1, \ldots, n$
    \item $S$ is interpreted as an unordered po-relation containing tuples
      labeled $0, 1, \ldots, m$
    \item $T$ is interpreted as an unordered po-relation containing the
      following tuples:
      \begin{itemize}
        \item $\{\langle 1,i,1,0,\a\rangle \mid 1 \leq i \leq n\}$
        \item $\{\langle 1,i,2,j,\cn\rangle \mid 1 \leq i \leq n, 1 \leq j \leq
          m, \text{such that we have~} u_i < v_j \text{~in $\OR_{\a\b}$}\}$
        \item $\{\langle 2,0,2,j,\b\rangle \mid 1 \leq j \leq m\}$
      \end{itemize}
    \item $W$ is interpreted as an unordered po-relation containing $i\times j$
      tuples with label $\cn$.
  \end{itemize}

  The construction that we have described is clearly in PTIME.

  \bigskip

  Towards showing correctness, we first explain how query evaluation proceeds.
  First, 
  $\sigma_R(\ordern{2} \times_\lex R)$ creates a po-relation $\OR_R'$
  (illustrated in Figure~\ref{fig:prf1}) with a
  tuple $\id_{1,i}$ labeled $\langle 1, i\rangle$
  for $1 \leq i \leq n$
  and a tuple $\id_{2,0}$ labeled $\langle 2, 0\rangle$,
  with $\id_{1,i} < \id_{2,0}$ for all~$1 \leq i \leq n$
  and no other comparability pairs.
  Likewise, 
  $\sigma_S(\ordern{2} \times_\lex S)$ creates a po-relation $\OR_S'$ (also
  illustrated in Figure~\ref{fig:prf1}) with a
  tuple $\id_{1,0}$ labeled $\langle 1, 0\rangle$ and tuples $\id_{2,j}$ labeled $\langle 2, j\rangle$
  for $1 \leq j \leq m$, with $\id_{1,0} < \id_{2,j}$ for all~$1 \leq j \leq m$
  and no other comparability pairs.

  We now do the $\times_\gen$ product of $\OR_R'$ and $\OR_S'$, and write $\OR'
  \colonequals \OR_R' \times_\gen \OR_S'$: see 
  Figure~\ref{fig:prf2} for an illustration. Formally, $\OR'$ has
  four kinds of identifiers:

  \begin{itemize}
    \item Tuples $\id_{1,i,1,0}$ with label $\langle 1,i,1,0\rangle$ for $1 \leq
      i \leq n$. These are pairwise incomparable, because the
      $\id_{1,i}$ were.
    \item Tuples $\id_{2,0,2,j}$ with label $\langle 2,0,2,j\rangle$ for $1 \leq
      j \leq m$, which are also pairwise incomparable.
    \item Tuples $\id_{1,i,2,j}$ with label $\langle 1,i,2,j\rangle$ for $1 \leq
      i \leq n$ and $1 \leq j \leq m$, which are pairwise incomparable.
    \item One tuple $\id_{2,0,1,0}$ with label $\langle 2,0,1,0 \rangle$.
  \end{itemize}

  Intuitively, the identifier $\id_{1,i,1,0}$ will
  represent element $u_i$, the identifier $\id_{2,0,2,j}$ will represent element
  $v_j$, the identifier $\id_{1,i,2,j}$ will represent an edge between $u_i$ and
  $v_j$ (denoted $e_{i,j}$), and the identifier $\id_{2,0,1,0}$ is not important
  and will be removed soon by the selection~$\sigma_=$.
  Note that we have an element $e_{i,j}$ for all
pairs of identifiers in $U \times V$, no matter whether they are comparable
in~$\OR_{\a\b}$.

  As for the order relations across identifiers of different kinds,
  they are as follows:

  \begin{itemize}
    \item For $1 \leq i \leq n$, the identifier $\id_{1,i,1,0}$ is less than
      the identifier $\id_{2,0,1,0}$ and it is less than the
      identifiers $\id_{1,i,2,j}$ and the identifiers $\id_{2,0,2,j}$ for all $1
      \leq j \leq m$.
    \item The identifier $\id_{2,0,1,0}$ is less than the identifiers
      $\id_{2,0,2,j}$ for all $1 \leq j \leq m$.
    \item For all $1 \leq i \leq n$ and $1 \leq j \leq m$, the identifier
      $\id_{1,i,2,j}$ is less than the identifier $\id_{2,0,2,j}$.
  \end{itemize}

  Forgetting about $\id_{2,0,1,0}$ which is not important, the intuition is that
  we have $u_i < e_{i,j} < v_j$ for all $i$ and $j$.

  We have described $\OR' = \OR_R' \times_\gen \OR_S'$,
  and we continue describing how query evaluation proceeds.
  The intuition now is that we wish
  to only keep the $e_{i,j}$ such that $u_i < v_j$
  in~$\OR_{\a\b}$. We cannot express this directly using a selection, because
  the selection criterion would depend on~$\OR_{\a\b}$ so it would not be fixed.
  Instead, we take the product of~$\OR'$
  with the unordered po-relation $T$ that specifies which identifiers we wish to
  keep, and then we perform the fixed selection $\sigma_=$.
  Specifically, we do the $\times_\lex$ product of~$\OR'$ with~$T$, followed by the
  selection~$\sigma_=$, which intuitively
  replaces each element of~$\OR'$ by the contents of relation $T$, i.e.,
  unordered identifiers, so the order relation in the result of the
  $\times_\lex$ product is entirely defined by the first component, i.e.,
  by~$\OR'$. The
  selection~$\sigma_=$ then keeps the $\id_{1,i,2,j}$ such that $u_i < u_j$, and it also
  keeps the $\id_{1,i,1,0}$ and the $\id_{2,0,2,j}$; it discards the other
  $\id_{1,i,2,j}$ as well as the unimportant identifier $\id_{2,0,1,0}$.
  After the selection, we perform a
  projection~$\Pi$ to rename the identifiers using the last component
  of the tuple labels in~$T$: identifiers that come from the $\id_{1,i,1,0}$ are relabeled
  $\a$, identifiers that come from the $\id_{2,0,2,j}$ are relabeled~$\b$, and
  identifiers that come from the $\id_{1,i,2,j}$ are relabeled~$\cn$. 
  Last, we do the union with $W$ to add $n \times m$ unordered identifiers
  labeled $\cn$.

  To summarize, the po-relation $Q(D)$ contains the following identifiers:

  \begin{itemize}
    \item $m \times n$ unordered identifiers labeled $\cn$, each of which is
      incomparable to all other identifiers.
    \item $n$ identifiers corresponding to the $\id_{1,i,1,0}$ in~$\OR'$ for $1
      \leq i \leq n$, that are labeled $\a$, and that are incomparable among
      themselves: we identify each $\id_{1,i,1,0}$ to the identifier~$u_i$ in~$\OR_{\a\b}$.
    \item $m$ identifiers corresponding to the $\id_{2,0,2,j}$ in~$\OR'$ for
      $1 \leq j \leq m$, that are labeled $\b$, and that are incomparable among
      themselves: we identify each $\id_{2,0,2,j}$ to the identifier~$v_j$ in~$\OR_{\a\b}$.
    \item One identifier corresponding to $\id_{1,i,2,j}$ in~$\OR'$ for each
      $1 \leq i \leq n$ and $1 \leq j \leq m$ such that $u_i < v_j$
      is a comparability pair in~$\OR_{\a\b}$: we call each of them $e_{i,j}$ for
      brevity.
  \end{itemize}

  The comparability pairs across these identifiers are simply the following:
  $u_i < v_j$ for all $1 \leq i \leq n$ and $1 \leq j \leq n$, and $u_i < e_{i,j} <
  v_j$ for all $i$ and $j$ such that~$e_{i,j}$ exists.
  In particular, note that the order between the $u_i$ and $v_j$
  is \emph{not} like in~$\OR_{\a\b}$, because all $u_i$ are less than all~$v_j$.
  We will work around this issue when defining our candidate possible world to
  read the comparability relation from the~$e_{i,j}$.

  To define the candidate possible world,
  consider now the integers $p,q \in \NN$ that were given as input to the
  NP-hard problem of Proposition~\ref{prp:bipartiteposs} along with $\OR_{\a\b}$.
  Let $0 \leq \pi \leq m \times n$ be the number of comparability pairs of~$\Gamma$.
  Construct the totally ordered po-relation $L'_{p,q} \colonequals \a^p
  \cn^{m \times n} \a^{u-p} \b^q \cn^\pi \b^{v-q}$, which we will use as our
  candidate possible world.
  We claim that the \poss problem for $L_{p,q}$ and $\OR_{\a\b}$ reduces to the
  same problem for $L'_{p,q}$ and
  $Q(D)$, which suffices to conclude the proof.

  In one direction, assume that $L_{p,q} \in \pw(\OR_{\a\b})$, and consider a
  witnessing linear extension. We build a linear extension of $Q(D)$ achieving
  $L_{p,q}'$ as follows:
  
  \begin{enumerate}
    \item Enumerate the same $\a$-labeled identifiers in~$Q(D)$ as the ones in
      the witnessing linear
      extension of~$\OR_{\a\b}$ that achieves the factor~$\a^p$ of~$L_{p,q}$.
    \item Enumerate all the $e_{i,j}$ that can be enumerated: there are at most $m \times
  n$ in total so we can enumerate all that are available at this point.
    \item Enumerate some $\cn$-labeled identifiers from~$W$ afterwards if
      necessary, to enumerate $m \times n$ $\cn$-labeled identifiers in total.
    \item Enumerate all remaining $\a$-labeled identifiers.
    \item Enumerate the same $\b$-labeled identifiers in~$Q(D)$ as the ones in
      the witnessing linear
      extension of~$\OR_{\a\b}$ that achieves the factor~$\b^q$ of~$L_{p,q}$.
      To see why
      these identifiers 
      can be enumerated at this stage in~$Q(D)$,
      assume by way of contradiction that we try to enumerate $v_j$ in~$Q(D)$
      but that this violates an order constraint of~$Q(D)$, i.e., $v_j$ is
      greater than
      another identifier that has not been enumerated yet. 
      As all $\a$-labeled
      identifiers of~$Q(D)$ have been enumerated in steps~1 and~4,
      the only identifiers that can block the
      $\b$-labeled identifier $v_j$ from being enumerated
      are the $\cn$-labeled identifiers $e_{i,j}$ that have not been enumerated
      at step~2. So there must be $1 \leq i \leq n$ such the
      identifier~$e_{i,j}$ exists in~$Q(D)$ and was not
      enumerated at step~2. Now, the only way for this to happen is if
      the~$\a$-labeled element~$u_i$ was not enumerated at step~1.
      However, the
      existence of~$e_{i,j}$ in~$Q(D)$ witnesses that $u_i < v_j$
      in~$\OR_{\a\b}$, 
      and in the linear extension of~$\OR_{\a\b}$
      we must have enumerated $u_i$ before~$v_j$. Hence, 
      $u_i$ was enumerated at step~1 and $e_{i,j}$ was enumerated at step~2
      and $v_j$ can now be enumerated, a contradiction.
    \item Enumerate the remaining $\cn$-labeled identifiers and $\b$-labeled
      identifiers arbitrarily, which is clearly possible as no comparability
      pairs between unenumerated elements remain.
  \end{enumerate}

  In the converse direction, assume that $L'_{p,j} \in \pw(Q(D))$, and consider
  a witnessing linear extension. We build a linear extension of~$\OR_{\a\b}$ achieving
  $L_i$ by matching the factors $\a^p$ and~$\b^q$ to the elements matched to
  these factors in~$Q(D)$, and finishing by enumerating the remaining
  $\a$-labeled and~$\b$-labeled elements in some arbitrary way.
  The only thing to show is that we do not violate the order constraints
  of~$\OR_{\a\b}$ while achieving the factors $\a^p$ and $\b^q$.
  To show this, assume by way of contradiction
  that we try to enumerate some identifier $v_j$ when
  achieving~$\b^q$ but we have $u_i < v_j$  for some identifier $u_i$ that was not enumerated when
  achieving $\a^p$. In this case, the comparability pair $u_i < v_j$
  of~$\OR_{\a\b}$ witnesses the
  existence of an element $e_{i,j}$ in~$Q(D)$ such that $u_i < e_{i,j} < v_j$
  in~$Q(D)$. Now, as we did not enumerate $u_i$ to achieve $\a^p$ in~$Q(D)$, we
  cannot have enumerated $e_{i,j}$ when achieving $\cn^{m \times n}$, hence
  $e_{i,j}$ witnesses that we cannot have enumerated $v_j$ when achieving $\b^q$
  in~$Q(D)$, a contradiction. Hence, the order constraints of~$\OR_{\a\b}$ are respected.
  
  This concludes the correctness argument, so we have shown the NP-hardness of
  \poss in our context, which concludes the proof of
  Theorem~\ref{thm:intractable}.
\end{proof}

\end{toappendix}

\clearpage
\bibliographystyle{plainurl}
\bibliography{main}

\end{document}